\title{Approaching the Quantum Singleton Bound\\with Approximate Error Correction}
\author{Thiago Bergamaschi\thanks{UC Berkeley. Email: \texttt{thiagob@berkeley.edu}.}, Louis Golowich\thanks{UC Berkeley. Email: \texttt{lgolowich@berkeley.edu}. Supported by an NSF Graduate Research Fellowship.}, Sam Gunn\thanks{UC Berkeley. Email: \texttt{gunn@berkeley.edu}. Supported by a Google PhD Fellowship.}}
\date{\today}
\newif\ifnotes
\newcommand{\Enc}{\mathsf{Enc}}
\newcommand{\Dec}{\mathsf{Dec}}
\newcommand{\tDec}{\widetilde{\mathsf{Dec}}}
\newcommand{\Id}{\mathbb{I}}
\newtheorem{theorem}{Theorem}[section]
\newtheorem{proposition}[theorem]{Proposition}
\newtheorem{lemma}[theorem]{Lemma}
\newtheorem{corollary}[theorem]{Corollary}
\newtheorem{claim}[theorem]{Claim}
\newtheorem{definition}{Definition}[section]
\newtheorem{remark}{Remark}[section]
\newtheorem{fact}{Fact}[section]
\newcommand{\comment}[1]{}
\newcommand{\from}{\leftarrow}
\newcommand{\RSS}{\mathsf{RSS}}
\newcommand{\Share}{\mathsf{Share}}
\newcommand{\Reconstruct}{\mathsf{Reconstruct}}
\renewcommand{\epsilon}{\varepsilon}
\begin{document}

\maketitle

\begin{abstract}
    It is well known that no quantum error correcting code of rate $R$ can correct adversarial errors on more than a $(1-R)/4$ fraction of symbols. But what if we only require our codes to approximately recover the message?

In this work, we construct efficiently-decodable approximate quantum codes against \emph{adversarial} error rates approaching the quantum Singleton bound of $(1-R)/2$, for any constant rate $R$. Specifically, for every $R \in (0,1)$ and $\gamma>0$, we construct codes of rate $R$, message length $k$, and alphabet size $2^{O(1/\gamma^5)}$, that are efficiently decodable against a $(1-R-\gamma)/2$ fraction of adversarial errors and recover the message up to inverse-exponential error $2^{-\Omega(k)}$.

At a technical level, we use classical robust secret sharing and quantum purity testing to reduce approximate quantum error correction to a suitable notion of quantum list decoding. We then instantiate our notion of quantum list decoding by (i) defining and constructing folded quantum Reed-Solomon codes, and (ii) applying a new, quantum version of distance amplification.

%%% TODO LONG-TERM / for submission %%%
% make consistent notation for error/relative distance/slack from QSB.
    % epsilon = error
    % delta = relative decoding radius
    % gamma = slack off of QSB
    %% If we need more than one, use epsilon',epsilon'',.. or epsilon_1,epsilon_2,...
% maybe we could do the same for n,k,d,q
% remove equation numbers
% include some pictures for folding and/or distance amplification

\end{abstract}

\newpage

\thispagestyle{empty}
\newpage
\pagenumbering{roman}
\setcounter{tocdepth}{2}
{\small\tableofcontents}
\newpage
\pagenumbering{arabic}

\section{Introduction}
There are several models of noise in quantum error correction. One of the weakest noise models is erasures, where some known subset of symbols are removed. The number of erasure errors that a code can tolerate is the \emph{distance} of the code. At the other extreme there are adversarial errors, where an adversary is allowed to tamper arbitrarily with a subset of symbols (and the decoder doesn't know which symbols were corrupted). The number of symbols that an adversary can be allowed to corrupt without sacrificing correctness of the decoded message is the \emph{decoding radius}.

The quantum Singleton bound imposes an information-theoretic limit on the distance of a quantum code: No code of rate $R$ can correct from more than a $(1-R)/2$ fraction of erasures. But how far can we push the decoding radius? For the case of exact quantum error correction, it is a well-known consequence of the KL conditions \cite{Knill1997TheoryOQ} that any code of distance $d$ has decoding radius at most $d/2$. Combined with the quantum Singleton bound, it follows that quantum codes cannot achieve decoding radius beyond a $(1-R)/4$ fraction of the block length.

However, in \cite{CGS05} it was shown that, by relaxing the definition of quantum error correction to allow even \emph{exponentially-small} error in the recovered state, one can design certain codes where the decoding radius essentially matches the quantum Singleton bound for very small (asymptotically decaying) rates. Unfortunately, that construction requires an exponentially large alphabet and is therefore unlikely to be useful for communicating over a noisy channel.

Still, the codes of \cite{CGS05} indicate that the distance and the decoding radius might not be inherently far apart in approximate quantum error correction! This is in stark contrast to the classical setting, where it is impossible to build codes with decoding radius nearly matching the classical Singleton bound of $1-R$ for small rates\footnote{Indeed, given half of a codeword, an adversary could always substitute it with half of a fresh encoding of an arbitrary message. Therefore, decoding from adversarial errors up to the classical Singleton bound is impossible for $R < 1/2$.}. Whether the quantum phenomenon was an artefact of the non-standard exponential alphabet used by \cite{CGS05}, or an exciting possibility for practical quantum error correcting codes, has essentially remained open.

In this work we construct the first efficiently-decodable approximate quantum error correcting codes over constant-sized alphabets with decoding radius approaching the quantum Singleton bound for all rates. Our codes have error that decays exponentially with the message size, and conclusively show that even a slight relaxation of standard quantum error correction permits the design of codes where the the distance and decoding radius are nearly identical.

\subsection{Our Contributions}
\paragraph{Main results.}
Our main results are constructions of efficient approximate quantum error-correcting codes (AQECCs) approaching the quantum Singleton bound. We present two constructions, one in \Cref{cor:aqeccmain} and the other in \Cref{corollary:aqeccswoLR}, based on different approaches and with qualitatively slightly different guarantees. Either construction yields the following main result. We use the notation $[[n,k]]_q$ to refer to a quantum code which has block length $n$, message length $k$, and alphabet size (i.e., local dimension) $q$.

\begin{theorem}[Restatement of \Cref{cor:aqeccmain} and \Cref{corollary:aqeccswoLR}]
\label{thm:aqeccmaininf}
    For any $\gamma>0$ and $0<R<1$, there exists a family of approximate quantum $[[n,R \cdot n]]_q$ codes with constant alphabet size $q=q(\gamma)$ that correct errors acting on $(1-R-\gamma) \cdot n/2$ registers, up to a recovery error of $2^{-\Omega(n)}$.
\end{theorem}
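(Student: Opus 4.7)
The plan is to reduce approximate quantum error correction at rate $R$ to a suitable notion of \emph{quantum list decoding} (QLD) at the same rate, and then construct efficient QLD codes approaching the Singleton bound. The reduction hides an authentication key inside a classical robust secret sharing of the codeword symbols, so that even after a $(1-R-\gamma)/2$ fraction of registers have been arbitrarily corrupted, the decoder can first produce a short list of candidate messages via QLD, and then reconstruct the key and use quantum purity testing to identify the correct candidate up to $2^{-\Omega(n)}$ error.

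More concretely, to encode a message $\ket{\psi}$ I first sample a random key $k$ for a quantum purity-testing (or authentication) code and produce the authenticated state $\Auth_k(\ket{\psi})$, whose tampering is detected by the verification with overwhelming probability. I then secret-share the classical string $k$ across the $n$ outer registers using a classical \emph{robust} secret sharing scheme that tolerates adversarial corruption of a $(1-R-\gamma)/2$ fraction of shares, attaching one share to each symbol. Finally, I encode the combined data using the quantum list decodable code. To decode, I run QLD to obtain a polynomial-size list of candidates with their attached classical shares, reconstruct $k$ by robust reconstruction, and apply the purity-test verification keyed by $k$ to each candidate; purity-testing soundness guarantees that only the correct candidate passes, up to exponentially small error. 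A subtlety is that QLD must output the list coherently so that the final selection can be performed without cloning; I would handle this by phrasing QLD as a single unitary producing an index register in superposition together with the associated quantum candidates.

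To build QLD codes approaching the Singleton bound, I would proceed in two steps. First, define \emph{folded quantum Reed--Solomon} codes as CSS-type codes built from the classical folded Reed--Solomon codes of Guruswami--Rudra on both the $X$ and $Z$ sides, and lift the classical list decoder to a coherent quantum list decoder; this yields $[[n,Rn]]_q$ QLD codes with $q=\mathrm{poly}(n)$ correcting up to a $(1-R-\gamma)/2$ fraction of adversarial errors. Second, apply a \emph{quantum distance amplification} step, a quantum analog of the Alon--Edmonds--Luby construction, that concatenates the outer code with a short inner quantum code and redistributes symbols along an expander graph, bringing the alphabet down to a constant $q = 2^{O(1/\gamma^5)}$ while preserving the list-decoding radius.

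The main obstacles are two-fold. First, the distance amplification step must preserve not merely the quantum distance but also list decodability and the coherent structure that the purity-testing reduction needs; this calls for a quantum version of the classical argument showing that list decodability survives expander-based symbol redistribution, and in particular a careful analysis of the entanglement introduced between inner and outer registers. Second, one must implement the Guruswami--Rudra list decoder coherently and in polynomial time on the CSS folded quantum Reed--Solomon code, handling the $X$ and $Z$ sides in parallel without measurement-induced decoherence of the message register. Assembling these pieces and setting $k = Rn$, $q = 2^{O(1/\gamma^5)}$ yields the claimed approximate code with recovery error $2^{-\Omega(n)}$.
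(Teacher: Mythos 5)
Your high-level architecture matches the paper's: reduce AQEC to quantum list decoding (QLD), use purity testing plus classical robust secret sharing to winnow the list, and build QLD codes from folded quantum Reed--Solomon codes followed by a quantum Alon--Edmonds--Luby alphabet reduction. But you have a misconception about where the ``quantumness'' lives in the reduction, and it leads you to invent obstacles that the paper's definition of QLD is specifically designed to eliminate.

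You propose to ``phrase QLD as a single unitary producing an index register in superposition together with the associated quantum candidates'' so that selection can happen ``without cloning,'' and you worry about ``measurement-induced decoherence'' and preserving ``coherent structure'' through AEL. None of this is needed, and attempting it would be painful. The paper's crucial observation (\Cref{section:QLD}, \Cref{fact:syndromelocind}) is that for a stabilizer code of distance $d$, \emph{measuring the syndrome} of a code-state corrupted on fewer than $d$ registers already collapses the corruption to a \emph{single, definite Pauli error} $\sigma_s$ determined by the measured syndrome $s$. From that point on, list decoding is an entirely \emph{classical} computation on the string $s$: one outputs a classical list of Pauli operators consistent with $s$ (\Cref{def:QLD}), with no superposition of quantum candidate states, no index register, and no cloning issue. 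Likewise, the purity test is not ``applied to each candidate'' as a quantum verification: the PTC syndrome is also measured, and then for each classical candidate Pauli one \emph{classically computes} its PTC syndrome and keeps the unique one that matches (\Cref{alg:algorithm1}, \Cref{claim:ptcsyndrome-draft}). Because the filtering is classical, AEL only needs to preserve classical list decodability of the CSS component codes, which it does exactly as in the classical setting; there is no additional coherent structure to track.

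A second, smaller point: the paper is careful that the PTC key $k$ must be information-theoretically hidden from the adversary \emph{before} the syndrome is measured, since otherwise the adversary could tailor the Pauli error to the key. This is handled via local indistinguishability of the QLD code for the private construction (\Cref{claim:LI-composable-draft}) and via the privacy guarantee of the RSS when removing the shared key (\Cref{theorem:removing-privacy}). Your sketch keeps the key inside RSS shares, which is the right idea, but you should make the privacy-vs-correctness accounting explicit: privacy of the RSS restricted to the corrupted set ensures the adversary's action is key-independent, and robust reconstructability ensures the decoder recovers the right $k$ with probability $1-2^{-\Omega(n)}$. With these corrections your proposal aligns with the paper's proof of \Cref{cor:aqeccmain}.
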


All of the codes we construct come with efficient algorithms for encoding and decoding. In \Cref{sec:AQECC} we present a general compiler that builds AQECCs from
\begin{itemize}
    \item quantum list-decodable codes,
    \item purity testing codes, and
    \item classical robust secret sharing schemes.
\end{itemize}
The purity testing codes of \cite{barnum2002authentication} and robust secret sharing schemes of \cite{CDD+15} are sufficient for our purposes. Constructing good quantum list-decodable codes is the main technical part of this work. We define a quantum stabilizer code to be $(\tau,L)$ quantum list-decodable if each error syndrome is compatible with at most $L$ logically-equivalent Pauli errors of weight $\leq\tau n$ (\Cref{def:QLD}).

We construct CSS codes which are efficiently list-decodable up to the quantum Singleton bound by introducing \emph{folded quantum Reed-Solomon codes}, a quantum version of the codes from \cite{Guruswami2008ExplicitCA}.

\begin{theorem}[Restatement of \Cref{theorem:fqrs}]
\label{thm:fqrslargealph}
    For any $\gamma>0$ and $0<R<1$, there exists a family of quantum $[[n, R \cdot n]]_q$ stabilizer codes with alphabet size $q = n^{O(1/\gamma^2)}$ which is $((1-R-\gamma)/2, n^{O(1/\gamma)})$ quantum list-decodable.
\end{theorem}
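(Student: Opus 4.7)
The plan is to build the folded quantum Reed-Solomon code as a CSS code over a pair of classical folded Reed-Solomon (FRS) codes, and then reduce its quantum list-decoding to classical list-decoding of the component codes, invoking the Guruswami-Rudra list-decodability theorem.

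First, I would fix a folding parameter $m = \Theta(1/\gamma^2)$ and a field $\mathbb{F}_q$ with $q \geq n$, and take $C \subseteq (\mathbb{F}_q^m)^n$ to be a classical FRS code of rate $R' = (1+R)/2$. Using the facts that the dual of a Reed-Solomon code is isometric to a (generalized) Reed-Solomon code and that coordinate-wise rescalings can be absorbed into a local basis change on the quantum registers, one arranges $C^\perp \subseteq C$. Define the folded quantum Reed-Solomon code as $Q := \mathrm{CSS}(C,C)$. Its quantum rate is $2R'-1 = R$, its block length is $n$, and its alphabet has size $q^m = n^{O(1/\gamma^2)}$.

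Next, I would reduce the quantum list-decoding task to two classical list-decoding tasks. A Pauli error $E = X^a Z^b$ of weight at most $\tau n$ automatically satisfies $|\mathrm{supp}(a)|, |\mathrm{supp}(b)| \leq \tau n$. Its quantum syndrome determines the cosets $a + C$ and $b + C$ (via the $Z$- and $X$-stabilizer measurements, respectively), and two errors are logically equivalent iff their $X$-parts differ by an element of $C^\perp$ and likewise for the $Z$-parts. Hence the number of logical equivalence classes of weight-$\leq \tau n$ errors consistent with a fixed syndrome is bounded by the product, over the $X$- and $Z$-components, of the number of weight-$\leq \tau n$ representatives in a fixed coset of $C$. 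By the standard equivalence between syndrome-decoding and list-decoding (pick any coset representative $e_0$; then low-weight coset elements correspond bijectively to codewords of $C$ within Hamming radius $\tau n$ of $e_0$), each such count is at most the classical list size of $C$ at relative radius $\tau$.

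Finally, I would invoke the Guruswami-Rudra list-decoding theorem for FRS codes: for the chosen $m = \Theta(1/\gamma^2)$, the code $C$ is classically $(1-R'-\gamma/2,\; n^{O(1/\gamma)})$ list-decodable via an efficient algorithm. Since $\tau := (1-R-\gamma)/2 = 1-R'-\gamma/2$, this yields a quantum list of size $n^{O(1/\gamma)}$, and an efficient quantum list-decoder obtained by running the classical FRS list-decoder separately on the $X$- and $Z$-syndrome components and enumerating pairs of outputs.

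The main obstacle is the very first step: CSS codes require exact dual containment, and although self-orthogonal Reed-Solomon families are well-known, one must verify that the construction survives the folding operation and that the coordinate rescalings used to realize $C^\perp \subseteq C$ do not spoil the Guruswami-Rudra list-decoding guarantee (which is invariant under diagonal isometries and thus should go through, but needs to be carefully checked against the specific folded evaluation-point structure). Once this alignment is set up, the remainder of the argument is a direct transfer of the classical Guruswami-Rudra guarantee through the CSS syndrome structure.
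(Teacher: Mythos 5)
Your proposal follows essentially the same blueprint as the paper's: build a CSS code from (folded) Reed--Solomon data, observe that the $X$- and $Z$-parts of a Pauli error can be syndrome-decoded independently and that stabilizer equivalence reduces quantum list counting to classical coset list counting, and then invoke the Guruswami--Rudra guarantee. The parameter arithmetic ($R' = (1+R)/2$, $\tau = 1 - R' - \gamma/2$, $m = \Theta(1/\gamma^2)$, list size $q^{O(1/\gamma)} = n^{O(1/\gamma)}$, alphabet $q^m = n^{O(1/\gamma^2)}$) matches what the paper does.

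The one place you diverge, and where your obstacle flag is on target, is the choice of $\mathrm{CSS}(C,C)$ with a single self-orthogonal folded RS code. The paper deliberately avoids this by working with $\mathrm{CSS}(C_1,C_2)$ for \emph{two} generalized Reed--Solomon codes of the same rate $(1+R)/2$ with $C_2^\perp \subseteq C_1$; since the dual of a GRS code with a given evaluation set is again a GRS code with the same evaluation set (and different multipliers), this containment is immediate and no self-orthogonality is needed. Your ``absorb the rescaling into a local basis change'' step is more delicate than it looks: a coordinate-wise multiplication Clifford $M_D$ maps $\mathrm{CSS}(C_1,C_2)$ to $\mathrm{CSS}(DC_1, D^{-1}C_2)$ --- the rescalings on the two component codes go in \emph{opposite} directions --- so to land on a self-orthogonal code you need $C_2 = D^2 C_1$, i.e.\ the dual multipliers of the RS code must all be squares. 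This holds in characteristic $2$ but not automatically in odd characteristic, and you neither restrict the characteristic nor note this condition. The paper's two-code formulation simply never needs it. The paper also isolates, as \Cref{claim:foldingcsscodes}, the structural fact that folding a CSS code yields the CSS code of the two folded classical codes (using $(C^{(m)})^\perp = (C^\perp)^{(m)}$); your argument uses this implicitly by starting from folded classical codes, and you would need to carry the dual-containment through the folding, which the paper's claim handles cleanly. So: same idea, but your $\mathrm{CSS}(C,C)$ variant introduces a genuine gap (the self-orthogonality setup) that the paper's $\mathrm{CSS}(C_1,C_2)$ formulation is specifically designed to sidestep.
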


As in \cite{Guruswami2008ExplicitCA}, the alphabet size of our folded codes scales with the block-length $n$. To reduce the alphabet size to a constant, we introduce a quantum version of the expander-based distance amplification and alphabet reduction techniques of Alon, Edmonds, and Luby \cite{AEL95}. This yields the following theorem.
%% Thiago wants to talk about more stuff here

\begin{theorem}[Restatement of \Cref{thm:frsael}]
\label{theorem:contrib-main-list-decodable}
    For any $\gamma>0$ and $0<R<1$, there exists a family of quantum $[[n, R \cdot n]]_q$ stabilizer codes with constant alphabet size $q=2^{O(1/\gamma^5)}$ which is $((1-R-\gamma)/2,n^{O(1/\gamma^3)})$ quantum list-decodable.
\end{theorem}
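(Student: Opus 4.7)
The plan is to apply a quantum analogue of the Alon--Edmonds--Luby (AEL) expander-based distance amplification to the folded quantum Reed--Solomon codes of \Cref{thm:fqrslargealph}.

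\textbf{Setup.} I would first instantiate \Cref{thm:fqrslargealph} with rate $R_{\mathrm{out}}$ slightly above $R$ and gap parameter $\gamma_{\mathrm{out}} = \Theta(\gamma)$ to obtain an outer CSS stabilizer code $C_{\mathrm{out}}$ of block length $N$, list-decoding radius $\tau_{\mathrm{out}} = (1 - R_{\mathrm{out}} - \gamma_{\mathrm{out}})/2$, list size $L_{\mathrm{out}} = N^{O(1/\gamma)}$, and alphabet size $|\Sigma_{\mathrm{out}}| = N^{O(1/\gamma^2)}$. I would then pick an inner CSS stabilizer code $C_{\mathrm{in}}$ of constant block length over a constant-size alphabet $\Sigma$ with rate $R_{\mathrm{in}}$ close to $1$, uniquely decodable against a constant fraction $\tau_{\mathrm{in}}$ of adversarial Pauli errors; such codes exist with $|\Sigma| = 2^{O(1/\gamma^5)}$ via the quantum Gilbert--Varshamov bound. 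Writing $|\Sigma_{\mathrm{out}}| = |\Sigma|^d$, I pick an explicit $d$-regular bipartite spectral expander $G$ on $N + N$ vertices whose normalized second eigenvalue $\lambda$ is a small fraction of $\gamma$.

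\textbf{The compiler.} The compound code is defined as the following isometry. Starting from a state encoded by $C_{\mathrm{out}}$ into $N$ registers of dimension $|\Sigma_{\mathrm{out}}|$, I decompose each register into $d$ sub-registers of dimension $|\Sigma|$, permute the resulting $Nd$ sub-registers along the edges of $G$ so that the $d$ sub-registers of left vertex $i$ are routed to its $d$ right-neighbors, and finally apply $\Enc_{\mathrm{in}}$ to the $d$ sub-registers collected at each right vertex. Both the permutation and $\Enc_{\mathrm{in}}$ are Clifford isometries that preserve CSS stabilizer structure, so the result is itself a CSS stabilizer code of rate $R_{\mathrm{out}} R_{\mathrm{in}} = R$ over alphabet $\Sigma$, with efficient encoder and decoder.

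\textbf{List-decoding analysis.} Since the code is CSS, it suffices to bound separately the number of $X$- and $Z$-type Pauli errors of weight at most $\tau n$ compatible with a given syndrome, modulo logical equivalence, where $\tau = (1-R-\gamma)/2$. Call a right vertex of $G$ \emph{bad} if its inner block carries error rate exceeding $\tau_{\mathrm{in}}$; there are at most $(\tau/\tau_{\mathrm{in}})N$ such vertices. By the expander mixing lemma and the choice of $\lambda$, only a $\tau_{\mathrm{out}}$-fraction of outer positions have ``many'' bad right-neighbors. On all remaining outer positions, unique inner decoding pins down the outer symbol up to inner stabilizer equivalence, so the outer list-decoder returns at most $L_{\mathrm{out}}$ candidates. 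For each such candidate, inner logical degeneracy at the at most $\tau_{\mathrm{out}} N$ affected outer positions contributes only a polynomial factor. Parameter optimization with $R_{\mathrm{in}} = 1 - \Theta(\gamma)$, $\tau_{\mathrm{in}} = \Theta(\gamma)$, $\lambda = \Theta(\gamma^2)$, and $d = \Theta(1/\gamma^2)$ yields final list size $L = N^{O(1/\gamma^3)}$ and alphabet $|\Sigma| = 2^{O(1/\gamma^5)}$.

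\textbf{Main obstacle.} The chief technical difficulty lies in lifting the classical AEL expander-mixing argument cleanly to quantum list decoding, which counts Pauli errors modulo the \emph{full} stabilizer group of the compound code. I would need to verify that syndromes and logical-equivalence classes decompose correctly under CSS concatenation (so the inner and outer list sizes combine without over- or undercounting), and that a suitable inner code exists with the required constant rate, constant Pauli decoding radius, and alphabet $|\Sigma| = 2^{O(1/\gamma^5)}$; I would argue the latter via random CSS codes meeting the quantum Gilbert--Varshamov bound. A secondary issue is ensuring the expander-mixing bound is tight enough to yield only a $\tau_{\mathrm{out}}$ fraction of ``contaminated'' outer positions, rather than the naive $d \cdot |B|/N$ bound, which would fall short of the Singleton target.
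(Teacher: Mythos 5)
Your proposal has the inner and outer code roles reversed, which breaks the distance amplification and makes the claimed decoding radius unreachable. In the paper's construction (the proof of \Cref{thm:frsael}), the \emph{outer} code is the folded quantum Reed--Solomon code with rate $1-\Theta(\gamma)$ (close to one), used for \emph{list recovery} (via \Cref{thm:frslr}); the \emph{inner} code is a constant-size random CSS code with rate $\approx R$, alphabet $q_{\text{in}}=2^{O(1/\gamma)}$, and list-decoding radius $\delta_{\text{in}}\approx(1-R)/2$, constructible via \Cref{claim:randomcss}. The expander distributes each error on the final code among the inner blocks, so nearly every inner block sees error rate below $\delta_{\text{in}}$, which is why the final code's decoding radius matches $\delta_{\text{in}}\approx(1-R)/2$. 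In your version the inner code has rate $R_{\text{in}}\approx 1$, so by the quantum Singleton bound its unique-decoding radius $\tau_{\text{in}}\leq(1-R_{\text{in}})/2$ is nearly $0$, yet you must tolerate a total error fraction $\tau=(1-R-\gamma)/2\gg\tau_{\text{in}}$. Your own bound $(\tau/\tau_{\text{in}})N$ on the number of bad right vertices then exceeds $N$, so it is vacuous; no amount of expander mixing can save an analysis in which essentially every inner block is already over its decoding radius.

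Two further gaps. First, you use ordinary list-decoding for the outer code; but when the inner code is list-decodable (returning a small set $S_i\subseteq\Sigma_{\text{out}}$ of candidate symbols per outer position), you need the outer code to be \emph{list-recoverable}---given small input lists at each position, output all codewords consistent with most lists. This is precisely why the paper invokes \Cref{thm:frslr} rather than just \Cref{theorem:fqrs}; ordinary list-decoding does not suffice. Second, your alphabet-reduction gadget is internally inconsistent: you require $|\Sigma_{\text{out}}|=|\Sigma|^d$ with $|\Sigma|=2^{O(1/\gamma^5)}$ constant and $|\Sigma_{\text{out}}|=N^{O(1/\gamma^2)}$ growing, forcing $d=\Theta(\gamma^3\log N)$, yet you later fix $d=\Theta(1/\gamma^2)$ during parameter optimization and require the inner code (of message length $d$ and block length $d/R_{\text{in}}$) to have ``constant block length.'' These cannot all hold simultaneously. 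The paper's AEL construction resolves this by applying the expander \emph{after} concatenation: it breaks the $n_{\text{out}}n_{\text{in}}$ sub-registers of $Q_{\text{out}}\diamond Q_{\text{in}}$ into $n_{\text{out}}b$ groups of a constant number $r$ of sub-registers, uses an $r$-regular $\varepsilon_0$-pseudorandom bipartite graph on $n_{\text{out}}b$ vertices per side, and takes each resulting group of $r$ sub-registers as one final alphabet symbol (\Cref{sec:aelred}), giving constant final alphabet $q_{\text{in}}^r$. Permuting \emph{before} inner encoding, as you propose, neither spreads the errors over inner blocks nor decouples the expander degree from the outer alphabet size.
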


Together with our compiler, \Cref{theorem:contrib-main-list-decodable} allows us to obtain our main result \Cref{thm:aqeccmaininf}.

\paragraph{Additional results.} We find our quantum version of distance amplification useful for three additional purposes. First, using recent constructions of good quantum LDPC codes \cite{panteleev2022asymptotically,leverrier2022quantum,leverrier2022efficient,dinur2022good,gu2022efficient,leverrier2022parallel} we obtain quantum LDPC codes with distance approaching the quantum Singleton bound:

\begin{theorem}[Restatement of \Cref{cor:qLDPC}]
\label{thm:qLDPCinf}
    For any $\gamma>0$ and $0<R<1$, there exists a family of quantum LDPC $[[n, R \cdot n,(1-R-\gamma) \cdot n/2]]_q$ stabilizer codes with constant alphabet size $q=q(\gamma)$.
\end{theorem}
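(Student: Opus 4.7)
The plan is to apply the quantum analogue of the Alon--Edmonds--Luby distance amplification, developed earlier in this paper, to the good quantum LDPC codes coming from the recent breakthroughs cited in the statement of the theorem. Those works yield $[[n_0, R_0 n_0, \delta_0 n_0]]_2$ stabilizer codes with constant $R_0,\delta_0 > 0$ and constant-weight stabilizer generators; the goal is to boost the relative distance up to $(1-R-\gamma)/2$ and move the rate to $R$ while preserving the LDPC property, at the cost of moving to a larger constant alphabet.

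First I would fix $R$ and $\gamma$ and select a constant-degree bipartite expander with sufficiently small second-largest eigenvalue (depending on $\gamma$), together with a small inner quantum stabilizer code on a constant-sized block whose rate and relative distance approach the Singleton bound (for example a folded quantum Reed--Solomon code on a constant-sized block, or a quantum MDS code over a suitable prime-power alphabet). The AEL-type construction then redistributes the symbols of each outer LDPC codeword across the expander and encodes each resulting constant-sized bundle with the inner code. The new stabilizer code has block length $\Theta(n_0)$, rate arbitrarily close to $R$, and constant alphabet size $q=q(\gamma)$ determined by the inner code.

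The LDPC property is preserved because the stabilizer group of the amplified code is generated by two families of checks, both of constant weight: the pulled-back outer LDPC checks, whose supports after bundling grow by at most a factor of the expander degree, and the inner stabilizer generators applied to each bundle, whose weights are bounded by the inner block length. For the distance, one argues that any Pauli error $E$ of relative weight below $(1-R-\gamma)/2$ on the amplified code must be a stabilizer: by the expander mixing lemma, the support of such an $E$ is concentrated on bundles on which the inner decoder succeeds, so its induced error on the outer code has weight below $\delta_0 n_0$ and is therefore in the outer stabilizer group, which pulls back to a stabilizer of the amplified code.

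The main obstacle will be invoking the correct \emph{distance}-amplification flavor of the quantum AEL transformation rather than the list-decoding flavor used to prove \Cref{theorem:contrib-main-list-decodable}, and in particular verifying that the reduction from a low-weight Pauli on the amplified code to a low-weight Pauli on the outer code stays entirely within the stabilizer formalism and does not inadvertently produce a logical operator of the outer code. Once that is in place, the rate, distance, alphabet size, and check weight all follow by the standard AEL parameter bookkeeping, adapted from the classical CSS-compatible setting to the stabilizer setting.
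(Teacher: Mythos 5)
Your proposal matches the paper's proof of \Cref{cor:qLDPC} in structure and key ingredients: both apply the quantum AEL distance amplification of \Cref{prop:aelbasic} to the concatenation of a high-rate good quantum LDPC outer code with a constant-sized inner quantum code near the Singleton bound, and both observe that the LDPC property is preserved because the pulled-back outer checks and the per-block inner checks have weight bounded by a constant depending only on $\gamma$. The one difference is the inner code: the paper uses a random constant-size CSS code found by brute-force search, while you propose an explicit quantum MDS code or constant-block folded quantum Reed--Solomon code --- both work, yours being slightly more explicit. A small fix to your distance argument: the conclusion that a low-weight Pauli $E$ must be a stabilizer applies only to $E$ lying in the normalizer $N(Q)$, and the paper avoids this subtlety by instead exhibiting a unique decoder $\Dec_{Q_{\text{out}}}\circ\Dec_{Q_{\text{in}}}^{\otimes n_{\text{out}}}\circ\pi_G^{-1}$ that corrects errors on up to half the claimed relative distance, which implies the distance bound.
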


Note that these codes are exact quantum error correcting codes, and are therefore only unique-decodable from errors on $(1-R-\gamma)/4$ fraction of registers.

Second, applying a more involved construction using AEL and quantum list decoding, we obtain the following LDPC version of \Cref{thm:aqeccmaininf}, assuming access to a classical side-channel.

\begin{theorem}[Restatement of \Cref{cor:outercodes}]
\label{thm:privateLDPCinf}
    For any $\gamma>0$ and $0<R<1$, there exists a family of classical side-channel assisted approximate quantum LDPC $[[n, R \cdot n]]_q$ stabilizer codes with constant alphabet size $q=q(\gamma)$ that can correct errors acting on $(1-R-\gamma) \cdot n/2$ registers, up to a recovery error of $2^{-\Omega(n)}$.
\end{theorem}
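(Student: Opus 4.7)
The plan is to construct a quantum list-decodable LDPC code of rate $R$ with list-decoding radius approaching $(1-R)/2$, and then invoke the AQECC compiler of \Cref{sec:AQECC} with the classical data (purity-testing key and list identifier) that is normally robustly secret-shared into the quantum register instead routed through the classical side channel. Because off-loading this classical data removes the only non-LDPC step in the main construction, LDPC-ness of the final AQECC is determined entirely by LDPC-ness of the underlying quantum list-decodable code, and the $2^{-\Omega(n)}$ recovery-error analysis from \Cref{sec:AQECC} carries over essentially unchanged.

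For the list-decodable LDPC code, I would combine the quantum AEL distance amplification underlying \Cref{theorem:contrib-main-list-decodable} and \Cref{thm:qLDPCinf} in a ``more involved'' way: take an asymptotically good quantum LDPC code \cite{panteleev2022asymptotically, leverrier2022quantum, dinur2022good} and apply quantum AEL to amplify its distance toward $(1-R-\gamma)/2$, while arranging the expander and inner code so that the amplified code is not only LDPC but additionally quantum list-decodable in the sense of \Cref{def:QLD} up to the full distance, with list size polynomial in $n$. The most natural way to establish list-decodability here is to use the pseudorandomness of the expander, together with a union bound over logical cosets, to control the number of low-weight Pauli representatives of any syndrome beyond the unique-decoding radius.

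Finally, apply the compiler of \Cref{sec:AQECC} to the list-decodable LDPC code with the side-channel modification: the encoder purity-tests the message under a fresh random key, encodes the resulting state with the LDPC list-decodable code, and transmits the purity-testing key together with a short classical hash over the classical side channel. To decode, run the quantum list-decoder to produce a polynomial-size list of candidate recovery operations, then use the side-channel key to apply the purity test to each candidate and output the unique passing one; soundness of purity testing \cite{barnum2002authentication} gives recovery error $2^{-\Omega(n)}$. The main obstacle is the list-decodable LDPC construction of the second step: it is not immediate that LDPC structure and quantum list-decoding up to the Singleton bound are simultaneously achievable, and the direct ``AEL on folded quantum Reed-Solomon'' route of \Cref{theorem:contrib-main-list-decodable} fails to yield LDPC-ness because the folded-Reed-Solomon stabilizers are dense. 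The construction therefore must start from a natively LDPC outer code and establish list-decodability for the amplified code directly, which is where the classical side channel pays off — it lets us trade the stringent unique-decodability requirement for the weaker quantum list-decodability, the natural distance-matching regime for LDPC codes.
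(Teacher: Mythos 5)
Your overall plan has a genuine gap at its central step: you require a quantum list-decodable LDPC code of rate $R$ with list-decoding radius approaching $(1-R-\gamma)/2$, and then you plan to run a single global PTC on the whole message before encoding into that code, as in \Cref{lemma:private-aqec}. The paper never constructs such a list-decodable LDPC code, and the AEL machinery as given does not produce one. The route that makes AEL-amplified codes list-decodable near the Singleton bound (\Cref{thm:frsael}) requires the outer code to be \emph{list-recoverable} in the sense of \Cref{thm:frslr}, which is why folded Reed-Solomon is used there; a generic qLDPC outer code has no such list-recovery guarantee, and your suggestion to ``use the pseudorandomness of the expander together with a union bound over logical cosets'' is not an argument -- the expander controls how errors spread across inner blocks, but once an error exceeds half the distance you still need some list-recovery-like structure on the outer code to bound the number of logically distinct explanations of the inner-block lists, and nothing about being LDPC supplies that.

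The paper's actual proof of \Cref{cor:outercodes} sidesteps this obstacle entirely by putting the list-decoding and the PTC \emph{inside the inner code} of the AEL concatenation (\Cref{thm:uniqueAQECC}). Concretely: the outer code $Q_{\text{out}}$ is a good quantum LDPC code that only needs to be \emph{unique}-decodable from some constant fraction $\delta_{\text{out}}$ of errors (which existing qLDPC codes provide), and each inner block is the composition $Q_{\text{in}}\circ P_{\text{in}}^{\kappa_i}$ of a small random list-decodable CSS code with a PTC keyed by a \emph{fresh independent key $\kappa_i$ per block}. Because the inner blocks are constant-size, they can be brute-force list-decoded, and the per-block PTC makes each inner decoding succeed with probability $1-\varepsilon L$; the AEL permutation guarantees that most blocks see few errors, and a Chernoff bound over the independent per-block keys bounds the number of failing blocks below $\delta_{\text{out}} n$, at which point the unique outer decoder finishes the job. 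So the private randomness is a vector $\vec{\kappa}\in K^n$ of per-block keys, not a single global PTC key, and the outer code needs no list structure at all. This is where your plan diverges: pushing list decoding to the inner level is exactly what frees the outer code to be an arbitrary good qLDPC code, which is the only known way to get LDPC-ness.

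A secondary point: in the paper's construction the classical side information \emph{is} the tuple of PTC keys (or, after \Cref{theorem:removing-privacy}, their RSS shares); there is no separate ``short classical hash'' and the list identifier is not transmitted -- the decoder recovers it itself from the PTC syndrome. That part of your description is extraneous but not where the real problem lies.
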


Third, we obtain alternative codes satisfying \Cref{thm:aqeccmaininf} in \Cref{corollary:aqeccswoLR}.

% Finally, we were not able to find a complete proof of a quantum Singleton bound for \emph{approximate} quantum codes in the literature, though the main ideas for such a result are outlined in \cite[Section VI]{grassl2022entropic}. The usual statement bounds the relative distance only for \emph{exact} quantum codes in terms of the rate. We therefore prove a robust quantum Singleton bound which applies to approximate quantum codes. In the theorem below, $H(p) = -p \log(p) - (1-p) \log(1-p)$ denotes the binary entropy function.

% \begin{theorem}[\Cref{thm:robqsing}, simplified]
% If a $\varepsilon$-approximate quantum error correcting code of local dimension $q$ has rate $R$, then it can tolerate at most a $(1-R)/2+O\left(\sqrt{\varepsilon} + H\left(\sqrt{\varepsilon}\right)/n\right)$ fraction of erasures.
% \end{theorem}

% In particular, the codes we construct in \Cref{thm:aqeccmaininf} can be chosen to have relative decoding radius $\delta$ an arbitrarily small additive constant off from optimal, for all rates.

\subsection{Technical Overview}
We begin by defining AQECCs in \Cref{sec:techo-aqec}. Then we present our definition of quantum list-decoding for stabilizer codes in \Cref{sec:techo-def-qld}, along with constructions in \Cref{sec:techo-construct-qld}. In \Cref{sec:techo-construct-aqec} we tie these ideas together to construct AQECCs from quantum list-decodable codes, purity testing codes, and robust secret sharing schemes.

We then turn to our quantum version of distance amplification in \Cref{sec:techo-list-recovery}. Finally, we outline our use of these techniques to construct private LDPC codes in \Cref{sec:techo-ldpc}.

The discussion below assumes familiarity with basic notions of quantum error correction. The reader is referred to \Cref{sec:prelim} or \cite{Gottesman1997StabilizerCA} for a review.

\subsubsection{Approximate Quantum Error Correction}
\label{sec:techo-aqec}
An approximate quantum error correcting code should encode message states into (possibly mixed) code-states, with the guarantee that the original messages are recoverable up to a high fidelity even when an adversary is allowed to tamper with a bounded number of registers. First, we define the syntax of a quantum code.

\begin{definition}
    An $[[n, k]]_q$ quantum code is a pair $(\Enc, \Dec)$ of quantum channels on qudits of local dimension $q$ such that 
    \begin{enumerate}
        \item $\Enc$ encodes a $k$ qudit message register $M$ into an $n$ qudit code-state register $C$, and
        \item $\Dec$ decodes a received code-state register $\hat{C}$ into a $k$ qudit message register $\hat{M}$.
    \end{enumerate}
\end{definition}

The code is said to be \textit{efficient} if $(\Enc, \Dec)$ can be implemented in polynomial time in $n$. If the code-state register $C$ is defined on $n$ qudits, then the \emph{weight} of an error channel acting on $C$ is the number of qudits of $C$ which the channel acts non-trivially on.

\begin{definition}[\cite{CGS05}]
\label{def:techo-aqec}
    A quantum code $(\Enc, \Dec)$ is said to be a $(\delta, \varepsilon)$-AQECC (Approximate Quantum Error Correcting Code) if, for all adversarial channels $\mathcal{A}$ of weight $\delta\cdot n$,
    $$\norm{\Dec \circ \mathcal{A} \circ \Enc - \mathbb{I}}_\diamond \leq \varepsilon.$$
\end{definition}
We refer the reader to \Cref{sec:prelim} for a review of distance measures between quantum states and channels.

While the ordinary quantum Singleton bound only applies to exact quantum error correcting codes, more robust versions for AQECCs also hold. Specifically, \cite[Theorem~5]{Mamindlapally2022SingletonBF} provides a general quantum Singleton bound for hybrid quantum/classical/entanglement-assisted codes with approximate or exact recovery.

\begin{theorem}[\cite{Mamindlapally2022SingletonBF}]
\label{thm:MWqsing}
If a $(\delta,\epsilon)$-AQECC of local dimension $q$ has rate $R$, then
\begin{equation*}
    % k \leq n-2(d-1) + O(\epsilon \cdot n + \epsilon \cdot \log_q(1/\epsilon)).
    \delta \le \frac{1-R}{2} + O(\varepsilon \cdot c)
\end{equation*}
where $c = 1+\log_q(1/\varepsilon)/n$.
\end{theorem}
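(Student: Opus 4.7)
The plan is to adapt the standard entropic proof of the quantum Singleton bound---based on two disjoint erasable subsystems and subadditivity of the von Neumann entropy---to the approximate setting, absorbing the $\varepsilon$-error via the Alicki--Fannes--Winter (AFW) continuity inequality for the quantum mutual information.

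First I would set up the purification. Assume without loss of generality that $\Enc$ is an isometry; the general channel case can be handled by Stinespring dilation. Let $\ket{\Phi}_{MR}$ be the maximally entangled state between the message and an external reference $R$, so that $S(R) = k\log q$, and let $\ket{\psi}_{CR} = (\Enc \otimes \Id_R)\ket{\Phi}$. Partition the code register $C = A \sqcup B \sqcup C'$ with $|A| = |B| = \delta n$ (if $2\delta > 1$ the bound is trivial). Next I would extract approximate decoupling from the AQECC guarantee: applied to the weight-$\delta n$ erasure of $A$, it produces a decoder $\Dec_A$ satisfying $\bigl\|(\Dec_A \circ \Tr_A \circ \Enc \otimes \Id_R)(\Phi) - \Phi\bigr\|_1 \le \varepsilon$. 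Writing $\tilde\Phi := (\Dec_A \circ \Tr_A \circ \Enc \otimes \Id_R)(\Phi)$ and combining the data-processing inequality $I(\bar A : R)_\psi \ge I(M:R)_{\tilde\Phi}$ with AFW for mutual information on the $q^k \times q^k$ system $MR$ (which gives $I(M:R)_{\tilde\Phi} \ge 2k\log q - \eta$ for $\eta = O(\varepsilon \cdot k\log q + h(\varepsilon))$), together with the pure-state identity $I(A:R)_\psi + I(\bar A : R)_\psi = 2 S(R) = 2k\log q$, I conclude $I(A:R)_\psi \le \eta$; the same bound holds for $B$ by symmetry.

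Next I would run the entropic chain. The bound $I(A:R) \le \eta$ rearranges to $S(AR) \ge S(A) + S(R) - \eta$, and purity of $\psi_{CR}$ gives $S(AR) = S(BC')$, so $S(BC') \ge S(A) + S(R) - \eta$; symmetrically $S(AC') \ge S(B) + S(R) - \eta$. Strong subadditivity on $\psi_{ABC'R}$ applied to the partition $(A, C', B)$ yields $S(AC') + S(BC') \ge S(ABC') + S(C') = S(R) + S(C')$, while plain subadditivity gives $S(AC') \le S(A) + S(C')$ and $S(BC') \le S(B) + S(C')$. Adding the two decoupling lower bounds and using the subadditivity upper bounds to cancel $S(A) + S(B)$ produces $S(C') \ge S(R) - \eta = k\log q - \eta$. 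Since $S(C') \le |C'|\log q = (1-2\delta)n\log q$, rearrangement gives
\[
\delta \le \frac{1-R}{2} + \frac{\eta}{2n\log q} = \frac{1-R}{2} + O\!\left(\varepsilon + \frac{\varepsilon \log_q(1/\varepsilon)}{n}\right) = \frac{1-R}{2} + O(\varepsilon \cdot c).
\]

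The main obstacle is getting the right dimensional scaling in the continuity step: AFW must be applied on the low-dimensional $MR$ system (of total dimension $q^{2k}$) and then transferred to the code system via data processing, since applying continuity directly to entropies on $C$ would produce an unusable $\varepsilon \cdot n\log q$ term, while a Fuchs--van de Graaf detour would yield an $O(\sqrt{\varepsilon})$ rather than $O(\varepsilon)$ factor. A secondary technical point is the treatment of non-isometric $\Enc$, where the Stinespring environment must be absorbed carefully so as not to spoil the subadditivity chain.
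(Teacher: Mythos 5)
The paper does not prove this theorem --- it is imported from \cite{Mamindlapally2022SingletonBF} --- so there is no in-paper argument to compare against; I evaluate your proposal on its own terms. Your isometric-encoder case is correct: applying AFW continuity on the low-dimensional $MR$ system (you are right that applying it on $C$ or using Fuchs--van de Graaf would ruin the scaling), transporting via data processing to $I(A:R)_\psi \le \eta$ and $I(B:R)_\psi \le \eta$ with $\eta = O(\varepsilon k\log q + h(\varepsilon))$, and then combining the purity identities with ordinary subadditivity does give $S(C') \ge S(R) - \eta$ and hence $\delta \le (1-R)/2 + O(\varepsilon c)$. (A small point: the strong-subadditivity inequality you display is never used --- your final cancellation goes through with plain subadditivity alone.)

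The genuine gap is the ``secondary technical point'' about non-isometric $\Enc$, which you flag but do not resolve, and which is not absorbed by a bare appeal to Stinespring dilation. With Stinespring isometry $V: M \to C\otimes E'$, the global pure state is $\psi_{ABC'E'R}$; your purity identity $S(AR) = S(BC')$ fails because $\psi_{CR}$ is mixed, and running your chain on the purified state gives only $S(C'E') \ge S(R) - \eta$. Since $S(E')$ can be as large as $(1-R)n\log q$, bounding $S(C'E')$ by dimension degrades the conclusion to $\delta \le 1-R+O(\varepsilon c)$ --- the classical rather than quantum Singleton bound, losing the factor of two. Closing the gap needs two observations you do not supply: (i) the erasure decoder also never sees $E'$, so data processing in fact yields the stronger decoupling bounds $I(AE':R)_\psi \le \eta$ and $I(BE':R)_\psi \le \eta$; and (ii) the resulting chain $S(A)+S(B)+2S(C') \ge S(AE')+S(BE')+2S(R)-2\eta$ is then closed by weak monotonicity $S(AE')+S(BE') \ge S(A)+S(B)$ on $\rho_{AE'B}$, which is where strong subadditivity is genuinely needed, just not in the form you wrote. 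Finally, ``if $2\delta > 1$ the bound is trivial'' is not right as stated --- one must, e.g., run the argument with $|A|=|B|=\lfloor n/2\rfloor$ and $C'=\emptyset$ to force $S(R) \le \eta$ --- though this edge case is minor compared to the environment issue.
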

This bound also holds for AQECCs with side-information, or ``private AQECCs'' (\Cref{def:privateaqecc}).

\subsubsection{Defining Quantum List-Decodable Codes}
\label{sec:techo-def-qld}
Recall that stabilizer codes \cite{Gottesman1997StabilizerCA} are a subspace of $n$-qudit states, defined as the joint eigenspace of a set of commuting operators called the \emph{stabilizer generators}. Given a corrupted code-state, the syndrome measurement is the projective measurement of the eigenvalues of the stabilizer generators. The measurement outcome (a list of eigenvalues) is referred to as the syndrome, and is the quantum analogue of a classical parity check syndrome.

In classical coding theory, a code $C\subset \Sigma^n$ is said to be $(\tau, L)$ list-decodable \cite{woz1958, Elias1957ListDF} if there are at most $L$ codewords of $C$ in any hamming ball of radius $\tau\cdot n$ in $\Sigma^n$. To understand our quantum analogue, observe that for linear codes this classical notion is exactly equivalent to requiring that there are at most $L$ errors of weight at most $\tau\cdot n$ consistent with each possible parity check syndrome.

Perhaps the most direct quantum translation would be the following: A stabilizer code $Q$ could be $(\tau,L)$ quantum list-decodable if for every possible syndrome $s$, there are at most $L$ Pauli operators of weight at most $\tau\cdot n$ that are consistent with the syndrome $s$. However, if the code has many low-weight stabilizers (such as concatenated codes) then the list of errors consistent with a given syndrome could contain exponentially many logically-equivalent Pauli operators. Since such logically-equivalent errors by definition have the same action on codewords, we instead use a slightly relaxed definition where we only count logically-distinct operators in the list.

% A candidate notion of quantum list-decoding would be the following direct analogue of the above: a quantum stabilizer code $Q$ could be $(\tau,L)$ quantum list-decodable if there is a list of at most $L$ Pauli operators of weight at most $\tau\cdot n$ consistent with each possible syndrome vector. Note that if two operators $F, E$ have the same syndrome, then $F^\dagger E$ commutes with all the stabilizers of $Q$, and thus each element of this list of operators actually maps corrupted code-states to code-states. Indeed, if $\ket{\psi}\in Q$ and we apply an error $E$, then $F^\dagger E\ket{\psi}\in Q$ is a code-state for every $F$ with the same syndrome as $E$. Moreover, if $E$ is low weight, then it is in the list itself, and thus at least one element of the list correctly recovers the state.
% While a well-defined notion, and with concrete operational meaning, this definition quickly reveals a redundancy: any two Pauli operators $F, F'$ differing by a stabilizer (i.e. $F^\dagger F'$ is in the stabilizer group) induce the same `correction' on the code-state. If the code has many low-weight stabilizers (such as concatenated codes), then the list of errors consistent with a given syndrome could contain even exponentially many such redundant errors. We address this redundancy with a strictly weaker definition, by stipulating a list without elements which differ by a stabilizer of $Q$. 

\begin{definition}[Restatement of \Cref{def:QLD}]
A stabilizer code $Q$ is $(\tau,L)$ quantum list-decodable (QLD) if for every possible syndrome $s$, there are at most $L$ logically-distinct Pauli operators of weight at most $\tau\cdot n$ that are consistent with the syndrome $s$.
\end{definition}

Therefore, after measuring the syndrome, quantum list-decoding is the entirely classical task of computing the list of possible corrections for the given syndrome.

\subsubsection{Constructions of Quantum List-Decodable Codes}
\label{sec:techo-construct-qld}
Our main tool for designing these quantum list-decodable codes is a reduction to classical list-decoding via CSS codes. 

\begin{claim} \label{claim:CSSintro}
    If $C_1, C_2\subset (\mathbb{F}_q^m)^n$ are two $\mathbb{F}_q$-linear classical codes which are $(\tau,L)$ list-decodable and $C_2^\perp\subset C_1$, then there exists a CSS code CSS$(C_1, C_2)$ which is $(\tau,L^2)$ quantum list-decodable. Moreover, if the classical codes are efficiently list-decodable, then the CSS code is as well.
\end{claim}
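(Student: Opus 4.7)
The plan is to use the standard CSS stabilizer structure to split each Pauli error into an ``$X$-part'' $a$ and a ``$Z$-part'' $b$ whose syndromes and logical equivalence classes are governed independently by $C_1$ and $C_2$, and then apply classical list-decoding to each part.

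First I would set up $\mathrm{CSS}(C_1,C_2)$: its stabilizer group is generated by the $Z$-type operators $\{Z^v : v \in C_1^\perp\}$ and the $X$-type operators $\{X^u : u \in C_2^\perp\}$, which mutually commute because of the hypothesis $C_2^\perp \subseteq C_1$. Write any Pauli error, up to global phase, as $E = X^a Z^b$ with $a,b \in (\mathbb{F}_q^m)^n$. Measuring the $Z$-stabilizers returns the inner products $(\langle v, a\rangle)_{v \in C_1^\perp}$, which determine exactly the coset of $a$ modulo $C_1$; symmetrically, the $X$-stabilizer syndrome determines $b$ modulo $C_2$. Moreover, $E = X^a Z^b$ and $E' = X^{a'}Z^{b'}$ are stabilizer-equivalent iff $a - a' \in C_2^\perp$ and $b - b' \in C_1^\perp$, since the full stabilizer group is (up to phase) exactly $\{X^u Z^v : u \in C_2^\perp,\, v \in C_1^\perp\}$.

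Next I would combine these facts with the weight decoupling $\mathrm{wt}(X^a Z^b) = |\mathrm{supp}(a)\cup\mathrm{supp}(b)|$, so that $\mathrm{wt}(E) \le \tau n$ forces both $\mathrm{wt}(a) \le \tau n$ and $\mathrm{wt}(b) \le \tau n$. By the $(\tau,L)$ list-decodability of $C_1$, any coset of $C_1$ in $(\mathbb{F}_q^m)^n$ contains at most $L$ vectors of weight at most $\tau n$; symmetrically for $C_2$. Thus, after fixing the quantum syndrome, $a$ is pinned to at most $L$ possibilities and $b$ to at most $L$ possibilities, giving at most $L^2$ low-weight Pauli errors consistent with the syndrome, and hence at most $L^2$ logically distinct ones.

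For the algorithmic version, the quantum list-decoder simply runs the classical list-decoders of $C_1$ and $C_2$ on the two halves of the quantum syndrome, takes the Cartesian product of the returned lists (each of size $\le L$), and discards pairs whose joint support exceeds $\tau n$; this runs in polynomial time whenever the classical decoders do. The only subtle point to verify is that stabilizer equivalence of $X$-$Z$ Paulis really factors as $a - a' \in C_2^\perp$ and $b - b' \in C_1^\perp$ with no cross terms mixing the two types, which is forced by the product form of the CSS stabilizer group but is the one place where a convention slip could spoil the $L^2$ bound.
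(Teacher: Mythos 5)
Your argument is correct and follows essentially the same route as the paper's proof: both decouple the $X$ and $Z$ halves $a,b$ of a Pauli error via the CSS structure (so the syndrome pins down $a$ modulo $C_1$ and $b$ modulo $C_2$, weight $\le\tau n$ forces $\mathrm{wt}(a),\mathrm{wt}(b)\le\tau n$, and stabilizer equivalence factors as $a-a'\in C_2^\perp$, $b-b'\in C_1^\perp$), and then apply classical list-decodability of $C_1$ and $C_2$ separately to obtain the $L^2$ bound. The one small imprecision is in the algorithmic step: a classical list-decoder takes a received word rather than a syndrome, so as the paper does one should first produce via Gaussian elimination an arbitrary pair $(e_x,e_z)$ realizing the measured syndrome and then list-decode $C_1$ around $e_x$ and $C_2$ around $e_z$ (and then, since the resulting Cartesian product may contain stabilizer-equivalent pairs, prune redundancies by row reduction to meet the paper's requirement that the output list consist of stabilizer-distinct operators).
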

% Galois-qudit code\footnote{We refer the reader to \Cref{sec:prelim} or \cite{Ashikhmin2001NonbinaryQS} for a review of stabilizer codes over finite fields.}

This follows from the fact that correcting errors on a code-state of a CSS code simply consists of correcting the $X$ and $Z$ errors.\footnote{Technically we consider generalized Paulis, the ``shift'' and ``phase'' operators. We refer the reader to \Cref{sec:prelim} or \cite{Ashikhmin2001NonbinaryQS} for a review of stabilizer codes over finite fields.}

The central insight behind our quantum code construction --- and later the quantum distance amplification in \Cref{sec:techo-list-recovery} --- is that \emph{syntactic} or \emph{physical} operations on the registers of a quantum code are not only well-defined, but behave similarly to their classical counterparts. For instance, folding a quantum CSS code (i.e., bundling together $m$ consecutive qudits into a larger qudit) induces folded component classical codes:

\begin{claim}[Informal statement of \Cref{claim:foldingcsscodes}]
\label{claim:techo-folding}
    If $Q$ is the CSS code of two linear classical codes $C_1, C_2\subset \mathbb{F}_q^n$, then the $m$-folded quantum code $Q^{(m)}$ is a CSS code of the two $\mathbb{F}_q$-linear $m$-folded classical codes $C_1^{(m)}, C_2^{(m)}\subset (\mathbb{F}_q^m)^{n/m}$.
\end{claim}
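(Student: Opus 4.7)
The plan is to verify the claim at the level of stabilizer generators and codespace basis states, after being careful about what it means to take a CSS code over the larger alphabet $\mathbb{F}_q^m$. First I would fix the convention that a qudit of dimension $q^m$ is treated as a bundle of $m$ qudits of dimension $q$, so that generalized Pauli operators on the bundled qudit are naturally indexed by pairs $(a,b)\in\mathbb{F}_q^m\times\mathbb{F}_q^m$ via $X^{(m)}_a = X_{a_1}\otimes\cdots\otimes X_{a_m}$ and similarly for $Z^{(m)}_b$. The inner product governing the CSS construction over $\mathbb{F}_q^m$ is then the $\mathbb{F}_q$-bilinear form inherited from the standard form on $\mathbb{F}_q^n$, which gives the key identity $(C^\perp)^{(m)} = (C^{(m)})^\perp$ for any $\mathbb{F}_q$-linear $C\subset \mathbb{F}_q^n$. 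In particular, $C_2^\perp\subset C_1$ implies $(C_2^{(m)})^\perp \subset C_1^{(m)}$, so $\mathrm{CSS}(C_1^{(m)},C_2^{(m)})$ is well-defined.

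Next I would compare the two codespaces directly. A standard basis for $Q=\mathrm{CSS}(C_1,C_2)$ is given by the coset states $\ket{\psi_{[c]}} = |C_2^\perp|^{-1/2}\sum_{d\in C_2^\perp}\ket{c+d}$ for $c\in C_1/C_2^\perp$. Bundling $m$ qudits at a time is a unitary relabelling $\ket{v}\mapsto \ket{v^{(m)}}$ on computational basis states, under which $\ket{\psi_{[c]}}$ becomes $|C_2^\perp|^{-1/2}\sum_{d'\in(C_2^\perp)^{(m)}}\ket{c^{(m)}+d'}$. Using $(C_2^\perp)^{(m)} = (C_2^{(m)})^\perp$ and $c\in C_1 \Leftrightarrow c^{(m)}\in C_1^{(m)}$, this is exactly the standard coset-state basis for $\mathrm{CSS}(C_1^{(m)},C_2^{(m)})$, so the two codespaces coincide.

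To complete the proof I would verify the stabilizer picture matches. The stabilizer generators of $Q$ are $X$-type operators $X_{d}$ for $d\in C_2^\perp$ and $Z$-type operators $Z_{e}$ for $e\in C_1^\perp$. Under bundling, $X_d$ becomes $X^{(m)}_{d^{(m)}}$ and likewise for $Z$, so the stabilizer group generated by $(C_2^\perp)^{(m)}$ on the $X$ side and $(C_1^\perp)^{(m)}$ on the $Z$ side is, by the identity above, precisely the stabilizer group of $\mathrm{CSS}(C_1^{(m)},C_2^{(m)})$ under the bundled Pauli conventions. Combined with the codespace calculation, this gives the claim, and also shows that logical $X$ and $Z$ representatives transport correctly, which is what one needs for later applications (such as inheriting the quantum list-decoding guarantee via \Cref{claim:CSSintro}).

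The main obstacle, and the step I would write out most carefully, is the compatibility between the folded inner product on $(\mathbb{F}_q^m)^{n/m}$ and the convention for CSS codes over a non-prime alphabet. One must resist the temptation to identify $\mathbb{F}_q^m$ with the field $\mathbb{F}_{q^m}$, since the trace form would give a different notion of dual and hence a different CSS construction; the proof hinges on using the $\mathbb{F}_q$-bilinear form on $\mathbb{F}_q^m$ throughout, which is the natural one given that generalized Paulis on a $q^m$-dimensional qudit factor as a tensor product of $q$-dimensional Paulis. Once this is pinned down, all the remaining verifications are syntactic.
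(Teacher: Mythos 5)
Your proposal is correct and follows essentially the same route as the paper: identify the stabilizer generators of the vector-space CSS code of $C_1^{(m)},C_2^{(m)}$ with the bundled stabilizers of $Q$ via the identity $(C^{(m)})^\perp=(C^\perp)^{(m)}$ for the componentwise $\mathbb{F}_q$-bilinear form. The coset-state verification is a redundant but harmless double-check, and your remark that one must not silently pass to the trace form on $\mathbb{F}_{q^m}$ is exactly the distinction the paper encodes by defining the vector-space CSS construction (\Cref{theorem:vectorspacecss}) separately from the Galois-qudit one.
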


This follows from the fact that folding operation automatically bundles the qudits \emph{in both bases}, so error correction in each basis can leverage folding. Instantiating \Cref{claim:techo-folding} with the folded Reed-Solomon codes of \cite{Guruswami2008ExplicitCA}, we obtain explicit quantum list-decodable codes tolerating errors up to the quantum Singleton bound:

\begin{theorem}[Restatement of \Cref{theorem:fqrs}]
    For any $\gamma>0$ and $0<R<1$, there exists a family of quantum $[[n, R \cdot n]]_q$ stabilizer codes with alphabet size $q = n^{O(1/\gamma^2)}$ which is $((1-R-\gamma)/2, n^{O(1/\gamma)})$ quantum list-decodable.
\end{theorem}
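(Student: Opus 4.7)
The plan is to instantiate the folded quantum Reed--Solomon code as $\mathrm{CSS}(C^{(m)}, C^{(m)})$, where $C$ is a single classical Reed--Solomon code and $C^{(m)}$ denotes its $m$-fold bundling, and then to combine the classical list-decoding result of \cite{Guruswami2008ExplicitCA} with \Cref{claim:CSSintro}. This requires arranging $C$ so that (a) $C^{\perp} \subset C$, so that dual-containment survives folding and legitimizes the CSS construction via \Cref{claim:foldingcsscodes}, and (b) the folding matches the Guruswami--Rudra form, so that their classical list-decoding theorem applies verbatim to the component codes.

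Fix $\gamma' = \gamma/2$, and let $m = O(1/\gamma'^2) = O(1/\gamma^2)$ be the folding parameter prescribed by \cite{Guruswami2008ExplicitCA}. Write $N = nm$, choose a prime power $q \ge N+1$ with $N \mid q-1$, and let $\alpha_1, \dots, \alpha_N$ be the $N$-th roots of unity in $\mathbb{F}_q^{*}$. Take $C = \mathrm{RS}_{\mathbb{F}_q}[N, K]$ with $K = \lfloor N(1+R)/2 \rfloor$ evaluated at the $\alpha_i$. Because the evaluation set is a multiplicative subgroup of $\mathbb{F}_q^{*}$, the weights appearing in the generalized RS description of $C^{\perp}$ are uniform, and one has $C^{\perp} = \mathrm{RS}_{\mathbb{F}_q}[N, N-K]$ on the same points; since $K \ge N/2$ this gives $C^{\perp} \subset C$. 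Folding preserves this containment under the coordinate-wise inner product on $(\mathbb{F}_q^m)^n$: a direct computation gives $(C^{(m)})^{\perp} = (C^{\perp})^{(m)} \subset C^{(m)}$. Hence $\mathrm{CSS}(C^{(m)}, C^{(m)})$ is well-defined, with block length $n$, alphabet size $q^m = N^{O(1/\gamma^2)} = n^{O(1/\gamma^2)}$, and rate $(2K - N)/N = R - o(1)$, which can be tuned up to exactly $R$ with a constant-sized adjustment of $K$.

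For the list-decoding guarantee, apply the Guruswami--Rudra theorem to the component code $C^{(m)}$: since its rate is $R' = K/N = (1+R)/2$ and the folding parameter is $m = O(1/\gamma'^2)$, the folded code is efficiently $(1 - R' - \gamma', L_0)$ list-decodable with $L_0 = N^{O(1/\gamma')} = n^{O(1/\gamma)}$. The decoding radius is exactly $(1-R)/2 - \gamma/2 = (1-R-\gamma)/2$, so by \Cref{claim:CSSintro} the CSS code is $((1-R-\gamma)/2, L_0^2)$ quantum list-decodable with $L_0^2 = n^{O(1/\gamma)}$, matching the theorem.

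The main subtlety I anticipate is the alignment in the previous step: the clean self-duality $\mathrm{RS}[N,K]^{\perp} = \mathrm{RS}[N, N-K]$ only holds for evaluation sets whose generalized RS weights are uniform, and for generic evaluation points the dual acquires nontrivial multiplicative twists that would obstruct $C^{\perp} \subset C$ and hence its folded counterpart. Restricting the evaluation set to the $N$-th roots of unity (and choosing $q$ and $N$ so that such roots exist) is what makes the Guruswami--Rudra folding and the CSS dual-containment compatible simultaneously; once this alignment is in place, the remainder is bookkeeping on rates, folding parameters, and list-size asymptotics.
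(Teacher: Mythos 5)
Your proposal follows the paper's high-level strategy — fold a classical Reed--Solomon code, form a CSS code, and apply the CSS-to-quantum-list-decoding reduction (\Cref{claim:CSSintro} / \Cref{thm:CSS}). The genuine difference is the instantiation of the pair $(C_1,C_2)$: the paper's FQRS codes (\Cref{def:FQRS}) are built from a quantum GRS code (\Cref{def:QRS}) using two GRS codes $C_1\neq C_2$ with coordinated multipliers to force $C_2^\perp\subset C_1$, while you try to collapse this to a single code $C$ with $C^\perp\subset C$ by evaluating at roots of unity. That is a legitimate simplification in principle, but as written it has two problems.

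First, a local error: evaluating $\mathrm{RS}[N,K]$ on the $N$-th roots of unity does \emph{not} make the dual's GRS multipliers uniform. Writing $\alpha_1,\dots,\alpha_N$ for the $N$-th roots of unity, one computes $\prod_{j\ne i}(\alpha_i-\alpha_j)=N\alpha_i^{N-1}=N/\alpha_i$, so the dual is $\mathrm{GRS}[N,N-K]$ with multipliers $v_i=\alpha_i/N$, not $\mathrm{RS}[N,N-K]$. The containment $C^\perp\subset C$ still holds — since $(\alpha_i/N)\,g(\alpha_i)=(1/N)(xg)(\alpha_i)$ is the evaluation of a polynomial of degree $\le N-K$, so $C^\perp\subset\mathrm{RS}[N,N-K+1]\subset C$ once $K>N/2$ — but your stated justification is incorrect, and the threshold is $K>N/2$ rather than $K\ge N/2$.

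Second, and more seriously, the conditions "$q\ge N+1$ with $N\mid q-1$" do not guarantee that the Guruswami--Rudra theorem (\Cref{theorem:FRS}) applies. Their algorithm requires the evaluation points to be $1,\gamma,\gamma^2,\dots$ for a \emph{primitive element} $\gamma$ of $\mathbb{F}_q^*$, because it relies on the identity $f(X)^q\equiv f(\gamma X)\pmod{X^{q-1}-\gamma}$, and $X^{q-1}-\gamma$ is irreducible precisely when $\gamma$ generates $\mathbb{F}_q^*$. A primitive $N$-th root of unity has order $N$, which equals $q-1$ only when $q=N+1$. Your conditions allow $q>N+1$, in which case the evaluation set is a proper subgroup of $\mathbb{F}_q^*$, the required irreducibility fails, and \Cref{theorem:FRS} does not apply to $C^{(m)}$. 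The fix is to insist $q=N+1$ is a prime power (or choose $q$ prime and set $N=q-1$, adjusting $n$ slightly to make $m\mid N$), which reconciles "roots of unity" with "powers of a primitive element." The paper sidesteps this tension by allowing $C_1\ne C_2$ with multipliers chosen to force $C_2^\perp\subset C_1$ for an arbitrary primitive $\gamma$ and any $n\le q-1$; your self-dual instantiation buys a cleaner presentation but only at the constrained choice $q=N+1$.
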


Unfortunately, these folded quantum codes have alphabet size $n^{O(1/\gamma^2)}$ rather than constant. \cite{Guruswami2008ExplicitCA} reduce the alphabet size by showing that their codes can be used for the related task of \textit{list-recovery}, and combining their ideas with the distance amplification and alphabet reduction techniques of \cite{AEL95}. The main technical component of this work is to show that an analogous quantum list-recovery task can be used in combination with a quantum analog of the distance amplification techniques by \cite{AEL95}. We outline quantum distance amplification in \Cref{sec:techo-list-recovery}.

Using quantum distance amplification, we present a randomized construction of quantum stabilizer codes on constant sized alphabets, with list decoding radius approaching the quantum Singleton bound.
\begin{theorem}[Restatement of \Cref{thm:frsael}]
    For any $\gamma>0$ and $0<R<1$, there exists a family of quantum $[[n, R \cdot n]]_q$ stabilizer codes with constant alphabet size $q=2^{O(1/\gamma^5)}$ which is $((1-R-\gamma)/2,n^{O(1/\gamma^3)})$ quantum list-decodable. Moreover, there exists an efficient randomized algorithm to construct a classical description of the stabilizers of the code, which succeeds with high probability.
\end{theorem}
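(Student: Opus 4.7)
The plan is to apply a quantum analog of the Alon--Edmonds--Luby (AEL) distance-amplification construction, with the folded quantum Reed-Solomon codes of \Cref{thm:fqrslargealph} serving as an outer code and a randomly chosen inner CSS code on a constant-sized alphabet.

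First, I would upgrade \Cref{thm:fqrslargealph} from quantum list-decodability to an appropriate notion of \emph{quantum list-recoverability}: given a syndrome $s$ and per-coordinate ``allowed'' Pauli lists $S_1,\dots,S_n$ each of size at most $\ell$, only polynomially many logically-distinct Pauli corrections of weight $\leq \tau n$ are simultaneously consistent with $s$ and with the $S_i$. By \Cref{claim:CSSintro}, this reduces to classical list-recovery of the $X$- and $Z$-type folded Reed-Solomon component codes, which is exactly what \cite{Guruswami2008ExplicitCA} establish.

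Next, I would carry out the AEL construction. Fix a bipartite expander $G$ of constant degree and second eigenvalue $\lambda = O(\gamma^2)$, together with an inner CSS code $C_{\mathrm{in}}$ on a constant alphabet $q_{\mathrm{in}} = 2^{O(1/\gamma^5)}$ whose rate and relative distance are tuned to the AEL parameter balance. A random stabilizer CSS code achieves the required parameters with overwhelming probability, and its stabilizers can be sampled and classically verified in randomized polynomial time. The AEL ``scrambling'' step is implemented as a physical permutation of the outer qudits prescribed by $G$; since this is a permutation unitary, it preserves the CSS structure and acts identically on the $X$- and $Z$-bases, so the concatenation with $C_{\mathrm{in}}$ yields a CSS code whose stabilizer description is still efficient.

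For the analysis: suppose an adversarial Pauli of weight $\tau n$ with $\tau = (1-R-\gamma)/2$ acts on the final code. The expander mixing lemma implies that only a small fraction of the outer coordinates receive an inner block carrying more errors than $C_{\mathrm{in}}$ can handle, and on each remaining coordinate inner list-decoding returns a short list of candidate outer symbols. This reduces the problem to a quantum list-recovery instance on the outer folded quantum Reed-Solomon code, which by the strengthened \Cref{thm:fqrslargealph} produces at most $n^{O(1/\gamma^3)}$ logically-distinct outer corrections. The main obstacle will be the ``logically-distinct'' bookkeeping across the two concatenation layers: two low-weight Paulis on the final code can differ on many outer coordinates while still being equivalent modulo the combined inner and outer stabilizer group, so the argument must carefully compose inner and outer logical equivalences to ensure the output list is counted in the right sense.
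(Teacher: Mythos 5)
Your proposal is correct in outline but follows a genuinely different route from the paper's main proof of this theorem, and it also contains a parameter bug worth flagging.

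The paper's Section 5 proof never defines or invokes a quantum notion of list-recoverability. Instead, it uses the observation (\Cref{claim:concatcss}) that both the concatenation with the inner CSS code and the AEL permutation-plus-reblocking preserve the CSS structure, so the amplified code is $Q=\text{CSS}(C_1,C_2)$ for explicit classical codes $C_1,C_2$. It then applies \Cref{thm:CSS} to reduce quantum list-decoding of $Q$ to \emph{classical} list-decoding of the quotient codes $C_1/C_2^\perp$ and $C_2/C_1^\perp$ in the sense of \Cref{def:cosetld}. On the classical side, one brute-forces the inner code in each length-$r$ block and feeds the candidate lists into the classical list-recovery of folded Reed--Solomon (\Cref{thm:frslr}). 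This cleanly sidesteps the ``logically-distinct bookkeeping across the two concatenation layers'' that you correctly identify as the main obstacle: coset list-decoding on the $X$- and $Z$-components absorbs the stabilizer-equivalence automatically. Your plan of first upgrading \Cref{thm:fqrslargealph} to a quantum list-recoverability statement and then running a ``fully quantum'' AEL argument is viable and is essentially what the paper does in \Cref{sec:listrecoverablequantumcodes} (culminating in \Cref{theorem:listrecoveryconstruction}); that route does require defining quantum list-recoverability and reproving the inner/outer composition lemmas in terms of normalizers, which is more machinery than the Section 5 proof needs.

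One concrete error in your parameters: you place the inner alphabet at $q_{\mathrm{in}} = 2^{O(1/\gamma^5)}$. In the paper, the inner CSS code lives on a much smaller alphabet $q_{\mathrm{in}} = 2^{O(1/\gamma)}$; the final alphabet $q = q_{\mathrm{in}}^r = 2^{O(1/\gamma^5)}$ arises only after reblocking into groups of $r = O(1/\gamma^4)$ symbols using the alphabet-reducing AEL of \Cref{sec:aelred}. The degree $r$ of the pseudorandom graph cannot be a universal constant: the required pseudorandomness $\varepsilon_0 = O(\gamma^2)$ (to keep the distance loss $O(\gamma)$, given $\Delta_{\text{out}}=\Theta(\gamma)$ and $\Delta_{\text{in}}=\Theta(1)$) forces $r \gtrsim 1/\varepsilon_0^2 = \Omega(1/\gamma^4)$. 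Starting with an inner alphabet of size $2^{O(1/\gamma^5)}$ and then reblocking by $r=\Omega(1/\gamma^4)$ would give a final alphabet $2^{\Omega(1/\gamma^9)}$, which overshoots the theorem's bound.
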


\subsubsection{Construction of AQECCs from Quantum List-Decodable Codes}
\label{sec:techo-construct-aqec}
Quantum list decoding enables us to find a short list of possible errors which match a syndrome measurement. If we had some way of tagging valid messages, then we could filter through the list for the true error and obtain \emph{unique decoding}.

We find that purity testing codes (PTCs) \cite{barnum2002authentication} are well-suited for this task. A PTC is a collection $\{Q_k\}_{k \in K}$ of stabilizer codes which is able to faithfully detect a Pauli error with high probability over keys $k \from K$. They were originally introduced to construct quantum authentication schemes, but it turns out that we do not need full quantum authentication for our purposes.

\begin{definition} [Purity testing codes, \cite{barnum2002authentication}] \label{def:PTC}
    A stabilizer purity testing code with error $\varepsilon$ ($\varepsilon$-PTC) is a set of stabilizer codes $\{Q_k\}_{k \in K}$ such that, for all Pauli errors $E$,
    $$\Pr_k\big[ E\in N(Q_k) \setminus S(Q_k)] \leq \varepsilon$$
    where $S(Q_k)$ is the stabilizer group of $Q_k$ and $N(Q_k)$ is the normalizer group.
\end{definition}

We construct private AQECCs by first encoding the message in a PTC $Q_k$, and subsequently encoding in a quantum list-decodable code $Q_{LD}$. As long as the keys are kept secret from the channel, the probability that an undetectable error appears in the list is at most $\varepsilon \cdot L$ by a union bound. Therefore a recipient who knows $k$ can uniquely recover the message with probability $1-\varepsilon \cdot L$.

Of course, this construction gives a collection of codes $\{Q_{LD} \circ Q_k\}_{k \in K}$ rather than a single code. Since the channel cannot be allowed to see the keys $k$, we only obtain a relaxed version of AQECCs where the sender and receiver have shared private randomness (unknown to the error channel). We call such codes private AQECCs in analogy to the classical private codes of \cite{1366252, Guruswami2003ListDW}.

\begin{definition} \label{def:privateaqecc}
    A set of quantum codes $\{(\Enc_k, \Dec_k)\}_{k\in K}$ is said to be a $(\delta, \varepsilon)$ private AQECC with keys $K$ if, for all adversarial channels $\mathcal{A}$ of weight $\delta\cdot n$,
    $$\norm{\mathbb{E}_{k \from K}\left[\Dec_k\circ \mathcal{A} \circ \Enc_k\right] - \mathbb{I}}_\diamond \leq \varepsilon.$$
\end{definition}

By using $k \from K$ as the shared private randomness, we obtain the following theorem.

\begin{lemma} [Restatement of \Cref{lemma:private-aqec}]
    Let $\{Q_k\}$ be an $\varepsilon$-PTC where each $Q_k$ is an $[[n_{PTC}, m]]_q$ stabilizer code, and let $Q_{LD}$ be a $[[n, n_{PTC}]]_q$ stabilizer code which is $(\delta, L)$ quantum list-decodable. Then $Q_{LD} \circ Q_{k}$ is a $(\delta, 2\cdot L\cdot \varepsilon)$ private AQECC.
\end{lemma}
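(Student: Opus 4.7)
The plan is to build the decoder for $Q_{LD} \circ Q_k$ out of two components applied in sequence: first the quantum list decoder for $Q_{LD}$ narrows the set of possible Pauli errors on the outer code-state down to at most $L$ logically-distinct candidates, and then the PTC structure of the inner codes $\{Q_k\}$ selects the correct candidate among them. A union bound over the list, combined with the $\varepsilon$-PTC guarantee, controls the recovery error.

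In more detail, the first step is the standard stabilizer discretization: any adversarial channel $\mathcal{A}$ of weight $\delta n$ on the outer register can be analyzed as a mixture of Pauli errors of the same weight, since the outer syndrome measurement projects the environment onto an ensemble of Paulis. Fix one such Pauli error $E$ of weight at most $\delta n$. The decoder first measures the stabilizer syndrome $s$ of $Q_{LD}$; by \Cref{def:QLD}, the resulting syndrome is consistent with at most $L$ logically-distinct Paulis $F_1,\dots,F_\ell$ of weight $\leq \delta n$, and exactly one equivalence class contains $E$. The decoder enumerates the list and, for each candidate $F_i$, coherently applies $F_i^\dagger$, measures the inner PTC syndrome, and commits to the candidate whose inner syndrome is trivial.

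For the analysis, observe that for the ``correct'' $F_i$ (the one in $E$'s equivalence class) the residual error $F_i^\dagger E$ lies in $S(Q_{LD})$, acts trivially on the inner PTC-encoded state, and so always passes the PTC check. For any ``incorrect'' $F_i$, the residual $F_i^\dagger E$ is a logically-nontrivial Pauli of $Q_{LD}$, which acts as some fixed Pauli $\bar{P}_i$ on the inner $n_{PTC}$-qudit PTC state. By the $\varepsilon$-PTC guarantee of \Cref{def:PTC}, for each fixed $\bar{P}_i$, the probability over $k\from K$ that $\bar{P}_i \in N(Q_k)\setminus S(Q_k)$ is at most $\varepsilon$. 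A union bound over the at most $L$ candidates yields that, with probability at least $1-L\varepsilon$ over $k$, no incorrect candidate fools the PTC, and the decoder recovers the encoded message.

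Finally, the resulting $L\varepsilon$ trace-distance bound on the output state, averaged over the Pauli decomposition of $\mathcal{A}$ and the key $k$, translates to the claimed $2L\varepsilon$ diamond-norm bound via the standard factor-of-two conversion between trace distance and diamond norm on a purification of the input message register. The main subtlety will be giving a fully coherent implementation of the candidate-enumeration step, so that ``try each $F_i$ and check its PTC syndrome'' is a single CPTP map rather than a classical loop that repeatedly destroys the state; this can be accomplished by preparing classical ancillas holding the list $\{F_i\}$ (a function of the measured syndrome $s$), computing each inner PTC syndrome into a fresh ancilla with controlled operations, uncomputing the rejected branches, and outputting the branch whose check succeeded. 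Once this implementation is in place, the bound above goes through unchanged.
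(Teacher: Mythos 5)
Your overall approach and the key union bound match the paper's proof: collapse the adversarial channel to a Pauli via the outer syndrome measurement, list-decode, filter with the PTC, and union-bound over the $L$ candidates. The observation that an incorrect candidate $F_i$ passes the PTC check only when its logical residual lies in $N(Q_k)\setminus S(Q_k)$, which happens with probability at most $\varepsilon$ over $k$, is exactly the content of \Cref{claim:ptcsyndrome-draft}.

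The gap is in the ``stabilizer discretization'' step, which you treat as routine. Because $\Enc_k$ depends on $k$, the syndrome distribution $\{p_s\}$ produced by measuring the $Q_{LD}$ syndrome could a priori also depend on $k$ --- and if it did, your per-syndrome union bound would not survive the average over $k$: nothing would prevent each key from placing its syndrome mass on exactly the syndromes at which that key lies outside the ``good'' set. The paper closes this via local indistinguishability of $Q_{LD}$ (\Cref{fact:syndromelocind}, generalized to entangled side information in \Cref{claim:LI-composable-draft}): since $\mathcal{A}$ acts on $<d$ registers and $Q_{LD}$ has distance $d>\delta n$, the reduced state on the corrupted registers is identical for every code state of $Q_{LD}$, so both $\{p_s\}$ and the collapsed representatives $\sigma_{\mathcal{A},s}$ are independent of $k$. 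Note that the hypothesis $d>\delta n$ appears in the full \Cref{lemma:private-aqec} but was dropped from the restatement you were given; your discretization step requires it and you should say so. You also misidentify the ``main subtlety'': the coherent candidate-enumeration you worry about at the end is handled more simply in \Cref{alg:algorithm1} (measure the $Q_{LD}$ and PTC syndromes once, then match the classically computed PTC syndromes of the list elements to the measured one); the genuine subtlety is the $k$-independence just described, together with its extension to adversaries holding entanglement with the message.
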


In this construction, it becomes important that the adversary cannot choose the error according to the key $k$. The local indistinguishability of $Q_{LD}$ protects against such attacks, by ensuring that the adversary's view consisting of fewer than $\text{dist}(Q_{LD})$ components of a code-state of $Q_{LD}\circ Q_k$ is the same regardless of the value of $k$.

To convert a private AQECC into an AQECC, we incorporate the classical key information into the quantum registers. The key tool for this conversion is a classical robust secret sharing schemes (RSSs) \cite{Cramer1999EfficientMC, Cramer2001OnTC, Cramer2008DetectionOA, CDD+15}\footnote{Also known as an Error Tolerant Secret Sharing scheme (ETSS).}, which ensures that the hidden secret key looks uniformly random to the adversary while still allowing the receiver to reconstruct the key with high probability. \cite{CGS05} used a combination of classical secret sharing and classical authentication from \cite{Cramer2001OnTC} to construct such an RSS. We need to be more careful with our choice of classical secret sharing scheme, in order to ensure that the resulting quantum code still lies roughly on the quantum Singleton bound and maintains constant alphabet. In \cref{sec:AQECC}, we use a more recent RSS construction of \cite{CDD+15} to prove the following theorem.
\begin{theorem}[Informal statement of \Cref{theorem:removing-privacy}]
For any $[[n,k]]_q$ code that is a $(\delta, \varepsilon)$ private AQECC with sufficiently short keys, there is a $[[n,(1-O\left(\frac{1}{\log(q)}\right)) \cdot k]]_{O(q)}$ code that is a $(\delta, \varepsilon+2^{-\Omega(n)})$ AQECC.
\end{theorem}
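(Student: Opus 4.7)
The plan is to encode the key into the quantum code itself using a classical robust secret sharing (RSS) scheme of \cite{CDD+15}, which lets the receiver recover the key even from adversarially corrupted shares while keeping it statistically hidden from the adversary. Concretely, to encode $|m\rangle$: sample a key $k \leftarrow K$ uniformly, compute shares $(s_1,\ldots,s_n) \leftarrow \Share(k)$, produce $\Enc_k(|m\rangle)$ on $n$ qudits of dimension $q$, and glue $s_i$ as a computational-basis flag to the $i$-th qudit, yielding $n$ qudits of dimension $q' = q \cdot 2^{O(1)} = O(q)$. To decode: measure the classical flags to obtain possibly tampered shares $(\tilde s_1,\ldots,\tilde s_n)$, run $\Reconstruct$ to obtain $\tilde k$, and apply $\Dec_{\tilde k}$ to the quantum part.

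The error analysis follows the blueprint of \cite{CGS05}, using two key RSS guarantees. First, \emph{robustness}: even after an adversary tampers with up to $\delta n$ shares, $\Reconstruct$ returns $\tilde{k} = k$ except with probability $2^{-\Omega(n)}$. Second, \emph{privacy}: the marginal distribution of any $\delta n$ shares is $2^{-\Omega(n)}$-close in total variation to a distribution independent of $k$. For an arbitrary adversarial channel $\mathcal{A}$ of weight $\delta n$ on the augmented code, I would decompose its action on each corrupted register as ``measure the classical flag, then apply a quantum operation depending on the measurement''. By RSS privacy, we can replace the read flags by samples from a $k$-independent distribution at a diamond-distance cost of $2^{-\Omega(n)}$, yielding a channel on the quantum registers alone which is oblivious to $k$. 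The private AQECC guarantee then bounds the error of $\mathbb{E}_k[\Dec_k \circ \mathcal{A} \circ \Enc_k]$ by $\varepsilon$, while the RSS reconstruction failure contributes the remaining $2^{-\Omega(n)}$.

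The rate bookkeeping is straightforward: by the parameters of the RSS of \cite{CDD+15}, for sufficiently short keys each share can be taken to be $O(1)$ bits, so the local dimension grows from $q$ to $q' = O(q)$. A message carrying $k \log q$ bits of quantum information requires $k \log q / \log q' = k \cdot (1 - O(1/\log q))$ qudits of dimension $q'$, matching the stated rate.

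The main technical obstacle is the privacy step: rigorously showing that the adversary's channel on the joint quantum/classical registers ``decouples'' from $k$ after averaging, so that the private AQECC guarantee (which only constrains $k$-oblivious channels) can be invoked. The crux is the decomposition of $\mathcal{A}$ through a measurement of the classical flags followed by an $s$-dependent quantum operation, combined with the observation that RSS privacy lets us replace the measured flags by a $k$-independent distribution without meaningfully changing the induced channel. All the remaining pieces --- ensuring that $\Reconstruct$ succeeds on the untampered $(1-\delta) n$ shares despite the adversary's tampering on $T$, and that the induced error composes properly across the two reductions --- are standard applications of the robustness guarantee and the triangle inequality for the diamond norm.
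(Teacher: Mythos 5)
Your proposal takes essentially the same approach as the paper: encode the private key with a robust secret sharing scheme, glue each share to the corresponding code symbol, and combine RSS privacy and reconstructability (via the diamond-norm triangle inequality) to convert the private AQECC into a plain $(\delta,\varepsilon+2^{-\Omega(n)})$-AQECC with the stated rate loss. The only differences are cosmetic: the paper uses the \emph{perfect} (not merely statistical) privacy of the \cite{CDD+15} RSS, and argues by a direct algebraic calculation showing $\Tr_S[E(\Enc_r(\rho)\otimes\Share(r))]$ is unchanged if the shares are replaced by shares of a fresh independent key, rather than going through an explicit measure-the-flags-then-act decomposition of the adversary.
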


\subsubsection{Distance Amplification for Quantum Codes}
\label{sec:techo-list-recovery}
In order to decrease the alphabet size of our quantum code constructions and improve on their encoding/decoding efficiency, we develop a quantum analog of the seminal distance amplification techniques for classical error correcting codes by Alon, Edmonds and Luby \cite{AEL95}. Their ideas hinged on using code concatenation and a careful redistribution (or permutation) of the symbols of the code, in order define a new code with a larger distance and/or smaller alphabet (and slightly lower rate). These techniques have found extensive use in classical coding theory, such as the construction of linear time encodable/decodable codes \cite{Guruswami2002NearoptimalLC}, the construction of list-decodable codes from list-recoverable codes \cite{guruswami2003linear, Guruswami2008ExplicitCA}, high-rate locally testable codes \cite{Kopparty2015HighrateLA, gopi2018locally,hemenway2019local}, and many other applications. 

The simplest formulation of the quantum analog we study can be understood as physically permuting and re-grouping the qudits of a concatenated quantum code \cite{Knill1996ConcatenatedQC}\footnote{See \Cref{def:codeconcatenation} for a review of quantum code concatenation.}. Starting with an outer $[[n, k,n\cdot \Delta_{out}]]_{q^m}$ stabilizer code $Q_{out}$, and an inner $[[n', m, n'\cdot \Delta_{in}]]_q$ stabilizer code $Q_{in}$, their quantum code concatenation is the $[[n\cdot n', k\cdot m]]_q$ stabilizer code $Q_\diamond = Q_{out}\diamond Q_{in}$. Consider a physical depiction of $Q_\diamond$ as a $n \times n'$ table of $q$-ary qudits. The final step is to use an $n'$-biregular expander graph $G = (L\cup R, E)$ on two partitions of $|L|=|R|=n$ nodes, to redistribute the qudits of $Q_\diamond$ into another $[[n, km/n']]_{q^{n'}}$ stabilizer code $Q_G$. To do so, if the $i$th edge incident to $u\in L$ equals the $j$th edge incident to $v\in R$, then the $i$th $q$-ary qudit of the $u$th inner block of $Q_\diamond$ is placed into the $j$th $q$-ary qudit of the $v$th inner block of $Q_G$. In other words, the table associated to $Q_\diamond$ is permuted into another $n$ by $n'$ table for $Q_G$, and each of its rows represents a $q^{n'}$-ary qudit. Qualitatively, this qudit redistribution ensures that any $q^{n'}$-ary Pauli $E$ (represented as a tensor product of $q$-ary Pauli's) supported on roughly a $< \Delta_{in}/2$ fraction of the rows of $Q_G$, after unpermuting $G$, is then almost uniformly spread out among the inner blocks/rows of $Q_\diamond$ such that most inner blocks have less than $\Delta_{in}/2$ fraction of errors. If the fraction of the `high weight' inner blocks is less than $\Delta_{out}/2$, then $E$ is a perfectly correctable error for $Q_G$, and thus $Q_G$ has relative distance roughly $\Delta_{in}$. 

\begin{theorem} [Informal statement of \Cref{prop:aelbasic}]
    If $G$ is an $\varepsilon$-pseudorandom expander graph (\Cref{def:expander}), then $Q_G$ has relative distance $\geq \Delta_{in} - 2\cdot \varepsilon \cdot \sqrt{\Delta_{in}/\Delta_{out}}$.
\end{theorem}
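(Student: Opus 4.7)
I will show the contrapositive: if $E$ is a $q^{n'}$-ary Pauli in $N(Q_G)\setminus S(Q_G)$ with relative weight $\alpha$, then $\alpha \ge \Delta_{in} - 2\epsilon\sqrt{\Delta_{in}/\Delta_{out}}$. The central observation is that the edge-permutation from $Q_\diamond$ to $Q_G$ is a purely syntactic re-grouping of the $nn'$ physical qudits; it does not alter the stabilizer subspace, only the notion of what counts as a single ``register''. Consequently, $E$ corresponds to a $q$-ary Pauli $E' = \bigotimes_{v\in L} E_v$ on $Q_\diamond$, with $E_v$ its restriction to the $v$-th inner block, and $E'\in N(Q_\diamond)\setminus S(Q_\diamond)$.

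Next, I exploit the concatenated stabilizer structure. Since $E'$ commutes with every inner stabilizer supported on a single block, each $E_v$ lies in $N(Q_{in})$. Call a block \emph{bad} if $|E_v|\ge n'\Delta_{in}$; otherwise $|E_v| < \mathrm{dist}(Q_{in})$ together with $E_v\in N(Q_{in})$ forces $E_v\in S(Q_{in})$. Hence, modulo inner stabilizers, $E'$ descends to an outer $q^m$-ary Pauli $F$ on $Q_{out}$ supported only on the bad blocks. Because $E'\in N(Q_\diamond)\setminus S(Q_\diamond)$, we have $F\in N(Q_{out})\setminus S(Q_{out})$, so by the distance of $Q_{out}$ the number of bad blocks is at least $\Delta_{out}\cdot n$.

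Finally, I apply the bipartite expander mixing lemma. Let $S\subseteq R$ denote the corrupted rows of $Q_G$ (density $\alpha$) and $T\subseteq L$ the set of bad inner blocks (density $\beta\ge \Delta_{out}$). Any non-trivial $q$-ary error on an inner block arises from the permutation applied to a row in $S$ (since rows outside $S$ carry trivial $q^{n'}$-ary action and therefore trivial action in every component), so each bad block contributes at least $n'\Delta_{in}$ edges to $e(S,T)$, giving $e(S,T)\ge n'\Delta_{in}\cdot |T|$. Pseudorandomness of $G$ provides the matching upper bound $e(S,T)\le n'\alpha\beta n + \epsilon n n'\sqrt{\alpha\beta}$. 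Dividing by $n'n\sqrt{\beta}$, rearranging, and substituting $\beta\ge \Delta_{out}$ together with the free assumption $\alpha\le \Delta_{in}$ yields $\Delta_{in}-\alpha \le \epsilon\sqrt{\alpha/\Delta_{out}}\le \epsilon\sqrt{\Delta_{in}/\Delta_{out}}$, matching the claimed bound up to the constant factor fixed by the paper's precise normalization of $\epsilon$-pseudorandomness.

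The main obstacle I anticipate is bookkeeping: carefully tracking how the $q^{n'}$-ary weight of $E$ on $Q_G$ translates, via the permutation, into $q$-ary weights on the inner blocks of $Q_\diamond$, and then into edge counts to which the expander mixing lemma applies. The genuinely quantum content of the argument is confined to two small observations --- that re-grouping qudits preserves the stabilizer/normalizer structure, and that commutation with the inner-block stabilizers forces each $E_v\in N(Q_{in})$ --- after which the remainder of the argument is a direct transcription of the classical AEL distance-amplification calculation.
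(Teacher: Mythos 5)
Your proof is correct, and it takes a genuinely different route from the paper. The paper proves $\Cref{prop:aelbasic}$ by first establishing $\Cref{lem:expperm}$ (a combinatorial expansion statement about how errors redistribute under $\pi_G$), then invoking it with $\alpha_{\text{in}}=\Delta_{\text{in}}/2$, $\alpha_{\text{out}}=\Delta_{\text{out}}/2$ to show that $Q$ is \emph{unique-decodable} from a $\kappa=\Delta_{\text{in}}/2-\varepsilon_0\sqrt{\Delta_{\text{in}}/\Delta_{\text{out}}}$ fraction of errors, and finally inferring distance $\geq 2\kappa$. You instead argue directly about normalizer elements: you show any $E\in N(Q_G)\setminus S(Q_G)$ pulls back to $E'\in N(Q_\diamond)\setminus S(Q_\diamond)$ with $E_v\in N(Q_{\text{in}})$ in each inner block, that blocks with $|E_v|<\operatorname{dist}(Q_{\text{in}})$ must carry inner stabilizers, and that modulo inner stabilizers $E'$ descends to a nontrivial outer normalizer, so $\geq\Delta_{\text{out}}n$ blocks are ``bad''; the expander mixing lemma then bounds the weight of $E$ from below. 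Your route requires a bit more quantum bookkeeping (the descent to the outer code, correctly handled) but yields a slightly tighter bound, $\Delta\geq\Delta_{\text{in}}-\varepsilon\sqrt{\Delta_{\text{in}}/\Delta_{\text{out}}}$, because it uses the full distances $\Delta_{\text{in}},\Delta_{\text{out}}$ rather than the unique-decoding radii $\Delta_{\text{in}}/2,\Delta_{\text{out}}/2$ --- the factor-of-2 discrepancy you flagged is not a normalization artifact but a genuine (small) slack in the paper's argument. The paper's choice to go through $\Cref{lem:expperm}$ is not wasted, though: that lemma is stated for general $\alpha_{\text{in}},\alpha_{\text{out}}$ precisely so it can be reused for list-decoding and alphabet reduction in $\Cref{prop:aelred}$ and $\Cref{thm:frsael}$, whereas your direct argument is a one-shot distance bound. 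One cosmetic nit: you write ``dividing by $n'n\sqrt{\beta}$''; the correct divisor is $n'n\beta$, and the displayed conclusion indicates this was a typo rather than an error.
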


In \Cref{sec:ael} we argue that the insights from \cite{AEL95} translate remarkably well into the quantum setting. We analogously adapt these ideas, and the intuition above to decrease the alphabet of our quantum codes, construct quantum list decodable codes from a suitable analog of quantum list recoverable codes, and to construct private AQECC's which are LDPC and encodable/decodable in linear time.

\subsubsection{Private LDPC AQECCs with Linear-Time Decoding}
\label{sec:techo-ldpc}
Recall that the construction of efficient private AQECCs described in \Cref{sec:techo-construct-aqec} first constructs efficient quantum list-decodable codes using AEL amplification with inefficient but small quantum list-decodable inner codes, and then composes these AEL-amplified codes with PTCs to obtain the desired private AQECCs. Therefore while this construction provides a general recipe for AQECCs, as mentioned previously the outer code of the AEL concatenation must satisfy an appropriate form of quantum list-recoverability, which we only know how to achieve with folded quantum Reed-Solomon codes.

In \Cref{sec:aqeccdirect}, we present a construction of efficient private AQECCs that permits more flexibility in the code properties by combining the same ingredients described above in an alternative manner. Specifically, if we instead apply AEL amplification with a certain class of inefficient but small private AQEC inner codes, the outer code may be chosen to be any asymptotically good efficient quantum code $Q_{\text{out}}$. The resulting construction will be an efficient private AQECC that inherents some of the properties of $Q_{\text{out}}$, such as its locality and decoding time. Thus choosing the outer code from a family of good quantum LDPC codes \cite{panteleev2022asymptotically,leverrier2022quantum,leverrier2022efficient,dinur2022good,gu2022efficient,leverrier2022parallel}, we obtain the efficient private LDPC AQECCs of \Cref{thm:privateLDPCinf}.

\subsection{Related Work}
\label{sec:relwork}
The idea of using approximate quantum error correction to build better quantum codes goes back to \cite{LNCY97}, who construct a 4-qubit AQECC encoding one qubit that tolerates a specific noise model.

A more closely related result is that of \cite{CGS05}, who construct AQECCs with a decoding radius of $\lfloor (n-1)/2 \rfloor$ for codes of block length $n$. Their construction is based on a combination of quantum error correction, quantum message authentication, and classical secret sharing. However, the code in \cite{CGS05} has asymptotically decaying rate, and is defined on qudits with exponentially large local dimension. To go beyond the exponential-alphabet regime we use quantum list decoding.

A related notion of quantum list decoding was introduced by \cite{LS08}, who show that random stabilizer codes on qubits are quantum list-decodable with high probability. They use these codes to construct private AQECCs on qubits with rates up to $1-H(\delta)-\delta\log 3$ (where $\delta \cdot n$ is the decoding radius).

Compared to this work, \cite{LS08} have a slightly different definition of quantum list decoding and do not construct codes with efficient decoding procedures. In fact, we expect that an efficient decoding algorithm for random stabilizer codes should not exist. Moreover, their codes do not approach the quantum Singleton bound because they restrict to binary alphabets. While extending their results to larger (but still constant) alphabets would yield codes that do approach the quantum Singleton bound, the problem of efficient decoding would remain.

Several works have studied conditions in which approximate quantum error correction is possible \cite{Barnum2000ReversingQD, SW02, Devetak2005ThePC, Kretschmann2008TheIT, Bny2009ConditionsFT, MN10, MN12, BO10, Hayden2017ApproximateQE}. For instance, \cite{SW02,  Kretschmann2008TheIT, Bny2009ConditionsFT, BO10} were interested in characterizing when information can be sent over a noisy quantum channel, with or without prior knowledge of the input state and the noise model. \cite{MN10, MN12} studied the efficacy and optimality of the \textit{transpose channel} as a recovery map, when the noise process is known. \cite{Devetak2005ThePC, Hayden2017ApproximateQE} studied applications of approximate quantum error correction to quantum and classical communication, including to the tasks of state merging, entanglement distillation, quantum identification and remote state preparation.

\cite{Brun2006CorrectingQE} introduced the concept of an entanglement-assisted quantum error correcting code (EAQECC), when the sender and receiver have access to an unbounded supply of pre-shared EPR pairs. A number of results \cite{Lai2013DualityIE, Lai2013EntanglementIT, Wilde2011EntanglementAssistedQT} studied properties of EAQECCs, and discussed their construction from classical linear codes. While no pure quantum code can bypass the quantum Singleton bound, quite remarkably,  recently EAQECCs were shown to be able to transmit quantum information up to the classical Singleton bound \cite{Grassl2020EntanglementAssistedQC} in certain parameter regimes. \cite{Devetak2003TheCO, Hsieh2008EntanglementAssistedCO, grassl2022entropic, Mamindlapally2022SingletonBF} proved Singleton bounds and other upper bounds on the capacity of entanglement-assisted and private quantum codes.

\subsection{Discussion and Open Problems}
The folded Reed-Solomon codes \cite{Guruswami2008ExplicitCA} are a family of deterministic, polynomial-time constructible classical codes which approach the information-theoretic list-decoding limit. Since their result, a number of (explicit) codes have improved on their alphabet and list sizes \cite{Guruswami2010CyclotomicFF, Guruswami2020OptimalRL,Guo2022EfficientLW}, without the need of expander-based distance amplification \cite{AEL95}. However, it seems unclear how to use these codes to design quantum codes, as their duals aren't necessarily explicit, or they require pre-encoding with certain \textit{subspace-evasive} sets \cite{Dvir2012SubspaceES, https://doi.org/10.48550/arxiv.1704.05992} to reduce their list sizes. We leave for future work the design of explicit quantum codes approaching the list-decoding capacity at small qudit local dimensions. 

We note that the techniques in this work can be used to formalize a quantum analog of \textit{list-decoding from erasures}, where the receiver is tasked with producing a list of correction operators to a reduced density matrix of some code-state. This task has been extensively studied in the classical setting \cite{guruswami2001expander, Guruswami2003ListDF, guruswami2003linear, Guruswami2004LinearTimeLD, erasureldexpander}, where it is well known that random binary linear codes are efficiently erasure list-decodable even when all but a small fraction of symbols have been erased. One could ask an analogous question for quantum codes on qubits: For fixed rate $R$, what is the maximum number of erasures for which one can still efficiently list-decode (or approximately unique-decode) the quantum state?

% \cite{Ambainis2006TheMD,LS08} pointed out that quantum list decoding could be used for the task of entanglement distillation with bounded weight errors. In this setting, two parties have shared $n$ EPR pairs over a noisy channel, and the task is to distill them into $k < n$ noiseless EPR pairs using only local operations and (1 or 2-way) classical communication. In a model where the halves of the EPR pairs are bundled into qudits, and a small fraction $\rho$ of these qudits are corrupted, our results provide an efficient solution to purify up $k \approx (1-2\rho)\cdot n$ of these maximally entangled qudit states.

\subsection{Organization}
We organize the rest of this work as follows. In \cref{sec:prelim} we review the necessary background on quantum error correction. In \cref{section:QLD} we present our definition of quantum list decoding and our results on list-decodable CSS codes, including our explicit construction of folded quantum Reed-Solomon codes. In \cref{sec:AQECC} we show how to reduce approximate quantum error correction to quantum list decoding. In \cref{sec:ael} we describe our quantum distance amplification and alphabet reduction techniques. We apply this AEL amplification in \Cref{sec:frsael} to construct efficient quantum list-decodable codes approaching the quantum Singleton bound over constant alphabets. We apply AEL in an alternative manner in \Cref{sec:aqeccdirect} to construct private LDPC AQECCs with linear-time decoders, which as a byproduct gives another construction of efficient AQECCs. 
% \Cref{sec:lowerbounds} shows that the quantum Singleton bound extends robustly to approximate error correction, as the ordinary statement only applies to exact error correction.

In \Cref{section:randomCSSproperties} we discuss the list decodability and other simple properties of random CSS codes, which follow easily from their classical counterparts. For completeness, in \Cref{sec:listrecoverablequantumcodes} we introduce a quantum notion of list recoverability and a ``fully-quantum" proof of the construction of quantum list decodable codes on smaller alphabets from quantum list recoverable codes and distance amplification.

\section{Preliminaries}
\label{sec:prelim}

\subsection{Notation}

\textbf{Norms and Distances} We use three notions of distance between quantum states and quantum channels. The trace distance $\|\rho-\sigma\|_1$ between two mixed states $\rho, \sigma \in \mathcal{D}(\mathcal{H})$ captures their distinguishability, and is defined in terms of the trace norm (or Shatten 1-norm) $\|M\|_1 = \text{Tr}[\sqrt{M^\dagger M}]$. The fidelity $F(\rho, \sigma) = \|\sqrt{\rho}\sqrt{\sigma}\|_1^2$, and can be written simply as $\bra{\psi}\rho\ket{\psi}$ when $\sigma = \ket{\psi}\bra{\psi}$ is a pure state.

Finally, the diamond norm distance between two quantum channels $\mathcal{M}, \mathcal{N}:\mathcal{L}(\mathcal{H}_A)\rightarrow \mathcal{L}(\mathcal{H}_B)$, quantifies their distinguishability even in the presence of entanglement 
\begin{equation}
    \|\mathcal{N}-\mathcal{M}\|_\diamond = \sup_n \max_{\rho_{AE}}\| (\mathbb{I}_n\otimes \mathcal{N})(\rho_{AE}) - (\mathbb{I}_n\otimes \mathcal{M})(\rho_{AE})\|_1
\end{equation}

where $\rho_{AE}\in \mathcal{D}(\mathcal{H}_A\otimes \mathcal{H}_E)$.

\textbf{Finite Fields} Let $q =  p^m$ be a power of a prime $p$, and denote by $\mathbb{F}_{q}$ to be the Galois field of $p^m$ elements. We refer to the $\mathbb{F}_p$ functional $\text{tr}_{\mathbb{F}_{q}/\mathbb{F}_{p}}:\mathbb{F}_{q}\rightarrow \mathbb{F}_p$ as the trace function, where $\text{tr}_{\mathbb{F}_{q}/\mathbb{F}_{p}}(a) = \sum_{i=0}^{m-1}a^{p^i}$. If $\alpha_1, \cdots, \alpha_m$ is a basis of $\mathbb{F}_q$ over $\mathbb{F}_p$, then one can express $a\in \mathbb{F}_{q} $ as $a = \sum_{i=1}^m a_i \alpha_i$ for $a_i\in \mathbb{F}_p$. We refer to a pair of bases $\alpha = \alpha_1, \cdots, \alpha_m$, $\beta = \beta_1\cdots \beta_m$ of $\mathbb{F}_q$ as dual bases if $\text{tr}_{\mathbb{F}_{q}/\mathbb{F}_{p}}(\alpha_i\beta_j) = \delta_{i,j}$. If $a, b\in \mathbb{F}_{q}$ is expressed as $(a_1, \cdots, a_m)$, $(b_1, \cdots, b_m)$ in the dual bases $\alpha, \beta$ respectively, then the inner product over the basis representation becomes the trace:
\begin{equation}
   \big\langle a,  b\big\rangle =\sum_{i = 1}^{m} a_i b_i  = \sum_{i, j = 1}^{m} a_i b_j \text{tr}_{\mathbb{F}_{q}/\mathbb{F}_{p}}(\alpha_i\beta_j) = \text{tr}_{\mathbb{F}_{q}/\mathbb{F}_{p}}(ab)
\end{equation}

\textbf{Error Basis} To connect these notions to quantum codes, we define an explicit error basis for $q=p^m$-ary quantum systems. Fix $\omega = e^{2\pi i/p}$. Let $T, R$ be the `shift' and `phase' operators on $\mathbb{C}_p$, defined by

\begin{equation}
    T  = \sum_{x\in \mathbb{F}_p}\ket{x}\bra{x+1} \text{ and }R =  \sum_{x\in \mathbb{F}_p} \omega^x \ket{x}\bra{x}
\end{equation}

The operators $T^iR^j$, $i, j\in \mathbb{F}_p$ form the Weyl-Heisenberg operators, an orthonormal basis of operators over $\mathbb{C}_p$. If $a, b\in \mathbb{F}_q$, with representations $(a_1, \cdots a_m)$, $(b_1, \cdots b_m)$ in the dual bases $\alpha, \beta$ respectively, then one can define an o.n. basis of operators over $\mathbb{C}_{q}$: 

\begin{equation}
    E_{a, b} = X^a Z^b  = \bigotimes_{i\in [m]} T^{a_i} R^{b_i} \text{ and thus }  E_{a, b} E_{a', b'} = \omega^{\langle a, b'\rangle -\langle a', b\rangle   } E_{a', b'}E_{a, b}.
\end{equation}

Finally, if $\textbf{a} = (a^{(1)}, a^{(2)}\cdots a^{(n)}),  \textbf{b} = (b^{(1)}, b^{(2)}\cdots b^{(n)}) \in \mathbb{F}_{q}^n$, then one can define operators acting on $\mathbb{C}_q^{\otimes n}$ via $E_{\textbf{a}, \textbf{b}} = \otimes_{j\in [n]} E_{a^{(j)}, b^{(j)}}$. The error group $\mathcal{P}_{q}^n$ is the group generated by the $E_{\textbf{a}, \textbf{b}}$ and the phase $\omega \cdot \mathbb{I}_{q^n\times q^n}$. The weight $wt(E_{\textbf{a}, \textbf{b}})$ is the number of locations $j\in [n]$ where either $a^{(j)}, b^{(j)}$ are non-zero, and $\mathcal{P}_{q, \delta}^n\subset \mathcal{P}_{q}^n$ is the set of operators in the error group of weight less than $\delta\cdot n$.

\subsection{Quantum Error Correction}

\begin{definition} [\cite{Knill1997TheoryOQ}]
    An $[[n, k, d]]_q$ QECC (quantum error correcting code) is a subspace $Q\subset (\mathbb{C}^{q})^{\otimes n}$ of dimension $q^k$. Let $\Pi$ be the projection onto $Q$. If for all operators $E$ of weight $\leq d$, we have 
    \begin{equation}
    \Pi E\Pi = \eta_E \Pi 
    \end{equation}
\noindent where $\eta_E \in \mathbb{C}$ only depends on $E$, then $Q$ is said to have distance $d$.
\end{definition}

Quantum stabilizer codes are a class of QECCs defined as the joint eigenspace of a set $S$ of commuting operators, the stabilizers. We discuss a presentation of stabilizer codes over finite fields by \cite{Ashikhmin2001NonbinaryQS}. 

\begin{definition} [Stabilizer Codes over Finite Fields, \cite{Ashikhmin2001NonbinaryQS}]
    Let $q = p^m$ be a prime power. Let the set of operators $\{S_1, \cdots, S_{r}\}\subset \mathcal{P}_q^n$ generate a commutative subgroup $S$ of $\mathcal{P}_q^n$. Then the subspace $Q\subset \mathbb{C}_q^{\otimes n}$ defined by
    \begin{equation}
        Q = \bigg\{ |\psi\rangle\in \mathbb{C}_q^{\otimes n} : S_i \ket{\psi} = \ket{\psi} \text{ for all }i\in [r] \bigg\}
    \end{equation}
    forms a $[[n, n-\frac{r}{m}]]_q$ QECC. 
\end{definition}

We refer to the stabilizer group of a code $Q$ as $S(Q)$. Associated to a stabilizer group is the normalizer group $N(Q)$, the set of operators in the error group $\mathcal{P}_q^n$ which commute with all the elements of $S(Q)$. The elements of $N(Q) - S(Q)$ \footnote{By which we mean the set difference, to distinguish from the quotient group.} are the `undetectable errors' on $Q$, as they map code-states to distinct code-states, and the quotient group $N(Q)/S(Q)$ are the logical operators on $Q$. The distance of $Q$ is the lowest weight of any operator in $N(Q) - S(Q)$. The class of `detectable' errors of $Q$ are the operators outside $N(Q)$, which we detect by measuring the syndrome vector:

\begin{definition}
    For any operator $E\in \mathcal{P}_q^n$, we refer to the syndrome $s_E = (s_1, s_2\cdots, s_r)\in \mathbb{F}_p^{r}$ of the operator $E$ on the stabilizer code $Q$ as the phases $s_i$ defined by $S_i E = \omega^{s_i} ES_i$ for $i\in [r]$. 
\end{definition}

Note that if $E\in  N(Q)$, then the syndrome is 0, $s_E = 0^r$. Moreover, the syndrome is additive: if $E, E'\in \mathcal{P}_q^n$, then $s_{EE'} = s_E+s_{E'}$, and $s_{E^\dagger} = -s_E\mod p$.

We instantiate most of our constructions in this work using the CSS codes introduced by \cite{Calderbank1996GoodQE, Steane1996SimpleQE}, and their extensions to non-binary fields \cite{Ketkar2006NonbinarySC, Rtteler2004OnQM, Kim2008NonbinaryQE}. 

\begin{theorem} [Galois-qudit CSS codes] \label{theorem:galoiscss} 
    Fix two linear classical codes $C_1, C_2 \subset \mathbb{F}_q^n$ of dimension $k$, where $C_2^\perp \subset C_1$ and $C_1, C_2$ both have distance at least $d$. Let CSS$(C_1, C_2)\subset \mathbb{C}_q^{\otimes n}$ be the stabilizer code defined by the stabilizers $E_{\textbf{a}, \textbf{b}} \in \mathcal{P}_q^n$, $a\in (C_2)^\perp, b\in (C_1)^\perp$. Then CSS$(C_1, C_2)$ is a $[[n, 2k-n, d]]_q$ QECC. 
\end{theorem}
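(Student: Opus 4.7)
The plan is to verify three properties of $\mathrm{CSS}(C_1,C_2)$ in turn: commutativity of the stabilizer generators, the code dimension $2k-n$, and distance at least $d$. Extending the single-qudit commutation relation coordinate-wise, for $\mathbf{a},\mathbf{a}',\mathbf{b},\mathbf{b}' \in \mathbb{F}_q^n$ one has
\begin{equation*}
E_{\mathbf{a},\mathbf{b}} E_{\mathbf{a}',\mathbf{b}'} = \omega^{\text{tr}(\mathbf{a}\cdot \mathbf{b}' - \mathbf{a}' \cdot \mathbf{b})} E_{\mathbf{a}',\mathbf{b}'} E_{\mathbf{a},\mathbf{b}},
\end{equation*}
where $\mathbf{a}\cdot \mathbf{b} := \sum_j a^{(j)} b^{(j)} \in \mathbb{F}_q$ denotes the standard $\mathbb{F}_q$-bilinear inner product. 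For the proposed stabilizer generators we have $\mathbf{a},\mathbf{a}' \in C_2^\perp \subset C_1$ and $\mathbf{b},\mathbf{b}' \in C_1^\perp$, so both $\mathbf{a}\cdot \mathbf{b}'$ and $\mathbf{a}' \cdot \mathbf{b}$ vanish in $\mathbb{F}_q$ and commutativity follows. The generators form $C_2^\perp \oplus C_1^\perp$, of $\mathbb{F}_p$-dimension $r = 2m(n-k)$, so the formula $n - r/m$ from the stabilizer-code definition yields code dimension $2k-n$.

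Next, I would identify the normalizer. The condition that $E_{\mathbf{a},\mathbf{b}}$ commute with every generator, after setting each of $\mathbf{a}'=0$ and $\mathbf{b}'=0$ in turn, reduces to $\text{tr}(\mathbf{a}\cdot \mathbf{b}')=0$ for all $\mathbf{b}' \in C_1^\perp$ and $\text{tr}(\mathbf{a}'\cdot \mathbf{b})=0$ for all $\mathbf{a}' \in C_2^\perp$. Using $\mathbb{F}_q$-linearity of the duals we may replace $\mathbf{b}'$ by $\lambda \mathbf{b}'$ for arbitrary $\lambda \in \mathbb{F}_q$, whereupon non-degeneracy of the trace pairing $\mathbb{F}_q \to \mathbb{F}_p$ forces $\mathbf{a}\cdot \mathbf{b}'=0$ in $\mathbb{F}_q$; hence $\mathbf{a} \in (C_1^\perp)^\perp = C_1$, and symmetrically $\mathbf{b} \in C_2$. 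So $N(Q) = \{ E_{\mathbf{a},\mathbf{b}} : \mathbf{a} \in C_1,\ \mathbf{b} \in C_2\}$ up to phases.

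For the distance, any $E_{\mathbf{a},\mathbf{b}} \in N(Q)\setminus S(Q)$ must satisfy $\mathbf{a} \notin C_2^\perp$ or $\mathbf{b} \notin C_1^\perp$; in the former case $\mathbf{a}$ is a nonzero codeword of $C_1$ and in the latter $\mathbf{b}$ is a nonzero codeword of $C_2$, so in either sub-case the offending vector has Hamming weight at least $d$. Since $\mathrm{wt}(E_{\mathbf{a},\mathbf{b}})$ counts coordinates $j$ where $a^{(j)}$ or $b^{(j)}$ is nonzero, it dominates both $\mathrm{wt}(\mathbf{a})$ and $\mathrm{wt}(\mathbf{b})$, giving $\mathrm{wt}(E_{\mathbf{a},\mathbf{b}}) \geq d$ as required. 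The one delicate—though routine—point is the normalizer characterization, where one must carefully distinguish the $\mathbb{F}_p$-valued trace inner product appearing in the Pauli commutation phase from the $\mathbb{F}_q$-inner product defining the classical duals, and use $\mathbb{F}_q$-linearity together with non-degeneracy of the trace form to pass from vanishing-in-trace to vanishing-in-$\mathbb{F}_q$.
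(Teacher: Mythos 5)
The paper states \Cref{theorem:galoiscss} as a known result, with citations to \cite{Calderbank1996GoodQE, Steane1996SimpleQE, Ketkar2006NonbinarySC, Rtteler2004OnQM, Kim2008NonbinaryQE}, and does not include a proof. Your proof is correct and follows the standard stabilizer-formalism route: commutativity of the generators, dimension count via $n - r/m$, identification of the normalizer $N(Q)=\{E_{\mathbf{a},\mathbf{b}}:\mathbf{a}\in C_1,\mathbf{b}\in C_2\}$ (which the paper asserts immediately after the theorem), and a lower bound on the minimum weight of $N(Q)\setminus S(Q)$. You correctly flag and handle the one genuinely delicate point, namely that the commutation phase involves the $\mathbb{F}_p$-valued trace form $\mathrm{tr}(\mathbf{a}\cdot\mathbf{b}')$ while the classical duals are defined via the $\mathbb{F}_q$-bilinear product $\mathbf{a}\cdot\mathbf{b}'$; passing between the two requires $\mathbb{F}_q$-linearity of $C_1^\perp,C_2^\perp$ together with nondegeneracy of $\mathrm{tr}_{\mathbb{F}_q/\mathbb{F}_p}$, exactly as you do by scaling $\mathbf{b}'$ by arbitrary $\lambda\in\mathbb{F}_q$. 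The remaining steps (that $\mathbf{a}\notin C_2^\perp$ forces $\mathbf{a}$ to be a nonzero codeword of $C_1$ of weight $\geq d$, and that $\mathrm{wt}(E_{\mathbf{a},\mathbf{b}})\geq\max(\mathrm{wt}(\mathbf{a}),\mathrm{wt}(\mathbf{b}))$) are sound. This is a complete proof of the cited fact.
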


Moreover, the normalizer group of CSS$(C_1, C_2)$ are (up to a global phase) the operators $E_{\textbf{a}, \textbf{b}} \in \mathcal{P}_q^n$,  $a\in C_1, b\in C_2$. While a seemingly minor modification, it will later be relevant to discuss CSS codes over vector spaces as well. As the authors haven't found such a pre-existing definition in the literature, we discuss a short self-contained proof below, which follows immediately from the techniques by \cite{Ashikhmin2001NonbinaryQS}.

\begin{theorem} \label{theorem:vectorspacecss}
    Fix two $\mathbb{F}_q$-linear $[n, k]_{q^m}$ classical codes $C_1, C_2 \subset (\mathbb{F}_{q}^m)^{n}$ of dimension $k$, where $C_2^\perp \subset C_1$ and $C_1, C_2$ both have distance at least $d$. Let CSS$(C_1, C_2)\subset (\mathbb{C}_q^{\otimes m})^{\otimes n}$ be the stabilizer code defined by the stabilizers $E_{\textbf{a}, \textbf{b}} \in \mathcal{P}_q^{n\cdot m}$, $a\in (C_2)^\perp, b\in (C_1)^\perp$. Then CSS$(C_1, C_2)$ is a $[[n, 2k-n, d]]_{q^m}$ QECC. 
\end{theorem}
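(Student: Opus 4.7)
The plan is to directly adapt the proof of the Galois-qudit CSS construction (\Cref{theorem:galoiscss}, following \cite{Ashikhmin2001NonbinaryQS}) by regarding the $n$ qudits of local dimension $q^m$ as $n$ blocks of $m$ qudits of local dimension $q$, so that the vector-space structure is absorbed into the block structure. The central observation is that the arguments for commutativity, dimension counting, and distance only rely on the $\mathbb{F}_q$-linearity (in fact the $\mathbb{F}_p$-additivity) of the underlying codes, not on $\mathbb{F}_{q^m}$-linearity. The whole proof therefore amounts to carefully tracking three nested inner products, using the transitivity $\text{tr}_{q^m/p} = \text{tr}_{q/p}\circ \text{tr}_{q^m/q}$.

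First I would verify that the candidate stabilizer set $S = \{E_{\mathbf{a},\mathbf{b}} : \mathbf{a} \in C_2^\perp, \mathbf{b} \in C_1^\perp\}$ generates an abelian group. The commutation relation $E_{\mathbf{a},\mathbf{b}} E_{\mathbf{a}',\mathbf{b}'} = \omega^{\langle \mathbf{a},\mathbf{b}'\rangle - \langle \mathbf{a}',\mathbf{b}\rangle} E_{\mathbf{a}',\mathbf{b}'} E_{\mathbf{a},\mathbf{b}}$ (with $\omega = e^{2\pi i/p}$) from the preliminaries extends to the vector-space setting via $\langle \mathbf{a},\mathbf{b}\rangle_{\mathbb{F}_p} = \text{tr}_{q^m/p}(\mathbf{a}\cdot \mathbf{b})$ after expanding in dual $\mathbb{F}_q$-bases of $\mathbb{F}_{q^m}$. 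Since $C_2^\perp \subset C_1$, any $\mathbf{a} \in C_2^\perp$ lies in $C_1$, and $\mathbf{b}' \in C_1^\perp$ gives $\mathbf{a}\cdot \mathbf{b}' = 0$ in $\mathbb{F}_{q^m}$, so $\langle \mathbf{a},\mathbf{b}'\rangle_{\mathbb{F}_p} = 0$; symmetrically $\langle \mathbf{a}',\mathbf{b}\rangle_{\mathbb{F}_p} = 0$, establishing commutativity.

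Next, for the code dimension I would count $|S|$. Since $C_1$ and $C_2$ have $\mathbb{F}_q$-dimension $km$ in $\mathbb{F}_q^{nm}$, their $\mathbb{F}_q$-duals have $\mathbb{F}_q$-dimension $(n-k)m$, so $|C_2^\perp| = |C_1^\perp| = q^{(n-k)m}$. After accounting for redundancy from the overall phase (which does not affect the stabilized subspace), the standard formula gives a code of dimension $(q^m)^n / q^{2(n-k)m} = q^{m(2k-n)}$, which is exactly $2k-n$ qudits of local dimension $q^m$.

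For distance, I would characterize the normalizer $N(Q)$: by the same inner-product computation, $E_{\mathbf{a},\mathbf{b}}$ commutes with every stabilizer iff $\mathbf{a}$ is $\mathbb{F}_p$-orthogonal to $C_1^\perp$ and $\mathbf{b}$ is $\mathbb{F}_p$-orthogonal to $C_2^\perp$. The lemma that the $\mathbb{F}_p$-dual via the trace pairing coincides with the $\mathbb{F}_q$-dual on $\mathbb{F}_q$-linear subspaces (which follows from non-degeneracy of $\text{tr}_{q^m/p}$ and the fact that $cV = V$ for $c \in \mathbb{F}_q$ when $V$ is $\mathbb{F}_q$-linear) then yields $\mathbf{a} \in C_1$, $\mathbf{b} \in C_2$. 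Any $E_{\mathbf{a},\mathbf{b}} \in N(Q)\setminus S(Q)$ must have either $\mathbf{a}\in C_1\setminus C_2^\perp$ or $\mathbf{b}\in C_2\setminus C_1^\perp$, hence one of $\mathbf{a},\mathbf{b}$ is a nonzero codeword of $C_1$ or $C_2$ of Hamming weight $\geq d$ over the $q^m$-ary alphabet. Since the $q^m$-ary weight of $E_{\mathbf{a},\mathbf{b}}$ is the number of blocks $j$ on which $(\mathbf{a}_j,\mathbf{b}_j) \neq (0,0)$, this lower bound carries over, giving distance $\geq d$. I do not anticipate any technical obstacle beyond the bookkeeping of which inner product (over $\mathbb{F}_p$, $\mathbb{F}_q$, or $\mathbb{F}_{q^m}$) is being used at each step, and verifying that the duals involved agree under the transitivity of the trace.
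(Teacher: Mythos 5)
Your proof is correct and follows the same route as the paper's one-line argument: unfold $C_1, C_2$ into $\mathbb{F}_q^{nm}$, apply the Galois-qudit CSS construction there, and reinterpret each block of $m$ $q$-ary qudits as one $q^m$-ary qudit, with you additionally spelling out the dimension count and distance argument that the paper leaves implicit. One small slip worth noting: since $C_1, C_2$ are only $\mathbb{F}_q$-linear (not $\mathbb{F}_{q^m}$-linear), the duals $C_i^\perp$ here are with respect to the $\mathbb{F}_q$-valued inner product $\sum_{i,j} a_{ij} b_{ij}$ on $\mathbb{F}_q^{nm}$, so $\mathbf{a}\cdot\mathbf{b}'$ vanishes in $\mathbb{F}_q$ rather than ``in $\mathbb{F}_{q^m}$'' as you wrote (the $\mathbb{F}_q$-dual and an $\mathbb{F}_{q^m}$-trace-form dual of a merely $\mathbb{F}_q$-linear subspace generally differ), although the downstream conclusion $\mathrm{tr}_{\mathbb{F}_q/\mathbb{F}_p}(\mathbf{a}\cdot\mathbf{b}')=0$ and your normalizer computation, which correctly uses the $\mathbb{F}_q$-dual, are unaffected.
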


\begin{proof}
    We view the codewords of $a\in C_1^\perp, b\in C_2^\perp$ as elements of $(\mathbb{F}_{q})^{n\cdot m}$, and define the set of operators $E_{\textbf{a}, \textbf{b}}\in \mathcal{P}_q^{n\cdot m}$ using the same expansion as the Galois-qudit construction. By design, these operators are commuting and thus define a stabilizer code $Q$ over $\mathbb{C}_q^{\otimes n\cdot m}$, which we view as a code $Q'$ over $(\mathbb{C}_q^{\otimes m})^{\otimes n}$.
\end{proof}

\subsection{Composition and Concatenation of Quantum Codes}
\label{sec:concat}

If $Q$ is a $[[n, k]]_q$ stabilizer code, then let $\Enc_Q: \mathbb{C}_q^{\otimes k}\rightarrow \mathbb{C}_q^{\otimes n}$ be the isometry which encodes into $Q$. Recall that $\Enc_Q$ is a Clifford circuit, such that if $X\in \mathcal{P}_q^k$ is a operator in the Pauli group on $k$ qudits, then we let
\begin{equation}
    \bar{X} = \Enc_QX \Enc_Q^\dagger \in  \mathcal{P}_q^n,
\end{equation}

refer to the encoding of $X$ on $Q$. 

\begin{definition} \label{def:codecomposition}
    Let $Q_1$ be an $[[n, m]]_q$ stabilizer code, and $Q_2$ an $[[m, k]]_q$ stabilizer code. Then the composition $Q_\circ  = Q_1\circ Q_2$ of $Q_1, Q_2$ is the $[[n,k]]_q$ stabilizer code defined by encoding any message $\ket{\psi}\in \mathbb{C}_q^{\otimes k}$ as
    \begin{equation}
        \Enc_\circ(\psi) = \Enc_{Q_1}(\Enc_{Q_2}(\ket{\psi}))
    \end{equation}
\end{definition}

Since $Q_1, Q_2$ are stabilizer codes, $Q_\circ$ can alternatively be characterized in terms of its stabilizers:

\begin{fact}\label{fact:composedstabilizers}
    If $\{S^{1}_i\}_{i\in [r_1]}\subset \mathcal{P}_q^n$ and $\{S^{2}_j\}_{j\in [r_2]}\subset \mathcal{P}_q^m$ are the stabilizer generators of $Q_1, Q_2$ respectively, then the stabilizer generators of $Q_\circ$ are those of $Q_1$ together with those of $Q_2$ encoded into $Q_1$:
    \begin{equation}
        \{S^{1}_i\}_{i\in [r_1]}\cup \{\bar{S}^{2}_i\}_{j\in [r_2]} = \{S^{1}_i\}_{i\in [r_1]}\cup \{\Enc_QS^{2}_j \Enc_Q^\dagger\}_{j\in [r_2]}
    \end{equation}
\end{fact}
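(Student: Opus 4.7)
The plan is to verify that the proposed $r_1 + r_2$ operators form a valid stabilizer generating set for $Q_\circ$: that they (i) fix every encoded state $\Enc_\circ(\ket\psi)$, (ii) pairwise commute, and (iii) are independent, so that the joint $+1$ eigenspace is the expected $q^k$-dimensional codespace.

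For (i), I would first observe that any $S^1_i$ trivially fixes $\Enc_\circ(\ket\psi) = \Enc_{Q_1}(\Enc_{Q_2}(\ket\psi))$, since that state lies in $Q_1$ by construction. For $\bar{S}^2_j = \Enc_{Q_1} S^2_j \Enc_{Q_1}^\dagger$, I would use the isometry identity $\Enc_{Q_1}^\dagger \Enc_{Q_1} = \mathbb{I}$ on the $m$-qudit message register to simplify $\bar{S}^2_j \Enc_{Q_1}(\Enc_{Q_2}(\ket\psi)) = \Enc_{Q_1}(S^2_j \Enc_{Q_2}(\ket\psi))$, and then invoke $\Enc_{Q_2}(\ket\psi) \in Q_2$ to cancel $S^2_j$.

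For (ii), commutations within $\{S^1_i\}$ hold by hypothesis, and commutations within $\{\bar{S}^2_j\}$ follow from conjugating the relations $S^2_j S^2_{j'} = S^2_{j'} S^2_j$ by $\Enc_{Q_1}$. The cross-commutation $S^1_i \bar{S}^2_j = \bar{S}^2_j S^1_i$ is the key step: since $\Enc_{Q_1}$ is Clifford, $\bar{S}^2_j \in \mathcal{P}_q^n$, and by construction $\bar{S}^2_j$ maps $Q_1$ into itself. A Pauli operator that preserves the codespace of a stabilizer code necessarily lies in the normalizer of its stabilizer group, giving the desired commutation.

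For (iii), independence follows from restriction to the codespace of $Q_1$: the $S^1_i$'s act as identity there, while under the identification $Q_1 \cong \mathbb{C}_q^{\otimes m}$ induced by $\Enc_{Q_1}$, the operators $\bar{S}^2_j$ act precisely as the $S^2_j$'s. Since the $S^2_j$'s are assumed independent stabilizers of $Q_2$, no non-trivial product of $\bar{S}^2_j$'s coincides (even up to phase) with a product of $S^1_i$'s, so a rank count confirms that the joint $+1$ eigenspace is $q^k$-dimensional, matching $Q_\circ$. The main obstacle is really step (ii): it crucially relies on the Clifford property of $\Enc_{Q_1}$ to guarantee both that $\bar{S}^2_j$ is a Pauli and that it normalizes $S(Q_1)$; once that is in hand, the remainder is routine bookkeeping.
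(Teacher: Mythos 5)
The paper states this as a \emph{Fact} without giving a proof, treating it as standard; so there is no paper argument to compare against, only to verify yours. Your verification is correct and is essentially the canonical one: you check that the proposed generating set fixes every composed codeword (via $\Enc_{Q_1}^\dagger\Enc_{Q_1}=\mathbb{I}$), pairwise commutes, and is an independent set of the right cardinality. The only non-routine ingredient is the cross-commutation, and the lemma you invoke --- that a Pauli preserving the codespace of a stabilizer code must normalize (equivalently, centralize) the stabilizer group --- is both true and cleanly derived: for a Pauli $P$ with $P Q_1 \subseteq Q_1$ and any $S\in S(Q_1)$, applying $SP$ and $PS$ to a nonzero $\ket\psi\in Q_1$ yields the same vector $P\ket\psi$, which forces the Pauli commutation phase between $P$ and $S$ to be $1$. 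Your independence argument, restricting to $Q_1$ where the $S^1_i$ act as identity and the $\bar{S}^2_j$ act as $S^2_j$ under the encoding isomorphism, is likewise correct. One small implicit hypothesis worth keeping in mind is that $\bar{S}^2_j=\Enc_{Q_1}S^2_j\Enc_{Q_1}^\dagger$ is itself a Pauli on $n$ qudits; this uses the convention (which the paper adopts) that $\Enc_{Q_1}$ is understood as a Clifford \emph{unitary} on $n$ qudits with the ancillas initialized to $\ket{0}$, rather than literally the rank-deficient isometry conjugation.
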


The notion of concatenated (classical) codes was developed in a seminal work by Forney in 1966 \cite{Forney2009ConcatenatedC}, and later introduced into a quantum setting by Knill and Laflamme \cite{Knill1996ConcatenatedQC} and Gottesman \cite{Gottesman1997StabilizerCA}. 

\begin{definition}\label{def:codeconcatenation}
    Let $Q_{out}$ be an $[[n,k]]_{q^m}$ `outer' stabilizer code, and $Q_{in}$ be a $[[n', m]]_q$ `inner' stabilizer code. Then the quantum code concatenation $Q_{out} \diamond Q_{in}$ is the $[[n\cdot n', k\cdot m]]_q$ stabilizer code defined by encoding any message $\ket{\psi}\in \mathbb{C}_{q^m}^{\otimes k}$ as
    \begin{equation}
       \Enc_\diamond(\psi)=  \bigg(\bigotimes_{i=1}^n \Enc_{in}\bigg) \Enc_{out}(\ket{\psi})
    \end{equation}
\end{definition}

That is, each qudit of the outer code is encoded into the inner code. Much like the case of code composition, the stabilizer generators of a concatenated quantum code can be expressed in terms of the stabilizers of the inner code $Q_{in}$, and the encoding of stabilizers of the outer code $Q_{out}$. 

\begin{fact}\label{fact:concatstabilizers}
    If $\{S^{out}_i\}_{i\in [r_{out}]}\subset \mathcal{P}_{q^m}^n$ and $\{S^{in}_j\}_{j\in [r_{in}]}\subset \mathcal{P}_q^{n'}$ are the stabilizer generators of $Q_{out}, Q_{in}$ respectively, then the stabilizer group of $Q_\diamond$ is generated by the following set of operators:

    \begin{equation}
        \bigg\{ \big(\bigotimes_{i=1}^n \Enc_{in}\big) S^{out}_j \big(\bigotimes_{i=1}^n \Enc_{in}^\dagger \big) \bigg\}_{j\in [r_{out}]} \bigcup  \bigg\{ \big(S^{in}_j\big)_i \otimes \mathbb{I}^{\otimes [n]\setminus \{i\}}_{q^{n'}}\bigg\}_{i\in  [n], j\in [r_{in}]} \subset \mathcal{P}_q^{n\cdot n'}
    \end{equation}
\end{fact}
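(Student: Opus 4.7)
The plan is to realize the concatenation $Q_\diamond = Q_{out} \diamond Q_{in}$ as an ordinary composition of stabilizer codes and then invoke \Cref{fact:composedstabilizers}. The key observation is that the parallel inner encoding $\bigotimes_{i=1}^n \Enc_{in}$ is itself the encoding map of a well-defined stabilizer code, namely the $n$-fold tensor product code $Q_{in}^{\otimes n}$, which is an $[[nn',nm]]_q$ stabilizer code. Its stabilizer group is generated by the operators $(S^{in}_j)_i \otimes \mathbb{I}^{\otimes [n]\setminus\{i\}}$, acting as the $j$th inner generator on the $i$th block and trivially elsewhere; this is immediate because a tensor product of $Q_{in}$ code-states is stabilized precisely by the union of these single-block constraints.

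Next, using the field-theoretic identification from \Cref{sec:prelim} between a $q^m$-qudit and $m$ tensored $q$-qudits, we can view $Q_{out}$ as an $[[nm,km]]_q$ stabilizer code with stabilizer generators $S^{out}_j \in \mathcal{P}_q^{nm}$. Under this identification, \Cref{def:codeconcatenation} precisely asserts
\begin{equation*}
    \Enc_\diamond(\ket{\psi}) = \bigg(\bigotimes_{i=1}^n \Enc_{in}\bigg) \Enc_{out}(\ket{\psi}) = \Enc_{Q_{in}^{\otimes n}}\bigl(\Enc_{Q_{out}}(\ket{\psi})\bigr),
\end{equation*}
so as stabilizer codes $Q_\diamond = Q_{in}^{\otimes n} \circ Q_{out}$. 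Applying \Cref{fact:composedstabilizers} to this composition then immediately yields the two families of generators claimed in the Fact: the stabilizers of the outer-in-composition code $Q_{in}^{\otimes n}$ (giving the per-block inner stabilizers $(S^{in}_j)_i \otimes \mathbb{I}^{\otimes [n]\setminus\{i\}}$), together with the encoded stabilizers $\bigl(\bigotimes_i \Enc_{in}\bigr) S^{out}_j \bigl(\bigotimes_i \Enc_{in}^\dagger\bigr)$ of the inner-in-composition code $Q_{out}$.

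The only subtlety worth flagging is that each $\Enc_{in}$ is an isometry $\mathbb{C}_q^{\otimes m} \to \mathbb{C}_q^{\otimes n'}$ rather than a unitary, so the conjugates $\bigl(\bigotimes_i \Enc_{in}\bigr) S^{out}_j \bigl(\bigotimes_i \Enc_{in}^\dagger\bigr)$ are \textit{a priori} not genuine Pauli operators on $\mathbb{C}_q^{\otimes nn'}$. This is resolved in the standard way by extending $\Enc_{in}$ to a Clifford unitary $\tilde U_{in}$ on $\mathbb{C}_q^{\otimes n'}$ that agrees with $\Enc_{in}$ on inputs padded with $\ket{0}$ ancillas, and then using that conjugating a Pauli by a Clifford yields a Pauli. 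Since this is precisely the same issue that must already be handled in \Cref{fact:composedstabilizers}, we inherit its resolution. I expect no genuine obstacle once the reduction to composition is in place; the only bookkeeping is the careful identification of $q^m$-ary Paulis with $q$-ary Paulis via the dual-basis machinery of \Cref{sec:prelim}, and a dimension count showing that $r_{out} + n\cdot r_{in}$ matches the number of independent stabilizer generators of an $[[nn',km]]_q$ code.
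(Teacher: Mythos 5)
Your proof is correct. The paper states \Cref{fact:concatstabilizers} (like \Cref{fact:composedstabilizers}) without proof, so there is no paper argument to compare against; your reduction is a clean and valid way to derive one from the other. The crux --- observing that $\bigotimes_{i=1}^n\Enc_{in}$ is the encoding map of the $[[nn',nm]]_q$ tensor-product stabilizer code $Q_{in}^{\otimes n}$, whose stabilizer group is manifestly generated by the per-block operators $(S^{in}_j)_i\otimes\mathbb{I}$, and that after identifying $\mathbb{C}_{q^m}\cong\mathbb{C}_q^{\otimes m}$ (so that $\mathcal{P}_{q^m}^n\hookrightarrow\mathcal{P}_q^{nm}$ via the dual-basis expansion of \Cref{sec:prelim}) the concatenation becomes the composition $Q_\diamond = Q_{in}^{\otimes n}\circ Q_{out}$ --- is exactly right, and \Cref{fact:composedstabilizers} applied with $Q_1=Q_{in}^{\otimes n}$ and $Q_2=Q_{out}$ then yields the two claimed families of generators verbatim. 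Your remark about extending $\Enc_{in}$ to a Clifford unitary is consistent with the paper's own convention in \Cref{sec:concat}, where $\Enc_Q$ is explicitly taken to be a Clifford circuit so that conjugation sends Paulis to Paulis; and the generator count $n\cdot r_{in}+r_{out}$ is what one expects for an $[[nn',km]]_q$ code, so the dimension bookkeeping closes as well.
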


where the second set above corresponds to applying the $j$th stabilizer of $S_{in}$ to the $i$th inner block of the concatenated code.

CSS codes have a particular nicely behaved form of concatenation. Recall that concatenation requires specifying an encoding function for the inner code; we now describe how this can be done while preserving the CSS structure. Let $Q^{\text{out}}=\text{CSS}(C_1^{\text{out}},C_2^{\text{out}})$ and $Q^{\text{in}}=\text{CSS}(C_1^{\text{in}},C_2^{\text{in}})$ be $[[n_{\text{out}},k_{\text{out}}]]_{q_{\text{out}}}$ and $[[n_{\text{in}},k_{\text{in}}]]_{q_{\text{out}}}$ CSS codes respectively. Assume that $q_{\text{out}}=q_{\text{in}}^{k_{\text{in}}}$, and that all four classical codes $C_1^{\text{out}},C_2^{\text{out}},C_1^{\text{in}},C_2^{\text{in}}$ are $\mathbb{F}_q$-linear for some $q$. Letting $q_{\text{in}}=q^m$, so that $q_{\text{out}}=q^{mk_{\text{in}}}$, then observe that $(\mathbb{F}_q^{mk_{\text{in}}},\mathbb{F}_q^{mk_{\text{in}}})$ and $(C_1^{\text{in}}/{C_2^{\text{in}}}^\perp,C_2^{\text{in}}/{C_1^{\text{in}}}^\perp)$ are both dual pairs of $mk_{\text{in}}$-dimensional vector spaces over $\mathbb{F}_q$. Thus there exists a pair of $\mathbb{F}_q$-linear isomorphisms 
\begin{equation*}
    (\Enc_1^{\text{in}}:\mathbb{F}_q^{mk_{\text{in}}}\rightarrow C_1^{\text{in}}/{C_2^{\text{in}}}^\perp,\; \Enc_2^{\text{in}}:\mathbb{F}_q^{mk_{\text{in}}}\rightarrow C_2^{\text{in}}/{C_1^{\text{in}}}^\perp)
\end{equation*} that are duality-preserving in the sense that $\langle x,y\rangle=\langle\Enc_1^{\text{in}}(x),\Enc_2^{\text{in}}(y)\rangle$ for all $x,y$.

\begin{claim}
\label{claim:concatcss}
Let $Q^{\text{out}}=\text{CSS}(C_1^{\text{out}},C_2^{\text{out}})$ and $Q^{\text{in}}=\text{CSS}(C_1^{\text{in}},C_2^{\text{in}})$ be as above. Then there is a well defined $[[n_{\text{out}}n_{\text{in}},k_{\text{out}}k_{\text{in}}]]_{q_{\text{in}}}$ CSS concatenation given by
\begin{equation*}
    Q^{\text{out}}\diamond Q^{\text{in}} = \text{CSS}(C_1^{\text{out}}\diamond (C_1^{\text{in}}/{C_2^{\text{in}}}^\perp),\; C_2^{\text{out}}\diamond(C_2^{\text{in}}/{C_1^{\text{in}}}^\perp)),
\end{equation*}
where $C_1^{\text{out}}\diamond (C_1^{\text{in}}/{C_2^{\text{in}}}^\perp)$ is the union of all the cosets in $(\Enc_1^{\text{in}})^{\oplus n_{\text{out}}}(C_1^{\text{out}})\subseteq(C_1^{\text{in}})^{\oplus n_{\text{out}}}/({C_2^{\text{in}}}^\perp)^{\oplus n_{\text{out}}}$, and $C_2^{\text{out}}\diamond(C_2^{\text{in}}/{C_1^{\text{in}}}^\perp)$ is the union of all the cosets in $(\Enc_2^{\text{in}})^{\oplus n_{\text{out}}}(C_2^{\text{out}})\subseteq(C_2^{\text{in}})^{\oplus n_{\text{out}}}/({C_1^{\text{in}}}^\perp)^{\oplus n_{\text{out}}}$.
\end{claim}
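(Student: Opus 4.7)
The plan is to verify two things: first, that the pair $(C_1, C_2) := (C_1^{\text{out}}\diamond (C_1^{\text{in}}/{C_2^{\text{in}}}^\perp),\; C_2^{\text{out}}\diamond(C_2^{\text{in}}/{C_1^{\text{in}}}^\perp))$ satisfies the CSS duality condition $C_2^\perp \subseteq C_1$; and second, that the stabilizer group of $\text{CSS}(C_1, C_2)$ coincides with that of $Q^{\text{out}}\diamond Q^{\text{in}}$ as given by \Cref{fact:concatstabilizers}. Since both sides are stabilizer codes on the same physical system $\mathbb{C}_{q_{\text{in}}}^{\otimes n_{\text{out}} n_{\text{in}}}$, matching their stabilizer groups identifies the codes. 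The parameters $[[n_{\text{out}} n_{\text{in}}, k_{\text{out}} k_{\text{in}}]]_{q_{\text{in}}}$ then follow from the fact that each outer qudit of dimension $q_{\text{out}}=q_{\text{in}}^{k_{\text{in}}}$ is encoded as $k_{\text{in}}$ logical qudits of the inner code of dimension $q_{\text{in}}$.

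First I would compute the duals of $C_1, C_2$. By construction $C_1$ is a union of cosets of $({C_2^{\text{in}}}^\perp)^{\oplus n_{\text{out}}}$ inside $(C_1^{\text{in}})^{\oplus n_{\text{out}}}$, so $C_1^\perp \subseteq (C_2^{\text{in}})^{\oplus n_{\text{out}}}$. For $x\in C_1$ and $y\in (C_2^{\text{in}})^{\oplus n_{\text{out}}}$, the $\mathbb{F}_q$-bilinear inner product $\langle x,y\rangle$ depends only on the coset classes of each component, so it descends to a well-defined pairing on $(C_1^{\text{in}}/{C_2^{\text{in}}}^\perp)^{\oplus n_{\text{out}}}\times (C_2^{\text{in}}/{C_1^{\text{in}}}^\perp)^{\oplus n_{\text{out}}}$. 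The duality-preserving property of $(\Enc_1^{\text{in}}, \Enc_2^{\text{in}})$ identifies this descended pairing with the standard inner product on $(\mathbb{F}_q^{mk_{\text{in}}})^{\oplus n_{\text{out}}}\cong\mathbb{F}_{q_{\text{out}}}^{n_{\text{out}}}$. Hence $y\in C_1^\perp$ iff the outer label $((\Enc_2^{\text{in}})^{-1}(\bar{y}_i))_i$ lies in $(C_1^{\text{out}})^\perp$, yielding $C_1^\perp = (C_1^{\text{out}})^\perp\diamond(C_2^{\text{in}}/{C_1^{\text{in}}}^\perp)$. Running the symmetric argument gives $C_2^\perp = (C_2^{\text{out}})^\perp\diamond(C_1^{\text{in}}/{C_2^{\text{in}}}^\perp)$. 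Since $(C_2^{\text{out}})^\perp\subseteq C_1^{\text{out}}$, this immediately gives $C_2^\perp\subseteq C_1$.

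Next I would match the stabilizer groups via \Cref{fact:concatstabilizers}. That generating set splits into (i) inner stabilizers acting on a single block---these are $E_{\mathbf{a},\mathbf{0}}$ with $\mathbf{a}$ supported on one block in $({C_2^{\text{in}}})^\perp$ and the Z-type analogues in $({C_1^{\text{in}}})^\perp$---and (ii) encoded outer stabilizers, where an outer X-stabilizer label $a\in({C_2^{\text{out}}})^\perp$ (resp.\ Z-stabilizer label $b\in({C_1^{\text{out}}})^\perp$) is sent, in the $i$th block, to any fixed representative of $\Enc_1^{\text{in}}(a_i)+({C_2^{\text{in}}})^\perp$ (resp.\ $\Enc_2^{\text{in}}(b_i)+({C_1^{\text{in}}})^\perp$). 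Taking the closure of (i) and (ii) under the group law, the resulting X-type label set is exactly the preimage of $\Enc_1^{\text{in},\oplus n_{\text{out}}}(({C_2^{\text{out}}})^\perp)$ under the quotient map $(C_1^{\text{in}})^{\oplus n_{\text{out}}} \twoheadrightarrow (C_1^{\text{in}}/{C_2^{\text{in}}}^\perp)^{\oplus n_{\text{out}}}$, which is precisely $C_2^\perp$ by the formula derived above. The Z-type labels analogously recover $C_1^\perp$, so the stabilizer groups agree.

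The main obstacle I anticipate is the bookkeeping around cosets: specifically, verifying that the freedom in choosing coset representatives in (ii) is exactly absorbed by the inner stabilizer subgroup (i), so that different choices of encoding yield the same stabilizer group, and that every label in $C_2^\perp$ (resp.\ $C_1^\perp$) really does arise as some combination from (i)+(ii). Once this is cleanly handled, matching $C_1, C_2$ with the quantum stabilizer description is formal, and the dimension count yielding $[[n_{\text{out}} n_{\text{in}}, k_{\text{out}} k_{\text{in}}]]_{q_{\text{in}}}$ follows from rank-nullity on the concatenated classical codes together with $q_{\text{out}}=q_{\text{in}}^{k_{\text{in}}}$.
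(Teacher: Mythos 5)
Your proposal is correct, and the technical core---computing $C_1^\perp$ and $C_2^\perp$ by descending the inner product to the quotients $C_1^{\text{in}}/{C_2^{\text{in}}}^\perp$, $C_2^{\text{in}}/{C_1^{\text{in}}}^\perp$ via the duality-preserving maps $(\Enc_1^{\text{in}},\Enc_2^{\text{in}})$, then reading off $C_1^\perp = {C_1^{\text{out}}}^\perp\diamond(C_2^{\text{in}}/{C_1^{\text{in}}}^\perp)$ and its analogue---is exactly the paper's argument. The additional stabilizer-group match against the generic generating set of \Cref{fact:concatstabilizers} is not carried out in the paper (which treats the displayed CSS formula as the \emph{definition} of the concatenation and only checks $C_2^\perp\subseteq C_1$ plus parameters), but it is a correct and sensible consistency check, and the coset-bookkeeping concern you flag---that the choice of coset representatives in the encoded outer stabilizers is absorbed by the inner-block stabilizers---is precisely the point that justifies the paper's implicit identification.
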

\begin{proof}
Let $C_1=C_1^{\text{out}}\diamond (C_1^{\text{in}}/{C_2^{\text{in}}}^\perp)$ and $C_2=C_2^{\text{out}}\diamond(C_2^{\text{in}}/{C_1^{\text{in}}}^\perp)$. It suffices to show that
\begin{align*}
    C_1^\perp &= {C_1^{\text{out}}}^\perp\diamond(C_2^{\text{in}}/{C_1^{\text{in}}}^\perp) \\
    C_2^\perp &= {C_2^{\text{out}}}^\perp\diamond(C_1^{\text{in}}/{C_2^{\text{in}}}^\perp).
\end{align*}
We will show the first equality above; the proof of the second is analogous. By definition, if $c\in {C_1^{\text{out}}}^\perp\diamond(C_2^{\text{in}}/{C_1^{\text{in}}}^\perp)$, then for all $c'\in C_1$, 
\begin{equation*}
    \langle c',c\rangle = \langle ((\Enc_1^{\text{in}})^{-1})^{\oplus n_{\text{out}}}(c'), ((\Enc_2^{\text{in}})^{-1})^{\oplus n_{\text{out}}}(c)\rangle = 0,
\end{equation*}
so $c\in C_1^\perp$. For the converse, assume that $c\in C_1^\perp$. Because $C_1\supseteq({C_2^{\text{in}}}^\perp)^{\oplus n_{\text{out}}}$, we must have $C_1^\perp\subseteq({C_2^{\text{in}}})^{\oplus n_{\text{out}}}$. Thus $((\Enc_2^{\text{in}})^{-1})^{\oplus n_{\text{out}}}(c)$ is well defined, so for all $c'\in C_1$,
\begin{equation*}
    \langle ((\Enc_1^{\text{in}})^{-1})^{\oplus n_{\text{out}}}(c'), ((\Enc_2^{\text{in}})^{-1})^{\oplus n_{\text{out}}}(c)\rangle = \langle c',c\rangle = 0.
\end{equation*}
Therefore $c$ belongs to the (coset specified by the) encoding under $(\Enc_2^{\text{in}})^{\oplus n_{\text{out}}}$ of a vector that is orthogonal to all elements of $C_1$, that is, $c\in (\Enc_2^{\text{in}})^{\oplus n_{\text{out}}}({C_1^{\text{out}}}^\perp)={C_1^{\text{out}}}^\perp\diamond(C_2^{\text{in}}/{C_1^{\text{in}}}^\perp)$.
\end{proof}

% \begin{fact}
% \label{fact:concatcss}
% Let $Q^{\text{out}}=\text{CSS}(C_1^{\text{out}},C_2^{\text{out}})$ and $Q^{\text{in}}=\text{CSS}(C_1^{\text{in}},C_2^{\text{in}})$ be $[[n_{\text{out}},k_{\text{out}}]]_{q_{\text{out}}}$ and $[[n_{\text{in}},k_{\text{in}}]]_{q_{\text{out}}}$ CSS codes respectively. Assume that $q_{\text{out}}=q_{\text{in}}^{k_{\text{in}}}$, and that all four classical codes $C_1^{\text{out}},C_2^{\text{out}},C_1^{\text{in}},C_2^{\text{in}}$ are $\mathbb{F}_q$-linear for some $q$. Then
% \begin{equation*}
%     Q^{\text{in}}\diamond Q^{\text{out}} = \text{CSS}((C_1^{\text{in}}/{C_2^{\text{in}}}^\perp)\diamond C_1^{\text{out}},\; (C_2^{\text{in}}/{C_1^{\text{in}}}^\perp)\diamond C_2^{\text{out}}).
% \end{equation*}
% \end{fact}

\section{Quantum List Decodable Codes}
\label{section:QLD}

Let $Q$ be a stabilizer code of distance $d$, defined with respect to a basis of operators on single-qudits $\mathcal{P}_q$. Consider any adversarial quantum channel $\mathcal{A}$ acting on a (unknown) set of $\tau\cdot n$ qudits of a code-state $\ket{\psi}$ of $Q$. If the support of the attack has size $\tau\cdot n < d$, then, by measuring the syndrome of $Q$, one collapses $\mathcal{A}(\psi)$ into a mixture of single errors in the basis:

\begin{equation}
     \mathcal{A}(\psi) \rightarrow \sum_s \ket{s}\bra{s}\otimes \Pi_s \mathcal{A}(\psi) \Pi_s = \sum_{s} |c_s|^2 \ket{s}\bra{s}\otimes  \sigma_s \ket{\psi}\bra{\psi} \sigma_s^\dagger
\end{equation}

Informally, this occurs since if any two distinct $n$-qudit operators $\sigma, \sigma'\in \mathcal{P}_q^n$ had the same small support $<d$ and the same syndrome, then their product $\sigma^\dagger \sigma'$ must be a stabilizer of $Q$. But if $\sigma^\dagger \sigma'$ is a stabilizer, then $\sigma, \sigma'$ corrupt code-states in essentially the same way: For any $\ket{\psi}\in Q$, $\sigma\ket{\psi} =\eta \sigma'\ket{\psi}$ for some phase $\eta\in \mathbb{C}$.

In this fashion, to decode $\mathcal{A}(\psi)$ it suffices to identify $\sigma_s$ from the syndrome measurement $s$. When the support of $\mathcal{A}$ is $\leq d/2$, then there in fact exists a single $\sigma$ (up to a stabilizer of $Q$) which matches the syndrome, and one can unique-decode. When the support of $\mathcal{A}$ is larger, all bets are off: In this sense, the notion of list-decoding for stabilizer codes we consider is simply the number of operators which match the measured syndrome $s$ - and are all distinct up to a stabilizer of $Q$.
\begin{definition}
    If $\mathcal{S}\subset \mathcal{P}_q^n$ is a stabilizer group, then any pair of operators $O, O'\in  \mathcal{P}_q^n$ is said to be $S$-\textit{stabilizer equivalent} if $O = O' S$ for some $S\in \mathcal{S}$, and \textit{stabilizer-distinct} otherwise.  
\end{definition}

In other words, $O$ is in the coset $O'\mathcal{S}$. When the set of stabilizers is implicit, we simply refer to the pair of operators as `stabilizer-equivalent' or `stabilizer-distinct'. We note that this relation defines equivalence classes over the Pauli group $\mathcal{P}_q^n$, since if $A, B$ and $B, C$ are both stabilizer equivalent, then so are $A$ and $C$. More importantly, if $Q$ is the quantum code stabilized by $\mathcal{S}$, then stabilizer-equivalent operators acting on the code-space give rise to the same state:
\begin{equation}
    O\ket{\psi} = O' S\ket{\psi} = \eta_S O' \ket{\psi}, \text{ for all }\ket{\psi}\in Q,
\end{equation}

\noindent where $\eta_S\in \mathbb{C}$ is some global phase. Moreover, given a description of the stabilizer generators and $O, O'$, one can test stabilizer-equivalence efficiently via row reduction. 

Equipped with this notion, we discuss the following definition of list decoding for stabilizer codes. Conceptually, after measuring the syndrome of the quantum code, we consider enumerating all the operators $E_1, E_2\cdots $ in a given class of errors $\mathcal{E}$ which agree with the syndrome measurement. However, note that many of these operators could be stabilizer-equivalent, corresponding essentially to the same state. The definition we study filters out this degeneracy, by considering the size of any list $\mathcal{L}_s$ of stabilizer-distinct operators which agrees with $s$.

\begin{definition}  \label{def:QLD}
    Let $Q$ be an $[[n,k,d]]_q$ stabilizer QECC with stabilizer group $\mathcal{S}$, and let $\mathcal{E}\subset \mathcal{P}_q^n$ be a set of errors. $Q$ is said to be $\ell$-QLD (Quantum List Decodable) for $\mathcal{E}$ if, for every error $E \in \mathcal{E}$, there exists at most $\ell$ stabilizer-distinct operators which agree with the syndrome $s$ of $E$.
\end{definition}

Of particular attention is the class of qudit Pauli errors $\mathcal{E}= \mathcal{P}_{q, \tau}^n$ of bounded weight $\tau\cdot n$, and we refer to a quantum code as $(\tau,\ell)$-QLD if it is an $\ell$-QLD for $\mathcal{P}_{q, \tau}^n$.

The immediate reason that this particular definition of list-decoding for quantum codes is useful is that the list $\mathcal{L}_s$ can be viewed as a list of ``correction operators.'' For instance, if $\ket{\psi}\in Q$ is a code state of an $\ell$-QLD $Q$, and an adversary corrupts it with an error $E\in \mathcal{E}$ of syndrome $s$, let the list of operators $F_1, \cdots,  F_\ell \subset \mathcal{E}$ be those guaranteed by \cref{def:QLD}. We note that by applying $F_1^\dagger$ on the corrupted code state $E\ket{\psi}$, we return it to the code space $Q$: $F_1^\dagger E \ket{\psi} \in Q$. In fact, we can produce a next candidate code state by applying $(F_2^\dagger F_1) F_1^\dagger E \ket{\psi} = F_2^\dagger E \ket{\psi}\in Q$, and so on. Naturally, we won't know which code state we are in, but by definition at least one correction $F_i^\dagger E \ket{\psi}\propto \ket{\psi}$ recovers the original state. 

This description of list-decoding in terms of the syndrome measurements can alternatively be characterized through the normalizers and logical operators on the stabilizer code.

\begin{claim}
    If $Q$ is a stabilizer code with stabilizer group $S(Q)$ and normalizer group $N(Q)$, then for any $E\in \mathcal{E}$ of syndrome $s$, any list $\mathcal{L}_s$ of operators in $\mathcal{E}$ which agrees with $s$ satisfies
    \begin{equation}
        \mathcal{L}_s \subset \big\{ E N: N\in N(Q)\} \cap \mathcal{E}
    \end{equation}
\end{claim}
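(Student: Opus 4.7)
The plan is to leverage the additivity of syndromes noted immediately after the definition of the syndrome vector, namely that for any $F, F' \in \mathcal{P}_q^n$ one has $s_{FF'} = s_F + s_{F'}$ and $s_{F^\dagger} = -s_F \bmod p$, together with the fact (also recalled in the preliminaries) that an operator in $\mathcal{P}_q^n$ has zero syndrome on $Q$ if and only if it lies in the normalizer $N(Q)$. The whole argument reduces to a one-line computation in the syndrome.

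First, I would fix an arbitrary $E' \in \mathcal{L}_s$. By definition of $\mathcal{L}_s$, $E' \in \mathcal{E}$ and $s_{E'} = s = s_E$. Applying additivity, $s_{E^\dagger E'} = s_{E^\dagger} + s_{E'} = -s + s = 0$, so $E^\dagger E'$ commutes with every stabilizer generator of $Q$ and therefore lies in $N(Q)$.

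Next, setting $N := E^\dagger E' \in N(Q)$, we obtain the identity $E' = E \cdot N$ inside $\mathcal{P}_q^n$. Combined with $E' \in \mathcal{E}$, this places $E'$ in $\{E N : N \in N(Q)\} \cap \mathcal{E}$. Since $E' \in \mathcal{L}_s$ was arbitrary, the claimed containment follows.

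There is no real obstacle here; the only point to be careful about is that the manipulation $E' = E \cdot (E^\dagger E')$ takes place inside the Pauli group $\mathcal{P}_q^n$, which by construction already contains the overall phases $\omega \cdot \Id$ that arise when inverting an operator $E_{a,b}$. Since $N(Q)$ is closed under multiplication by such global phases, no spurious phase correction is needed and the cleaner group-theoretic formulation above is unaffected.
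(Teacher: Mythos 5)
Your proof is correct and follows essentially the same argument as the paper's: both observe that two operators sharing a syndrome have a product (one conjugated) with null syndrome, hence in $N(Q)$, and then rearrange to exhibit the list element as $E$ times a normalizer element. Your version is slightly more explicit about the syndrome additivity and the phase bookkeeping, but the underlying idea is identical.
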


\begin{proof}
    If two operators $A, E$ have the same syndrome, then $A^\dagger E$ has null syndrome vector and thus $A^\dagger E\in N(Q)$. In this manner, $A$ can be written as the product of $E$ and an element of the normalizer.
\end{proof}

While the characterization above is a structural constraint on the stabilizers and normalizers of the quantum error correcting code, perhaps its main advantage is that one can cast the list-decoding \textit{task} as an entirely classical process, without any modification to the encoding of the quantum code. As we will see shortly, this description allows us to draw a connection between CSS codes which are QLD and classical codes which are (classically) LD. Beforehand, we need to describe what it means to \textit{efficiently} list-decode these stabilizer codes.

We refer to an $[[n, k, d]]_q$ $\ell$-QLD code $Q$ as \textit{efficiently} list decodable for $\mathcal{E}$ if, given the syndrome $s$, one can produce in time polynomial in $n, \log q, l$ a classical description of a list $\mathcal{L}_s\subset \mathcal{P}_q^n$ of $\leq \ell$ operators with the following stipulations: $\mathcal{L}_s$ contains only stabilizer distinct operators with syndrome $s$; and moreover, for each element of $\mathcal{E}$ of syndrome $s$, there is an operator in $\mathcal{L}_s$ stabilizer-equivalent to it. That is, the list-decoding algorithm is allowed to output extraneous list elements that are not stabilizer-equivalent to any error in $\mathcal{E}$, as long as they still act as corrections, and the overall list size is $\leq \ell$. We emphasize that these ``unpruned'' lists arise because it can be intractable to determine if a given Pauli operator is stabilizer-equivalent to an element of $\mathcal{E}$. 
%\thiago{isn't this equivalent to finding a shortest vector? perhaps emphasizing tthis would make it more tangible}This is of course in contrast to classical list-decoding, which does not require a notion of stabilizer-distinctness, and thus always permits efficient pruning of irrelevant list elements.

The main contribution of this section is a recipe to construct quantum list-decodable codes from classical list-decodable codes. For conciseness and to fit our applications, we present a discussion for CSS codes based on the Galois-qudit construction, which we review in \Cref{theorem:galoiscss}.

\begin{definition}
\label{def:cosetld}
    Given an $\mathbb{F}_q$-linear classical code $C\subseteq(\mathbb{F}_q^m)^n$ and a subcode $C'\subseteq C$, let the quotient $C/C'$ be $(\tau,L)$ list-decodable if for every $x\in(\mathbb{F}_q^m)^n$, there exist at most $L$ cosets in $C/C'$ that intersect the Hamming ball $B_{\tau n}(x)$ of radius $\tau n$ centered at $x$.
    
    We say that $C/C'$ is efficiently $(\tau,L)$ list-decodable if there exists a $\poly(nm\log q)$ time algorithm that outputs a list consisting of a representative of each coset in $C/C'$ that intersects $B_{\tau n}(x)$.
\end{definition}

By definition if $C$ is (efficiently) $(\tau,L)$ list-decodable then so is $C/C'$, as we may remove redundant list elements differing by an element of $C'$ using row reduction.

\begin{theorem} \label{thm:CSS}
    Let $Q=\text{CSS}(C_1,C_2)$ be a CSS codes. If $C_1/C_2^\perp$ and $C_2/C_1^\perp$ are both (efficiently) $(\tau,L)$ list-decodable, then $Q$ is (efficiently) $(\tau,L^2)$ list-decodable.
\end{theorem}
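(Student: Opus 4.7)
The plan is to reduce quantum list-decoding of the CSS code to two independent classical list-decoding tasks, exploiting the fact that both the CSS syndrome and the CSS stabilizer-equivalence relation split cleanly into independent $\mathbf{a}$- and $\mathbf{b}$-components.

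First, I would characterize the syndrome map combinatorially. Using the commutation identity $E_{\mathbf{a},\mathbf{b}} E_{\mathbf{a}',\mathbf{b}'} = \omega^{\langle \mathbf{a},\mathbf{b}'\rangle - \langle \mathbf{a}',\mathbf{b}\rangle} E_{\mathbf{a}',\mathbf{b}'} E_{\mathbf{a},\mathbf{b}}$ against the CSS stabilizer generators $\{E_{\mathbf{a}',\mathbf{0}} : \mathbf{a}' \in C_2^\perp\}$ and $\{E_{\mathbf{0},\mathbf{b}'} : \mathbf{b}' \in C_1^\perp\}$, and invoking $(C_2^\perp)^\perp = C_2$ and $(C_1^\perp)^\perp = C_1$, I would show that the syndrome of $E_{\mathbf{a},\mathbf{b}}$ determines exactly the pair of cosets $(\mathbf{a} + C_1,\ \mathbf{b} + C_2)$. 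The same identity shows that two Pauli errors $E_{\mathbf{a},\mathbf{b}}$ and $E_{\mathbf{a}',\mathbf{b}'}$ are stabilizer-equivalent exactly when $\mathbf{a} - \mathbf{a}' \in C_2^\perp$ and $\mathbf{b} - \mathbf{b}' \in C_1^\perp$.

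Next, fix any syndrome with representatives $(\mathbf{a}_0, \mathbf{b}_0)$. Every stabilizer-distinct low-weight error matching this syndrome is uniquely labeled by a pair of cosets $(\mathbf{a} + C_2^\perp,\ \mathbf{b} + C_1^\perp)$ with $\mathbf{a} \in \mathbf{a}_0 + C_1$ and $\mathbf{b} \in \mathbf{b}_0 + C_2$, so I can bound the two factors separately. Because $\mathrm{wt}(E_{\mathbf{a},\mathbf{b}}) \geq \max(\mathrm{wt}(\mathbf{a}),\mathrm{wt}(\mathbf{b}))$, any Pauli of weight $\leq \tau n$ forces each of $\mathbf{a}$ and $\mathbf{b}$ individually to have Hamming weight $\leq \tau n$. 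Translating by $-\mathbf{a}_0$, the cosets of $C_2^\perp$ inside $\mathbf{a}_0 + C_1$ meeting $B_{\tau n}(\mathbf{0})$ are in bijection with the cosets of $C_2^\perp$ in $C_1/C_2^\perp$ meeting $B_{\tau n}(-\mathbf{a}_0)$, which by $(\tau,L)$ list-decodability of $C_1/C_2^\perp$ number at most $L$. The analogous argument applied to $C_2/C_1^\perp$ bounds the $\mathbf{b}$-side by $L$, yielding the desired $L^2$ bound overall.

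For the efficient version, I would extract $\mathbf{a}_0$ and $\mathbf{b}_0$ from the measured syndrome by $\mathbb{F}_q$-linear algebra, invoke the two classical list-decoders on inputs $-\mathbf{a}_0$ and $-\mathbf{b}_0$, and output the Pauli operators $E_{\mathbf{a},\mathbf{b}}$ obtained by taking the Cartesian product of the returned coset representatives (shifted back by $\mathbf{a}_0$, $\mathbf{b}_0$). Stabilizer-distinctness of the list is automatic from the coset structure, and the running time is polynomial since the two classical algorithms are. The one subtle point---and the main conceptual obstacle worth flagging---is that a pair $(\mathbf{a},\mathbf{b})$ with each part of Hamming weight $\leq \tau n$ may have joint Pauli weight strictly exceeding $\tau n$; such pairs appear as extraneous list elements. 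This is expected and is explicitly permitted by the definition of efficient QLD, since the total list size remains $\leq L^2$ and every weight-$\leq\tau n$ error still has a stabilizer-equivalent representative in the output.
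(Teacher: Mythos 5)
Your proof is correct and follows essentially the same route as the paper's: decompose the CSS syndrome and stabilizer-equivalence relation into independent $X$- and $Z$-components, observe that a weight-$\le\tau n$ Pauli forces both $\mathbf{a}$ and $\mathbf{b}$ to have Hamming weight $\le\tau n$, and then bound each factor by the $(\tau,L)$ list-decodability of $C_1/C_2^\perp$ and $C_2/C_1^\perp$ respectively, taking a Cartesian product for the efficient version. The point you flag at the end — that a pair with both halves of weight $\le\tau n$ may have joint Pauli weight exceeding $\tau n$, so the output list can contain extraneous but still valid corrections — is indeed the reason the paper's definition of efficient QLD permits unpruned lists, and you are right that this is harmless here.
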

\begin{proof}
    Fix an error $E_{\textbf{a}, \textbf{b}}$ on the CSS code, of syndrome $s = (s_x, s_z)$. The normalizers of the CSS code are the operators $E_{\textbf{c}_1, \textbf{c}_2}$, $c_1\in C_1, c_2\in  C_2$. Recall from the definition in \Cref{theorem:vectorspacecss} that we implicitly view $x\in (\mathbb{F}_q^m)^n$ as an element $\textbf{x}\in \mathbb{F}_q^{nm}$. From the previous claim, any list $\mathcal{L}_s$ is contained in the set of operators $F = E_{\textbf{a'}, \textbf{b'}}$ where $a-a'\in C_1$ and $b-b'\in C_2$. Furthermore, $E_{\textbf{a}',\textbf{b}'}$ and $E_{\textbf{a}'',\textbf{b}''}$ are stabilizer equivalent if and only if $a'-a''\in C_2^\perp$ and $b'-b''\in C_1^\perp$. Because $C_1/C_2^\perp$ is $(\tau, L)$-LD, the number of cosets $a'+C_2^\perp$ with $|a'|\leq \tau \cdot n$ and $a-a'\in C_1$ is $\leq L$. Similarly, because $C_2/C_1^\perp$ is $(\tau, L)$-LD, the number of cosets $b'+C_1^\perp$ with $|b'|\leq \tau \cdot n$ and $b-b'\in C_1$ is $\leq L$. It follows that the total number of pairs $(a', b')$ in the list is at most $\leq L^2$. 

    It remains to be shown that efficiency of the classical list-decoding for $C_1/C_2^\perp$ and $C_2/C_1^\perp$ implies efficiency for the quantum list-decoding of $Q$. Let $E' = E_{e_x, e_z}$ be any Pauli operator with the syndrome $s = (s_x, s_z)$, so that $s_x$ and $s_z$ are the parity check syndromes of $e_x$ and $e_z$ for $C_1$ and $C_2$ respectively. One can find such a pair $e_x, e_z\in (\mathbb{F}_q^m)^n$ by finding any solution to the underlying linear system in polynomial time via Gaussian elimination (note that here, we don't restrict $e_x, e_z$ to be low weight). We call the classical list decoding algorithm for $C_1$ on $e_x$, and for $C_2$ on $e_z$, obtaining lists $a_1\cdots a_L$ and $b_1\cdots b_L\in (\mathbb{F}_q^m)^n$ of classical errors of weight $\leq \tau n$. We claim that any stabilizer distinct operator of weight $\leq\tau n$ that matches $s$ is stabilizer equivalent to an operator $E_{a_i, b_j}$, $i, j\in [L]$, output by this algorithm. Indeed, for every operator $E_{a', b'}$ of weight $\leq \tau n$ and of syndrome $s$, then $a'$ and $b'$ both have weight $\leq\tau n$ and have parity check syndromes $s_x$ and $s_z$ for $C_1$ and $C_2$ respectively, so by definition the list-decoding algorithm must output an operator that is stabilizer-equivalent to $E_{a',b'}$.
\end{proof}

\subsection{Explicit Constructions and Folded Quantum Codes}
\label{sec:rsfolding}

In this subsection we discuss \textit{explicit} constructions of quantum list-decodable codes from classical list decodable codes, using \Cref{thm:CSS}. We use the well known classical Reed-Solomon codes, and the Folded Reed-Solomon codes from the breakthrough result of \cite{Guruswami2008ExplicitCA}, to construct quantum codes which are efficiently list decodable up to the quantum Singleton bound (\Cref{theorem:fqrs}). While in principle we could approach the quantum Singleton bound using even random CSS codes, unfortunately these wouldn't be explicit, nor efficiently decodable. Thus for conciseness, we dedicate \Cref{section:randomCSSproperties} to a collection of their properties which we later use in our concatenated constructions. Before diving into the explicit constructions, let us briefly overview the necessary background on these classical codes and related quantum codes. 

\begin{definition} [Generalized Reed-Solomon Codes\label{def:GRS}] Let $q$ be a prime power. Fix $k<n<q$, a primitive element $\gamma$ of $\mathbb{F}^*_q$, and a set of `multipliers' $U = (u_0\cdots u_{n-1})\subset \mathbb{F}^*_q$. The degree $k-1$ GRS code is the set

\begin{equation}
    GRS_{n, k, q}(\gamma, U) = \bigg\{(u_0 f(1),\cdots, u_{n-1}  f(\gamma^{n-1})) : f \in \mathbb{F}_q[X]/(x^k-1)\bigg\}
\end{equation}

\end{definition}

Each codeword of a Reed-Solomon code corresponds to the evaluation of a distinct polynomial over $\mathbb{F}_q$. The codewords of the \textit{Generalized} RS code defined above, are RS codewords multiplied element-wise by scalars. The following properties don't depend on the choice of primitive $\gamma$ or subset $U$ of $\mathbb{F}^*_q$:

\begin{fact}
    $GRS_{n, k, q}(\gamma, U)$ is a $[n,k, n-k+1]_q$ code. 
\end{fact}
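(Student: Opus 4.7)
The plan is to verify the three parameters of $[n, k, n-k+1]_q$ in sequence, all via the ``few roots'' principle for nonzero polynomials over a field. Linearity over $\mathbb{F}_q$ is immediate from the definition, since coordinate-wise multiplication by the fixed scalars $u_i$ and polynomial evaluation at fixed points are both $\mathbb{F}_q$-linear. I would interpret the ``polynomials modulo $x^k-1$'' in \Cref{def:GRS} as their standard representatives of degree less than $k$ (the usual generalized Reed--Solomon convention); these form a $k$-dimensional $\mathbb{F}_q$-vector space, which will be the domain of the encoding map.

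For the dimension, I would show the evaluation map $f \mapsto (u_0 f(1), u_1 f(\gamma), \ldots, u_{n-1} f(\gamma^{n-1}))$ is injective. If $f$ has degree $<k$ and its image is the zero codeword, then since every $u_i \in \mathbb{F}_q^*$ is nonzero, we must have $f(\gamma^i) = 0$ for $i = 0, 1, \ldots, n-1$. Because $\gamma$ is primitive in $\mathbb{F}_q^*$ and $n<q$, the points $1,\gamma,\ldots,\gamma^{n-1}$ are $n$ distinct elements of $\mathbb{F}_q$, producing more roots than a polynomial of degree at most $k-1<n$ can have. Hence $f=0$, so the encoding is injective and the code has dimension exactly $k$.

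For the minimum distance, by linearity it suffices to bound below the weight of any nonzero codeword. For nonzero $f$ of degree $<k$, the zero coordinates of its codeword are exactly the indices $i$ with $f(\gamma^i) = 0$ (again using $u_i \ne 0$). Since $f$ has at most $k-1$ roots, at most $k-1$ coordinates vanish, yielding weight $\geq n-k+1$; one can also exhibit a codeword of weight exactly $n-k+1$ by taking $f(x) = \prod_{i=0}^{k-2}(x-\gamma^i)$. The matching Singleton upper bound $d \leq n-k+1$ then forces equality.

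The proof presents no real obstacle; it is a direct application of the fundamental theorem of algebra over $\mathbb{F}_q$. The only mild point to reconcile is the quotient-ring notation $\mathbb{F}_q[X]/(x^k-1)$, since pointwise evaluation at $\gamma^j$ is only well-defined on equivalence classes when $\gamma^{jk}=1$ for every $j$. I would sidestep this by working throughout with the canonical degree-$<k$ representatives, which is the standard convention for (generalized) Reed--Solomon codes and clearly matches the intended meaning.
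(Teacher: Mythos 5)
Your proof is correct and is the standard textbook argument for (generalized) Reed–Solomon codes; the paper states this as a well-known Fact and offers no proof of its own, so there is nothing in the paper's treatment to compare against. Your observation about the notation $\mathbb{F}_q[X]/(x^k-1)$ is a fair one — reading it as the space of canonical degree-$<k$ representatives is indeed the intended (and only sensible) interpretation.
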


\begin{fact} \label{fact:grsdual}
    The dual code $(GRS_{n, k, q}(\gamma, U))^\perp = GRS_{n, n-k, q}(\gamma, V)$, for certain choice of multipliers $V$. When $u_i=1, \forall i\in [n]$, $V = \{1,\gamma, \cdots, \gamma^{n-1}\}$.
\end{fact}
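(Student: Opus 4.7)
The plan is to combine a dimension count with an explicit orthogonality computation that exploits the polynomial structure of GRS codes. Observe first that $\dim GRS_{n,k,q}(\gamma,U) = k$ and $\dim GRS_{n,n-k,q}(\gamma,V) = n-k$, so their dimensions are complementary, and it suffices to exhibit multipliers $V$ such that every codeword of $GRS_{n,n-k,q}(\gamma,V)$ is orthogonal to every codeword of $GRS_{n,k,q}(\gamma,U)$; equality then follows for dimension reasons.

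For $f$ of degree less than $k$ and $g$ of degree less than $n-k$, the standard inner product of their encodings is
\begin{equation*}
\sum_{i=0}^{n-1} u_i f(\gamma^i)\, v_i\, g(\gamma^i) \;=\; \sum_{i=0}^{n-1} (u_i v_i)\,(fg)(\gamma^i).
\end{equation*}
As $f,g$ vary, the product $fg$ linearly spans the entire space of polynomials of degree at most $n-2$ (every monomial $x^m$ with $m\le n-2$ factors as $x^{m_1}\cdot x^{m_2}$ with $m_1\le k-1$ and $m_2\le n-k-1$). Thus orthogonality reduces to the linear system
\begin{equation*}
\sum_{i=0}^{n-1} (u_i v_i)\,\gamma^{im} \;=\; 0 \quad\text{for all}\quad m=0,1,\ldots,n-2
\end{equation*}
on the products $w_i:=u_iv_i$. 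The coefficient matrix is an $(n-1)\times n$ Vandermonde-type matrix on the distinct points $\gamma^0,\ldots,\gamma^{n-1}$, so it has rank $n-1$ and a one-dimensional null space; any nonzero solution $w$ yields valid multipliers via $v_i=w_i/u_i$, which establishes the existence claim.

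For the explicit case $u_i\equiv 1$, I would plug $v_i=\gamma^i$ into the system and compute via the geometric series identity
\begin{equation*}
\sum_{i=0}^{n-1} \gamma^i\cdot \gamma^{im} \;=\; \sum_{i=0}^{n-1}(\gamma^{m+1})^i \;=\; \frac{\gamma^{n(m+1)}-1}{\gamma^{m+1}-1},
\end{equation*}
which vanishes for each $m\in\{0,1,\ldots,n-2\}$ provided $\gamma^n=1$ and $\gamma^j\ne 1$ for $1\le j<n$. The main subtlety is purely notational rather than mathematical: the quotient ring $\mathbb{F}_q[X]/(x^k-1)$ in \Cref{def:GRS} together with the explicit formula $V=\{1,\gamma,\ldots,\gamma^{n-1}\}$ forces the evaluation set $\{\gamma^i\}_{i=0}^{n-1}$ to be the group of $n$-th roots of unity in $\mathbb{F}_q^*$, i.e., $n\mid q-1$ and $\gamma$ of multiplicative order exactly $n$. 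Under that interpretation the proof reduces to the Vandermonde/geometric-series computation above, and for general $U$ the formula for $V$ is whatever the unique (up to scalar) nullvector of the Vandermonde system prescribes.
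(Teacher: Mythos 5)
The paper states this as a ``Fact'' without proof, so there is no paper argument to compare against; your job is simply to verify the claim. Your proof is correct and takes the standard route: the dimension count reduces the problem to orthogonality, the polynomial identity $\sum_i (u_iv_i)(fg)(\gamma^i)=0$ reduces orthogonality to a Vandermonde linear system on $w_i=u_iv_i$, and the $(n-1)\times n$ Vandermonde matrix having a one-dimensional kernel gives existence. One small gap worth closing: you need the kernel vector $w$ to have \emph{all} coordinates nonzero, or else $v_i=w_i/u_i$ is not a valid GRS multiplier. This does hold --- if $w_j=0$ then the remaining coordinates satisfy an $(n-1)\times(n-1)$ Vandermonde system on the distinct points $\{\gamma^i : i\neq j\}$, forcing $w=0$ --- but it should be said. (Equivalently, one can just cite the closed form $w_i=\bigl(u_i\prod_{j\neq i}(\gamma^i-\gamma^j)\bigr)^{-1}$, which makes the nonvanishing evident.)

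Your observation about the explicit formula is also correct and points to a genuine imprecision in the paper's statement. As you note, $\sum_{i=0}^{n-1}\gamma^{i(m+1)}=0$ for all $m=0,\dots,n-2$ requires $\gamma^n=1$. Since \Cref{def:GRS} takes $\gamma$ to be a primitive element of $\mathbb{F}_q^*$ (hence of order $q-1$), the stated formula $V=\{1,\gamma,\dots,\gamma^{n-1}\}$ is correct only when $n=q-1$ (equivalently, when the evaluation set is all of $\mathbb{F}_q^*$). The paper subsequently uses block lengths $n<q$ in \Cref{def:QRS} and even $n<q-1$ in \Cref{def:FQRS}, so for those parameters the explicit formula for $V$ as written does not apply; only the first sentence of \Cref{fact:grsdual} (existence of \emph{some} multipliers $V$) is what is actually needed and used. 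Your reading --- interpreting $\gamma$ as having multiplicative order exactly $n$ --- is a charitable fix, but it conflicts with the paper's own stipulation that $\gamma$ is primitive. It would be cleaner for the paper to either restrict the explicit formula to $n=q-1$ or to replace it with the general closed form above.
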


\begin{lemma} [\cite{Guruswami1998ImprovedDO}]
    $GRS_{n, R\cdot n, q}(\gamma, U)$ is efficiently $(1-\sqrt{R}-\varepsilon, n^{O(1)})$-LD.
\end{lemma}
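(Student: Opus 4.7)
The plan is to invoke the Guruswami--Sudan list-decoding algorithm for Reed--Solomon codes, adapted trivially to the generalized setting. The first step is to eliminate the multipliers: given a received word $y = (y_0,\dots,y_{n-1}) \in \mathbb{F}_q^n$, we form the scaled word $y' = (y_0/u_0,\dots,y_{n-1}/u_{n-1})$. A codeword $(u_0 f(1),\dots,u_{n-1} f(\gamma^{n-1}))$ of $GRS_{n,Rn,q}(\gamma,U)$ agrees with $y$ in position $i$ if and only if $f(\gamma^i) = y'_i$, so list-decoding GRS around $y$ is equivalent to list-decoding the ordinary Reed--Solomon code $\{(f(1),\dots,f(\gamma^{n-1})): f\in\mathbb{F}_q[X]_{<k}\}$ around $y'$.

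The core of the argument is then bivariate interpolation with multiplicities. For a suitably chosen integer $r=O(1/\varepsilon)$, I would solve for a nonzero bivariate polynomial $Q(X,Y)\in\mathbb{F}_q[X,Y]$ that vanishes with multiplicity at least $r$ at each of the $n$ points $(\gamma^i,y'_i)$, subject to having $(1,k-1)$-weighted degree at most $D$ for a carefully chosen $D$. The existence of such a $Q$ follows by a parameter count: the number of monomials of weighted degree $\leq D$ exceeds $n\binom{r+1}{2}$ when $D^2/(2(k-1))$ slightly exceeds $n\binom{r+1}{2}$, which in turn reduces to a standard linear system solvable in $\poly(n,\log q)$ time. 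The key correctness claim is that if $f\in\mathbb{F}_q[X]_{<k}$ agrees with $y'$ on more than $\sqrt{R}+\varepsilon$ fraction of the $n$ evaluation points, then $(Y-f(X))$ divides $Q(X,Y)$; this follows because $Q(X,f(X))$ is a univariate polynomial of degree at most $D$ which vanishes with multiplicity at least $r$ at each agreement point, so the total multiplicity forces $Q(X,f(X))\equiv 0$ once the number of agreements exceeds $D/r$. The parameter choice makes $D/r$ at most $(\sqrt{R}+\varepsilon)n$.

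The final step is to factor $Q(X,Y)$ over $\mathbb{F}_q(X)[Y]$ to extract all its $Y$-roots that are polynomials of degree less than $k$; there are at most $\deg_Y Q = O(D/(k-1)) = \poly(n)$ such roots, which bounds the list size. Efficient bivariate factoring over finite fields is classical (e.g., via Roth--Ruckenstein or Kaltofen), so the whole pipeline runs in $\poly(n,\log q)$ time. After factoring, I would output the subset of roots whose induced codewords actually agree with $y$ in at least $(\sqrt{R}+\varepsilon)n$ positions, which can be checked directly in linear time.

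The routine obstacles here are standard and already addressed in \cite{Guruswami1998ImprovedDO}: tuning $r$ and $D$ so that both (i) the interpolation system is underdetermined and (ii) the agreement threshold $D/r$ lies below $(\sqrt{R}+\varepsilon)n$. The only conceptual difference from the textbook proof is the initial rescaling by the $u_i$, which is a cosmetic modification since all $u_i\in\mathbb{F}_q^*$. Consequently the algorithm, its correctness, and the $\poly(n)$ list-size and running-time bounds all carry over verbatim from the Reed--Solomon case, establishing the lemma.
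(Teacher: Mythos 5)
The paper itself gives no proof of this lemma; it is stated as a direct citation of Guruswami--Sudan, and your proposal is precisely a reconstruction of the argument in that reference (rescale by the $u_i$, interpolate a bivariate polynomial with multiplicities, factor for $Y$-roots, prune the list). Your parameter sketch ($r=O(1/\varepsilon)$, $D/r \lesssim (\sqrt{R}+\varepsilon)n$, list size $O(D/(k-1))=\poly(n)$) and the observation that the multipliers are a cosmetic rescaling are all correct and are exactly what the cited source does, so this matches the intended proof.
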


The Quantum GRS codes, or Galois-qudit RS codes \cite{Aharonov2008FaultTolerantQC, Jin2014ACO}, are quantum codes based on the Galois-Qudit CSS construction of the classical codes above. They are a class of quantum codes which meets the quantum Singleton bound, but at the cost of a large local qudit dimension:

\begin{theorem}[Quantum GRS codes \label{def:QRS} \cite{Li2008QuantumGR}]
    Fix a constant $R\in (0, 1)$. Let $C_2^\perp \subset C_1\subset  \mathbb{F}_q^n$ be two GRS codes of block length $n<q$ and dimension $k_2 = n(1-R)/2, k_1 = n(1+R)/2$ respectively, with the same evaluation points $S = \{1, \gamma, \cdots, \gamma^{n-1}\} \subset \mathbb{F}_q$. The Quantum GRS code $Q_{RS} =$ CSS$(C_1, C_2)$ is a $[[n, R n, (1-R)/2 \cdot n]]_q$ QECC.
\end{theorem}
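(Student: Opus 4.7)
The plan is to verify the hypotheses of \Cref{theorem:galoiscss} via an explicit choice of GRS codes, and then read off the resulting $[[n,k,d]]_q$ parameters. The crucial numerical input is that the stated dimensions satisfy $k_1 + k_2 = n$, and hence $\dim C_2 = n - \dim C_2^\perp = n - k_2 = k_1$. Thus both $C_1$ and $C_2$ will be GRS codes of the \emph{same} dimension $k_1$, so \Cref{theorem:galoiscss} applied with $k = k_1$ will give a CSS code of dimension $2k_1 - n = Rn$, as claimed.

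To exhibit such codes, I would take $C_1 = GRS_{n, k_1, q}(\gamma, \mathbf{1})$ with the given evaluation points $S = \{1, \gamma, \ldots, \gamma^{n-1}\}$ and all-ones multipliers, and analogously $C_2^\perp = GRS_{n, k_2, q}(\gamma, \mathbf{1})$. The containment $C_2^\perp \subseteq C_1$ is then immediate, since every polynomial of degree $< k_2$ is also of degree $< k_1$ while the multipliers and evaluation points agree between the two codes. \Cref{fact:grsdual} then guarantees that $C_2 = (C_2^\perp)^\perp$ is itself a GRS code with the same evaluation points (and some multipliers $V$), matching the setup in the theorem statement; in particular $\dim C_2 = n - k_2 = k_1$, as anticipated.

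For the distance, I would simply record that GRS codes are MDS, so both $C_1$ and $C_2$ have distance $n - k_1 + 1 = (1-R)n/2 + 1$. Applying \Cref{theorem:galoiscss} with $k = k_1$ and $d = (1-R)n/2 + 1$ then yields a $[[n,\, Rn,\, (1-R)n/2 + 1]]_q$ QECC, which is at least as strong as the claimed $[[n,\, Rn,\, (1-R)n/2]]_q$ bound (and in fact shows that $Q_{RS}$ saturates the quantum Singleton bound).

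No step here should be substantively difficult. The only mildly delicate point is choosing multipliers consistently so that $C_1$ and $C_2^\perp$ simultaneously become GRS codes with the prescribed duality; using identical (all-ones) multipliers for both makes the nesting automatic, and everything else follows by direct substitution of parameters into \Cref{theorem:galoiscss}.
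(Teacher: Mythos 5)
Your derivation is correct. The paper states this result without proof, citing \cite{Li2008QuantumGR}, so there is no internal proof to compare against; but the route you take — checking $k_1+k_2=n$ so that $\dim C_2 = n-k_2 = k_1$, invoking \Cref{theorem:galoiscss} with $k=k_1$ to get dimension $2k_1-n=Rn$, and reading off the distance from the MDS property of GRS codes — is exactly the standard way to verify the parameters, and all the arithmetic checks out (including the observation that the true distance is $(1-R)n/2+1$, which saturates the quantum Singleton bound and is slightly stronger than the stated $(1-R)n/2$).

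One small remark: your explicit all-ones-multiplier instantiation makes the nesting $C_2^\perp\subset C_1$ transparent, but the theorem as stated simply assumes this containment as a hypothesis; in the general case \Cref{fact:grsdual} still guarantees that $C_2=(C_2^\perp)^\perp$ is a GRS code of dimension $k_1$ with the same evaluation points (for some multipliers $V$), so the MDS distance count $d(C_1)=d(C_2)=n-k_1+1$ holds regardless of the multiplier choice, and the argument goes through identically.
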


    One can interpret the QGRS encoding as a superposition of polynomial evaluations, where the high degree coefficients are defined by the message. By combining our insights in the previous section with the result by \cite{Guruswami1998ImprovedDO}, we obtain a quantum adaptation of a well known family of classically list decodable codes:

\begin{corollary}
    The Quantum GRS codes of \Cref{def:QRS} of rate $R$ and block-length $n$ are $(1-\sqrt{\frac{1+R}{2}}, n^{O(1)})$-QLD. For small $R$, this approaches radius $\approx.29-1.4\cdot R$.
\end{corollary}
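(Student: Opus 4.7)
The plan is to apply \Cref{thm:CSS} to the CSS construction of $Q_{RS}$, with the Guruswami--Sudan list decoder for generalized Reed--Solomon codes playing the role of both classical list decoders. Recall from \Cref{def:QRS} that $Q_{RS}=\text{CSS}(C_1,C_2)$ where $C_1$ is a GRS code of rate $(1+R)/2$ and $C_2$ is a GRS code of rate $(1-R)/2$, both sharing the evaluation points $\{1,\gamma,\dots,\gamma^{n-1}\}$. By \Cref{thm:CSS} it therefore suffices to show that each of the two quotients $C_1/C_2^\perp$ and $C_2/C_1^\perp$ is efficiently $(\tau, n^{O(1)})$ list-decodable at radius $\tau = 1-\sqrt{(1+R)/2}$.

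The Guruswami--Sudan lemma cited above says that any GRS code of rate $R'$ is efficiently $(1-\sqrt{R'}-\varepsilon,\, n^{O(1)})$ list-decodable. Applied to $C_1$ (rate $(1+R)/2$) this gives exactly the target radius $\tau$, and applied to $C_2$ (rate $(1-R)/2$) it gives an even larger radius $1-\sqrt{(1-R)/2}-\varepsilon\ge\tau$. The transfer from the ambient code $C_i$ to the quotient $C_i/C_j^\perp$ is immediate from the remark just after \Cref{def:cosetld}: we run the classical decoder on $C_i$, obtain a list of at most $n^{O(1)}$ codewords, and collapse cosets via row reduction against a generator matrix for $C_j^\perp$, which is itself a GRS code by \Cref{fact:grsdual} and hence has an explicit generator matrix available.

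Plugging these two decoders into \Cref{thm:CSS} then yields an efficient $(1-\sqrt{(1+R)/2}-\varepsilon,\, n^{O(1)})$-QLD algorithm for $Q_{RS}$; since $\varepsilon>0$ is arbitrary and the list size exponent is absolute, we obtain the stated bound. The small-$R$ approximation is just a first-order Taylor expansion of $1-\sqrt{(1+R)/2}$ about $R=0$, whose constant term is $1-1/\sqrt{2}\approx 0.29$.

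I do not expect a real obstacle here: once \Cref{thm:CSS}, \Cref{fact:grsdual}, and the Guruswami--Sudan guarantee are in hand, the proof is essentially a one-line composition. The only points worth verifying carefully are (i) that both classical list decoders apply at the \emph{same} radius $\tau$, which is why we must bottleneck on the higher-rate code $C_1$ rather than the lower-rate $C_2$, and (ii) that the coset-pruning step really preserves efficient decodability of the quotient at the same list size $n^{O(1)}$ rather than blowing up.
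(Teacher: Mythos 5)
Your approach is precisely the paper's: reduce via \Cref{thm:CSS} to classical list decoding of the quotients $C_1/C_2^\perp$ and $C_2/C_1^\perp$, then invoke the Guruswami--Sudan guarantee for GRS codes. The one substantive issue is a misreading of \Cref{def:QRS}. The code of dimension $n(1-R)/2$ there is $C_2^\perp$, not $C_2$; consequently $C_2$ itself has dimension $n - n(1-R)/2 = n(1+R)/2$, i.e.\ rate $(1+R)/2$, the same as $C_1$. This is forced by the CSS construction of \Cref{theorem:galoiscss}, which requires $C_1$ and $C_2$ to have equal dimension (and indeed the paper's proof states explicitly that ``$C_1,C_2$ both have rate $R'=(1+R)/2$''). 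So the Guruswami--Sudan radius for $C_2$ is exactly $1-\sqrt{(1+R)/2}=\tau$, not the strictly larger $1-\sqrt{(1-R)/2}$ you derive from the wrong rate. Your final conclusion is unharmed --- you only claimed more slack for $C_2$ than actually exists, and the bottleneck was correctly placed at $C_1$ --- but the sentence ``applied to $C_2$ (rate $(1-R)/2$)'' and the inequality ``$1-\sqrt{(1-R)/2}-\varepsilon\ge\tau$'' as justifications should be replaced by the simpler observation that $C_1$ and $C_2$ have identical rate, hence identical Guruswami--Sudan radius $\tau$.
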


\begin{proof}
    Note that the classical codes $C_1, C_2$ in \Cref{def:QRS} both have rate $R' = (1+R)/2$, and are efficiently LD up to a radius $1-\sqrt{R'}$ \cite{Guruswami1998ImprovedDO}. \Cref{thm:CSS} concludes the proof. 
\end{proof}

In their seminal work Guruswami and Rudra \cite{Guruswami2008ExplicitCA} introduced Folded Reed-Solomon codes, which are efficiently list decodable up to almost the Singleton bound. Informally, they are defined by `bundling' or `folding' together groups of $m$ consecutive symbols of the RS code, into a symbol of a code on a larger alphabet. 

\begin{definition} [$m$-folded codes]
    If $C\subset \mathbb{F}_q^n$ is an $[n, k]_q$ linear code, then the $m$-folded code $C^{(m)} \subset (\mathbb{F}_q^m)^{n/m}$ is an $[n/m, k/m]_{q^m}$ $\mathbb{F}_q$-linear code defined by the codewords
    \begin{equation}
        C^{(m)} = \bigg\{ (c_1, \cdots, c_m), \cdots, (c_{n-m+1}, \cdots, c_n )): c = (c_1,\cdots, c_n)\in C\bigg\}
    \end{equation}
\end{definition}

Indeed, note that the code itself hasn't truly changed, we are simply viewing it as a code over a larger alphabet. Moreover, observe $(C^{(m)})^\perp = (C^\perp)^{(m)}$.

\begin{definition} 
    [$m$-folded Reed-Solomon codes \label{def:FRS} \cite{Guruswami2008ExplicitCA}] Fix a primitive element $\gamma$ of $\mathbb{F}^*_q$, integers $n, m, k$ where $k<n\leq q-1$ and $m$ divides $n$. The $m$-folded RS code $FRS_{n,k, q}^m$ is an $[n/m, k/m, d/m]_{q^m}$ $\mathbb{F}_q$-linear code defined over alphabet $\mathbb{F}_{q}^m$ which encodes a polynomial $f\in F_q[x]$ of degree $k-1$ as
    \begin{equation}
        \begin{pmatrix}
        \begin{bmatrix}
        f(1) \\
        f(\gamma) \\
        \vdots \\
        f(\gamma^{m-1})
        \end{bmatrix},  \begin{bmatrix}
        f(\gamma^{m}) \\
        f(\gamma ^{m+1}) \\
        \vdots \\
        f(\gamma^{2m-1})
        \end{bmatrix}, \cdots, 
        \begin{bmatrix}
        f(\gamma^{n-m}) \\
        f(\gamma ^{n-m+1}) \\
        \vdots \\
        f(\gamma^{n-1})
        \end{bmatrix}
        \end{pmatrix}
    \end{equation}
\end{definition}

\begin{theorem}[\cite{Guruswami2008ExplicitCA}\label{theorem:FRS}]
    For any $q > n$, there exists a choice of $m = O(1/\gamma^2)$ such that the $m$-folded RS code of rate $R$ is efficiently $(1-R-\gamma, q^{O(1/\gamma)})$-LD.
\end{theorem}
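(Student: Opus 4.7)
The plan is to follow the multivariate polynomial method of Guruswami--Rudra. The list decoding algorithm proceeds in two phases: an interpolation phase that fits a low-degree multivariate polynomial $Q(X, Y_1, \ldots, Y_s)$ to the received word, and a root-finding phase that extracts candidate message polynomials $f$. Given a received word $(y_1,\ldots,y_{n/m}) \in (\mathbb{F}_q^m)^{n/m}$, we first expand each block into $m-s+1$ overlapping ``windows'' of $s$ consecutive symbols, producing roughly $(n/m)(m-s+1)$ interpolation tuples of the form $(\gamma^j, y^{(j)}, y^{(j+1)}, \ldots, y^{(j+s-1)})$. First I would find a nonzero $Q(X, Y_1, \ldots, Y_s)$, with degree in $X$ at most $D$ and individual degree at most $1$ in each $Y_i$, that vanishes on every interpolation tuple. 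This is a linear system over $\mathbb{F}_q$ in the coefficients of $Q$, so it has a nontrivial solution whenever the dimension $2^s(D+1)$ exceeds the number of constraints $(n/m)(m-s+1)$, which is a routine parameter check.

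Next I would argue that any codeword $f$ whose folded agreement with $y$ exceeds the decoding threshold forces the univariate polynomial $R(X) := Q(X, f(X), f(\gamma X), \ldots, f(\gamma^{s-1} X))$ to vanish identically. Indeed, at every position $j$ where $f$ agrees with $y$ on a window of length $s$, we have $R(\gamma^j) = 0$, and a careful count shows that agreement on more than a $D/(n - (k-1)(s-1))$ fraction of blocks produces more zeros than $\deg R$ permits, forcing $R \equiv 0$. Choosing $D$ to match the interpolation budget, this translates into the list decoding radius $\tau < 1 - (sR)/(s+1) - O(s/m)$; picking $s = \Theta(1/\gamma)$ and $m = \Theta(s/\gamma) = \Theta(1/\gamma^2)$ gives exactly the claimed error fraction $1 - R - \gamma$.

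It remains to bound the number of $f$ satisfying the functional equation $Q(X, f(X), f(\gamma X), \ldots, f(\gamma^{s-1} X)) \equiv 0$. The main obstacle, and the core of the Guruswami--Rudra argument, is to show that this equation has at most $q^{s-1} = q^{O(1/\gamma)}$ univariate solutions $f$ of degree $< k$. Here one exploits that $\gamma$ has large multiplicative order so that $X, \gamma X, \ldots, \gamma^{s-1} X$ are algebraically independent in the relevant quotient ring, reducing the count of solutions to the degree of an auxiliary univariate polynomial obtained by specialization, which is at most $q^{s-1}$; root enumeration then yields the list in polynomial time. Combining the interpolation step (linear algebra in $\poly(n,\log q)$ time), the equation-solving step, and pruning by checking each candidate against the received word gives the claimed efficient $(1-R-\gamma, q^{O(1/\gamma)})$-LD guarantee.
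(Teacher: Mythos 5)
The theorem you were asked to prove is cited directly from \cite{Guruswami2008ExplicitCA}; the paper does not reprove it. Your sketch reconstructs the Guruswami--Rudra algorithm in broad strokes, but it has two substantive problems.

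The interpolation step is wrong. You take $Q(X,Y_1,\dots,Y_s)$ to be multilinear, i.e.\ degree $\le D$ in $X$ and \emph{individual} degree $\le 1$ in each $Y_i$, giving a $2^s(D+1)$-dimensional space. With that choice the monomial $X^D Y_1\cdots Y_s$ is allowed, and substituting $Y_i \mapsto f(\gamma^{i-1}X)$ produces $\deg R \le D + s(k-1)$. Tracing the parameters through (you need $t(m-s+1) > \deg R$, and $D \gtrsim (n/m)(m-s+1)/2^s$ from the rank count), the required agreement fraction becomes roughly $2^{-s} + sR\cdot m/(m-s+1)\approx sR$, which already exceeds $1$ for any constant rate once $s\ge 2$. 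The stated radius $1-R-\gamma$ is unreachable by this route, and your expression ``$D/(n-(k-1)(s-1))$'' for the agreement threshold does not come out of either your parameter setup or the correct one. What Guruswami--Rudra actually interpolate is a polynomial of bounded $(1,k-1,\dots,k-1)$-\emph{weighted} degree (with multiplicities); equivalently, the later Vadhan/Guruswami--Wang simplification restricts to the linear form $Q = A_0(X) + \sum_{i=1}^s A_i(X) Y_i$. Either constraint forces $\deg R \le D$, which is what makes the zero-counting and the $(s+1)$-th-root tradeoff give capacity-approaching radius with $s = \Theta(1/\gamma)$ and $m = \Theta(s/\gamma)$.

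The root-finding step is also mis-stated. The claim is not that $X,\gamma X,\dots,\gamma^{s-1}X$ are ``algebraically independent'' in some quotient ring. The actual key fact is that when $\gamma$ is a primitive element of $\mathbb{F}_q^\times$, the polynomial $E(X) = X^{q-1}-\gamma$ is irreducible, and for every $f\in\mathbb{F}_q[X]$ of degree $<q-1$ one has the Frobenius-type identity $f(\gamma X)\equiv f(X)^q \pmod{E(X)}$. Working in the extension field $\mathbb{F}_q[X]/E(X)$ and substituting $Y_i\mapsto Y^{q^{i-1}}$ turns the functional equation $Q(X,f(X),\dots,f(\gamma^{s-1}X))\equiv 0$ into a nonzero univariate polynomial identity in $Y$ of degree $\le q^{s-1}$, whose roots enumerate the list. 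Your write-up gets the conclusion (list size $q^{s-1}=q^{O(1/\gamma)}$) but does not identify the identity that drives it. You should replace the multilinear interpolation with the weighted-degree (or linear) one, redo the degree/agreement bookkeeping, and state the $f(\gamma X)\equiv f(X)^q$ lemma explicitly before the sketch is a proof.
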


By multiplying element-wise the RS codes above by scalars $u_i\in \mathbb{F}_q^*, i\in [n]$, one defines Folded Generalized RS codes. Our starting point for efficiently list decodable quantum codes is an analogous bundling for the physical qudits of any stabilizer code. 

\begin{definition}
    Let $S_1, \cdots, S_r\in \mathcal{P}_q^n$ define a set of generators for an $[[n, k]]_q$ stabilizer code $Q\subset \mathbb{C}_q^{\otimes n}$. Then the $m$-folded quantum code $Q^{(m)}\subset (\mathbb{C}_q^{m})^{\otimes n/m}$ is the $[[n/m, k/m]]_{q^m}$ stabilizer code defined by the stabilizer generators $S^{(m)}_1\cdots S^{(m)}_r$, where
    \begin{equation}
        S^{(m)}_i =  \bigotimes_{j\in [n/m]} \big( \otimes_{l\in [m]} S_{i, j\cdot m+ l}\big) 
    \end{equation}
    where each generator $S_i = \otimes_{j\in [n]} S_{ij}$ is expressed as a tensor product of operators acting on $\mathbb{C}_q$. 
\end{definition}

Analogously to the classical construction, we are folding the physical qudits together, and we view the stabilizers as operators acting on a larger local dimension. We emphasize however that they are the same stabilizers, and the syndromes of operators acting on the code are measured in the same way as before. The key distinction is that now the notion of the `weight' of an operator $E$ acting on $Q^{(m)}$, is the number of distinct `bundles' that $E$ acts non-trivially on. A simple claim allows us to use the list decoding algorithms of folded classical codes on folded quantum codes, when $Q$ is a CSS code.

\begin{claim}\label{claim:foldingcsscodes}
    If $Q$ is the Galois-qudit CSS code of two linear classical codes $C_1, C_2\subset \mathbb{F}_q^n$, then $Q^{(m)}$ is the vector-space CSS code of the two $\mathbb{F}_q$-linear folded classical codes $C_1^{(m)}, C_2^{(m)}$. 
\end{claim}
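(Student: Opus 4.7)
The plan is to show that the stabilizer groups of $Q^{(m)}$ and of $\mathrm{CSS}(C_1^{(m)}, C_2^{(m)})$ coincide as subgroups of $\mathcal{P}_q^n$. This suffices because both codes live in the same Hilbert space $(\mathbb{C}_q^{\otimes m})^{\otimes n/m} = \mathbb{C}_q^{\otimes n}$, and a stabilizer code is determined by its stabilizer group. By \Cref{theorem:galoiscss}, the stabilizers of $Q$ are $\{E_{\mathbf{a}, \mathbf{b}} : \mathbf{a} \in C_2^\perp,\ \mathbf{b} \in C_1^\perp\}$, and by the definition of folding these operators are unchanged as elements of $\mathcal{P}_q^n$ when passing to $Q^{(m)}$; only the bookkeeping of tensor factors changes.

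On the other side, \Cref{theorem:vectorspacecss} constructs $\mathrm{CSS}(C_1^{(m)}, C_2^{(m)})$ by first viewing each vector $\tilde{\mathbf{a}} \in (\mathbb{F}_q^m)^{n/m}$ as an element of $\mathbb{F}_q^{n}$ via the natural $\mathbb{F}_q$-linear identification $(\mathbb{F}_q^m)^{n/m} \cong \mathbb{F}_q^n$, and then applying the Galois-qudit Pauli assignment. Since this identification is precisely the unfolding map, the generators of $\mathrm{CSS}(C_1^{(m)}, C_2^{(m)})$, viewed inside $\mathcal{P}_q^n$, are exactly $\{E_{\mathbf{a}, \mathbf{b}} : \mathbf{a}$ is the unfolding of some element of $(C_2^{(m)})^\perp,\ \mathbf{b}$ is the unfolding of some element of $(C_1^{(m)})^\perp\}$.

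To finish I would invoke the identity $(C^\perp)^{(m)} = (C^{(m)})^\perp$ noted in the excerpt immediately after the definition of $m$-folded codes, which says that unfolding $(C_i^{(m)})^\perp$ yields exactly $C_i^\perp \subset \mathbb{F}_q^n$. The two generator sets therefore coincide as subsets of $\mathcal{P}_q^n$ and the claim follows. The only subtle point is justifying this commutation of folding with taking duals, which in turn rests on matching the $\mathbb{F}_{q^m}$-orthogonality on the folded alphabet with the $\mathbb{F}_q$-orthogonality after coordinate expansion via the trace pairing $\mathrm{tr}_{\mathbb{F}_{q^m}/\mathbb{F}_q}$ from the preliminaries; once this duality identity is in hand the rest is bookkeeping, so I do not expect any substantive technical obstacle.
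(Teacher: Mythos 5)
Your proof is correct and follows essentially the same route as the paper's: both observe that the generators $E_{\mathbf{a},\mathbf{b}}$ of $Q$ are literally unchanged under folding (only the tensor-factor grouping changes), and then match them against the generators of the vector-space CSS code $\mathrm{CSS}(C_1^{(m)},C_2^{(m)})$ via the identity $(C^{(m)})^\perp = (C^\perp)^{(m)}$. One small point: the concern you raise at the end about the trace pairing $\mathrm{tr}_{\mathbb{F}_{q^m}/\mathbb{F}_q}$ is a false lead. The folded code $C^{(m)}$ is an $\mathbb{F}_q$-linear code over the alphabet $\mathbb{F}_q^m$ (a vector space, not the field $\mathbb{F}_{q^m}$), and the $\perp$ in \Cref{theorem:vectorspacecss} is the ordinary coordinatewise $\mathbb{F}_q$-dot product on $\mathbb{F}_q^{n}$ after identifying $(\mathbb{F}_q^m)^{n/m}\cong\mathbb{F}_q^n$. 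So $(C^{(m)})^\perp = (C^\perp)^{(m)}$ is just a coordinate-regrouping identity and requires no compatibility check between $\mathbb{F}_{q^m}$- and $\mathbb{F}_q$-orthogonality; the trace form in the preliminaries is used only to define the Pauli basis and commutation phases, not the dual on these classical codes.
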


See the definitions in \Cref{theorem:galoiscss} and \Cref{theorem:vectorspacecss} for the distinctions between these CSS constructions. 

\begin{proof}
    As defined in \Cref{theorem:vectorspacecss}, the vector space CSS code of $C_1^{(m)}, C_2^{(m)}$ is the subspace of $(\mathbb{C}_q^{\otimes m})^{\otimes n/m}$ stabilized by the operators $E_{\textbf{a}, \textbf{b}}$, where $a \in (C_1^{(m)})^\perp, b\in (C_2^{(m)})^\perp \subset (\mathbb{F}_{q}^m)^{n/m}$ are viewed as elements $\textbf{a}, \textbf{b}$ of $\mathbb{F}_q^n$. Since $a \in (C_1^{(m)})^\perp$ as an element of  $\mathbb{F}_q^n$ is the `unfolded' element $\textbf{a}\in C_1^\perp$, we arrive at precisely the definition of $Q^{(m)}$.
\end{proof}

Equipped with this notion of folding, we can now consider folding the Quantum GRS codes of \Cref{def:QRS}.

\begin{definition} [Folded Quantum Reed-Solomon Code\label{def:FQRS}]
    For $0<R<1$, fix a primitive element $\gamma\in \mathbb{F}_q^*$ and integer block length $n<q-1$, and define integers dimension $d  = n-(1+R) n/2$ and $m$ with $m|n$. The $m$-folded Quantum RS code is an $[[n/m, R \cdot n/m, d/m]]_{q^m}$ quantum error correcting code defined over qudits of local dimension $q^m$, which encodes a computational basis state $|f\rangle$, $f\in \mathbb{F}_q^{R n}$ through the superposition $Enc(|f\rangle) = |\bar{f}\rangle$:

    \begin{gather}
        \frac{1 }{q^{d/2}} \sum_{deg(f')\leq d}
        \bigg|
        \begin{bmatrix}
        (f+f')(1) \\
        (f+f')(\gamma) \\
        \vdots \\
        (f+f')(\gamma^{m-1})
        \end{bmatrix} \bigg\rangle \otimes \cdots \otimes \bigg|
        \begin{bmatrix}
        (f+f')(\gamma^{n-m}) \\
        (f+f')(\gamma ^{n-m+1}) \\
        \vdots \\
        (f+f')(\gamma^{n-1})
        \end{bmatrix} \bigg\rangle
    \end{gather}
\end{definition}

Where in the above $(f+f')(x)$ corresponds to the evaluation at $x\in \mathbb{F}_q$ of the polynomial defined by $f'$ as the low degree coefficients $< d$, and $f$ as the high degree coefficients within $d$ and $d+R\cdot n$. The bracket notation corresponds to the tensor product state in $\mathbb{C}_q^{\otimes m}$

\begin{equation}
    \bigg| \begin{bmatrix}
        (f+f')(1) \\
        (f+f')(\gamma) \\
        \vdots \\
        (f+f')(\gamma^{m-1})
        \end{bmatrix}  \bigg\rangle  = \otimes_i^{m-1} \ket{(f+f')(\gamma^i)}
\end{equation}

The main theorem of this section is that these $m$-folded QRS codes are QLD up to essentially the distance of the quantum code. 

\begin{theorem} \label{theorem:fqrs}
    The FQRS of \Cref{def:FQRS} is $(\frac{1-R}{2}-\gamma,  q^{O(1/\gamma)})$-QLD for a choice of $m= O(1/\gamma^2)$.
\end{theorem}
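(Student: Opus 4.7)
The plan is to combine three earlier results: the CSS list-decoding reduction of \Cref{thm:CSS}, the folding compatibility of \Cref{claim:foldingcsscodes}, and the classical list-decodability of folded Reed--Solomon codes given by \Cref{theorem:FRS}. The overall strategy is to identify the $m$-folded Quantum Reed--Solomon code as a CSS code whose two underlying classical codes are precisely $m$-folded generalized Reed--Solomon codes, each of which is classically list-decodable up to the quantum Singleton bound by Guruswami--Rudra.

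First, I would invoke \Cref{def:QRS} to view the unfolded QRS code as $\text{CSS}(C_1, C_2)$ for two $\mathbb{F}_q$-linear generalized Reed--Solomon codes with the same evaluation points $\{1,\gamma,\ldots,\gamma^{n-1}\}$, satisfying $C_2^\perp \subset C_1$, where $C_1$ and $C_2$ each have $\mathbb{F}_q$-dimension $(1+R)n/2$. By \Cref{claim:foldingcsscodes} the $m$-folded quantum code is then the vector-space CSS code of the $m$-folded classical codes $C_1^{(m)}, C_2^{(m)} \subseteq (\mathbb{F}_q^m)^{n/m}$, both $\mathbb{F}_q$-linear folded generalized Reed--Solomon codes of rate $(1+R)/2$ over the alphabet $\mathbb{F}_{q^m}$.

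Next, I would apply \Cref{theorem:FRS} to these two folded codes. Although \Cref{theorem:FRS} is stated for the unadorned folded RS codes of \Cref{def:FRS} while the QRS construction uses generalized folded RS codes with nonzero coordinate-wise multipliers $u_i \in \mathbb{F}_q^*$, a list-decoder for the plain folded RS code immediately yields one for the folded GRS code by rescaling the received word componentwise by $u_i^{-1}$ before decoding, since this invertible scaling preserves Hamming balls exactly. Choosing $m = O(1/\gamma^2)$ as in \Cref{theorem:FRS} applied at rate $(1+R)/2$ with slack $\gamma$, both $C_1^{(m)}$ and $C_2^{(m)}$ are efficiently $\left(\tfrac{1-R}{2}-\gamma,\; q^{O(1/\gamma)}\right)$-list-decodable. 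The quotients $C_1^{(m)}/(C_2^{(m)})^\perp$ and $C_2^{(m)}/(C_1^{(m)})^\perp$ inherit the same list-decoding radius and bound by the remark immediately following \Cref{def:cosetld}.

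Finally, \Cref{thm:CSS} applied to this CSS decomposition yields the claimed $\left(\tfrac{1-R}{2}-\gamma,\; (q^{O(1/\gamma)})^2\right) = \left(\tfrac{1-R}{2}-\gamma,\; q^{O(1/\gamma)}\right)$-QLD property, completing the argument. I do not anticipate any genuine obstacle here --- the proof is essentially a one-line composition of the earlier machinery --- and the only step requiring care is the reduction from folded GRS list-decoding to folded RS list-decoding, which is entirely routine.
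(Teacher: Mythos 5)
Your proof is correct and follows the same approach as the paper's: view the FQRS code via \Cref{claim:foldingcsscodes} as the vector-space CSS code of two folded GRS codes, invoke \Cref{theorem:FRS} for their classical list-decodability, and conclude by \Cref{thm:CSS}. The only difference is that you explicitly address the routine step of reducing folded GRS list-decoding to folded RS list-decoding by rescaling the received word by the multipliers, a detail the paper's two-sentence proof leaves implicit.
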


\begin{proof}
    From \cref{claim:foldingcsscodes}, we understand the FQRS codes as the vector space CSS codes (\cref{theorem:vectorspacecss}) of two classical folded GRS codes. These are classically $(\frac{1-R}{2}-\gamma,  q^{O(1/\gamma)})$-LD from \cref{theorem:FRS}, \cite{Guruswami2008ExplicitCA}. Finally, since the folded GRS codes are $\mathbb{F}_q$-linear, from \Cref{thm:CSS} we conclude the FQRS codes are $(\frac{1-R}{2}-\gamma,  q^{O(1/\gamma)})$-QLD as well.
\end{proof}

\section{Approximate Quantum Error Correction}
\label{sec:AQECC}
In this section we construct approximate quantum error correcting codes (AQECCs) that approach the quantum Singleton bound. In \Cref{subsec:private} we construct private AQECCs --- which use a private classical side-channel --- from quantum list decodable codes and purity testing codes. In \Cref{subsec:RSS} we show how to use a classical robust secret sharing scheme to remove the need for the classical side-channel, at the cost of slightly increasing the alphabet size.

\subsection{Private AQEC from Quantum List Decoding and Purity Testing}
\label{subsec:private}

Recall our definition of a private AQECC, where the sender and receiver have access to a private classical side-channel.

\begin{definition} [\Cref{def:privateaqecc}, restatement]
    A set of pairs $\{(\Enc_k, \Dec_k)\}_{k\in K}$ of quantum channels is said to be a $(\delta, \varepsilon)$-private AQECC with keys $K$ if for all adversarial channels $\mathcal{A}$ of weight $\delta\cdot n$,
    $$\norm{\mathbb{E}_{k \from K}[\Dec_k\circ \mathcal{A} \circ \Enc_k] - \Id}_\diamond \le \varepsilon.$$
\end{definition}
We now describe a construction of private AQECCs from list decodable quantum codes (QLDs) and purity testing codes (PTCs). Recall that a $(\delta,L)$-QLD stabilizer code produces a list of possible corrections of size $L$, given any codestate that has been adversarially corrupted on up to a $\delta$ fraction of symbols (see \Cref{def:QLD}). Roughly, an $\varepsilon$-PTC detects any fixed Pauli error with probability $1-\varepsilon$ (see \Cref{def:PTC}).

Given an $[[n, n_{PTC}]]_q$ QLD $Q_{LD}$ and $[[n_{PTC}, m]]_q$ PTC $\{Q_k\}$, our private AQECC $\{Q_{LD} \circ Q_k\}$ is defined by channels $\Enc_k$ that encode a message state $\rho$ into $Q_k$ and then into $Q_{LD}$:
\begin{equation}
    \Enc_k(\rho) = \Enc_{Q_{LD}} \circ \Enc_{Q_k}(\rho).
\end{equation}

To decode the quantum code composition $Q_{LD} \circ Q_k$\footnote{See \Cref{def:codecomposition} and \Cref{fact:concatstabilizers} for the properties of the composition of quanutm codes.}, we use \Cref{alg:algorithm1}.

\begin{algorithm}[h]
    \caption{$\Dec_k$ for $Q_{LD} \circ Q_k$}
    \label{alg:algorithm1}
    \KwInput{A corrupted code-state $\mathcal{A}(\Enc_k(\rho))$}
    Measure the syndromes $s_{QLD}$ of $Q_{LD}$ and $s_{PTC}$ of $Q_k$\;
    Apply list decodability to $s_{QLD}$ to obtain a list $E_1, \dots, E_L$ of potential errors\;
    Compute the $Q_k$-syndromes $s_1, \dots, s_L$ of $E_1, \dots, E_L$, and choose any $i \in [L]$ such that $s_i = s_{PTC}$\;
    Apply $(E_i \circ \Enc_k)^\dagger$ to the corrupted code-state and output the resulting message\;
\end{algorithm}

The main result of this subsection is the following lemma, which states that the composition above forms a private AQECC.

\begin{lemma} \label{lemma:private-aqec}
    Let $\{Q_k\}$ be an $\varepsilon$-PTC, where each $Q_k$ is an $[[n_{PTC}, m]]_q$ stabilizer code, and let $Q_{LD}$ be a $[[n, n_{PTC}, d]]_q$ stabilizer code which is $(\delta, L)$-QLD and has distance $d>\delta\cdot n$. Then $Q_{LD} \circ Q_{k}$ is an ensemble of $[[n, m]]_q$ stabilizer codes, and a $(\delta, 2\cdot L\cdot \varepsilon)$ private AQECC.
\end{lemma}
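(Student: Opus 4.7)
The plan is to show that $\mathbb{E}_k[\Dec_k \circ \mathcal{A} \circ \Enc_k] = (1-q)\Id + q\mathcal{C}$ for some $q \le L\varepsilon$ and some CPTP map $\mathcal{C}$, from which the standard bound $\|(1-q)\Id + q\mathcal{C} - \Id\|_\diamond \le 2q$ yields the conclusion. First I would decompose the adversary in the Pauli basis on its support, $\mathcal{A}(\sigma) = \sum_{E,E'} \beta_{E,E'}\, E\sigma E'^\dagger$ with $\beta \succeq 0$ of unit trace. Applying $\Enc_k$, then $\mathcal{A}$, then the syndrome measurement of the composed code $Q_{LD} \circ Q_k$ (which by \Cref{fact:composedstabilizers} comprises both the $Q_{LD}$-stabilizers and the $Q_{LD}$-encoded $Q_k$-stabilizers) kills all cross-syndrome terms, so the effective channel reduces to $\sum_{(E,E'):\, s_E = s_{E'}} \beta_{E,E'}\, \mathbb{E}_k[F_{k,E}\, \sigma\, F_{k,E'}^\dagger]$, where $F_{k,E} := \Enc_k^\dagger C_{k,s_E}^\dagger E \Enc_k$ is a logical Pauli on the message that equals $\Id$ precisely when the decoder's chosen correction $C_{k,s_E}$ is stabilizer-equivalent to $E$ in $Q_{LD} \circ Q_k$.

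Next, I would analyze each surviving pair $(E, E')$. Since $s_Q(E) = s_Q(E')$, the product $EE'^{-1}$ lies in $N(Q_{LD})$ and has a well-defined logical representative $\tilde{m}$ on the $n_{PTC}$ qudits, which is a fixed Pauli independent of $k$. The pair further survives the encoded-$Q_k$ portion of the syndrome measurement iff $\tilde{m} \in N(Q_k)$, splitting into two subcases: (i) $\tilde{m} \in S(Q_k)$ (which includes the diagonal $E = E'$), in which case $E E'^{-1}$ is a composed-code stabilizer, and using the fact that such stabilizers fix any encoded state one verifies the key identity $F_{k,E} = F_{k,E'}$; and (ii) $\tilde{m} \in N(Q_k) \setminus S(Q_k)$, which by the $\varepsilon$-PTC property occurs with probability at most $\varepsilon$ over $k$. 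For the ``good'' subcase (i), the decoder succeeds (i.e.\ $F_{k,E} = F_{k,E'} = \Id$) whenever no element $\tilde{n}_j$ of the set of logical representatives of $\{E_j E^{-1}\}_j$ (where $E_1,\dots,E_L$ is the QLD list for $s_Q(E)$) lies in $N(Q_k) \setminus S(Q_k)$; since each $\tilde{n}_j$ is a fixed Pauli independent of $k$, the $\varepsilon$-PTC property together with a union bound over $j \in [L]$ bounds this failure probability by $L\varepsilon$.

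Assembling the two subcases gives the decomposition $\mathbb{E}_k[\Dec_k \mathcal{A} \Enc_k] = (1-q)\Id + q\mathcal{C}$ with $q \le L\varepsilon$, whence the desired diamond-norm bound of $2L\varepsilon$. The main obstacle I anticipate is the careful bookkeeping for the surviving off-diagonal terms in subcase (i), and specifically verifying the identity $F_{k,E} = F_{k,E'}$ whenever $EE'^{-1}$ is a composed-code stabilizer so that these terms merge cleanly into the diagonal analysis instead of contributing extra error. This ultimately boils down to an elementary Pauli manipulation using that stabilizers of $Q_{LD} \circ Q_k$ fix all encoded states, but it is essential for obtaining the clean $2L\varepsilon$ bound in the lemma.
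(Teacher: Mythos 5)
Your proposal is essentially correct and follows the same skeleton as the paper's proof — Pauli/Kraus decomposition, syndrome measurement collapse, the PTC property applied to the QLD list via a union bound, and the $\times 2$ from converting a success probability to a diamond-norm bound. The paper packages the first step as \Cref{fact:syndromelocind} (the $Q_{LD}$ syndrome measurement collapses the corruption to a single Pauli whose distribution is key-independent) and then works with one Pauli per syndrome, whereas you carry the full $\sum_{E,E'}\beta_{E,E'}$ decomposition and argue the off-diagonal terms cancel or merge. Both routes rely on the same ingredients; yours is just more explicit.

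However, there is a gap in your handling of subcase (ii), and it would not close the argument as written. You note that for a surviving pair $(E,E')$, the product $EE'^{-1}\in N(Q_{LD})$ and splits into subcases depending on whether its logical representative $\tilde m$ lies in $S(Q_k)$ or in $N(Q_k)\setminus S(Q_k)$, and for subcase (ii) you remark that this "occurs with probability at most $\varepsilon$ over $k$." But this is a \emph{per-pair} bound: an adversarial channel supported on $\delta n$ positions can have exponentially many off-diagonal pairs in its Pauli decomposition, and you never union-bound over them (nor could you afford to). Your claimed total $q\le L\varepsilon$ only accounts for the subcase-(i) failure, so the subcase-(ii) contribution is left unbounded. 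The fix you are missing is precisely the hypothesis $d>\delta n$: both $E$ and $E'$ are supported on the error set of size at most $\delta n$, so $EE'^{-1}$ has weight at most $\delta n<d$. Combined with $EE'^{-1}\in N(Q_{LD})$, this forces $EE'^{-1}\in S(Q_{LD})$, hence $\tilde m$ is the identity (up to phase) and subcase (ii) is vacuous. In other words, you should not invoke the PTC at all for the off-diagonal terms; they merge with the diagonal purely because of the distance of $Q_{LD}$. This is exactly the content of the paper's \Cref{fact:syndromelocind}, and it is where the hypothesis $d>\delta n$ earns its keep. Once you add that observation, the rest of your argument (the key identity $F_{k,E}=F_{k,E'}$ up to phase, and the $L\varepsilon$ union bound over the QLD list using the PTC property on the fixed logicals $\tilde n_j$) goes through and matches the paper's \Cref{claim:ptcsyndrome-draft} and \Cref{claim:lemma41-proof}.
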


We instantiate \Cref{lemma:private-aqec} with the family of $[[n, R\cdot n]]_q$, $((1-R-\gamma)/2, n^{O_\gamma(1)})$-QLD codes on alphabets of size $q = 2^{\Theta(1/\gamma^5)}$ guaranteed by \Cref{thm:frsael} or \Cref{theorem:listrecoveryconstruction}, together with a family of particularly randomness-efficient PTC's from the work of \cite{barnum2002authentication}.

\begin{theorem}[\cite{barnum2002authentication}]
\label{thm:ptc}
    For every prime power $q$ and every $\lambda,n_{PTC}\in\mathbb{N}$ with $\lambda|n_{PTC}$, there exists a stabilizer $\varepsilon$-PTC $\{Q_k\}$ on qudits of local dimension $q$ and key size $\lambda \cdot \log q$ bits, encoding $n_{PTC}-\lambda$ qudits into $n_{PTC}$ qudits, where $\varepsilon \leq 2 \cdot n_{PTC} \cdot q^{-\lambda}/\lambda$.
\end{theorem}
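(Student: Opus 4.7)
The plan is to build $\{Q_k\}_{k \in K}$ as a family of stabilizer codes indexed by $k \in \mathbb{F}_{q^\lambda} =: K$, giving a key of $\lambda \log q$ bits, and to analyze the detection failure probability via a Schwartz--Zippel argument. Each $Q_k$ will encode $n_{PTC}-\lambda$ qudits into $n_{PTC}$, which under the symplectic correspondence between $\mathcal{P}_q^{n_{PTC}}$ modulo phases and $\mathbb{F}_q^{2n_{PTC}}$ amounts to choosing, for each $k$, an isotropic subspace $L_k$ of dimension $\lambda$ (the stabilizers) whose symplectic dual $L_k^\perp$ has codimension $\lambda$ (the normalizer). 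The task reduces to bounding, for every nonzero Pauli symplectic vector $v=(a,b)$, the probability $\Pr_{k \leftarrow K}[v \in L_k^\perp \setminus L_k]$.

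First I would specify $L_k$ CSS-style via a polynomial evaluation. Setting $m = n_{PTC}/\lambda$, bundle each of $a, b \in \mathbb{F}_q^{n_{PTC}}$ into an $m$-tuple of elements of $\mathbb{F}_{q^\lambda}$ and interpret these as the coefficients of polynomials $f_a(x), f_b(x) \in \mathbb{F}_{q^\lambda}[x]$ of degree strictly less than $m$. Define $L_k$ to be the span of a CSS-type pair of generator sets obtained from a Reed--Solomon-style evaluator at $x=k$: an $X$-type constraint coming from $f_a(k)$ and a matched $Z$-type constraint coming from $f_b(k)$, paired through dual bases of $\mathbb{F}_{q^\lambda}/\mathbb{F}_q$ under the trace form as in \Cref{theorem:vectorspacecss}. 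This pairing makes $L_k$ symplectically self-orthogonal of the correct dimension $\lambda$, so $Q_k$ is a well-defined $[[n_{PTC}, n_{PTC}-\lambda]]_q$ stabilizer code for every $k$.

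To bound $\Pr_k[v \in L_k^\perp]$ for a fixed nonzero $v=(a,b)$, observe that this event is equivalent to $v$ commuting symplectically with every generator of $L_k$, which translates to the condition that both $f_a(k)=0$ and $f_b(k)=0$ in $\mathbb{F}_{q^\lambda}$. Since at least one of $f_a, f_b$ is nonzero of degree $<m$, the Schwartz--Zippel bound over $\mathbb{F}_{q^\lambda}$ gives that each such polynomial vanishes at a uniformly random $k$ with probability at most $(m-1)/q^\lambda$, and a union bound over the $X$- and $Z$-sides yields
\[
\Pr_k\bigl[v \in L_k^\perp\bigr] \;\leq\; \frac{2m}{q^\lambda} \;=\; \frac{2\,n_{PTC}}{\lambda \cdot q^\lambda}.
\]
The event $v \in L_k$ is strictly stronger, so $\Pr_k[E \in N(Q_k) \setminus S(Q_k)] \le 2 n_{PTC} q^{-\lambda}/\lambda$, matching the theorem.

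The main technical obstacle I anticipate is engineering the symplectic self-orthogonality of $L_k$ for every $k$: the $X$- and $Z$-side evaluation constraints must annihilate each other under the $\mathbb{F}_q$-symplectic pairing, and this requires choosing the underlying evaluation functionals consistently across the two bases using trace-form duality for $\mathbb{F}_{q^\lambda}/\mathbb{F}_q$. This is the same dual-pair structure exploited in the vector-space CSS construction of \Cref{theorem:vectorspacecss}; once it is in place, the probability calculation is an immediate Schwartz--Zippel estimate as above.
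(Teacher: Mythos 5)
The statement is imported from \cite{barnum2002authentication} without a proof in this paper, so your proposal is compared against the BCGST construction itself. Your high-level strategy --- key $k\in\mathbb{F}_{q^\lambda}$, packaging the symplectic vector of $E$ into polynomial coefficients over $\mathbb{F}_{q^\lambda}$, and a Schwartz--Zippel estimate --- is the right idea and is essentially what BCGST do. However, your construction has a dimension inconsistency that makes the stated parameters unattainable. You want $L_k$ to have $\mathbb{F}_q$-dimension $\lambda$, so that $Q_k$ is an $[[n_{PTC},n_{PTC}-\lambda]]_q$ code, and simultaneously you want $v=(a,b)\in L_k^\perp$ to be equivalent to ``$f_a(k)=0$ and $f_b(k)=0$.'' Each of those two equations over $\mathbb{F}_{q^\lambda}$ is $\lambda$ independent $\mathbb{F}_q$-linear constraints on $v$, so their conjunction generically cuts out a codimension-$2\lambda$ subspace, not codimension $\lambda$. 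A CSS code built from separate degree-$<m$ Reed--Solomon-style checks on the $X$- and $Z$-sides therefore has $2\lambda$ stabilizer generators and encodes only $n_{PTC}-2\lambda$ message qudits --- a different theorem. Relatedly, the ``union bound over the $X$- and $Z$-sides'' is backwards for your detection condition: you need $\Pr[f_a(k)=0 \text{ and } f_b(k)=0]$, which is bounded by the smaller of the two probabilities, not their sum. That actually gives you a factor-of-$2$ better error, but paired with a factor-of-$2$ worse rate overhead.

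The BCGST construction reaches the stated parameters with a single interleaved polynomial of degree $\le 2m$ and a non-CSS stabilizer group: roughly $p_{a,b}(x) = \sum_{i=1}^m a_i x^{2i-1} + \sum_{i=1}^m b_i x^{2i}$, where the $a_i,b_i\in\mathbb{F}_{q^\lambda}$ are the bundled blocks of the $X$- and $Z$-parts of $E$. Then $E_{a,b}\in N(Q_k)$ is the \emph{single} condition $p_{a,b}(k)=0$, which is exactly $\lambda$ $\mathbb{F}_q$-linear constraints, so $L_k$ has dimension $\lambda$ and the code has the required rate. The bound $\varepsilon\le 2n_{PTC}q^{-\lambda}/\lambda = 2m\cdot q^{-\lambda}$ is then the Schwartz--Zippel estimate on a degree-$\le 2m$ polynomial; the factor of $2$ is the degree, not a union bound. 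Finally, note that isotropy of $L_k$ is not automatic; the interleaving of monomial degrees between the $X$- and $Z$-coefficients is engineered precisely so that the symplectic pairing of any two generators cancels. You flag this as ``the main technical obstacle'' and defer it, but it is a load-bearing computation in the proof, not something that can be treated as a black box.
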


With $\lambda =  n/\log q$ and $n_{PTC} = (R+\gamma/2)\cdot n$, we obtain:

\begin{corollary}
\label{cor:private-aqec}
    For any $\gamma > 0$ and $R \in (0,1)$, there exists a family of $((1-R-\gamma)/2, 2^{-\Omega(n)})$ private AQECCs of rate $R$, block length $n$, key size $n$ bits, and local dimension $2^{\Theta(1/\gamma^5)}$. Furthermore, there is an efficient randomized algorithm to construct these codes with failure probability $2^{-\Omega(n)}$, as well as efficient encoding and decoding algorithms.
\end{corollary}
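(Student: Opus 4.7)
The plan is to instantiate \Cref{lemma:private-aqec} using the constant-alphabet QLD codes from \Cref{thm:frsael} as $Q_{LD}$ and the purity testing codes of \Cref{thm:ptc} as $\{Q_k\}$, and then to check that the parameters align to give the claimed rate, decoding radius, key length, and error.

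First I would fix the parameters. Choose the local dimension $q = 2^{\Theta(1/\gamma^5)}$ large enough that $1/\log q \le \gamma/4$, and let $R' = R + \gamma/2$. Invoke \Cref{thm:frsael} with rate $R'$ and slack $\gamma/2$ to obtain a family of $[[n, R'n]]_q$ stabilizer codes $Q_{LD}$ that are $((1 - R' - \gamma/2)/2,\, L)$ QLD with $L = n^{O(1/\gamma^3)}$; note $(1 - R' - \gamma/2)/2 = (1 - R - \gamma)/2$, which is precisely the target decoding radius. Then invoke \Cref{thm:ptc} with inner block length $n_{PTC} = R' n$ and key parameter $\lambda = n/\log q$ (adjusting $n$ by an additive constant if needed to enforce $\lambda \mid n_{PTC}$), producing a stabilizer $\varepsilon$-PTC with error
\begin{equation*}
    \varepsilon \;\le\; \frac{2 n_{PTC}}{\lambda} \cdot q^{-\lambda} \;\le\; \poly(n) \cdot 2^{-n} \;=\; 2^{-\Omega(n)}.
\end{equation*}

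Next I would verify the external parameters of the composition $Q_{LD} \circ Q_k$. The message length is $n_{PTC} - \lambda = (R + \gamma/2)n - n/\log q \ge R n$ by our choice of $q$, so the rate is at least $R$ (and we may truncate to exactly $R$ by padding the message). The key consists of $\lambda \log q = n$ bits, matching the claim. Applying \Cref{lemma:private-aqec} gives a $((1-R-\gamma)/2,\, 2L\varepsilon)$ private AQECC, and since $L = n^{O(1/\gamma^3)}$ and $\varepsilon = 2^{-\Omega(n)}$, the recovery error is $2L\varepsilon = 2^{-\Omega(n)}$ as required. Note that \Cref{lemma:private-aqec} requires $Q_{LD}$ to have distance strictly greater than $\delta n$; this is automatic since $Q_{LD}$ is QLD up to radius $\delta$, hence in particular has no logically nontrivial low weight errors through that radius.

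Finally, efficiency: \Cref{thm:frsael} furnishes an efficient randomized construction of $Q_{LD}$ that succeeds with probability $1 - 2^{-\Omega(n)}$, together with efficient encoding and a list-decoding algorithm outputting the size-$L$ list in polynomial time. The PTC of \Cref{thm:ptc} is explicit and admits efficient syndrome computation, so each step of \Cref{alg:algorithm1} --- measuring the two syndromes, running the list decoder for $Q_{LD}$, scanning the polynomial-size list for a match with the PTC syndrome, and applying the selected Pauli correction --- runs in time $\poly(n)$. This yields all the claimed guarantees. There is no substantive technical obstacle here beyond routine parameter bookkeeping; the only minor subtlety is ensuring the divisibility condition $\lambda \mid n_{PTC}$ in \Cref{thm:ptc}, which is handled by rounding $\lambda$ up to the nearest divisor of $n_{PTC}$ (this only improves the PTC error).
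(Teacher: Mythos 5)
Your proposal takes precisely the approach the paper intends: instantiate \Cref{lemma:private-aqec} with the constant-alphabet QLD codes of \Cref{thm:frsael} and the purity testing codes of \Cref{thm:ptc}, using $\lambda = n/\log q$ and $n_{PTC} = (R+\gamma/2)\cdot n$ (the paper states exactly these parameter choices immediately before the corollary). Your parameter bookkeeping — the decoding radius matching, $\varepsilon = \mathrm{poly}(n)\cdot 2^{-n}$, key size $\lambda \log q = n$ bits, message length $n_{PTC} - \lambda \ge Rn$ — is all correct.

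One caveat: your justification for the hypothesis $d > \delta n$ in \Cref{lemma:private-aqec} is flawed. You assert that being $(\delta, L)$-QLD ``in particular has no logically nontrivial low weight errors through that radius.'' That does not follow from \Cref{def:QLD}: a code can be $(\delta, L)$-QLD with $L \ge 2$ while still having a weight-$\le \delta n$ logical operator (it would simply appear, alongside the identity, in the list for the zero syndrome). The correct justification is simply that \Cref{thm:frsael} explicitly guarantees relative distance at least $\delta = (1-R-\gamma)/2$; cite that directly rather than trying to derive distance from list-decodability. With this correction the argument is complete and coincides with the paper's.
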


We leverage two key ideas to prove \Cref{lemma:private-aqec}. First, recall the discussion in \Cref{section:QLD}: If $Q_{LD}$ has distance $d$, then the syndrome measurement in step 1 of \Cref{alg:algorithm1} of any adversarial channel $\mathcal{A}(\cdot)$ of support at most $\delta\cdot n < d$ collapses the corruption on the code state down to a single Pauli error $E$. Moreover, the local indistinguishability of the stabilizer code guarantees that an adversary can't learn any information about the secret key $k$, and thus $E$ is independent of the random choice of $k$. We formalize this idea in \Cref{fact:syndromelocind}, and for conciseness, defer to \Cref{claim:LI-composable-draft} in \Cref{section:proof-claim-LI-composable} a proof of a more general version even in the presence of adversaries with entangled side information.

Next, the list decodability of $Q_{LD}$ enables us to write down a short list $F_1, \cdots, F_L$ of errors associated to the syndrome measurement $s_{QLD}$ of $E$. We argue in \Cref{claim:ptcsyndrome-draft} that if any candidate error $F_i$ has the same PTC syndrome as the syndrome measurement $s_{PTC}$, then with high probability it exactly corrects the error $E$. Finally, in \Cref{claim:lemma41-proof} we tie these ideas together and conclude the proof of \Cref{lemma:private-aqec}. 

\begin{fact}
\label{fact:syndromelocind}
    Let $Q$ be be a $[[n,m,d]]_q$ stabilizer code. Let $\mathcal{A}$ be an arbitrary adversarial error channel acting on some subset $S\subseteq[n]$ of $|S|<d$ code components. For an arbitrary corrupted code state $\rho$, let $s_\rho$ denote the random variable for the syndrome of $\rho$. Then the distribution of $s_{\mathcal{A}(\psi)}$ is the same for all $\ket{\psi}\in Q$.
\end{fact}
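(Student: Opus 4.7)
The plan is to expand $\mathcal{A}$ in Kraus form and each Kraus operator in the local Pauli basis, then reduce the probability of each syndrome outcome to a sum whose $\ket{\psi}$-dependence drops out by the Knill--Laflamme condition. First, I would write $\mathcal{A}(\rho)=\sum_\alpha K_\alpha\rho K_\alpha^\dagger$ with each Kraus operator $K_\alpha$ acting trivially outside $S$. Because the Paulis supported on $S$ form an operator basis for those $|S|$ registers, expand $K_\alpha=\sum_{P}c_{\alpha,P}\,P$ as a sum over $P\in\mathcal{P}_q^n$ supported on $S$; in particular every $P$ appearing has weight at most $|S|<d$.

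Next, let $\Pi_s$ denote the projector onto the joint eigenspace of the stabilizer generators corresponding to syndrome $s$. Because $P\ket{\psi}$ is a joint eigenstate of the stabilizer generators with eigenvalues dictated by the commutation relation $S_iP=\omega^{s_{P,i}}PS_i$ together with $S_i\ket{\psi}=\ket{\psi}$, we have $\Pi_sP\ket{\psi}=P\ket{\psi}$ when $s_P=s$ and $0$ otherwise. A direct expansion then yields
\begin{equation*}
    \Pr[s_{\mathcal{A}(\psi)}=s]=\sum_\alpha\bra{\psi}K_\alpha^\dagger\Pi_s K_\alpha\ket{\psi}=\sum_\alpha\sum_{\substack{P,P'\,:\,s_P=s_{P'}=s}}\overline{c_{\alpha,P'}}\,c_{\alpha,P}\,\bra{\psi}P'^\dagger P\ket{\psi}.
\end{equation*}

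The crux of the argument is that each cross-term operator $P'^\dagger P$ is supported on $S$ (so has weight strictly less than $d$) and has trivial syndrome, since syndromes are additive and $s_{P'^\dagger P}=s_P-s_{P'}=0$. Hence $P'^\dagger P\in N(Q)$ has weight $<d$, and applying the Knill--Laflamme condition for a distance-$d$ code (recalled in the preliminaries) gives $\bra{\psi}P'^\dagger P\ket{\psi}=\eta_{P,P'}$, a scalar depending only on $P'^\dagger P$ and not on the codestate. Substituting back, the $\ket{\psi}$-dependence disappears from $\Pr[s_{\mathcal{A}(\psi)}=s]$, proving the claim. I do not expect any substantial obstacle beyond bookkeeping: once one commits to the Pauli expansion of $\mathcal{A}$, the reduction to the Knill--Laflamme condition via syndrome additivity and the distance property is routine.
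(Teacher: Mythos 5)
Your proposal is correct, and the underlying argument is essentially the one the paper uses to prove the more general \Cref{claim:LI-composable-draft} in \Cref{section:proof-claim-LI-composable} (to which \Cref{fact:syndromelocind} is explicitly deferred): expand $\mathcal{A}$ in Kraus form, Pauli-decompose each Kraus operator on the support $S$, project onto the syndrome-$s$ eigenspace, and exploit the fact that two Paulis of weight $<d$ with the same syndrome have a product $P'^\dagger P$ that is a detectable error (equivalently, a stabilizer), so that $\bra{\psi}P'^\dagger P\ket{\psi}$ is a codestate-independent scalar. You go directly to the outcome probability via the Knill--Laflamme condition on the cross-terms, while the paper instead factors a common Pauli $\sigma_{\mathcal{A},s}$ out of the post-measurement state (a slightly stronger intermediate conclusion that it needs for the composability result); for \Cref{fact:syndromelocind} alone your more economical route suffices.
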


In this manner, if we pre-encode a message $\psi$ with an `inner' PTC code $Q_k$ (for a random choice of key $k$) before encoding it into a stabilizer code $Q_{LD}$, then the syndrome measurement $s_{QLD}$ must be independent of $k$.

\Cref{claim:ptcsyndrome-draft} below reasons that if an operator $E$ is an undetectable error of $Q_{LD}$ (i.e., $E\in N(Q_{LD}) \setminus S(Q_{LD})$), then with high probability over the choice of $k$, $E$ is not an undetectable error of the composition $Q_{LD}\circ Q_k$. Later, we use this claim to show the PTC indeed eliminates all but one list element (up to stabilizer-equivalence) with high probability.

\begin{claim} \label{claim:ptcsyndrome-draft}
    Let $E\in \mathcal{P}_q^n$ satisfy $E\in N(Q_{LD}) \setminus S(Q_{LD})$. If $\{Q_k\}$ forms an $\varepsilon$-PTC, then
    \begin{equation}
        \mathbb{P}_k\big[ E \in N(Q_{LD}\circ Q_k) \setminus S(Q_{LD}\circ Q_k) \big] \leq \varepsilon
    \end{equation}
\end{claim}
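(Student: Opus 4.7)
The plan is to reduce the claim to the defining property of the PTC by identifying $E$ with a logical Pauli on the inner code's physical qudits. Specifically, by \Cref{fact:composedstabilizers}, the stabilizer group $S(Q_{LD}\circ Q_k)$ is generated by $S(Q_{LD})$ together with the encoded generators $\overline{S(Q_k)} = \Enc_{Q_{LD}} S(Q_k) \Enc_{Q_{LD}}^\dagger$, so $N(Q_{LD}\circ Q_k)$ is the set of Paulis commuting with both groups. The central observation is that the encoding map $P \mapsto \overline{P} = \Enc_{Q_{LD}}P\Enc_{Q_{LD}}^\dagger$ (viewed on the code space and lifted to $\mathcal{P}_q^n$ modulo phases) induces an isomorphism $\mathcal{P}_q^{n_{PTC}} \cong N(Q_{LD})/S(Q_{LD})$ that preserves commutation relations, because encoded Paulis commute or anticommute exactly as the underlying logical Paulis do.

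The first step is to use this isomorphism to write $E = \overline{\tilde{E}}\cdot S$ for a unique $\tilde{E}\in \mathcal{P}_q^{n_{PTC}}$ and some $S\in S(Q_{LD})$, where $\tilde{E}$ is nontrivial in the quotient precisely because $E \notin S(Q_{LD})$. The second step is to translate the two conditions of interest. For the normalizer: since $E\in N(Q_{LD})$, $E$ already commutes with $S(Q_{LD})$, so $E \in N(Q_{LD}\circ Q_k)$ iff $E$ commutes with every $\overline{T}$ for $T\in S(Q_k)$. Using that $E = \overline{\tilde E}\cdot S$ and that $S$ commutes with everything in $N(Q_{LD})\supseteq\overline{S(Q_k)}$, this reduces to $[\overline{\tilde E},\overline T] = 0$, which by the commutation-preserving encoding is equivalent to $[\tilde E, T]=0$. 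Hence $E\in N(Q_{LD}\circ Q_k) \iff \tilde E \in N(Q_k)$.

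For the stabilizer: $E\in S(Q_{LD}\circ Q_k)$ iff $E$ can be written as a product of an element of $S(Q_{LD})$ and some $\overline T$ with $T\in S(Q_k)$; projecting this identity to the quotient $N(Q_{LD})/S(Q_{LD})$ kills the $S(Q_{LD})$ factor and yields $\tilde E = T$. Conversely, if $\tilde E\in S(Q_k)$ then $\overline{\tilde E}\in S(Q_{LD}\circ Q_k)$, and multiplying by $S\in S(Q_{LD})$ shows $E\in S(Q_{LD}\circ Q_k)$. Combining the two equivalences,
\begin{equation*}
    E \in N(Q_{LD}\circ Q_k)\setminus S(Q_{LD}\circ Q_k) \iff \tilde E \in N(Q_k)\setminus S(Q_k),
\end{equation*}
so the PTC property of $\{Q_k\}$ applied to the fixed Pauli $\tilde E$ directly yields the desired bound $\varepsilon$.

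The main thing to be careful about is the faithful translation of group-theoretic statements through the encoded-Pauli isomorphism: one must check that $\overline{\cdot}$ is well-defined modulo $S(Q_{LD})$ and preserves both products and commutators up to phases, so that membership in $N$ and $S$ on the outer side exactly matches membership on the inner side. Aside from this bookkeeping (which follows from \Cref{fact:composedstabilizers} and standard stabilizer-code facts), the rest of the argument is a single application of the PTC definition and carries no further difficulty.
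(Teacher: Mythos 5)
Your proof is correct and follows essentially the same route as the paper's: both arguments replace $E$ by its logical content on the inner PTC (your $\tilde E$ is the paper's $O$), establish that membership in $N(Q_{LD}\circ Q_k)\setminus S(Q_{LD}\circ Q_k)$ for $E$ is equivalent to membership in $N(Q_k)\setminus S(Q_k)$ for $\tilde E$ via the commutation-preserving encoded-Pauli correspondence, and then apply the PTC definition to the fixed Pauli $\tilde E$. You spell out the equivalence as a clean biconditional whereas the paper argues by cases, but the underlying idea is identical.
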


\begin{proof}
    Since $E\in N(Q_{LD}) \setminus S(Q_{LD})$, $E$ is equivalent up to a stabilizer (of $Q_{LD}$) to some encoded operator $\bar{O}$ on $Q_{LD}$. Let $O$ be the operator acting on the message space of $Q_{LD}$ corresponding to $\bar{O}$. If we could prove that $\bar{O} \in N( Q_{LD}\circ Q_k) \setminus S(Q_{LD}\circ Q_k)$ iff $O\in N(Q_k)\setminus S(Q_k)$, we would be done, as therefore since $O$ is fixed, 
    \begin{equation}
        \mathbb{P}_k\big[ E \in N( Q_{LD}\circ Q_k) \setminus S(Q_{LD}\circ Q_k) \big]  = \mathbb{P}_k\big[ O \in N(Q_k) \setminus S(Q_k) \big] \leq \varepsilon
    \end{equation}

    by definition of the $\varepsilon$-PTC. To prove our claim, we consider two cases on $O$: If $O\in S(Q_k)$, then $\bar{O}\in S(Q_{LD}\circ Q_k)$ is simply a stabilizer of the composed code. If $O\notin N(Q_k)$, then $O$ doesn't commute with some stabilizer $S$ of $Q_k$. That implies their encodings in the composed code $\bar{O}$, $\bar{S}$ also don't commute. Thus $\bar{O}\notin N(Q_{LD}\circ Q_k)$, and therefore $E \notin N(Q_{LD}\circ Q_k)$. We conclude $E \in N( Q_{LD}\circ Q_k) \setminus S(Q_{LD}\circ Q_k)$ iff $O\in N(Q_k)\setminus S(Q_k)$ as intended.
\end{proof}

Let us now tie these ideas together, and prove that \Cref{alg:algorithm1} approximately decodes the code composition. \Cref{claim:lemma41-proof} below reasons that the density matrix produced at the end of the decoding \Cref{alg:algorithm1} has uniquely decoded the adversarial error with high probability over the random choice of key $k$, and the outcome of the syndrome measurement. 

\begin{claim}\label{claim:lemma41-proof}
    For any adversarial channel $\mathcal{A}$ acting on $|S|\leq \delta\cdot n < d$ code components, and any $Q_{LD}$ syndrome measurement $s=s_{QLD}$, there exists a large subset of PTC keys $\text{Good}_{\mathcal{A},s}\subset K$, $|\text{Good}_{\mathcal{A},s}|\geq |K|(1-L\cdot \varepsilon)$, such that the density matrix describing the outcome of step 4 in \Cref{alg:algorithm1} in expectation over $k\in K$ can be written as
    \begin{equation}
        \rho_4 = \bigg(\sum_{s, k\in \text{Good}_{\mathcal{A},s}} \frac{p_{\mathcal{A}, s}}{|K|} \ket{k}\bra{k}\otimes \ket{s}\bra{s}\bigg)\otimes \rho +  \sum_{s, k\in K\setminus \text{Good}_{\mathcal{A},s}} \frac{p_{\mathcal{A}, s}}{|K|} \ket{k}\bra{k}\otimes \ket{s}\bra{s} \otimes \rho_{k,s}.
    \end{equation}

    where $p_{\mathcal{A}, s}$ denotes the probability of the syndrome measurement $s$, and $\rho_{k, s}$ are arbitrary $m$ qudit density matrices. Thus, \Cref{alg:algorithm1} uniquely recovers the encoded state $\rho$ with probability (and fidelity) $\geq 1-L\cdot \varepsilon$, and moreover $Q_{LD}\circ Q_k$ forms a $(\delta, 2\varepsilon L)$ private AQECC.
\end{claim}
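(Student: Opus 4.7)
The plan is to trace the mixed state through \Cref{alg:algorithm1} step by step, using \Cref{fact:syndromelocind} to decouple the $Q_{LD}$-syndrome measurement from the key $k$, and then combining \Cref{claim:ptcsyndrome-draft} with a union bound over the quantum list decoder output to conclude that recovery succeeds on a $(1-L\varepsilon)$ fraction of keys for each syndrome outcome.

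First, since $\Enc_k(\rho)\in Q_{LD}\circ Q_k\subseteq Q_{LD}$ for every $k$, and the adversarial channel $\mathcal{A}$ acts on $|S|\le\delta n<d$ code components, \Cref{fact:syndromelocind} implies that the distribution of the $Q_{LD}$-syndrome outcome $s$ measured in step 1 equals some $p_{\mathcal{A},s}$ that is independent of $k$. Moreover, as discussed at the start of \Cref{section:QLD}, conditioning on $s$ collapses the corrupted code-state into $\sigma_s\Enc_k(\rho)\sigma_s^\dagger$ for a specific Pauli $\sigma_s$ of weight $\le\delta n$ and $Q_{LD}$-syndrome $s$, uniquely determined up to $S(Q_{LD})$-equivalence and whose action on any $Q_{LD}$-codestate depends only on $s$. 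Measuring the PTC syndrome then returns $s_{PTC}$ deterministically equal to the syndrome of $\sigma_s$ against the encoded stabilizers $\{\bar{S}^{Q_k}_j\}$, since those stabilize $\Enc_k(\rho)$.

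The core step is the union bound. By \Cref{def:QLD}, the list decoder of $Q_{LD}$ applied to $s$ returns at most $L$ pairwise $S(Q_{LD})$-distinct Pauli operators $E_1,\dots,E_L$ of weight $\le\delta n$ with syndrome $s$, exactly one of which --- say $E_1$ --- is $S(Q_{LD})$-equivalent to $\sigma_s$. For each remaining candidate, $E_i\sigma_s^\dagger\in N(Q_{LD})\setminus S(Q_{LD})$, so \Cref{claim:ptcsyndrome-draft} yields
\begin{equation*}
    \mathbb{P}_k\bigl[E_i\sigma_s^\dagger\in N(Q_{LD}\circ Q_k)\setminus S(Q_{LD}\circ Q_k)\bigr]\le\varepsilon.
\end{equation*}
Defining $\text{Good}_{\mathcal{A},s}\subseteq K$ to be the set of keys for which none of these $\le L$ bad events occur gives $|\text{Good}_{\mathcal{A},s}|\ge|K|(1-L\varepsilon)$. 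For every $k\in\text{Good}_{\mathcal{A},s}$, any list element $E_i$ whose PTC syndrome matches $s_{PTC}$ (equivalently $E_i\sigma_s^\dagger\in N(Q_{LD}\circ Q_k)$) must actually satisfy $E_i\sigma_s^\dagger\in S(Q_{LD}\circ Q_k)$, so step 4 applies $(E_i\circ\Enc_k)^\dagger$ to $\sigma_s\Enc_k(\rho)\sigma_s^\dagger$ and returns $\rho$ exactly.

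This establishes the stated decomposition of $\rho_4$: the good summand acts as the identity channel on the $k,s$-conditioned branch, and the bad summand carries total classical weight $\le L\varepsilon$ times arbitrary density matrices $\rho_{k,s}$. Tracing out $k$ and $s$ and applying the triangle inequality gives $\bigl\|\mathbb{E}_k[\Dec_k\circ\mathcal{A}\circ\Enc_k](\rho)-\rho\bigr\|_1\le 2L\varepsilon$, and since every step of the argument is unaffected by the presence of an ancillary entangled register (the adversary still acts on $<d$ components of the $Q_{LD}$ register), the bound lifts to the diamond norm, establishing the $(\delta,2L\varepsilon)$ private AQECC guarantee required by \Cref{def:privateaqecc}. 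The main subtlety worth double-checking is the distinction between $S(Q_{LD})$-equivalence (the granularity at which the quantum list decoder prunes its output) and $S(Q_{LD}\circ Q_k)$-equivalence (which is what we actually need for correct recovery); bridging this gap via the PTC is precisely why \Cref{claim:ptcsyndrome-draft} is indispensable.
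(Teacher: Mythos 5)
Your proof is correct and follows essentially the same route as the paper's: Fact~\ref{fact:syndromelocind} to decouple the $Q_{LD}$-syndrome from $k$, a union bound over the $\le L$ stabilizer-distinct list elements via Claim~\ref{claim:ptcsyndrome-draft} to define $\text{Good}_{\mathcal{A},s}$, and then a trace-norm bound of $2L\varepsilon$. Your remark on the distinction between $S(Q_{LD})$-equivalence and $S(Q_{LD}\circ Q_k)$-equivalence, and the observation that the PTC-syndrome measurement is deterministic after the $Q_{LD}$-syndrome measurement, usefully spell out points the paper leaves implicit, but the argument is the same one.
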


\begin{proof}
    From \Cref{fact:syndromelocind}, we know that the syndrome measurement of $Q_{LD}$ in step 1 of \Cref{alg:algorithm1} collapses the state into a mixture of single Pauli errors

    \begin{equation}
        \rho_1 = \sum_{s_{QLD}, k} \frac{p_{\mathcal{A}, s}}{|K|} \ket{k}\bra{k}\otimes \ket{s_{QLD}}\bra{s_{QLD}} \otimes \sigma_{\mathcal{A}, s_{QLD}} \Enc_k(\rho) \sigma_{\mathcal{A}, s_{QLD}}^\dagger
    \end{equation}

    where each Pauli $\sigma_{\mathcal{A}, s} \in \mathcal{P}_q^n$ is a function only of $s=s_{QLD}$ and the support of $\mathcal{A}$. Note that the syndrome measurement of $s_{PTC}$ does not further collapse the state. 
    
    Given $(s_{QLD}, s_{PTC})$, if $E_1\cdots E_L$ is the list of candidate corrections to $\sigma_{\mathcal{A}, s_{QLD}}$ produced in step 3, then at least some $j\in [L]$ corrects $\sigma_{\mathcal{A}, s_{QLD}}$, i.e.  $E_j^\dagger \sigma_{\mathcal{A}, s_{QLD}}\in S(Q_{LD}\circ Q_k)$. Moreover, the decoding algorithm fails to produce $\rho$ if there exists some other $E_i$ of PTC syndrome $s_{PTC}$ which fails to correct $\sigma_{\mathcal{A}, s_{QLD}}$, i.e.
    \begin{equation}
       \exists i\in [L] \text{ s.t. } E_i^\dagger\sigma_{\mathcal{A}, s_{QLD}} \in N(Q_{LD}\circ Q_k) \setminus S(Q_{LD}\circ Q_k)
    \end{equation}

    From \Cref{claim:ptcsyndrome-draft}, the probability over $k$ this occurs is at most $L\cdot \varepsilon$ by a union bound. Thereby, for every $Q_{LD}$ syndrome measurement $s=s_{QLD}$, there exists a subset of PTC keys $\text{Good}_{\mathcal{A},s}\subset K$, $|\text{Good}_{\mathcal{A},s}|\geq |K|(1-L\cdot \varepsilon)$ where any $E_j$ in the list with PTC syndrome $s_{PTC}$ exactly recovers $\rho$. If $k\in K\setminus \text{Good}_{\mathcal{A},s}$, we have no guarantees on the returned state $\rho_{k, s}$. 

    To conclude the proof, let us trace out the syndrome measurement and choice of key $k$ in $\rho_4$. For any encoded state $\rho$, $\text{Tr}_{s,k}[\rho_4] = \rho \cdot p_{success}  + \rho_{fail}$, where $p_{success} = \sum_s p_{\mathcal{A},s}\text{Good}_{\mathcal{A},s} / |K|\geq (1-L\cdot \varepsilon)$ and $\|\rho_{fail}\|_1 = 1-p_{success} \leq L\varepsilon$. Thus,
    \begin{equation*}
    \norm{\mathbb{E}_k[\Dec_k\circ \mathcal{A} \circ \Enc_k(\rho)] - \rho}_1 \leq 2\cdot L\cdot \varepsilon. \qedhere
    \end{equation*}
\end{proof}

\subsection{AQECC from Private AQEC and Robust Secret Sharing}
\label{subsec:RSS}
Recall the definition of an AQECC.
\begin{definition}
\label{def:aqec}
    A pair $(\Enc, \Dec)$ of quantum channels is said to be a $(\delta, \varepsilon)$-AQECC if, for all adversarial channels $\mathcal{A}$ of weight $\delta \cdot n$,
    $$\norm{\Dec \circ \mathcal{A} \circ \Enc - \mathbb{I}}_\diamond \leq \varepsilon.$$
\end{definition}

The following definition of robust secret sharing is taken from \cite{CDD+15}, except that we do not restrict to linear schemes and we do not include separate parameters for privacy and reconstructability. We also require the algorithms in the scheme to be efficient.
\begin{definition}[Robust secret sharing]
    For integers $n,s,d,$ and $a$, an $[n,s]_a$ $(d,\epsilon)$-robust secret sharing scheme $\RSS$ consists of two efficient algorithms $\RSS.\Share$ and $\RSS.\Reconstruct$. For every $r \in [a]^s$, $\RSS.\Share(r)$ outputs a vector of shares $v \in [a]^n$. We require the following two properties:
    \begin{itemize}
        \item Privacy: For all $r,r' \in [a]^s$ and every $A \subseteq [n]$ of size $\abs{A} \le d$, the restrictions $v_A$ and $v_A'$ of $v = \RSS.\Share(r)$ and $v' = \RSS.\Share(r')$ to the coordinates in $A$ have the same probability distribution.
        \item Reconstructability: For every $r \in [a]^s$ and every $A \subseteq [n]$ of size $\abs{A} \le d$, if $v = \RSS.\Share(r)$ and $\tilde{v}$ is such that $\tilde{v}_{\bar{A}} = v_{\bar{A}}$ and $\tilde{v}$ only depends on $v_A$, then $\RSS.\Reconstruct(\tilde{v}) = r$ except with probability $\epsilon$.
    \end{itemize}
\end{definition}

The following result is from \cite{CDD+15}, with the explicit choice of alphabet size $a$ from \cite{Guruswami2014OptimalRL}. Concretely, we use the fact that $2^{1/\gamma}< a = 2^{\tilde{O}(\gamma^{-2})}$.

\begin{theorem}[\cite{CDD+15}, Theorem 7]
\label{theorem:rss-construction}
For any $\gamma > 0$ and $R \in (0,1)$, there exists an efficient family of $[n, R\cdot n]_{a}$ $((1-R-\gamma) \cdot n/2, 2^{-n})$-robust secret sharing schemes $\{\RSS_n\}$ on shares of alphabet size $a = 2^{\tilde{O}(\gamma^{-2})}$.
\end{theorem}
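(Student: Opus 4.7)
The plan is to combine a capacity-approaching classical list-decodable code with a short pairwise-independent authentication tag that disambiguates the list. For privacy, I would adopt a Shamir-style randomized encoding: pack the secret $r \in [a]^s$ into the high-order coefficients of a polynomial $f$ of degree less than $n/2$, padded with uniformly random low-order coefficients. Sharing consists of evaluating $f$ at $n$ distinct points of $\mathbb{F}_a$, and privacy against any $d < n/2$ shares follows from the standard Lagrange interpolation argument showing that the restriction is uniformly distributed regardless of $r$.

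For reconstructability, since $2d$ exceeds the Singleton-bound distance of Reed-Solomon, unique decoding fails. I would therefore replace Reed-Solomon with a capacity-approaching list-decodable code, such as the one of \cite{Guruswami2014OptimalRL} with alphabet size $a = 2^{\tilde{O}(\gamma^{-2})}$, rate $R + \gamma/2$, and polynomial list size $L$ at decoding radius $(1-R-\gamma)/2$. On a tampered share vector, the receiver runs the list-decoder to obtain candidates $r_1, \dots, r_L$ for the secret. To pin down the correct one, I would reserve a small portion of the polynomial's coefficients for a random authentication tag of the form $(\alpha, \alpha \cdot r + \beta)$ with $\alpha, \beta$ uniform over $\mathbb{F}_a^{s'}$. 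By pairwise independence, any wrong candidate passes the tag check with probability $a^{-s'}$, so a union bound over $L$ candidates yields failure probability at most $L \cdot a^{-s'}$, which I drive below $2^{-n}$ by taking $s' = \Theta(n/\log a)$.

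The main obstacle is juggling three constraints at once: keeping the overall rate at $R$ despite spending $s'$ symbols on the tag (this forces $s' = o(s)$ and accounts for the $\gamma$-slack between the privacy threshold and the Singleton-type bound); ensuring that the tag is recovered jointly with the secret under list-decoding (automatic if the tag lives in the same polynomial the decoder is reconstructing); and preventing the tag's randomness from leaking to the adversary who sees $d$ shares (which follows once the combined secret-plus-tag-plus-padding polynomial is shared via a single scheme with privacy threshold $d$). The construction of \cite{CDD+15} optimizes these constants by replacing the naive tag with an algebraic manipulation detection code and by handling a potentially non-uniform secret distribution, but the high-level structure above already yields the stated parameters up to the $\gamma$-slack and the claimed exponentially small reconstruction error.
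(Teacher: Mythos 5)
The statement you are proving is a cited result (\cite{CDD+15}, Theorem 7); the paper itself provides no proof beyond pointing to \cite{CDD+15} and noting that the alphabet size can be taken from \cite{Guruswami2014OptimalRL}. Your high-level architecture --- ramp-share $(r,k,t)$ via a linear list-decodable code, list-decode the tampered shares, and use an information-theoretic tag to disambiguate the list --- does match the CDD+15 strategy. However, two steps in the proposal do not go through as written.

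First, the privacy argument is broken by the very substitution you make. You assert privacy "from the standard Lagrange interpolation argument," but that argument is specific to Reed--Solomon evaluation codes. The moment you swap in the capacity-achieving list-decodable codes of \cite{Guruswami2014OptimalRL}, the Lagrange structure is gone: those codes are built from folded algebraic-geometry codes pre-composed with subspace-evasive sets, so they are not even linear evaluation codes in the relevant sense, and "pad with random low-order coefficients" has no meaning for them. Privacy of a code-based ramp scheme is governed by the dual distance of the (linear) code (equivalently, by which coordinate subsets support dual codewords), and one has to verify that the chosen code has dual distance exceeding the privacy threshold $d=(1-R-\gamma)n/2$ while simultaneously being list-decodable from $d$ errors. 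This joint requirement is a genuine constraint --- it is why CDD+15 work with specific linear codes and do not treat the list decoder as a black box --- and it is exactly the piece your proposal leaves unaddressed.

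Second, the union bound as stated hides a subtlety. You argue "by pairwise independence, any wrong candidate passes the tag check with probability $a^{-s'}$," as if the wrong candidates $(r',k',t')$ were fixed adversarial choices. They are not: each wrong list element is $v+w$ for the honest codeword $v$ and some low-weight codeword $w$, so the wrong candidate's key and tag are $k+\delta_k$, $t+\delta_t$ for a codeword-determined offset. The bound you actually need is an algebraic-manipulation-detection bound, $\Pr_k\bigl[\mathrm{MAC}_{k+\delta_k}(r+\delta_r)=\mathrm{MAC}_k(r)+\delta_t\bigr]\le\epsilon$ for all fixed nonzero $(\delta_r,\delta_k,\delta_t)$, followed by a union bound over the (share-determined, key-independent) set of possible offsets. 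The specific polynomial tag you propose does happen to satisfy this AMD-style bound, so your conclusion survives, but the "pairwise independence" framing is not the right justification, and dismissing the AMD code in CDD+15 as a mere constant-factor optimization understates its role --- it is precisely what makes this adaptive union bound sound.
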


Given any $[n,s]_a$ $(d,\epsilon)$-robust secret sharing scheme $\RSS$, and a $[[n,k]]_q$ private $(d,\varepsilon')$-approximate quantum code $Q$ defined by $(\Enc_r, \Dec_r)_{r \in [a]^s}$, we build an $(d,\varepsilon+\epsilon')$-approximate $[[n,k]]_{q \cdot a}$ code $\hat{Q}$ by secret sharing the private keys $r$ with $\RSS$ and appending the shares to the symbols of $Q$.

Formally, $\Enc$ for $\hat{Q}$ works as follows:
\begin{enumerate}
    \item Sample $r \from [a]^s$.
    \item Apply $\Enc$ to $\rho$, and call the resulting register $C = (C_1, \dots, C_n)$.
    \item Compute $\RSS.\Share(r)$, and store the (classical) result on register $S = (S_1, \dots, S_n)$.
    \item Use $(C,S)$ as the codestate, where register $i \in [n]$ is the $a \cdot q$-dimensional register $(C_i,S_i)$.
\end{enumerate}

To decode $\hat{Q}$, we recover the private keys using the reconstruction algorithm of $\RSS$ and then apply the decoder for $Q$ to the $C$ register:
\begin{enumerate}
    \item Measure register $S$ and apply $\RSS.\Reconstruct$ to the result, obtaining $\tilde{r}$.
    \item Apply $\Dec_{\tilde{r}}$ to $C$, and output the resulting decoded message.
\end{enumerate}

\begin{theorem}
\label{theorem:removing-privacy}
Given
\begin{enumerate}
    \item any $[n,s]_a$ $(d,\epsilon)$-robust secret sharing scheme $\RSS$, and
    \item any $(d,\varepsilon')$-approximate private quantum $[[n,k]]_q$ code $Q$ with key space $[a]^s$,
\end{enumerate}
the construction of $\hat{Q}$ above is an $(d, \varepsilon+\epsilon')$-approximate $[[n,k \cdot \log(q)/\log(q \cdot a)]]_{q \cdot a}$ quantum code.
\end{theorem}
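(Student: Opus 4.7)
The plan is to check parameters and then bound the diamond-norm error via a hybrid argument that decouples the RSS reconstruction error from the private AQECC recovery error. The parameter check is immediate: each of the $n$ registers of $\hat{Q}$ consists of a $q$-dimensional code symbol together with an $a$-dimensional share, so the alphabet has size $qa$, and $k$ qudits of dimension $q$ in the message register correspond to $k\log(q)/\log(qa)$ qudits of dimension $qa$. For the error bound, fix any adversarial channel $\mathcal{A}$ of weight $\le d$ acting on positions $A\subseteq[n]$, and introduce a ``Hybrid'' channel that runs $\hat{\Enc}$ and $\mathcal{A}$ exactly as in Real but then hands the decoder the \emph{true} key $r$ (and traces out the $S$ register rather than reconstructing from it). By the triangle inequality in diamond norm, the total error is bounded by $\|\text{Real}-\text{Hybrid}\|_\diamond + \|\text{Hybrid}-\Id\|_\diamond$.

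For $\|\text{Hybrid}-\Id\|_\diamond\le\varepsilon'$, tracing out $S$ reduces Hybrid to $\mathbb{E}_r[\Dec_r\circ\tilde{\mathcal{A}}\circ\Enc_r]$, where the effective channel on $C$ is $\tilde{\mathcal{A}}(\sigma)=\mathrm{Tr}_S[\mathcal{A}(\sigma\otimes\ket{v_A}\bra{v_A})]$ with $v_A$ drawn from the marginal of $\RSS.\Share(r)$ on $A$. By the privacy of $\RSS$ this marginal is identical for every $r$, so $\tilde{\mathcal{A}}$ has weight $\le d$ and is independent of $r$; the private AQECC hypothesis immediately yields the bound. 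For $\|\text{Real}-\text{Hybrid}\|_\diamond\le O(\epsilon)$, I would couple the two experiments on the same key, RSS randomness, and adversary/measurement randomness; they agree whenever $\tilde{r}=r$, so $\|\text{Real}-\text{Hybrid}\|_\diamond\le 2\Pr[\tilde{r}\neq r]$. To bound this failure probability, fix any $r$ and any input: the pre-attack $C_A$ is then a single (possibly entangled-with-reference) quantum state, so the classical outcome $\tilde{v}_A$ of measuring $S_A$ after the attack can be viewed as a randomized function of the classical share $v_A=\RSS.\Share(r)|_A$ alone, with $C_A$ and all quantum randomness absorbed into the function's internal coins---exactly the hypothesis of RSS reconstructability---so $\Pr[\tilde{r}\neq r\mid r]\le\epsilon$.

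The main obstacle is this last step: justifying that the adversary's ability to coherently entangle $C_A$ with $S_A$ during its attack does not violate the ``$\tilde{v}$ depends only on $v_A$'' hypothesis of the RSS reconstructability definition. The crucial observation is that once $r$ and the input state are \emph{fixed}, $C_A$ is a single quantum state rather than a random variable, so the composition of the adversary with the final measurement of $S_A$ collapses to a randomized map from $v_A$ to $\tilde{v}_A$ with $C_A$ absorbed into the map's coins---exactly the form of attacker that RSS reconstructability is designed to defeat. The private AQECC's own local structure is then never needed; only the information-theoretic $\RSS$ properties together with the black-box decoding guarantee of the private AQECC drive the proof.
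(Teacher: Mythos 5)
Your proof is correct and takes essentially the same approach as the paper's: the hybrid decomposition is a reorganized presentation of the paper's chain of trace equalities and inequalities (RSS privacy shows the effective error channel on $C$ is key-independent, after which the private AQECC guarantee and RSS reconstructability respectively bound the two error terms), and your explicit justification of why reconstructability applies despite the adversary's coherent access to $(C_A,S_A)$ is a valid clarification of a point the paper treats implicitly. The one minor discrepancy is a factor of two: your coupling gives $\norm{\text{Real}-\text{Hybrid}}_\diamond\le 2\Pr[\tilde{r}\neq r]\le 2\epsilon$, so you obtain $\varepsilon'+2\epsilon$ rather than the stated $\varepsilon'+\epsilon$; the paper's own step introducing the $+\epsilon$ has the same hidden constant, and it is inconsequential for the downstream results.
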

\begin{proof}
    Let $\rho$ be any message. Partition the registers of $\Enc(\rho)$ into $C_A,C_{\bar{A}},S_A,S_{\bar{A}}$, where $C_A,C_{\bar{A}}$ are the private AQECC part and $S_A,S_{\bar{A}}$ are the RSS part, and the error channel acts only on $C_A,S_A$. Let $E$ denote the error channel. By the secrecy guarantee of the RSS against any subset of $d$ symbols, we have
    \begin{align*}
        & \Tr_S\bigg[E  \left(a^{-s} \sum_{r \in [a]^s} \Enc_r(\rho)_C \otimes \RSS.\Share(r)_S\right)\bigg] \\
        &= \Tr_{S_A}\bigg[  E\bigg(\Tr_{S_{\bar{A}}} a^{-s} \sum_{r \in [a]^s} \Enc_r(\rho)_C \otimes \RSS.\Share(r)_S \bigg) \bigg] \\
        &= \Tr_{S_A}\bigg[E\bigg(\Tr_{S_{\bar{A}}} a^{-2s} \sum_{r \in [a]^s,r' \in [a]^s} \Enc_r(\rho)_C \otimes \RSS.\Share(r')_S\bigg)\bigg] \\
        &= \Tr_S\bigg[E\bigg(a^{-2s} \sum_{r \in [a]^s,r' \in [a]^s} \Enc_r(\rho)_C \otimes \RSS.\Share(r')_S\bigg)\bigg]
    \end{align*}

    and thereby the adversary gains no information about the encryption key. By the recoverability guarantee of the RSS, $\Dec$ will use the correct private keys $r$ with probability $1-\epsilon$, so
    \begin{align*}
        \norm{\rho - \Dec \circ E \circ \Enc(\rho)}_1 &= \norm{\rho - \Dec \circ E \left(\frac{1}{a^s} \sum_{r \in [a]^s} \Enc_r(\rho) \otimes \RSS.\Share(r)\right)}_1 \\
        &\le \norm{\rho - \frac{1}{a^s} \sum_{r \in [a]^s} \Dec_r \circ \Tr_S \circ E (\Enc_r(\rho) \otimes \RSS.\Share(r))}_1 + \epsilon \\
        &= \norm{\rho - \frac{1}{a^s} \sum_{r \in [a]^s} \Dec_r \circ \Tr_S \circ E \left(\Enc_r(\rho) \otimes \frac{1}{a^s} \sum_{r' \in [a]^s} \RSS.\Share(r')\right)}_1 + \epsilon \\
        &\le \varepsilon' + \epsilon. \qedhere
    \end{align*}
\end{proof}

Using the private AQECC of \Cref{cor:private-aqec} and the robust secret sharing scheme of \cite{CDD+15}, we are able to instantiate \Cref{theorem:removing-privacy} to obtain the following result.

\begin{corollary}
\label{cor:aqeccmain}
For any constant $\gamma > 0$ and $R \in (0,1)$, there exists a family $\{Q_n\}$ of $((1-  R - \gamma)/2,2^{-\Omega(n)})$-approximate $[[n, R \cdot n]]_{q'}$ quantum codes on alphabets of size $q' = 2^{O(1/\gamma^5)}$. Furthermore, there is an efficient randomized algorithm to construct these codes with failure probability $2^{-\Omega(n)}$, as well as efficient encoding and decoding algorithms.
\end{corollary}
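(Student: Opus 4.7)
The plan is to instantiate \Cref{theorem:removing-privacy} with the private AQECC of \Cref{cor:private-aqec} and the robust secret sharing scheme of \Cref{theorem:rss-construction}, carefully choosing the parameters of each component so that the final code has rate $R$, correction radius $(1-R-\gamma)/2$, local dimension $2^{O(1/\gamma^5)}$, and recovery error $2^{-\Omega(n)}$. The one subtlety is that \Cref{theorem:removing-privacy} turns a $[[n,k]]_q$ private code into a $[[n,k\log q/\log(qa)]]_{qa}$ AQECC, so the inner private code must have slightly inflated rate to account for the alphabet enlargement introduced by the classical RSS symbols.

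First I would pick the inner private AQECC. Set $R_p = R\cdot \log(qa)/\log q = R(1+\log a/\log q)$, and invoke \Cref{cor:private-aqec} with rate $R_p$ and parameter $\gamma_p = \gamma/3$. Since $\log q = \Theta(1/\gamma^5)$ and $\log a = \tilde{O}(1/\gamma^2)$ (see the next step), we will have $R_p = R(1+\tilde{O}(\gamma^3))$, so $R_p$ remains a constant in $(0,1)$ and the correction radius $(1-R_p-\gamma_p)/2$ still exceeds $(1-R-\gamma)/2$ for all sufficiently small $\gamma$. This gives a private AQECC of block length $n$, local dimension $q = 2^{\Theta(1/\gamma^5)}$, $n$-bit keys, and recovery error $2^{-\Omega(n)}$.

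Next I would pick the RSS. Since the private code has $n$-bit keys, take secret length $s = \lceil n/\log_2 a\rceil$ so that $[a]^s$ can encode the key, and instantiate \Cref{theorem:rss-construction} at rate $s/n = \tilde{O}(\gamma^2)$ with parameter $\gamma_{\rm rss} = \gamma/3$. The resulting RSS has alphabet $a = 2^{\tilde{O}(1/\gamma^2)}$, reconstructs except with probability $2^{-n}$, and tolerates $(1-s/n-\gamma_{\rm rss})n/2$ corruptions, which comfortably exceeds $(1-R-\gamma)n/2$ because $s/n$ is negligible compared to $R$.

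Finally, applying \Cref{theorem:removing-privacy} yields an $[[n, R_p n\cdot \log q/\log(qa)]]_{qa} = [[n,Rn]]_{qa}$ AQECC with correction radius $(1-R-\gamma)n/2$, alphabet size $qa = 2^{\Theta(1/\gamma^5)}\cdot 2^{\tilde{O}(1/\gamma^2)} = 2^{O(1/\gamma^5)}$, and total recovery error $2^{-\Omega(n)}+2^{-n} = 2^{-\Omega(n)}$. Efficiency of encoding, decoding, and construction is inherited from the two components. The only real obstacle is the parameter juggling to reconcile the $\gamma$-dependencies of the two ingredients; the key observation that makes it painless is that the RSS alphabet scales only as $2^{\tilde{O}(1/\gamma^2)}$, which is dwarfed by the $2^{\Theta(1/\gamma^5)}$ private AQECC alphabet, so composing the two preserves the desired constant alphabet bound.
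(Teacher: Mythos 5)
Your proposal is correct and takes essentially the same route as the paper: compose the private AQECC of \Cref{cor:private-aqec} with the robust secret sharing scheme of \Cref{theorem:rss-construction} via \Cref{theorem:removing-privacy}, and choose the auxiliary rate and error parameters so that the alphabet inflation from appending shares is absorbed while the decoding radius survives. The only differences are cosmetic parameter choices — you inflate the inner rate $R_p$ multiplicatively and fix $\gamma_p = \gamma_{\mathrm{rss}} = \gamma/3$, whereas the paper instead picks $\gamma'$ via a case analysis against a function $f(\gamma'')$ — and a minor imprecision: $s/n = 1/\log a$ is only bounded \emph{above} by $\gamma_{\mathrm{rss}}$ (since $\log a > 1/\gamma_{\mathrm{rss}}$), not equal to $\tilde{O}(\gamma^2)$, though this does not affect the conclusion.
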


\begin{proof}
By \Cref{cor:private-aqec}, for any fixed $R', \gamma' > 0$ there exist $[[n,R' \cdot n]]_{q}$ private $((1-R'-\gamma')/2, 2^{-\Omega(n)})$-approximate quantum error correcting codes with $n$-bit keys and alphabet size $q=2^{\Theta(1/\gamma'^5)}$, with the desired efficiency and construction guarantees. By \Cref{theorem:rss-construction}, there exist $[n,r \cdot n]_{a}$ $((1-r-\gamma'') \cdot n/2, 2^{-n})$-robust secret sharing schemes, for $a = 2^{\tilde{O}(1/\gamma''^2)}$. 

So long as the private AQECC key length is smaller than the secret size for the RSS scheme of \Cref{theorem:rss-construction}, it can be secret shared into the RSS. This occurs if $n \leq n\cdot r\log a$, i.e., $r\geq 1/\log a$. From \cite{Guruswami2014OptimalRL}, we have $\log a\geq 1/\gamma''$, and thus we pick $r = \gamma''$. \Cref{theorem:removing-privacy} then yields an AQECC with decoding radius $\frac{1}{2}\min \big\{(1-R'-\gamma'),(1-r-\gamma'')\big\}$, and its rate is
\begin{equation*}
\frac{R' \cdot n \log q}{n \cdot (\log q + \log a)} = \frac{\Theta(1/\gamma'^5)}{\Theta(1/\gamma'^5) + \tilde{O}(1/\gamma''^2)} \cdot R',
\end{equation*}
and it only remains to pick $\gamma'$. By inspection, there is a function $f(\gamma'') = \Tilde{O}(\gamma''^{3/5})$ such that whenever $\gamma' \leq  f(\gamma'')$, the rate above is at least $R'\cdot (1-\gamma'')$. We can now divide into cases on $\gamma'', f(\gamma'')$ to determine the decoding radius. If $\gamma'' < f(\gamma'')$, pick $\gamma' = \gamma''$, and if $\gamma'' \geq f(\gamma'')$ we pick $\gamma' = f(\gamma'')$. In either case, the alphabet size is $q' = q\cdot a = 2^{O(1/\gamma''^5)}$, and the decoding radius is at least $\frac{1}{2}(1-\gamma'' - \max\{R', \gamma''\})$.

To conclude, let us set $\gamma'' = \gamma/4$, and $R' = R/(1-\gamma'')$. We may assume that $R' \le 1$ because $R\leq 1-\gamma = 1-4\gamma''$; otherwise the decoding radius in the theorem statement is $0$. The overall rate is then $R$ as intended. Since $\gamma''\leq 1/4$, we have $R'\leq R(1+2\gamma'')\leq R+\gamma/2$, and thus the resulting decoding radius is at least $(1-R-\gamma)/2$.
\end{proof}

\section{Distance Amplification and Alphabet Reduction for Quantum Codes}
\label{sec:ael}

In this section, we describe how we apply Alon-Edmonds-Luby (AEL) distance amplification \cite{AEL95} to quantum codes. This technique has seen extensive use in classical coding theory \cite{guruswami2001expander,Guruswami2002NearoptimalLC,guruswami2003linear,Guruswami2008ExplicitCA,hemenway2018linear,Kopparty2015HighrateLA,gopi2018locally,hemenway2019local}, as it allows for amplifying the distance of a code while reducing alphabet size and preserving properties such as decodability and local testability and correctability.

To the best of our knowledge, AEL amplification has not previously been used in the quantum setting. However, as shown below, this technique translates almost flawlessly to quantum codes. This observation immediately yields some new results that will be discussed below, such as efficiently decodable quantum codes approaching the quantum Singleton bound over constant sized alphabets. In fact, such codes can be made LDPC using recent asymptotically good quantum LDPC constructions \cite{panteleev2022asymptotically,leverrier2022quantum}, some of which permit efficient decoding \cite{leverrier2022efficient,dinur2022good,gu2022efficient,leverrier2022parallel}.

Both of our constructions of AQECCs approaching the quantum Singleton bound apply the AEL construction. Specifically, we instantiate the AQECC construction described in \Cref{sec:AQECC} using quantum list-decodable codes approaching the quantum Singleton bound with constant alphabet size that we will construct in \Cref{sec:frsael} by applying AEL to reduce the alphabet size of folded quantum Reed-Solomon codes. Then in \Cref{sec:aqeccdirect}, we provide another construction of AQECCs by applying AEL amplification directly with inner codes that are themselves appropriately chosen AQECCs.

At a high level, the AEL construction amplifies the distance of a code by concatenating with a small inner code of good distance, and then permuting the symbols of the concatenated code using an expander graph. If the outer code is chosen to have high rate and constant relative distance, then the resulting construction inherits the rate and distance of the inner code, up to a small loss. This procedure in fact works for arbitrary quantum codes; we do not even need to restrict attention to stabilizer codes below.

We first define the appropriate notion of expansion.

\begin{definition}
\label{def:expander}
  An $r$-regular bipartite graph $G=(L,R,E)$ with $|L|=|R|=n$ is said to be $\varepsilon$-pseudorandom if it holds for every $S\subseteq L$ and $T\subseteq R$ that
  \begin{equation*}
    \left||E(S,T)|-\frac{r|S||T|}{n}\right| \leq \varepsilon r\sqrt{|S||T|}.
  \end{equation*}
\end{definition}

By the well-known expander mixing lemma, the 2-lift of any $\lambda$-spectral expander graph is $\lambda$-pseudorandom. Thus in particular, the 2-lift of a $r$-regular Ramanujan graph is $\varepsilon$-pseudorandom for $\varepsilon=2\sqrt{r-1}/r$. It follows that there exist explicit families of $r$-regular $\varepsilon$-pseudorandom graphs for all $r\geq 4/\varepsilon^2$.

We now present the AEL construction in the quantum setting. We first present a basic version that amplifies distance without decreasing the alphabet size. We will then explain how the alphabet size can also be reduced. The analysis of the quantum version of the AEL construction in \Cref{prop:aelbasic} and \Cref{prop:aelred} below directly adapts the well-known analogous results for classical codes dating back to \cite{AEL95}. However, to the best of our knowledge the application to the quantum setting is new, as are the corollaries below providing quantum codes approaching the quantum Singleton bound that are efficiently unique-decodable up to half their distance.

\subsection{Distance Amplification without Alphabet Reduction}
\label{sec:aelbasic}
We first describe a simple version of AEL amplification that does not reduce alphabet size. Let $Q_{\text{out}}$ and $Q_{\text{in}}$ be $[[n_{\text{out}},k_{\text{out}},\Delta_{\text{out}}n_{\text{out}}]]_{q_{\text{out}}}$ and $[[n_{\text{in}},k_{\text{in}},\Delta_{\text{in}}n_{\text{in}}]]_{q_{\text{in}}}$ quantum codes respectively such that $q_{\text{out}}=q_{\text{in}}^{k_{\text{in}}}$. Let $Q_\diamond=Q_{\text{out}}\diamond Q_{\text{in}}$ denote the $[[n_{\text{out}} n_{\text{in}},k_{\text{out}}k_{\text{in}}]]_{q_{\text{in}}}$ concatenated code. The key step is now to permute the components of $Q_\diamond$ according to the edges of an $n_{\text{in}}$-regular bipartite expander, and to block the $n_{\text{out}}n_{\text{in}}$ $q_{\text{in}}$-ary components into $n_{\text{out}}$ $q_{\text{in}}^{n_{\text{in}}}$-ary components. Specifically, let $G=(L=[n_{\text{out}}],R=[n_{\text{out}}],E)$ be an $n_{\text{in}}$-regular $\varepsilon_0$-pseudorandom bipartite graph, and assume that the edges at each vertex have an arbitrary labeling by elements of $[n_{\text{in}}]$. Let $\pi_G:[n_{\text{out}}]\times[n_{\text{in}}]\rightarrow[n_{\text{out}}]\times[n_{\text{in}}]$ be the permutation that maps $(i,j)\in[n_{\text{out}}]\times[n_{\text{in}}]$ to the unique pair $(i',j')$ such that the $j$th outgoing edge in $G$ from vertex $i\in L$ lands on $i'\in R$, and $j'$ is the label that vertex $i'$ assigns to this edge $(i,i')$. Define $Q=\pi_G(Q_\diamond)$ to be the $[[n_{\text{out}},R_{\text{in}}k_{\text{out}}]]_{q_{\text{in}}^{n_{\text{in}}}}$ code obtained by applying the permutation $\pi_G$ to the components of $Q_\diamond$, and then regrouping each resulting block of $n_{\text{in}}$ components to a single $q_{\text{in}}^{n_{\text{in}}}$-ary symbol. Observe that $Q$ has the same rate $R=R_\diamond=R_{\text{out}}R_{\text{in}}$ as $Q_\diamond$.

\begin{proposition}
\label{prop:aelbasic}
  The code $Q$ defined above has relative distance $\Delta\geq\Delta_{\text{in}}-2\varepsilon_0\sqrt{\Delta_{\text{in}}/\Delta_{\text{out}}}$.
  
  More generally, for every $0<\alpha_{\text{out}},\alpha_{\text{in}}<1$, if a codeword of $Q$ experiences an error on $\alpha\leq\alpha_{\text{in}}-\varepsilon_0\sqrt{\alpha_{\text{in}}/\alpha_{\text{out}}}$ fraction of its ($q_{\text{in}}^{n_{\text{in}}}$-ary) components, then after applying $\pi_G^{-1}$, fewer than $\alpha_{\text{out}}$ fraction of the resulting inner code blocks experience errors on at least $\alpha_{\text{in}}$ fraction of their ($q_{\text{in}}$-ary) components.
\end{proposition}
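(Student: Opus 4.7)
The plan is to prove the more general quantitative error-spreading claim first via the expander mixing lemma, and then to deduce the distance bound as a direct consequence using the concatenated stabilizer structure of $Q_\diamond$.

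For the quantitative claim, I will set up a double count on the bipartite graph $G = (L, R, E)$. Let $R' \subseteq R$ denote the set of $\alpha n_{\text{out}}$ errored components of $Q$, and let $L' \subseteq L$ denote the set of ``heavy'' inner blocks of $\pi_G^{-1}(E)$, namely those receiving errors in at least $\alpha_{\text{in}} n_{\text{in}}$ of their $q_{\text{in}}$-ary positions. Each heavy block contributes at least $\alpha_{\text{in}} n_{\text{in}}$ edges to $R'$, giving $|E(L', R')| \geq \alpha_{\text{in}} n_{\text{in}} |L'|$. Combining this with the $\varepsilon_0$-pseudorandomness upper bound from \Cref{def:expander} and rearranging yields
\begin{equation*}
    \frac{|L'|}{n_{\text{out}}} \;\leq\; \frac{\varepsilon_0^2\, \alpha}{(\alpha_{\text{in}} - \alpha)^2},
\end{equation*}
and the hypothesis $\alpha \leq \alpha_{\text{in}} - \varepsilon_0 \sqrt{\alpha_{\text{in}}/\alpha_{\text{out}}}$ together with $\alpha \leq \alpha_{\text{in}}$ then forces $|L'|/n_{\text{out}} < \alpha_{\text{out}}$, as desired.

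For the distance bound, I will use the fact that $Q = \pi_G(Q_\diamond)$ differs from $Q_\diamond$ only by a qudit relabeling, so a Pauli $E$ of weight $\alpha n_{\text{out}}$ on $Q$ is undetectable if and only if $\pi_G^{-1}(E) \in N(Q_\diamond) \setminus S(Q_\diamond)$. Choosing $\alpha < \Delta_{\text{in}} - 2\varepsilon_0\sqrt{\Delta_{\text{in}}/\Delta_{\text{out}}}$ and applying the quantitative claim with $\alpha_{\text{in}} = \Delta_{\text{in}}$, $\alpha_{\text{out}} = \Delta_{\text{out}}$ (the factor of $2$ providing slack), fewer than $\Delta_{\text{out}} n_{\text{out}}$ blocks of $\pi_G^{-1}(E)$ are heavy. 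On each remaining ``light'' block the restricted Pauli has weight strictly less than $\Delta_{\text{in}} n_{\text{in}}$, so by the distance of $Q_{\text{in}}$ it cannot lie in $N(Q_{\text{in}}) \setminus S(Q_{\text{in}})$. Using the decomposition of $S(Q_\diamond)$ and $N(Q_\diamond)$ from \Cref{fact:concatstabilizers}, if $\pi_G^{-1}(E) \in N(Q_\diamond)$ then each restricted Pauli lies in $N(Q_{\text{in}})$, forcing the light-block Paulis into $S(Q_{\text{in}})$; hence the induced encoded-outer Pauli is supported on the fewer-than-$\Delta_{\text{out}} n_{\text{out}}$ heavy blocks and lies in $N(Q_{\text{out}})$. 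By the outer distance it must lie in $S(Q_{\text{out}})$, so altogether $\pi_G^{-1}(E) \in S(Q_\diamond)$, contradicting our assumption.

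The proof is a direct translation of the classical Alon--Edmonds--Luby argument, and I do not anticipate a serious obstacle. The main piece of quantum bookkeeping is to correctly split $S(Q_\diamond)$ and $N(Q_\diamond)$ into inner-block stabilizers/normalizers and encoded-outer stabilizers/normalizers so that light blocks contribute trivially to the induced outer Pauli; beyond this, the expander mixing count is identical to the classical version.
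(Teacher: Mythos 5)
Your proof of the quantitative error-spreading claim is essentially the same double-count via $\varepsilon_0$-pseudorandomness that the paper isolates as \Cref{lem:expperm}: you bound the number of heavy blocks $|L'|$ by combining $|E(L',R')|\geq\alpha_{\text{in}}n_{\text{in}}|L'|$ with the mixing bound and solving for $|L'|/n_{\text{out}}$, whereas the paper substitutes the bound on $|T|$ and derives a contradiction, but the content is identical.

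For the distance bound you take a genuinely different route. The paper shows that the natural two-level decoder $\Dec_{Q_{\text{out}}}\circ\Dec_{Q_{\text{in}}}^{\otimes n_{\text{out}}}\circ\pi_G^{-1}$ corrects any error of weight $\leq\kappa n_{\text{out}}$ with $\kappa=\Delta_{\text{in}}/2-\varepsilon_0\sqrt{\Delta_{\text{in}}/\Delta_{\text{out}}}$ (applying \Cref{lem:expperm} at $\alpha_{\text{in}}=\Delta_{\text{in}}/2$, $\alpha_{\text{out}}=\Delta_{\text{out}}/2$), and then reads off $\Delta\geq 2\kappa$. You instead argue directly with the stabilizer structure: a Pauli $F$ on $Q$ of small weight has $\pi_G^{-1}(F)=\otimes_i F_i$ with all $F_i\in N(Q_{\text{in}})$ (from commutation with the per-block inner stabilizers of \Cref{fact:concatstabilizers}); on light blocks $F_i\in S(Q_{\text{in}})$ by the inner distance, so after multiplying out by inner stabilizers you obtain an encoded-outer Pauli supported on the $<\Delta_{\text{out}}n_{\text{out}}$ heavy blocks, which by the outer distance must be a stabilizer, hence $\pi_G^{-1}(F)\in S(Q_\diamond)$. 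This argument is correct, and in fact it proves the stronger bound $\Delta\geq\Delta_{\text{in}}-\varepsilon_0\sqrt{\Delta_{\text{in}}/\Delta_{\text{out}}}$ (you apply the quantitative claim at $\alpha_{\text{in}}=\Delta_{\text{in}}$, $\alpha_{\text{out}}=\Delta_{\text{out}}$, so the stated factor of $2$ is pure slack). The trade-off is that your argument requires $Q_{\text{out}}$, $Q_{\text{in}}$ to be stabilizer codes, whereas the paper explicitly notes that its decoding-based argument works for arbitrary quantum codes --- a generality the paper wants since \Cref{sec:ael} is phrased for non-stabilizer codes as well. Both are valid; you trade generality for a sharper constant.
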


\Cref{prop:aelbasic} follows directly from the following lemma, which captures the key property of the permutation $\pi_G$.

\begin{lemma}
\label{lem:expperm}
Let $G=(L=[n_{\text{out}}],R=[n_{\text{out}}],E)$ be an $n_{\text{in}}$-regular $\varepsilon_0$-pseudorandom bipartite graph, and define $\pi_G:[n_{\text{out}}]\times[n_{\text{in}}]\rightarrow[n_{\text{out}}]\times[n_{\text{in}}]$ as above. Then for every $0<\alpha_{\text{out}},\alpha_{\text{in}}<1$, and for every $T\subseteq R$ of size $|T|\leq(\alpha_{\text{in}}-\varepsilon_0\sqrt{\alpha_{\text{in}}/\alpha_{\text{out}}})\cdot n_{\text{out}}$, there are fewer than $\alpha_{\text{out}}\cdot n_{\text{out}}$ vertices $i\in L$ for which
\begin{equation}
\label{eq:inneroverload}
    |\{j\in[n_{\text{in}}]:\pi_G(i,j)\in T\times[n_{\text{in}}]\}| \geq \alpha_{\text{in}}\cdot n_{\text{in}}.
\end{equation}
\end{lemma}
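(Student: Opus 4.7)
The plan is to prove \Cref{lem:expperm} by a direct edge-counting argument against the pseudorandomness bound, exactly mirroring the classical AEL analysis. Define $S\subseteq L$ to be the set of ``overloaded'' vertices, i.e.\ those $i\in L$ for which at least $\alpha_{\text{in}} n_{\text{in}}$ of their outgoing edges under the edge-labeling land in $T$ (by construction of $\pi_G$, the condition \eqref{eq:inneroverload} is exactly that $|E(\{i\},T)|\geq\alpha_{\text{in}} n_{\text{in}}$). I want to show $|S|<\alpha_{\text{out}} n_{\text{out}}$, and I will do this by contradiction: suppose $|S|\geq\alpha_{\text{out}} n_{\text{out}}$ and derive an inconsistency.

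The first step is a lower bound on $|E(S,T)|$ coming from the definition of $S$, namely $|E(S,T)|\geq \alpha_{\text{in}} n_{\text{in}} |S|$. The second step is the upper bound from $\varepsilon_0$-pseudorandomness of $G$ (\Cref{def:expander}):
\begin{equation*}
|E(S,T)| \;\leq\; \frac{n_{\text{in}}|S||T|}{n_{\text{out}}} \;+\; \varepsilon_0 n_{\text{in}}\sqrt{|S||T|}.
\end{equation*}
Combining these and dividing by $n_{\text{in}}$ gives $(\alpha_{\text{in}}-|T|/n_{\text{out}})|S| \leq \varepsilon_0\sqrt{|S||T|}$. Writing $\beta=|T|/n_{\text{out}}$, this rearranges to $|S|/n_{\text{out}} \leq \varepsilon_0^2\beta/(\alpha_{\text{in}}-\beta)^2$, provided $\beta<\alpha_{\text{in}}$ (which holds since the hypothesis on $|T|$ forces $\beta\leq\alpha_{\text{in}}-\varepsilon_0\sqrt{\alpha_{\text{in}}/\alpha_{\text{out}}}<\alpha_{\text{in}}$, and the degenerate cases $\beta=0$ or $\varepsilon_0\sqrt{\alpha_{\text{in}}/\alpha_{\text{out}}}>\alpha_{\text{in}}$ are trivial).

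The final step is to plug in the hypothesis $\alpha_{\text{in}}-\beta\geq\varepsilon_0\sqrt{\alpha_{\text{in}}/\alpha_{\text{out}}}$, which gives $(\alpha_{\text{in}}-\beta)^2\geq\varepsilon_0^2\alpha_{\text{in}}/\alpha_{\text{out}}$, so
\begin{equation*}
\frac{|S|}{n_{\text{out}}} \;\leq\; \frac{\varepsilon_0^2\beta}{(\alpha_{\text{in}}-\beta)^2} \;\leq\; \frac{\beta\alpha_{\text{out}}}{\alpha_{\text{in}}} \;<\; \alpha_{\text{out}},
\end{equation*}
contradicting $|S|\geq\alpha_{\text{out}} n_{\text{out}}$. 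This completes the proof of \Cref{lem:expperm}, and \Cref{prop:aelbasic} then follows by applying the lemma with $T$ equal to the set of erroneous $q_{\text{in}}^{n_{\text{in}}}$-ary components of the received word: after undoing $\pi_G$, fewer than $\alpha_{\text{out}} n_{\text{out}}$ inner blocks see more than $\alpha_{\text{in}}$ fraction errors, so the outer code only needs to correct below its distance on the remaining ``good'' blocks.

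There is no real obstacle here: the argument is essentially the expander mixing lemma composed with a counting inequality. The one point that requires mild care is the bookkeeping that the labeling of edges at $i\in L$ used to define $\pi_G$ matters only insofar as it determines \emph{which} $j$-indices are overloaded, not the cardinality; so the bound $|E(\{i\},T)|\geq \alpha_{\text{in}} n_{\text{in}}$ is indeed the right translation of \eqref{eq:inneroverload}, and the proof is purely combinatorial in $G$, inheriting nothing quantum-specific from $Q_\diamond$.
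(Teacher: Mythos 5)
Your proof is correct and follows essentially the same skeleton as the paper's: lower-bound $|E(S,T)|$ by $\alpha_{\text{in}}n_{\text{in}}|S|$ from the definition of the overloaded set $S$, upper-bound it by $\varepsilon_0$-pseudorandomness, and combine. The only cosmetic difference is that you solve cleanly for $|S|/n_{\text{out}}\leq\varepsilon_0^2\beta/(\alpha_{\text{in}}-\beta)^2$ and then substitute the hypothesis on $\beta$, whereas the paper uses the looser bound $|T|<\alpha_{\text{in}}n_{\text{out}}$ inside the square-root term and concludes the inequalities are contradictory once $|S|/n_{\text{out}}\geq\alpha_{\text{out}}$.
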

\begin{proof}
  The proof is a direct application of the definition of $\varepsilon_0$-pseudorandomness. Let $S\subseteq L$ be the set of vertices for which \Cref{eq:inneroverload} holds. Then by definition
  \begin{align*}
      |E(S,T)|
      &\geq |S| \cdot \alpha_{\text{in}} n_{\text{in}}.
  \end{align*}
  Meanwhile by the $\varepsilon_0$-pseudorandomness of $G$,
  \begin{align*}
      |E(S,T)|
      &\leq \frac{n_{\text{in}}|S||T|}{n_{\text{out}}} + \varepsilon_0 n_{\text{in}} \sqrt{|S||T|} \\
      &< |S|\cdot\alpha_{\text{in}}n_{\text{in}} - |S|\cdot\varepsilon_0 n_{\text{in}}\sqrt{\frac{\alpha_{\text{in}}}{\alpha_{\text{out}}}} + |S|\cdot\varepsilon_0n_{\text{in}}\sqrt{\frac{\alpha_{\text{in}}}{|S|/n_{\text{out}}}}
  \end{align*}
  where the second inequality above applies the fact that $|T|\leq(\alpha_{\text{in}}-\varepsilon_0\sqrt{\alpha_{\text{in}}/\alpha_{\text{out}}})\cdot n_{\text{out}}$. The above inequalities give a contradiction whenever $|S|/n_{\text{out}}\geq \alpha_{\text{out}}$, so we must have $|S|/n_{\text{out}}<\alpha_{\text{out}}$, as desired.
\end{proof}

To apply \Cref{lem:expperm} to an AEL code $Q=\pi_G(Q_{\text{out}}\diamond Q_{\text{in}})$, we will typically choose $\alpha_{\text{out}}$ and $\alpha_{\text{in}}$ to be the decoding radii of $Q_{\text{out}}$ and $Q_{\text{in}}$ respectively. Then if $Q$ experiences errors on $\alpha_{\text{in}}-\varepsilon_0\sqrt{\alpha_{\text{in}}/\alpha_{\text{out}}}$ fraction of its components, we may apply the natural decoding algorithm $\Dec_{Q_{\text{out}}}\circ\Dec_{Q_{\text{in}}}^{\otimes n_{\text{out}}}\circ\pi_G^{-1}$, and \Cref{lem:expperm} guarantees that fewer than $\alpha_{\text{out}}$ fraction of the inner decodings fail, so that the outer decoding will succeed. The following proof of \Cref{prop:aelbasic} formalizes this idea for unique decoding for both the inner and outer codes. We will subsequently follow \cite{Guruswami2008ExplicitCA} in applying the same idea with list decoding for the inner code and list recovery for the outer code.

\begin{proof}[Proof of \Cref{prop:aelbasic}]
  The second statement in the proposition is equivalent to \Cref{lem:expperm}, so we just need to show the first statement. For this purpose, it suffices to show that $Q$ can be decoded from errors acting on at most $\kappa=\Delta_{\text{in}}/2-\varepsilon_0\sqrt{\Delta_{\text{in}}/\Delta_{\text{out}}}$ fraction of the components. Assume some error acts on some subset $T\subseteq[n_{\text{out}}]$ of the code components with $|T|\leq\kappa n_{\text{out}}$. Let $S\subseteq[n_{\text{out}}]$ be the set of blocks in $Q_\diamond=\pi_G^{-1}(Q)$ inside which at least $\Delta_{\text{in}}n_{\text{in}}/2$ of the qudits are mapped by $\pi_G$ to blocks in $T$. \Cref{lem:expperm} with $\alpha_{\text{in}}=\Delta_{\text{in}}/2$ and $\alpha_{\text{out}}=\Delta_{\text{out}}/2$ gives that $|S|<\Delta_{\text{out}}n_{\text{out}}/2$, so the natural decoding algorithm $\Dec_{Q_{\text{out}}}\circ\Dec_{Q_{\text{in}}}^{\otimes n_{\text{out}}}\circ\pi_G^{-1}$ is guaranteed to successfully correct the error, as all inner decodings in blocks outside of $S$ must succeed.
\end{proof}

Below we show how applying AEL amplification to asymptotically good quantum LDPC codes with a constant sized random inner code yields efficiently decodable quantum LDPC codes approaching the quantum Singleton bound over constant sized alphabets. To the best of our knowledge, such a family of quantum codes was not previously known even without the LDPC requirement.

\begin{corollary}
\label{cor:qLDPC}
For every $\gamma>0$ and $0<R<1$, there exists some $q=q(\gamma)$ such that there exists an explicit infinite family of quantum LDPC CSS codes of rate $R$, relative distance $\Delta=(1-R-\gamma)/2$, and alphabet size $q$ that are decodable in time $O_\gamma(n)$ up to errors on $\Delta/2$ fraction of the $n$ components.
\end{corollary}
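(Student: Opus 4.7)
The plan is to apply the AEL distance amplification of \Cref{prop:aelbasic} to a good quantum LDPC outer code with a small CSS inner code approaching the quantum Singleton bound. Specifically, I would fix the outer code $Q_{\text{out}}$ to be an asymptotically good quantum LDPC CSS code from \cite{panteleev2022asymptotically,leverrier2022quantum,dinur2022good} with rate $R_{\text{out}}=1-\gamma/C_1$ (for a sufficiently large constant $C_1$) and some constant relative distance $\Delta_{\text{out}}>0$, possibly after taking the code over a larger field to match the alphabet demanded by concatenation (\Cref{claim:concatcss} requires $q_{\text{out}}=q_{\text{in}}^{k_{\text{in}}}$). I would take $Q_{\text{in}}$ to be a constant-sized $[[n_{\text{in}},k_{\text{in}},\Delta_{\text{in}} n_{\text{in}}]]_{q_{\text{in}}}$ CSS code of rate $R_{\text{in}}=R/R_{\text{out}}$ and relative distance $\Delta_{\text{in}}\geq(1-R_{\text{in}}-\gamma/C_1)/2$, whose existence is guaranteed by the quantum Gilbert--Varshamov bound for random CSS codes (\Cref{section:randomCSSproperties}); since $n_{\text{in}}$ depends only on $\gamma$, an inner code and its brute-force decoder can be found in constant time, making the construction explicit.

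Next I would apply AEL amplification with an explicit $\varepsilon_0$-pseudorandom $n_{\text{in}}$-regular bipartite graph obtained as a $2$-lift of a Ramanujan graph, so that $\varepsilon_0=O(1/\sqrt{n_{\text{in}}})$ can be driven to any desired constant by enlarging $n_{\text{in}}$. By \Cref{prop:aelbasic}, the resulting code $Q=\pi_G(Q_{\text{out}}\diamond Q_{\text{in}})$ has rate $R_{\text{out}}R_{\text{in}}=R$ and relative distance at least $\Delta_{\text{in}}-2\varepsilon_0\sqrt{\Delta_{\text{in}}/\Delta_{\text{out}}}$. Choosing $n_{\text{in}}$ large enough (depending only on $\gamma$, $\Delta_{\text{out}}$, and $R$) makes the loss term at most $\gamma/(2C_1)$, which combined with $\Delta_{\text{in}}\geq(1-R-\gamma/C_1')/2$ yields relative distance $\Delta\geq(1-R-\gamma)/2$. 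The CSS property is preserved through concatenation by \Cref{claim:concatcss} and through the permutation $\pi_G$ (which merely relabels and groups qudits).

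The LDPC property follows from the stabilizer description in \Cref{fact:concatstabilizers}: the inner stabilizers act on at most $n_{\text{in}}$ components (a constant), while each encoded outer stabilizer has weight at most $w_{\text{out}}\cdot n_{\text{in}}$, which is still a constant since the outer code is LDPC. Grouping blocks of $n_{\text{in}}$ qudits after $\pi_G$ can only decrease the number of $q_{\text{in}}^{n_{\text{in}}}$-ary components touched, so the resulting stabilizer weights remain bounded by a constant $q(\gamma)$-dependent quantity. For decoding up to $\Delta/2$ fraction errors, the natural algorithm $\Dec_{Q_{\text{out}}}\circ \Dec_{Q_{\text{in}}}^{\otimes n_{\text{out}}}\circ \pi_G^{-1}$ works exactly as in the proof of \Cref{prop:aelbasic}: the permutation and block decodings cost $O_\gamma(n)$ time, and the outer LDPC decoder runs in linear time using \cite{leverrier2022efficient,dinur2022good,gu2022efficient,leverrier2022parallel}.

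The main obstacle is not the decoding analysis, which is essentially a direct invocation of \Cref{prop:aelbasic}, but rather ensuring the cited good quantum LDPC codes can be instantiated over the required alphabet $q_{\text{in}}^{k_{\text{in}}}$ while retaining a linear-time unique decoder. If the cited constructions work only over $\mathbb{F}_2$, one can either group $k_{\text{in}}\log_2 q_{\text{in}}$ physical qubits into a single symbol (at a mild loss in the distance-to-rate tradeoff absorbed into $\gamma$) or use a version of those constructions over larger fields; either workaround suffices to close the argument.
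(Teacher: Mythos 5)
Your proposal is correct and follows essentially the same route as the paper: AEL amplification (\Cref{prop:aelbasic}) of a high-rate quantum LDPC outer code with a constant-sized random CSS inner code near the quantum GV bound, using $\mathbb{F}_2$-linearity (blocking $mk_{\text{in}}$ qubits) to reconcile alphabets, and then observing that concatenation plus a block permutation preserves both the CSS and LDPC structure. One small slip worth correcting: for the \emph{decoding} guarantee (as opposed to the distance bound), the pseudorandomness parameter $\varepsilon_0$ must be chosen in terms of the outer code's \emph{efficient unique-decoding radius} $\kappa$, not its relative distance $\Delta_{\text{out}}$. Linear-time decoders for the cited qLDPC families may only correct a constant fraction $\kappa < \Delta_{\text{out}}/2$ of errors, and \Cref{lem:expperm} must then be invoked with $\alpha_{\text{out}}=\kappa$; the paper accordingly sets $\gamma_0 = \gamma\sqrt\kappa/40$ so that the loss term $\varepsilon_0\sqrt{\alpha_{\text{in}}/\kappa}$ is controlled. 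Your write-up uses $\Delta_{\text{out}}$ in that role, which gives the distance claim but not automatically the decoding claim; the fix is simply to make $\varepsilon_0$ smaller, depending on $\kappa$ rather than $\Delta_{\text{out}}$.
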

\begin{proof}
We may assume that $R<1-\gamma$, as otherwise $\Delta\leq 0$ and the result holds trivially. We may also assume that $R\geq\gamma/2$, as otherwise we may simply construct an AQECC of larger rate $\gamma/2$ using the argument below for updated parameters $R'=\gamma/2$ and $\gamma'=\gamma/2$. By \cite{panteleev2022asymptotically,leverrier2022quantum,leverrier2022efficient,dinur2022good,gu2022efficient,leverrier2022parallel}, there exist explicit infinite families of $[[n_{\text{out}},k_{\text{out}}]]_2$ qLDPC CSS codes $Q_{\text{out}}$ of rate $1-\gamma/20$ over a binary alphabet that are decodable in linear time (for fixed $\gamma$) up to some constant fraction $\kappa=\kappa(\gamma)$ of errors. Here we do not need to let $\kappa$ or the decoding runtime depend directly on $R$ because $\gamma/2\leq R\leq 1-\gamma$ by assumption. 

Futhermore, by \Cref{claim:randomcss}, there exist sufficiently large $q_{\text{in}}=q_{\text{in}}(\gamma)$ and $n_0=n_0(\gamma)$ such that for every $n_{\text{in}}\geq n_0$, a random CSS code $Q_{\text{in}}$ of rate $R+\gamma/20$, block length $n_{\text{in}}$, and alphabet size $q_{\text{in}}$ will with positive probability have relative distance $\geq(1-R-\gamma/10)/2$.
% Indeed, this statement holds because each of the two classical codes making up a random CSS code are random classical linear codes, which approach the classical Singleton bound over large enough alphabets.
Here we may assume that $q_{\text{in}}=2^m$ for some $m\in\mathbb{N}$, so the classical codes making up the random CSS code are linear over $\mathbb{F}_{q_{\text{in}}}$, and therefore are also $\mathbb{F}_2$-linear.

Now set $\gamma_0=\gamma\sqrt{\kappa}/40$, and set $n_{\text{in}}=\max\{n_0,4/\gamma_0^2\}$. Let $Q_\diamond=Q_{\text{out}}\diamond Q_{\text{in}}$ denote the concatenated code, where before concatenating we first block together groups of $mk_{\text{in}}$ qubits in $Q_{\text{out}}$ to obtain a code with the same rate and decoding radius over the alphabet $(\mathbb{F}_2^m)^{k_{\text{in}}}$. Finally, for some $n_{\text{in}}$-regular $\gamma_0$-pseudorandom bipartite graph $G$, let $Q=\pi_G(Q_\diamond)$. Such a graph exists because $n_{\text{in}}\geq 4/\gamma_0^2$ by assumption.

Thus we have constructed an infinite family of quantum codes $Q$ of rate $(1-\gamma/20)(R+\gamma/20)\geq R$ over the alphabet $\mathbb{F}_2^m$. By \Cref{prop:aelbasic}, these codes are decodable up to errors acting on $\Delta_{\text{in}}/2-2\gamma_0/\sqrt{\kappa}>(1-R-\gamma)/4$ fraction of the components. The proof of \Cref{prop:aelbasic} in fact implies that the natural decoding algorithm $\Dec_Q=\Dec_{Q_{\text{out}}}\circ\Dec_{Q_{\text{in}}}^{\otimes n_{\text{out}}}\circ\pi_G^{-1}$ decodes up to this radius. For fixed $\gamma$, this algorithm runs in time linear in the block length because $\Dec_{Q_{\text{out}}}$ by assumption runs in linear time, and $Q_{\text{in}}$ is a constant size CSS code and thus can be decoded in constant time via brute force. Also observe that for fixed $\gamma,R$, then $Q_{\text{out}}$ is an explicit LDPC CSS code, and $Q_{\text{in}}$ is a constant sized CSS code that can be constructed in constant time via a brute force search. Thus $Q_\diamond$ and therefore $Q$ is an explicit quantum LDPC CSS code.
\end{proof}

\subsection{Distance Amplification with Alphabet Reduction}
\label{sec:aelred}
We now show how the AEL construction can simultaneously amplify the distance and reduce the alphabet size of the outer code. Again let $Q_{\text{out}}$ and $Q_{\text{in}}$ be $[[n_{\text{out}},k_{\text{out}},\Delta_{\text{out}}n_{\text{out}}]]_{q_{\text{out}}}$ and $[[n_{\text{in}},k_{\text{in}},\Delta_{\text{in}}n_{\text{in}}]]_{q_{\text{in}}}$ quantum codes respectively such that $q_{\text{out}}=q_{\text{in}}^{k_{\text{in}}}$. Let $Q_\diamond=Q_{\text{out}}\diamond Q_{\text{in}}$ denote the $[[n_{\text{out}} n_{\text{in}},k_{\text{out}}k_{\text{in}}]]_{q_{\text{in}}}$ concatenated code. As in \Cref{sec:aelbasic}, we will permute the components of $Q_\diamond$ according to an appropriate expander graph and reblock into larger symbols. However, we will now reblock into smaller symbols than before.

Specifically, for some $r\leq n_{\text{in}}$, we partition the code components into groups of $r$ $q_{\text{in}}$-ary symbols, each of which we block into a $q_{\text{in}}^r$-ary symbol. To avoid rounding issues, for simplicity assume that $r$ divides $n_{\text{in}}$, and let $b=n_{\text{in}}/r$. Let $G=(L=[n_{\text{out}}b],R=[n_{\text{out}}b],E)$ be an $r$-regular $\varepsilon_0$-pseudorandom bipartite graph, and assume that the edges at each vertex have an arbitrary labeling by elements of $[r]$. Similarly as in \Cref{sec:aelbasic}, let $\pi_G:[n_{\text{out}}b]\times[r]\rightarrow[n_{\text{out}}b]\times[r]$ be the permutation that maps $(i,j)\in[n_{\text{out}}b]\times[r]$ to the unique pair $(i',j')$ such that the $j$th outgoing edge in $G$ from vertex $i\in L$ lands on $i'\in R$, and $j'$ is the label that vertex $i'$ assigns to this edge $(i,i')$. Define $Q=\pi_G(Q_\diamond)$ to be the $[[n_{\text{out}}b,k_{\text{out}}k_{\text{in}}/b]]_{q_{\text{in}}^r}$ code obtained by applying the permutation $\pi_G$ to the components of $Q_\diamond$, and then regrouping each consecutive block of $r$ $q_{\text{in}}$-ary components into a single $q_{\text{in}}^r$-ary component. As before, $Q$ has the same rate $R=R_\diamond=R_{\text{out}}R_{\text{in}}$ as $Q_\diamond$. However, now the alphabet size $q_{\text{in}}^r$ of $Q$ does not depend on the inner code's block length $n_{\text{in}}$.

\begin{proposition}
\label{prop:aelred}
The code $Q$ defined above has relative distance $\Delta\geq\Delta_{\text{in}}-6(\varepsilon_0/2\cdot\sqrt{\Delta_{\text{in}}/\Delta_{\text{out}}})^{2/3}$.

More generally, for every $0<\alpha_{\text{out}},\alpha_{\text{in}}<1$, if a codeword of $Q$ experiences an error on $\alpha\leq\alpha_{\text{in}}-3(\varepsilon_0/2\cdot\sqrt{\alpha_{\text{in}}/\alpha_{\text{out}}})^{2/3}$ fraction of its ($q_{\text{in}}^r$-ary) components, then after applying $\pi_G^{-1}$, fewer than $\alpha_{\text{out}}$ fraction of the resulting inner code blocks experience errors on at least $\alpha_{\text{in}}$ fraction of their ($q_{\text{in}}$-ary) components.

% For every $\varepsilon'>0$, the code $Q$ defined above has relative distance $\delta\geq\delta_{\text{in}}-2\varepsilon'-2\varepsilon_0\sqrt{\delta_{\text{in}}/\varepsilon'\delta_{\text{out}}}$.

% More generally, for every $\varepsilon'>0$ and every $0<\alpha_{\text{out}},\alpha_{\text{in}}<1$, if a codeword of $Q$ experiences an error on $\alpha\leq\alpha_{\text{in}}-\varepsilon'-\varepsilon_0\sqrt{\alpha_{\text{in}}/\varepsilon'\alpha_{\text{out}}}$ fraction of its ($q_{\text{in}}^r$-ary) components, then after applying $\pi_G^{-1}$, fewer than $\alpha_{\text{out}}$ fraction of the resulting inner code blocks experience errors on at least $\alpha_{\text{in}}$ fraction of their ($q_{\text{in}}$-ary) components.
\end{proposition}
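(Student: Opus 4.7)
The plan is to follow the template of \Cref{lem:expperm} and \Cref{prop:aelbasic}, adapting them to the new setting in which the bipartite expander $G$ has $n_{\text{out}}b$ vertices per side (rather than $n_{\text{out}}$), and each inner block of $Q_\diamond$ corresponds to a structured subset $B_{i_0}=\{i_0\}\times[b]\subseteq L$ of $b$ vertices rather than a single vertex. The fundamental mechanism is the same: the permutation $\pi_G$ distributes the $q_{\text{in}}$-ary corruptions coming from the bad super-components of $Q$ back into the inner code blocks of $Q_\diamond$, and expander pseudorandomness limits how concentrated these corruptions can become.

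Concretely, fix any set $T\subseteq R$ of corrupted super-components of $Q$ with $|T|\leq\alpha\cdot n_{\text{out}}b$. Because $\pi_G$ sends each $q_{\text{in}}$-ary position in inner block $i_0$ to a unique super-component, the number of corrupted $q_{\text{in}}$-ary positions in block $i_0$ after applying $\pi_G^{-1}$ equals $|E(B_{i_0},T)|$. Let $S\subseteq[n_{\text{out}}]$ denote the set of inner blocks with at least $\alpha_{\text{in}}n_{\text{in}}=\alpha_{\text{in}}rb$ corrupted positions, and set $S_L=\bigcup_{i_0\in S}B_{i_0}\subseteq L$, so that $|S_L|=|S|\cdot b$. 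Then by construction $|E(S_L,T)|\geq |S|\alpha_{\text{in}}rb$, while the $\varepsilon_0$-pseudorandomness of $G$ gives
$$|E(S_L,T)| \leq \frac{r|S_L||T|}{n_{\text{out}}b} + \varepsilon_0 r\sqrt{|S_L||T|}.$$
Combining these bounds and canceling a factor of $rb$ yields $(\alpha_{\text{in}}-\alpha)|S|\leq\varepsilon_0\sqrt{\alpha\cdot|S|\cdot n_{\text{out}}}$, hence $|S|/n_{\text{out}}\leq\varepsilon_0^2\alpha/(\alpha_{\text{in}}-\alpha)^2$.

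From here, the second statement of the proposition reduces to showing that $\alpha\leq\alpha_{\text{in}}-3(\varepsilon_0/2\cdot\sqrt{\alpha_{\text{in}}/\alpha_{\text{out}}})^{2/3}$ is sufficient to force $\varepsilon_0^2\alpha<\alpha_{\text{out}}(\alpha_{\text{in}}-\alpha)^2$, which one can check by direct algebraic manipulation using the crude bound $\alpha\leq\alpha_{\text{in}}$ (at a controlled loss that is chosen to match the clean $2/3$-form of the stated bound). The first (distance) statement of the proposition then follows exactly as in the proof of \Cref{prop:aelbasic}: setting $\alpha_{\text{in}}=\Delta_{\text{in}}/2$ and $\alpha_{\text{out}}=\Delta_{\text{out}}/2$, any error acting on the stipulated small fraction of super-components of $Q$ is uniquely correctable by the natural decoder $\Dec_{Q_{\text{out}}}\circ\Dec_{Q_{\text{in}}}^{\otimes n_{\text{out}}}\circ\pi_G^{-1}$, because fewer than $\Delta_{\text{out}}/2$ fraction of inner decodings can fail, so the outer code corrects the residual errors.

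The main obstacle will be bookkeeping: carefully indexing how the $r$-regular, $n_{\text{out}}b$-vertex expander routes $q_{\text{in}}$-ary corruptions between the super-components of $Q$ and the structured inner blocks $B_{i_0}$ of $Q_\diamond$, and extracting the particular $2/3$-form from the resulting quadratic inequality in a uniform way that holds across all choices of $\alpha_{\text{in}},\alpha_{\text{out}}\in(0,1)$. Beyond these details the argument is a direct adaptation of the classical AEL analysis and of \Cref{prop:aelbasic}.
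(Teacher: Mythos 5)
Your argument is correct and takes a genuinely different, in fact tighter, route than the paper's. The paper applies \Cref{lem:expperm} as a black box at the granularity of individual length-$r$ sub-blocks of $L$ (with shifted parameters $\alpha_{\text{in}}-\varepsilon'$ and $\varepsilon'\alpha_{\text{out}}$), and then aggregates to full inner blocks via a Markov-type counting argument; this two-stage reduction is exactly what introduces the lossy $2/3$-exponent in the stated bound. You instead apply $\varepsilon_0$-pseudorandomness once, directly to the structured union $S_L=\bigcup_{i_0\in S}B_{i_0}$ of $b$-vertex groups corresponding to bad inner blocks, against $T$, giving $|S|/n_{\text{out}}\le\varepsilon_0^2\alpha/(\alpha_{\text{in}}-\alpha)^2$ in one step. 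This actually establishes the sharper conclusion that the desired bound on $|S|/n_{\text{out}}$ already holds whenever $\alpha\le\alpha_{\text{in}}-\varepsilon_0\sqrt{\alpha_{\text{in}}/\alpha_{\text{out}}}$ --- matching the linear degradation of \Cref{prop:aelbasic} rather than the $2/3$-power loss --- and you only relax to the weaker stated form at the end. Two small points: first, the number of corrupted $q_{\text{in}}$-ary positions in block $i_0$ is at most $|E(B_{i_0},T)|$ rather than exactly equal (the adversary need not touch every symbol of a corrupted super-component), but this is the direction your argument uses, so nothing breaks; second, the concluding algebra (showing $\alpha\le\alpha_{\text{in}}-3\varepsilon'$ together with $\alpha\le\alpha_{\text{in}}$ forces $\varepsilon_0^2\alpha<\alpha_{\text{out}}(\alpha_{\text{in}}-\alpha)^2$, with the hypothesis vacuous once $\varepsilon'>1/3$) does check out. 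The paper's route is arguably more modular since it reuses \Cref{lem:expperm} unchanged; yours is more direct and quantitatively better.
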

\begin{proof}
We will prove the second statement in the proposition, as the first statement then follows by letting $\alpha_{\text{out}}=\Delta_{\text{out}}/2$ and $\alpha_{\text{in}}=\Delta_{\text{in}}/2$ analogously as in \Cref{prop:aelbasic}. Let $\varepsilon'=(\varepsilon_0/2\cdot\sqrt{\alpha_{\text{in}}/\alpha_{\text{out}}})^{2/3}$. Then $Q$ experiences an error on $\alpha\leq\alpha_{\text{in}}-\varepsilon'-\varepsilon_0\sqrt{\alpha_{\text{in}}/\varepsilon'\alpha_{\text{out}}}$ fraction of its components, so \Cref{lem:expperm} implies that fewer than $\varepsilon'\alpha_{\text{out}}$ fraction of the components of $\pi_G^{-1}(Q)$ (each of which is itself a length-$r$ block of $q_{\text{in}}$-ary symbols) experience an error acting on at least $\alpha_{\text{in}}-\varepsilon'$ fraction of its symbols. Now recall that each inner code block of $\pi_G^{-1}(Q)$ consists of $b$ of these length-$r$ blocks with $br=n_{\text{in}}$. Thus for a given inner code block, if at most $\varepsilon'$ fraction of the $b$ length-$r$ blocks experience an error acting on at least $(\alpha_{\text{in}}-\varepsilon')r$ symbols, then the number of symbols in the entire inner block that experience an error is at most $\varepsilon'br+(1-\varepsilon')(\alpha_{\text{in}}-\varepsilon')br<\alpha_{\text{in}}br$. Thus an inner code block can only experience errors on $\geq\alpha_{\text{in}}$ fraction of its $n_{\text{in}}=br$ symbols if $>\varepsilon'$ fraction of its length-$r$ blocks experience errors on at least $\alpha_{\text{in}}-\varepsilon'$ fraction of their symbols. Therefore if at least $\alpha_{\text{out}}$ fraction of the inner code blocks experience errors on at least $\alpha_{\text{in}}$ fraction of their components, then more than $\varepsilon'\alpha_{\text{out}}$ fraction of the length-$r$ blocks in the entire code $\pi_G^{-1}(Q)$ must experience errors on $\geq\alpha_{\text{in}}-\varepsilon'$ fraction of their symbols. But as described above, \Cref{lem:expperm} implies that this latter statement cannot occur, so fewer than $\alpha_{\text{out}}$ fraction of the inner code blocks can experience errors on at least $\alpha_{\text{in}}$ fraction of their components, as desired.
\end{proof}

We will typically apply \Cref{prop:aelred} with an outer code of growing alphabet size $q_{\text{out}}=\poly(n_{\text{out}})$, such as quantum Reed-Solomon or folded Reed-Solomon, and an inner code of fixed alphabet size $q_{\text{in}}=O(1)$, such as a random CSS code. 

\section{Constant-Alphabet Quantum List Decodable Codes Near the Singleton Bound}

\label{sec:frsael}

In this section, we apply the AEL distance amplification and alphabet reduction construction described in \Cref{sec:ael} to the folded quantum Reed-Solomon codes described in \Cref{sec:rsfolding}. This construction is a quantum analogue of the classical construction of \cite{Guruswami2008ExplicitCA}. As in the classical case, the resulting quantum codes have constant alphabet size and are efficiently list-decodable with polynomial list size for a fraction of adversarial errors approaching the Singleton bound.

\begin{theorem}
\label{thm:frsael}
For every $\gamma>0$ and $0<R<1$, there exists an infinite family of quantum CSS codes $Q$ of rate $R$, relative distance at least $\delta=(1-R-\gamma)/2$, and alphabet size $q=2^{O(1/\gamma^5)}$ that are $(\delta,L=n^{O(1/\gamma^3)})$ list-decodable in time $n^{O(1/\gamma^4)}$, where $n$ denotes the block length of $Q$. Furthermore, there exists a randomized algorithm that constructs $Q$ in time $n^{O(1/\gamma^4)}$ with failure probability $2^{-\Omega(n)}$.
\end{theorem}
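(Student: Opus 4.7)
The plan is to apply the alphabet-reducing AEL construction of \Cref{prop:aelred} with the folded quantum Reed--Solomon code from \Cref{theorem:fqrs} as the outer code $Q_{\text{out}}$ and a random CSS code as the inner code $Q_{\text{in}}$. The key structural observation is that both ingredients are CSS codes whose classical components are $\mathbb{F}_q$-linear, so by \Cref{claim:concatcss} the concatenation $Q_\diamond = Q_{\text{out}} \diamond Q_{\text{in}}$ is itself a CSS code whose two classical components are the corresponding classical concatenations. Since the AEL permutation $\pi_G$ acts physically on the qudits and thus in both bases simultaneously, the final AEL code $Q = \pi_G(Q_\diamond) = \text{CSS}(\pi_G(C_1^{\text{out}} \diamond C_1^{\text{in}}),\; \pi_G(C_2^{\text{out}} \diamond C_2^{\text{in}}))$ is a CSS code whose two underlying classical codes are precisely the classical AEL-amplified folded Reed--Solomon codes of \cite{Guruswami2008ExplicitCA}.

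The analysis then reduces the quantum claim to a classical one via \Cref{thm:CSS}. By \cite{Guruswami2008ExplicitCA}, each of the two underlying classical codes is efficiently list-decodable up to its classical Singleton bound with polynomial list size over a constant-sized alphabet. The classical components of a folded quantum RS have equal rate $(1+R)/2$ by \Cref{def:QRS}, so the classical Singleton bound of each coincides with the quantum Singleton bound $(1-R)/2$ of $Q$. Lifting via \Cref{thm:CSS} then yields efficient quantum list-decoding of $Q$ at radius $(1-R-\gamma)/2$ with list size $L = L_{\text{class}}^2$. For parameters I would take folding factor $m = \Theta(1/\gamma^2)$ (matching \Cref{theorem:fqrs}); inner alphabet $q_{\text{in}} = 2^{O(1/\gamma^2)}$ chosen large enough that each classical component of the random inner CSS attains list-decoding radius near the Singleton bound; and expander regularity $r = \Theta(1/\gamma^3)$ chosen so that the AEL pseudorandomness loss $3(\varepsilon_0 \sqrt{\alpha_{\text{in}}/\alpha_{\text{out}}}/2)^{2/3}$ from \Cref{prop:aelred} is $O(\gamma)$. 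This gives final alphabet $q_{\text{in}}^r = 2^{O(1/\gamma^5)}$, final list size $n^{O(1/\gamma^3)}$, and decoding time $n^{O(1/\gamma^4)}$.

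The main obstacle is showing that a random inner CSS code simultaneously satisfies the required classical list-decodability on both of its underlying classical components with sufficiently high probability --- a quantum analogue of the classical fact that random $\mathbb{F}_q$-linear codes achieve the list-decoding capacity. I would handle this via a union bound over the two classical components together with the standard list-decoding analysis for random linear codes, as formalized in \Cref{section:randomCSSproperties}. For the stated $2^{-\Omega(n)}$ failure probability, the inner block length is scaled appropriately with the final block length, and if desired one can derandomize by enumerating inner codes via exhaustive search, with the overhead absorbed into the $n^{O(1/\gamma^4)}$ construction time bound.
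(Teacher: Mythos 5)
Your high-level approach is exactly the paper's: view $Q = \pi_G(Q_{\text{out}} \diamond Q_{\text{in}})$ as a CSS code via \Cref{claim:concatcss}, reduce quantum list-decoding to classical list-decoding of the two components via \Cref{thm:CSS}, and have the classical components be AEL-amplified concatenations of (folded) Reed--Solomon with random inner codes. However, your specific parameter choices do not work, and the error traces back to a real conceptual slip: conflating list-\emph{decoding} of the folded RS outer code with the list-\emph{recovery} that the concatenated decoding algorithm actually requires.

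The decoder for $C_1 = \pi_G\bigl(C_1^{\text{out}} \diamond (C_1^{\text{in}}/{C_2^{\text{in}}}^\perp)\bigr)$ must first brute-force list-decode each inner block, producing lists $S_i$ of size up to $\ell$, and then run the \emph{list-recovery} algorithm for $C_1^{\text{out}}$ on these $\ell$-element lists, via \Cref{thm:frslr}. Your folding factor $m = \Theta(1/\gamma^2)$ is the parameter from \Cref{theorem:fqrs}, which suffices for $(\tau,L)$ list-\emph{decoding} of the folded RS code (i.e.\ $\ell=1$). With a random CSS inner code one has $\ell = q_{\text{in}}^{\Theta(1/\gamma)} = 2^{\Theta(1/\gamma^2)}$ inner list size, and \Cref{thm:frslr} then forces $s = \Theta(\gamma^{-1}\log\ell) = \Theta(1/\gamma^3)$ and $m = \Theta(s/\gamma) = \Theta(1/\gamma^4)$; with $m = \Theta(1/\gamma^2) < s$ the hypothesis $s\le m$ of \Cref{thm:frslr} is already violated. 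Similarly, your expander degree $r=\Theta(1/\gamma^3)$ gives pseudorandomness $\varepsilon_0 = \Theta(\gamma^{3/2})$, and since the outer decoding radius is necessarily $\alpha_{\text{out}}=\Theta(\gamma)$ (the outer code has rate $1-O(\gamma)$), the AEL loss in \Cref{prop:aelred} becomes $3(\varepsilon_0\sqrt{\alpha_{\text{in}}/\alpha_{\text{out}}}/2)^{2/3} = \Theta(\gamma^{2/3})$, which dominates $\gamma$; you need $\varepsilon_0 = O(\gamma^2)$, hence $r = \Theta(1/\gamma^4)$. (Your final alphabet $q_{\text{in}}^r = 2^{O(1/\gamma^5)}$ happens to come out right, but only because your $q_{\text{in}}$ is larger than needed while your $r$ is too small.) A smaller issue: derandomizing by enumerating inner codes takes $q_{\text{in}}^{\Theta(n_{\text{in}}^2)} = n^{\omega(1)}$ time when $n_{\text{in}} = \Theta(\log n)$; the paper instead achieves $2^{-\Omega(n)}$ construction failure by sampling $n$ independent inner codes and checking each one by brute force.
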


The proof of \Cref{thm:frsael} concatenates a folded quantum Reed-Solomon outer code (see \Cref{sec:rsfolding}) with a list-decodable random CSS inner code and then reduces the alphabet size using AEL. To list-decode such a concatenation, we typically want to use a {\it list-recoverable} outer code as defined below.

\begin{definition}
    A code $C\subset \Sigma^n$ is $(\eta, \ell, L)$-LR (list-recoverable) if $\forall S_1\cdots S_n\subset \Sigma$, with $|S_i|\leq \ell$, 
    \begin{equation}
        \bigg|\bigg\{ c\in  C: c_i\in S_i \text{ for at least }\eta \cdot n \text{ symbols }i\in [n]\bigg\}\bigg| \leq L 
    \end{equation}
\end{definition}

The following result of \cite{Guruswami2008ExplicitCA} shows that folded Reed-Solomon codes have this property.

\begin{theorem}[\cite{Guruswami2008ExplicitCA}]
\label{thm:frslr}
For every $0<R<1$, and $\gamma>0$, $\ell\geq 1$, and $m\geq 1$, the $m$-folded Reed-Solomon codes over $\mathbb{F}_q$ of block length $N$ are 
\begin{equation*}
    \left(\eta=R\cdot\left(1+\frac{s}{t}\right)\cdot\frac{m}{m-s+1}\cdot\left(\frac{\ell}{R}\right)^{1/(s+1)},\; \ell,\; L=(Nm)^{O(s)}\right)
\end{equation*}
list-recoverable in time $(Nm)^{O(s)}$ for all choices of integers $s\leq m$ and $t\geq 1$ satisfying $(s+t)(\ell/R)^{1/(s+1)}<q$.
\end{theorem}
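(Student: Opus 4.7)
The plan is to follow the Parvaresh--Vardy/Guruswami--Rudra framework for list recovery of algebraic codes, comprising three steps: (i) multivariate polynomial interpolation, (ii) an agreement-to-root argument, and (iii) root-finding over an appropriate algebraic object.

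For the interpolation step, I would introduce a weighted degree on monomials $X^{a_0} Y_1^{a_1} \cdots Y_s^{a_s}$ where $X$ carries weight $1$ and each $Y_i$ carries weight $k-1$ (with $k = \lceil RNm \rceil$ the underlying Reed--Solomon dimension), together with the Guruswami--Rudra slack parameter $t$ bounding the total $Y$-degree. I then solve the linear system asking for a nonzero $Q(X, Y_1, \ldots, Y_s)$ of weighted degree at most $D$ that vanishes at $(\gamma^{jm+i}, \alpha_{i+1}, \ldots, \alpha_{i+s})$ for every $j \in [N]$, every $(\alpha_1, \ldots, \alpha_m) \in S_j$, and every window offset $0 \leq i \leq m-s$. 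This is a homogeneous system with $N\ell(m-s+1)$ constraints; a standard monomial count shows that choosing $D$ slightly above the interpolation threshold yields a nonzero solution, and Gaussian elimination produces such a $Q$ in time $(Nm)^{O(s)}$. The hypothesis $(s+t)(\ell/R)^{1/(s+1)} < q$ is what ensures that the evaluation points and weighted-degree regime fit inside $\mathbb{F}_q$.

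For the agreement-to-root step, take any $f \in \mathbb{F}_q[X]$ of degree $< k$ that agrees on at least $\eta N$ folded positions, and consider $R(X) := Q(X, f(X), f(\gamma X), \ldots, f(\gamma^{s-1} X))$. For each good position $j$ and each window offset $i \in \{0, 1, \ldots, m-s\}$, the tuple evaluated inside $Q$ equals an element of $S_j$, so by construction $R(\gamma^{jm+i}) = 0$. This produces $\eta N(m-s+1)$ distinct zeros of $R$. Since the weighted degree of $Q$ was $D$, the univariate $R$ has degree at most $D$, and $\eta N(m-s+1) > D$ forces $R \equiv 0$. Substituting the chosen value of $D$ and optimizing the split between the interpolation threshold and the $Y$-degree budget $t$ yields precisely the claimed bound $\eta = R(1 + s/t) \cdot m/(m-s+1) \cdot (\ell/R)^{1/(s+1)}$.

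The main obstacle is the root-finding step: bounding the number of polynomials $f$ of degree $< k$ that satisfy the algebraic identity $Q(X, f(X), f(\gamma X), \ldots, f(\gamma^{s-1} X)) \equiv 0$, and enumerating them efficiently. Following Guruswami--Rudra, I would view this identity in a suitable quotient ring in which the shift $f(X) \mapsto f(\gamma X)$ becomes an $\mathbb{F}_q$-linear endomorphism $\phi$, so the identity reads $\widetilde{Q}(f, \phi(f), \ldots, \phi^{s-1}(f)) \equiv 0$ in that ring; equivalently, $f$ is a root of a univariate polynomial over a field extension whose degree is $(Nm)^{O(s)}$. Factoring this polynomial by standard algebraic algorithms then both bounds the number of candidate $f$'s by $(Nm)^{O(s)}$ and produces them in time $(Nm)^{O(s)}$, matching the claimed list size and decoding time. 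Getting this ring-theoretic reduction to yield a genuinely small and enumerable root set, rather than a vacuously large variety, is where essentially all of the technical care in the Guruswami--Rudra argument lives.
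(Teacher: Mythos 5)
First, context: the paper does not prove this statement at all. \Cref{thm:frslr} is imported verbatim from Guruswami--Rudra \cite{Guruswami2008ExplicitCA} and used as a black box, so the only benchmark for your proposal is the original argument --- and your three-step outline does follow it in broad strokes. The interpolation constraints you write down, namely vanishing at $(\gamma^{jm+i},\alpha_{i+1},\dots,\alpha_{i+s})$ over all positions $j$, all $(\alpha_1,\dots,\alpha_m)\in S_j$, and all window offsets $0\le i\le m-s$, and the resulting $m-s+1$ zeros of $R(X)$ per agreeing position, are exactly where the $m/(m-s+1)$ loss comes from; and the crux you single out --- that in the quotient $\mathbb{F}_q[X]/(X^{q-1}-\gamma)$ the shift $f(X)\mapsto f(\gamma X)$ becomes the Frobenius $f\mapsto f^q$, so all candidate $f$ are roots of a single nonzero univariate polynomial over an extension field, giving both the list bound and the enumeration --- is indeed the mechanism behind $L=(Nm)^{O(s)}$.

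The genuine gap is your treatment of $t$. You cast $t$ as a budget on the total $Y$-degree of $Q$ and impose only simple vanishing at the interpolation points. With that parameterization the monomial-versus-constraint count forces $D\gtrsim\bigl((s+1)!\,N\ell(m-s+1)(k-1)^s\bigr)^{1/(s+1)}$, and the inequality $\eta N(m-s+1)>D$ then yields an agreement fraction carrying an extra $((s+1)!)^{1/(s+1)}\approx (s+1)/e$ factor on top of $R\cdot\frac{m}{m-s+1}\cdot(\ell/R)^{1/(s+1)}$; no ``optimization of the split'' removes it. In Guruswami--Rudra, $t$ is the \emph{multiplicity} of vanishing demanded at every interpolation point: each point contributes $\binom{t+s}{s+1}\le (t+s)^{s+1}/(s+1)!$ linear constraints while each agreeing window contributes a zero of $R$ of multiplicity $t$, and it is the surviving ratio $(t+s)/t=1+s/t$, after the $(s+1)!$ cancels, that produces exactly the stated $\eta$ (the hypothesis $(s+t)(\ell/R)^{1/(s+1)}<q$ is also naturally read with $t$ in this multiplicity role, not as a statement about evaluation points fitting in $\mathbb{F}_q$). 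This matters downstream: in the paper's application $s$ is taken large ($s=O(1/\gamma^3)$), so a multiplicative loss of order $s$ in the agreement fraction would destroy the approach to capacity. So the architecture of your sketch is right, but step (ii) as written cannot recover the theorem's parameters; you need multiplicity-$t$ interpolation (or an equivalent device) to obtain the $(1+s/t)$ factor.
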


In \Cref{thm:frslr}, we will typically have $q$ growing with the block length $N$ while $s,t,\ell,R$ are fixed, so the condition that $(s+t)(\ell/R)^{1/(s+1)}<q$ trivially holds. Then for $\gamma>0$ and $\ell=O_\gamma(1)$ we may choose sufficiently large $s,t,m=O_\gamma(1)$ to obtain list-recovery up to a radius $\eta=R+O(\gamma)$ lying within $O(\gamma)$ of the Singleton bound.

\begin{proof}[Proof of \Cref{thm:frsael}]
The code $Q$ is an instantiation of the AEL construction described in \Cref{sec:aelred} with the following parameters. Note that we may assume that $\gamma/2\leq R\leq 1-\gamma$, as if $R<\gamma/2$ we may simply instead construct a code of greater rate $\gamma/2$, and if $R>1-\gamma$ then the desired relative distance satisfies $\delta<0$, which is trivially achievable. For completeness below we give explicit constants, which we do not attempt to optimize.
\begin{itemize}
    \item Let $Q_{\text{out}}^0$ be a $[[n_0,k_0]]_{q_0}$ quantum Reed-Solomon code of rate $R_{\text{out}}=k_0/n_0=1-\gamma/20$ and relative distance $\delta_{\text{out}}=(1-R_{\text{out}})/2$. We may assume that $n_0=\Theta(q_0)$, for instance by choosing $\mathbb{F}_{q_0}$ to be a field of characteristic $2$ so that we may take $q_0/2\leq n_0\leq q_0$.
    \item Set
    \begin{itemize}
        \item $q_{\text{in}}=2^{200/\gamma}=2^{O(1/\gamma)}$
        \item $\ell=q_{\text{in}}^{100/\gamma}=2^{O(1/\gamma^2)}$
        \item $s=100/\gamma\cdot\log(\ell/\gamma)=O(1/\gamma^3)$
        \item $m=100s/\gamma=O(1/\gamma^4)$.
    \end{itemize}
    Let $Q_{\text{out}}$ be the $[[n_{\text{out}},k_{\text{out}}]]_{q_{\text{out}}}$ code defined to be the $m$-folding of $Q_{\text{out}}^0$ (see \Cref{sec:rsfolding}).
    \item Let $Q_{\text{in}}$ be a $[[n_{\text{in}},k_{\text{in}}]]_{q_{\text{in}}}$ CSS code of rate $R_{\text{in}}=R+\gamma/20$ and dimension $k_{\text{in}}=\log_{q_{\text{in}}}q_{\text{out}}$ that has relative distance $\geq\delta_{\text{in}}=(1-R_{\text{in}}-\gamma/10)/2$ and consists of two classical $(\delta_{\text{in}},\ell)$ list-decodable codes, so that $Q_{\text{in}}$ $(\delta_{\text{in}},\ell^2)$ list-decodable. By \Cref{claim:randomcss}, a random CSS code satisfies these properties with probability $1-q_{\text{in}}^{-\Omega(\gamma n)}$.
    \item Let $\gamma_0=\gamma^2/1000$, $r=4/\gamma_0^2=O(1/\gamma^4)$, and let $G$ be an $r$-regular $\gamma_0$-pseudorandom bipartite graph.
    \item As described in \Cref{sec:aelred}, let $Q_\diamond=Q_{\text{out}}\diamond Q_{\text{in}}$, and let $Q=\pi_G(Q_\diamond)$ be our final construction.
\end{itemize}

A direct application of \Cref{prop:aelred} implies that $Q$ has relative distance $\geq\delta$. We now show that $Q$ is efficiently $(\delta,L)$ list-decodable. For this purpose, observe that $Q_\diamond$ is the concatenation of the CSS codes $Q_{\text{out}}=\text{CSS}(C_1^{\text{out}},C_2^{\text{out}})$ and $Q_{\text{in}}=\text{CSS}(C_1^{\text{in}},C_2^{\text{in}})$, and thus by \Cref{claim:concatcss},
\begin{equation*}
    Q = \text{CSS}\left(C_1=\pi_G(C_1^{\text{out}}\diamond (C_1^{\text{in}}/{C_2^{\text{in}}}^\perp)),\; C_2=\pi_G(C_2^{\text{out}}\diamond(C_2^{\text{in}}/{C_1^{\text{in}}}^\perp))\right),
\end{equation*}
% \begin{equation*}
%     Q = \text{CSS}\left(C_1 = \pi_G(C_1^{\text{in}}\diamond C_1^{\text{out}}+({C_2^{\text{out}}}^\perp)^{\oplus n_{\text{out}}}),\; C_2 = \pi_G(C_2^{\text{in}}\diamond C_2^{\text{out}}+({C_1^{\text{out}}}^\perp)^{\oplus n_{\text{out}}})\right)
% \end{equation*}
Therefore by \Cref{thm:CSS}, it suffices to show that $C_1/C_2^\perp$ and $C_2/C_1^\perp$ above are efficiently $(\delta,\sqrt{L})$ list-decodable in the sense of \Cref{def:cosetld}. By symmetry of the construction, it suffices to consider $C_1/C_2^\perp$, as $C_2/C_1^\perp$ is identical. Then our goal is to show that given a corrupted codeword $x$ of $C_1$, there exists an efficient algorithm that outputs a list of size $\leq\sqrt{L}$ containing a representative of every coset in $C_1/C_2^\perp$ that intersects the Hamming ball $B_{\delta n}(x)$ of radius $\delta n$ centered at $x$. Recall that then subtracting $x$ from each list element yields the desired list of stabilizer-distinct $X$ error operators for quantum list-decoding as stated in \Cref{def:QLD}; the $Z$ error operators are obtained by analogously list-decoding $C_2/C_1^\perp$.

The desired list-decoding algorithm is as follows. Let $\alpha_{\text{in}}=\delta_{\text{in}}$ and $\alpha_{\text{out}}=\gamma/50$. First, for each $i\in[n_{\text{out}}]$, we apply a brute force search to compute the list $S_i\subseteq[q_{\text{out}}]=[q_{\text{in}}^{k_{\text{in}}}]\cong C_1^{\text{in}}/{C_2^{\text{in}}}^\perp$ of all inner code messages whose encodings differ from the $i$th inner code block of $\pi_G^{-1}(x)$ in at most $\alpha_{\text{in}}$-fraction of positions. Then we apply the list-recovery algorithm given by \Cref{thm:frsael} to $S_1,\dots,S_{n_{\text{out}}}$, and output the (encodings into $C_1$ of the) returned list. Recall that when we re-encode the inner code, we apply the maps $\Enc_1^{\text{in}}$ and $\Enc_2^{\text{in}}$ described in \Cref{sec:concat} that send outer code symbols to cosets, and then we may select an arbitrary element from each coset. % (via a brute force search) whichever coset element is closest to the respective symbol of $\pi_G^{-1}(x)$, and then remove any list elements that differ from $x$ at $>\delta$ fraction of components.

The brute force searches over the inner code together take $O(n_{\text{out}}\cdot q_{\text{out}}) \leq n^{O(m)}=n^{O(1/\gamma^4)}$ time, and the outer list-recovery takes $(n_{\text{out}}m)^{O(s)}\leq n^{O(1/\gamma^3)}$ time, so the overall runtime is $n^{O(1/\gamma^4)}$.

To show correctness, assume that $x$ is obtained by corrupting at most $\delta$-fraction of the symbols in some codeword $c\in C_1$. It suffices to show that the list returned by the above algorithm includes an element of the coset $c+C_2^\perp$. By \Cref{prop:aelred} and by the list-decodability of $C_1^{\text{in}}$, at most $\alpha_{\text{out}}$-fraction of the inner decodings return a list that does not contain the respective component of $c$. Thus by \Cref{thm:frslr}, the final list returned by the algorithm will contain $c$, up to addition by some element of $\pi_G(({C_2^{\text{in}}}^\perp)^{\oplus n_{\text{out}}})\subseteq C_2^\perp$ from the re-encoding step described above. Also observe that by \Cref{thm:frslr}, the returned list will have size at most $(n_{\text{out}}m)^{O(s)}\leq n^{O(1/\gamma^3)}$.

It remains to be shown that $Q$ can be constructed in time $n^{O(1/\gamma^4)}$ with failure probability $2^{-\Omega(n)}$. The folded Reed-Solomon code is explicit, so the construction algorithm simply needs to find an appropriate inner CSS code $Q_{\text{in}}$. By \Cref{claim:randomcss}, a random choice of $Q_{\text{in}}$ works with probability $>1/2$, so trying $n$ random codes and checking the distance and list-decodability of each one by brute force gives the desired construction algorithm.
\end{proof}

\section{AQEC without List Recovery}
\label{sec:aqeccdirect}
This section presents a construction of efficient AQECCs that avoid the need for list recovery. In fact, the only algorithmic ingredient this construction relies on is an efficient unique-decoder for an asymptotically good code, as the construction only performs list-decoding on a small inner code where a brute force search is sufficient.

The main idea of this construction is to concatenate an outer high-rate constant-distance unique-decodable QECC with an inner AQECC that permits exact unique decoding with high probability almost up to the Singleton bound. Specifically, for the inner code we use the concatenation of a quantum list-decodable code near the Singleton bound with a PTC, to ensure (inefficient) unique decodability near the Singleton bound with high probability for any fixed error. The inefficiency here is not a concern due to the small block length of the inner code. We then apply AEL distance amplification to boost the distance of the concatenated code almost to the Singleton bound.

We now show how to construct private AQECCs approaching the Singleton bound. \Cref{theorem:removing-privacy} then gives the desired non-private AQECCs. % TODO: check that parameters work out in this application of RSS

\begin{theorem}
  \label{thm:uniqueAQECC}
  For any given $\gamma>0$ and $0<R<1$, let $\delta_{\text{out}}>0$ be such that there is an infinite explicit family of stabilizer codes $Q_0$ of rate $1-\gamma/4$ and alphabet size $q_0$ that are unique-decodable from errors acting on $\delta_{\text{out}}$ fraction of the components in $f(n_0\log q_0)$ time for some increasing function $f:\mathbb{R}_+\rightarrow\mathbb{R}_+$, where $n_0$ denotes the block length.
  
  Then there exists an infinite explicit family of private $((1-R-\gamma)/2,2^{-\Omega(\delta_{\text{out}}n)})$-AQECCs $Q$ of rate $R$, alphabet size $q=\max\{q_0,2^{1/\gamma^2\delta_{\text{out}}}\}^{O(1/\gamma)}$, private key length $O(\gamma^2\cdot n\log q)$ bits, and decoding time $f(n\log q)+O(nq^2)$, where $n$ denotes the block length.
%   over an alphabet of size $2^{O(1/\gamma^5)}$ such that the length-$n$ code can be encoded in $\poly(n)$ time using $o_{\gamma}(n)$ classical random bits, and can be exactly decoded in $\poly(n)$ time from errors acting on $(1-R-\gamma)/2$ qudits with success probability $2^{-\Omega_\gamma(n/\log n)}$ (where the decoding algorithm has access to the random bits used for encoding).
\end{theorem}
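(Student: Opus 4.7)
The plan is to construct the private AQECC $Q$ by applying AEL distance amplification to the concatenation of the given asymptotically good outer code $Q_0$ with a small inner private AQECC that already approaches the quantum Singleton bound. The crucial simplification relative to \Cref{sec:frsael} is that, because the inner code has constant block length (depending only on $\gamma$ and $\delta_{\text{out}}$), we may decode it by brute force and so avoid list-recovery on the outer code; this is precisely what lets the outer code be an arbitrary asymptotically good QECC rather than a folded Reed--Solomon code.

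First I would build the inner private AQECC $Q_{\text{in}}^{\text{priv}}$. Using \Cref{claim:randomcss}, a random CSS code of block length $n_{\text{in}}=\Theta(1/\gamma^{2}\delta_{\text{out}})$, rate $R+\Theta(\gamma)$, and alphabet $q_{\text{in}}$ (chosen a prime power so that $q_{\text{in}}^{k_{\text{in}}}=q_0$, possibly after first blocking consecutive symbols of $Q_0$) is with high probability $(\delta_{\text{in}},L_{\text{in}})$ quantum list-decodable with $\delta_{\text{in}}=(1-R-\gamma/2)/2$ and $L_{\text{in}}=q_{\text{in}}^{O(n_{\text{in}})}$. Composing it with a PTC (\Cref{thm:ptc}) as in \Cref{lemma:private-aqec} yields a private AQECC $Q_{\text{in}}^{\text{priv}}$ on $n_{\text{in}}$ components with recovery error $\varepsilon_{\text{in}}=2^{-\Omega(n_{\text{in}})}$ from any fixed Pauli error of weight at most $\delta_{\text{in}} n_{\text{in}}$, using a PTC key of $O(n_{\text{in}}\log q_{\text{in}})$ bits. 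Since $n_{\text{in}}$ is constant for fixed $\gamma,\delta_{\text{out}}$, I construct $Q_{\text{in}}^{\text{priv}}$ by exhaustive search in constant time.

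Next I would apply the alphabet-reducing AEL construction of \Cref{sec:aelred} to $Q_\diamond=Q_0\diamond Q_{\text{in}}^{\text{priv}}$. I pick $\gamma_0=\Theta(\gamma^{3/2}\sqrt{\delta_{\text{out}}})$ and an $r$-regular $\gamma_0$-pseudorandom bipartite graph $G$, with $r$ chosen so that the final alphabet is $q=q_{\text{in}}^{r}=\max\{q_0,2^{1/\gamma^{2}\delta_{\text{out}}}\}^{O(1/\gamma)}$. I sample a fresh PTC key per inner block, so the total private key length is $n_{\text{out}}\cdot O(n_{\text{in}}\log q_{\text{in}})=O(\gamma^{2}\cdot n\log q)$ bits as claimed. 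The overall rate is $(1-\gamma/4)(R+\Theta(\gamma))\ge R$ for suitable choice of constants. The decoding algorithm is the natural one: apply $\pi_G^{-1}$, then run the per-block inner decoder using the shared PTC key (brute force, in time $O(q^{2})$ per block and $O(n q^{2})$ in total), then run $\Dec_{Q_0}$ in time $f(n\log q)$.

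For correctness, given any adversarial error of fractional weight $(1-R-\gamma)/2$, \Cref{prop:aelred} with $\alpha_{\text{in}}=\delta_{\text{in}}$ and $\alpha_{\text{out}}=\delta_{\text{out}}/2$ shows that after $\pi_G^{-1}$, fewer than $\delta_{\text{out}}/2$ fraction of inner blocks see more than $\delta_{\text{in}}$ fraction of errors. For each remaining ``light'' block, the inner private AQECC recovers the correct inner message with probability at least $1-\varepsilon_{\text{in}}$ over its PTC key, so the outer decoder $\Dec_{Q_0}$ sees at most $\delta_{\text{out}}/2$ additional symbol errors from inner failures plus $\delta_{\text{out}}/2$ from heavy blocks; if the total is below $\delta_{\text{out}}$, the outer decoder succeeds. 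The main obstacle, and the heart of the analysis, is to upgrade ``each inner block fails with probability at most $\varepsilon_{\text{in}}$'' to ``the number of inner failures is $\le\delta_{\text{out}}n_{\text{out}}/2$ except with probability $2^{-\Omega(\delta_{\text{out}}n)}$''. I would handle this by combining two facts: by local indistinguishability (\Cref{fact:syndromelocind}) together with the argument of \Cref{claim:lemma41-proof}, any adversary whose error channel acts on at most $\delta_{\text{in}}n_{\text{in}}$ symbols per inner block produces an error distribution that is independent of each block's PTC key; and the keys across distinct inner blocks are sampled independently. Hence conditioned on the error pattern, the per-block failure indicators are independent Bernoulli variables with mean at most $L_{\text{in}}\varepsilon_{\text{in}}=2^{-\Omega(n_{\text{in}})}$, and a standard Chernoff bound over $n_{\text{out}}$ blocks gives deviation probability $2^{-\Omega(\delta_{\text{out}}n_{\text{out}}n_{\text{in}})}=2^{-\Omega(\delta_{\text{out}}n)}$. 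Taking a diamond-norm average over PTC keys as in \Cref{claim:lemma41-proof} converts this into the claimed $(\,(1-R-\gamma)/2,\,2^{-\Omega(\delta_{\text{out}}n)}\,)$ private AQECC guarantee.
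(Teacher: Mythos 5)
Your overall strategy matches the paper's: concatenate the given outer code $Q_0$ with a constant-size inner code obtained by composing a random CSS code (near the quantum Singleton bound) with a PTC, apply AEL, and close the argument with local indistinguishability plus a Chernoff-type concentration over the independent per-block PTC keys. Two parts of the write-up, however, would not go through as stated.

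First, the use of the alphabet-\emph{reducing} AEL of \Cref{sec:aelred} is the wrong variant here and does not yield the stated parameters. Because \Cref{prop:aelred} incurs a $(\gamma_0\sqrt{\alpha_{\text{in}}/\alpha_{\text{out}}})^{2/3}$ loss rather than the linear $\gamma_0\sqrt{\alpha_{\text{in}}/\alpha_{\text{out}}}$ of \Cref{prop:aelbasic}, forcing the decoding radius to be within $O(\gamma)$ of the Singleton bound requires $\gamma_0=\Theta(\gamma^{3/2}\sqrt{\delta_{\text{out}}})$, hence degree $r=\Theta(1/\gamma_0^2)=\Theta(1/(\gamma^3\delta_{\text{out}}))$. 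With $q_{\text{in}}=2^{\Theta(1/\gamma)}$ (needed for the inner random CSS code), the resulting alphabet is $q=q_{\text{in}}^r=2^{\Theta(1/(\gamma^4\delta_{\text{out}}))}$, which is strictly larger than the theorem's $\max\{q_0,2^{1/\gamma^2\delta_{\text{out}}}\}^{O(1/\gamma)}$ when $q_0$ is small. Worse, your parameter choice $n_{\text{in}}=\Theta(1/\gamma^2\delta_{\text{out}})$ violates the requirement $r\le n_{\text{in}}$ of \Cref{sec:aelred} once $\gamma$ is small. There is also a mismatch in decoding time: the brute-force inner decoding costs $q_{\text{in}}^{\Theta(n_{\text{in}})}$ per block, which exceeds $q^2=q_{\text{in}}^{2r}$ whenever $r<n_{\text{in}}$. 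The paper instead uses the \emph{basic} permutation of \Cref{sec:aelbasic} with an $n_{\text{in}}$-regular graph, which is the natural choice: because $Q_0$ has a fixed alphabet $q_0$ (unlike the folded Reed--Solomon outer code in \Cref{sec:frsael}, whose alphabet grows with block length), one can simply take $n_{\text{in}}$ to be a constant depending only on $\gamma$, $\delta_{\text{out}}$, and $q_0$, making the final alphabet $q_{\text{in}}^{n_{\text{in}}}$ already constant with no reduction step needed. That directly gives the stated alphabet and decoding time.

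Second, the claim that ``conditioned on the error pattern, the per-block failure indicators are independent Bernoulli variables'' is not correct as a quantum statement. The adversary's channel $\mathcal{A}$ can create entanglement across blocks, and the syndrome measurement in one inner block collapses the joint state, so the effective error seen on block $i+1$ depends on the measurement outcomes from blocks $\le i$. There is no well-defined classical ``error pattern'' that one can condition on to decouple the blocks. The paper handles this by decomposing $\mathcal{A}_G\ket{\psi}=\sum_\nu A_\nu\ket{\psi}\otimes\ket{\nu}$ and arguing \emph{sequentially}: \Cref{claim:LI-composable-draft} and \Cref{theorem:composabilityaqeccs} show that, conditioned on the state after decoding blocks $1,\dots,i-1$, the key $\kappa_i$ is still uniformly random and independent of the effective error on block $i$, so the $i$th decoding succeeds with probability $\ge 1-\delta_{\text{out}}/10$ conditioned on the history. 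This gives a martingale/binomial-domination bound, not true independence; the Chernoff conclusion still holds, but your stated justification needs to be replaced by that sequential conditioning argument.
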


% \begin{algorithm}[t]
% \caption{\label{alg:decode} Decoding algorithm for the AQECC $Q$ of \Cref{thm:uniqueAQECC}.}
% \DontPrintSemicolon
% \SetKwFunction{FDec}{$\Dec_{Q^{\vec{\kappa}}}$}
% \SetKwProg{Fn}{Function}{:}{}
% \Fn{\FDec{$\ket{\mu}$}}{
%     $\ket{\mu}\gets\pi_G^{-1}\ket{\mu}$ \label{li:invG} \;
%     \For{$i\in[n]$}{ \label{li:startinner}
%         Measure syndrome for $Q_{\text{in}}$ in $i$th inner code block of $\ket{\mu}$ \label{li:measurein} \;
%         By brute force search, compute list $\{E_i^1,\dots,E_i^{\ell_i}\}$ of all Pauli errors consistent with syndrome\;
%         $\ket{\mu}\gets(\Enc_{Q_{\text{in}}}^\dagger\; {E_i^1}^\dagger)^{(i)}\ket{\mu}$\;
%         $\mathcal{L}_i\gets\{\Enc_{Q_{\text{in}}}^\dagger\; {E_i^j}^\dagger E_1^1\; \Enc_{Q_{\text{in}}}:j\in[\ell_i]\}$\;
%     }
%     \For{$i\in[n]$}{ \label{li:invP}
%         Measure syndrome for $P_{\text{in}}^{\kappa_i}$ in $i$th inner code block of $\ket{\mu}$\;
%         \If{$\mathcal{L}_i$ contains a unique Pauli $E$ consistent with syndrome}{
%             $\ket{\mu}\gets(\Enc_{P_{\text{in}}^{\kappa_i}}^\dagger\; E)^{(i)}\ket{\mu}$\;
%         }\Else{
%             $\ket{\mu}\gets\text{FAIL}^{(i)}\ket{\mu}$ (write arbitrary failure state to block $i$ of $\ket{\mu}$) \label{li:endinner} \;
%         }
%     }
%     \KwRet{$\Dec_{Q_{\text{out}}}(\ket{\mu})$}\;
% }
% \end{algorithm}

\begin{proof}
  We may assume that $R\leq 1-\gamma$, as otherwise the decoding radius in the theorem statement is $0$. We may also assume that $R\geq\gamma/2$, as otherwise we may simply construct an AQECC of larger rate $\gamma/2$ using the argument below for updated parameters $R'=\gamma/2$ and $\gamma'=\gamma/2$; such a change only affects the resulting code's parameters by constant factors in the $R'$ and $\gamma'$ coefficients.
  
  We construct $Q$ as follows. Below, we ignore rounding issues when choosing integer-valued parameters for ease of presentation; such rounding only affects the resulting bounds by constant factors. For completeness we give explicit constants, which we do not attempt to optimize.
  \begin{itemize}
      \item Let $R_{\text{in}}=R+26\gamma/100$, $\delta_{\text{in}}=1-R_{\text{in}}-\gamma/40$ $q_{\text{in}}=2^{100/\gamma}$, $\ell=q_{\text{in}}^{400/\gamma}=2^{O(1/\gamma^2)}$ $\gamma_0=\gamma\sqrt{\delta_{\text{out}}}/500$, $n_{\text{in}}=\max\{(2/\gamma)\log q_0/\log q_{\text{in}},4/\gamma_0^2\}=\max\{\Theta(\log q_0),\Theta(1/\gamma^2\delta_{\text{out}})\}$. Let $Q_{\text{in}}$ be an $[[n_{\text{in}},k_{\text{in}}]]_{q_{\text{in}}}$ CSS code of rate $R_{\text{in}}$ and relative distance $\geq\delta_{\text{in}}$ that is $(\delta_{\text{in}},\ell)$ quantum list-decodable. By \Cref{claim:randomcss}, a random CSS code satisfies these properties with probability $\geq 1-q_{\text{in}}^{-\Omega(\gamma n)}\geq 1/2$. Such a code can be found via a brute force search in time $q_{\text{in}}^{O(n_{\text{in}}^2)}=\max\{2^{O((\log q_0)^2/\gamma)},2^{O(1/\gamma^5\delta_{\text{out}}^2)}\}$, which is constant with respect to $n$.
      \item Let $P_{\text{in}}=\{P_{\text{in}}^\kappa:\kappa\in K\}$ be a $\delta_{\text{out}}/10\ell$-PTC of block length $n_P=k_{\text{in}}$, rate $R_P\geq 1-\gamma/100$, (so message length is $k_P=R_Pn_P$), alphabet size $q_{\text{in}}$, and private key set of size $|K|=q_{\text{in}}^\lambda$ for $\lambda=500(1/\gamma+\log(1/\delta_{\text{out}}))+(\gamma/100)\log n_P$. Such a PTC was constructed by \cite{barnum2002authentication}, as stated in \Cref{thm:ptc}.
      \item Let $q_{\text{out}}=q_{\text{in}}^{k_P}\geq q_{\text{in}}^{\gamma n_{\text{in}}/2}\geq q_0$, and let $Q_{\text{out}}$ be the $[[n_{\text{out}},k_{\text{out}},\delta_{\text{out}}]]_{q_{\text{out}}}$ code of rate $R_{\text{out}}=1-\gamma/4$ obtained by reblocking (that is, folding) the code $Q_0$ defined in the theorem statement to increase the alphabet size from $q_0\leq q_{\text{out}}$ to $q_{\text{out}}$.
      \item Let $q=q_{\text{in}}^{n_{\text{in}}}$, and for every $\vec{\kappa}\in K^n$, define the $[[n,k]]_q$ code $Q_\diamond^{\vec{\kappa}}=\bigotimes_{i=1}^n(Q_{\text{in}}\circ P_{\text{in}}^{\kappa_i})\circ Q_{\text{out}}$. That is, $Q_\diamond$ has encoding map $\Enc_{Q_\diamond^{\vec{\kappa}}}=\Enc_{Q_{\text{in}}}^{\otimes n}\circ\bigotimes_{i=1}^n\Enc_{P_{\text{in}}^{\kappa_i}}\circ\Enc_{Q_{\text{out}}}$. Here $\Enc_{Q_\diamond^{\vec{\kappa}}}$ takes as input a message state and the classical private key $\vec{\kappa}$, the latter of which consists of $n$ independent private keys for the $n$ respective calls to $\Enc_{P_{\text{in}}}$.
      \item Let $G$ be an $n_{\text{in}}$-regular $\gamma_0$-pseudorandom bipartite graph, and for $\vec{\kappa}\in K^n$ let $Q^{\vec{\kappa}}=\pi_G(Q_{\diamond}^{\vec{\kappa}})$, where $\pi_G:[n]\times[n_{\text{in}}]\rightarrow[n]\times[n_{\text{in}}]$ is the permutation specified by $G$ as described in \Cref{sec:aelbasic}.
  \end{itemize}
  
  We now show that the private code $Q=\{Q^{\vec{\kappa}}:\vec{\kappa}\in K^n\}$ defined above has the desired properties. By definition, $Q$ has rate $R_{\text{out}}R_PR_{\text{in}}\geq R$, alphabet size $q=2^{O(1/\gamma^3\delta_{\text{out}})}$, and private key length $n\cdot\log(q_{\text{in}}^\lambda)\leq O(n(1/\gamma)(1/\gamma+\log(1/\delta_{\text{out}})+\gamma\log n_{\text{in}}))\leq O(\gamma^2\cdot n\log q)$, where this final inequality uses the fact that $\delta_{\text{out}}\leq\gamma/8$ by the quantum Singleton bound for $Q_0$, and the fact that $q=q_{\text{in}}^{n_{\text{in}}}$ with $n_{\text{in}}\geq 4/\gamma_0^2=\Theta(1/\gamma^2\delta_{\text{out}})$.
  
  It remains to be shown that $Q$ has a decoding algorithm with the desired error and efficiency parameters. We define the following natural decoding algorithm $\Dec_Q^{\vec{\kappa}}$. Let $\Dec_{\text{in}}^\kappa=\Enc_{Q_{\text{in}\circ P_{\text{in}}^\kappa}}^\dagger\circ\tDec_{\text{in}}^\kappa$ be the brute force decoding algorithm for the composed stabilizer code $Q_{\text{in}}\circ P_{\text{in}}^\kappa$. Here $\tDec_{\text{in}}^\kappa$ first performs a syndrome measurement for $Q_{\text{in}}\circ P_{\text{in}}^\kappa$ and finds all stabilizer-distinct errors of weight $\leq\delta_{\text{in}}$ consistent with the syndrome. If there is a unique such error $E$ (up to multiplication by stabilizers), the algorithm applies $E^\dagger$ to the measured state to correct the error; otherwise, it fails and gives an arbitrary output. Let $\Dec_{Q_{\text{out}}}$ denote the unique-decoding algorithm for $Q_{\text{out}}$ provided by the theorem statement. Then we let $\Dec_{Q^{\vec{\kappa}}}=\Dec_{Q_{\text{out}}}\circ\bigotimes_{i=1}^n\Dec_{\text{in}}^{\kappa_i}\circ\pi_G^{-1}$.
  
  As $Q^{\vec{\kappa}}$ is the concatenation of stabilizer codes, the decoding algorithm described above can be implemented as a single syndrome measurement followed by a classical algorithm that computes a Pauli error, whose inverse can then be applied to the measured state to restore the initial code state with high probability. However, for the purpose of our analysis, it is helpful to assume that the inner code syndromes are measured before the outer code syndrome, as then we can apply \Cref{lemma:private-aqec} (and its composable generalization in \Cref{theorem:composabilityaqeccs}) to the inner codes.
  
  Each of the inner decodings $\Dec_{\text{in}}^{\kappa_i}$ runs in time $O(q^2)$ because the brute force search may simply iterate through all possible Pauli errors. Thus the overall runtime of $\Dec_Q$ is $f(n)+O(nq^2)$. % TODO: can actually improve inner decoding to close to linear in $q$ by first list-decoding CSS code $Q_{\text{in}}$ in time $O(q)$, and then filtering lists with PTC syndrome. But probably not worth the effort for that improvement here, as typically $q=2^{\poly(1/\gamma)}$, so squaring doesn't make a big difference.
  
  It remains to be shown that $Q$ is a private $(\delta,2^{-\Omega(\delta_{\text{out}}n)})$-AQECC for $\delta=(1-R-\gamma)/2$. For this purpose, we will show that for every error channel $\mathcal{A}$ acting on some subset $T\subseteq[n]$ of the code components with $|T|\leq\delta n$, it holds with probability $2^{-\Omega(\delta_{\text{out}}n)}$ over the uniformly random choice of key $\vec{\kappa}\in K^n$ and over the randomness from measurements in the decoding algorithm that for every message $\ket{\phi}$, we have 
  \begin{equation}
  \label{eq:hpdec}
    \Dec_{Q^{\vec{\kappa}}}(\mathcal{A}(\Enc_{Q^{\vec{\kappa}}}(\ket{\phi})))=\ket{\phi}.
  \end{equation}
  As a point of notation, above we are viewing the channel $\Dec_{Q^{\vec{\kappa}}}\circ\mathcal{A}\circ\Enc_{Q^{\vec{\kappa}}}$ as a procedure that outputs a classical ensemble of pure quantum states. Thus in the notation of density matrices, we would equivalently like to show that
  \begin{equation*}
  \Dec_{Q^{\vec{\kappa}}} \circ \mathcal{A} \circ \Enc_{Q^{\vec{\kappa}}}(\ket{\phi}\bra{\phi}) = (1-p_{\text{bad}})\ket{\phi}\bra{\phi}+p_{\text{bad}}\rho_{\text{bad}}
  \end{equation*}
  for some $0\leq p_{\text{bad}}\leq 2^{-\Omega(\delta_{\text{out}}n)}$ and some density matrix $\rho_{\text{bad}}$.
  
  Note that we allow the message $\ket{\phi}$ to be an arbitrary pure state on the joint system consisting of the message register and an arbitrarily large side register, to which none of the encoding, decoding, or error channels have access. Allowing $\ket{\phi}$ to be entangled with a side register in this way implies that our proof gives the desired diamond norm bound in \Cref{def:privateaqecc}. However, the side register effictively has no effect in the argument below, so we will not need to reference it further.
  
  Letting $\ket{\psi}=\pi_G^{-1}\Enc_{Q^{\vec{\kappa}}}(\ket{\phi})$ and $\mathcal{A}_G=\pi_G^{-1}\mathcal{A}\pi_G$, then~(\ref{eq:hpdec}) is equivalent to
  \begin{equation*}
      \Dec_{Q_{\text{out}}}\left(\bigotimes_{i=1}^n\Dec_{\text{in}}^{\kappa_i}\right)\mathcal{A}_G(\ket{\psi}) = \ket{\phi}.
  \end{equation*}
  
  Recall that we may decompose 
  \begin{equation*}
      \mathcal{A}_G(\ket{\psi}) = \sum_{\nu}A_\nu\ket{\psi}\otimes\ket{\nu},
  \end{equation*}
  where $\nu$ ranges over a set of orthonormal environment states, and the $A_\nu$ are some operators such that $\sum_\nu A_\nu^\dagger A_\nu=I$. Then it suffices to show for each $\nu$, with probability $\geq 1-2^{-\Omega(\delta_{\text{out}}n)}$ we have 
  \begin{equation}
  \label{eq:decgoal}
      \Dec_{Q_{\text{out}}}\left(\bigotimes_{i=1}^n\Dec_{\text{in}}^{\kappa_i}\right)A_\nu\ket{\psi} = \ket{\phi}.
  \end{equation}
  
  Define $S\subseteq[n]\times[n_{\text{in}}]$ by $S=\pi_G^{-1}(T\times[n_{\text{in}}])$. Because $\delta_{\text{in}}-\gamma_0/\sqrt{\delta_{\text{out}}/100}<\delta$, \Cref{lem:expperm} implies that the set
  \begin{equation*} 
      S' = \{i\in[n]:|S\cap(\{i\}\times[n_{\text{in}}])|\geq\delta_{\text{in}}\}.
  \end{equation*}
  has $|S'|/n\leq\delta_{\text{out}}/100$. That is, the $\geq 1-\delta_{\text{out}}/100$ fraction of inner blocks with $i\notin S'$ have error rate from $\mathcal{A}_G$ below the list decoding radius $\delta_{\text{in}}$ of $Q_{\text{in}}$.
  
  Fix $\nu$. Let $\mathcal{P}_S\subseteq\mathcal{P}_q^n$ denote the set of Paulis supported inside $S$. Then we may decompose $A_\nu=\sum_{E\in\mathcal{P}_S}a_EE$ for some coefficeints $a_e$.
  
  As the left hand side of \Cref{eq:decgoal} does not depend on the order in which the inner blocks are decoded, assume that we first sequentially apply $\tDec_{\text{in}}^{\kappa_i}$ to the inner code blocks $i\notin S'$ of $\mathcal{A}_G(\ket{\psi})$. Then for a given such $i\notin S'$, $\tDec_{\text{in}}^{\kappa_i}$ is applied to the $i$th block of $\left(\bigotimes_{j\notin S':j<i}\tDec_{\text{in}}^{\kappa_j}\right)A_\nu(\ket{\psi})$. We may let 
  \begin{equation*}
    A_\nu^{(i)}\ket{\psi} = \left(\bigotimes_{j\notin S':j\leq i}\tDec_{\text{in}}^{\kappa_j}\right)A_\nu(\ket{\psi}) = \sum_{E\in\mathcal{P}_S}a_E^{(i)}E\ket{\psi}    
  \end{equation*}
  for some coefficients $a_E^{(i)}$, which are random variables depending on the choices of $\kappa_j$ as well as on the outcomes of measurements in $\tDec_{\text{in}}^{\kappa_j}$ for $j\notin S'$ with $j\leq i$. Note that for a fixed $i$, the random variables $a_E^{(i)}$ over $E\in\mathcal{P}_S$ may be correlated.
  
%   For an inner code block $j\in[n]$, if $A_\nu^{(i)}=\sum_{E\in\mathcal{P}_S}a_E^{(i)}$ is such that $a_E^{(i)}=0$ for every $E$ with $i$th block $E_i$ not equal to the identity, we say that block $j$ is \textit{error-free} in $A_\nu^{(i)}$. Consider applying $\tDec_{\text{in}}^{\kappa_{i}}$ to the $i$th block of $A_\nu^{(i-1)}\ket{\psi}$ for some Pauli $E$. This procedure first performs a syndrome measurement, which simply annihilates some terms $a_E^{(i-1)}E\ket{\psi}$ (and rescales the remaining nonzero coefficients). Based on the syndrome measurement outcome, $\tDec_{\text{in}}^{\kappa_{i}}$ then applies a Pauli supported inside block $i$. Thus if some block $j$ is error-free in $A_\nu^{(j)}$, then block $j$ is error-free in $A_\nu^{(i)}$ for all $i\geq j$.
  
%   Thus for any $j<i$ such that the coefficients $a_E^{(i-1)}$ are supported on Paulis $E$ with $i$th block $E_i$ equal to the identity, then we also have the $a_E^{(i)}$ supported on $E$ with $E_i=I$. That is, if $\Dec_{\text{in}}^{\kappa_j}$ corrected the error on block $j<i$, then future applications of $\Dec_{\text{in}}^{\kappa_i}$ to blocks $i>j$ cannot induce a new error on block $j$.
  
  For every $i\notin S'$, since $\mathcal{A}_G$ and therefore $A_\nu$ acts on fewer than $\delta_{\text{in}}$ fraction of the components in block $i$, local indistinguishability guarantees (\Cref{claim:LI-composable-draft}) that the distribution of the coefficients $a_E^{(i-1)}$ does not depend on the message encoded by $Q_{\text{in}}$ in block $i$. Thus in particular, the distribution of $a_E^{(i-1)}$ does not depend on the choice of $\kappa_i\in K$.
  
  For an inner code block $i\in[n]$, if $A_\nu^{(j)}=\sum_{E\in\mathcal{P}_S}a_E^{(j)} E$ is such that $a_E^{(j)}=0$ for every $E$ with $i$th block $E_i$ not equal to the identity, we say that block $i$ is \textit{error-free} in $A_\nu^{(j)}$. Recall the definition of $\tDec_{\text{in}}^{\kappa}$ in \Cref{alg:algorithm1}. By \Cref{lemma:private-aqec} and its composable generalization in \Cref{theorem:composabilityaqeccs}, for each $i\notin S'$, conditioned on the values of $a_E^{(i-1)}$, it holds with probability $\geq 1-\delta_{\text{out}}/10$ over the randomness in $\kappa_i$ and in the syndrome measurement that $\tDec_{\text{in}}^{\kappa_{i}}$ successfully corrects the error on inner code block $i$, meaning that block $i$ is error-free in $A_\nu^{(i)}$. Furthermore, by \Cref{remark:envreg}, if block $i$ is error-free in $A_{\nu^{(i)}}$, then block $i$ is error-free in $A_\nu^{(j)}$ for all $j\geq i$, so that in particular block $i$ is error-free in $A_\nu^{(n)}$. It follows that
  \begin{align*}
      \Pr[A_\nu^{(n)}\text{ has }\geq\delta_{\text{out}}n\text{ blocks }i\notin S'\text{ that are not error-free}]
      &\leq \Pr[\operatorname{Bin}(n-|S'|,\delta_{\text{out}}/10)+|S'|\geq\delta_{\text{out}}n] \\
      &\leq e^{-\delta_{\text{out}}n/2},
  \end{align*}
  where the final inequality above follows from the Chernoff bound. Thus because by definition $\ket{\psi}=(\bigotimes_{i=1}^n\Enc_{Q_{\text{in}}\circ P_{\text{in}}^{\kappa_i}})\Enc_{Q_{\text{out}}}\ket{\phi}$, it follows that with probability $\geq 1-e^{-\delta_{\text{out}}n/2}$, the state
  \begin{equation*}
      \left(\bigotimes_{i=1}^n\Dec_{\text{in}}^{\kappa_i}\right)A_\nu\ket{\psi} = \left(\bigotimes_{i=1}^n\Enc_{Q_{\text{in}}\circ P_{\text{in}}^{\kappa_i}}^\dagger\right)\left(\bigotimes_{i\in S'}\tDec_{\text{in}}^{\kappa_i}\right)A_\nu^{(n)}\ket{\psi}
  \end{equation*}
  is equal to $A'\Enc_{Q_\text{out}}\ket{\phi}$ for some operator $A'$ that acts as the identity on $>1-\delta_{\text{out}}$ fraction of the components $i\in[n]$. In this case we are guaranteed that $\Dec_{Q_{\text{out}}}A'\Enc_{Q_{\text{out}}}\ket{\phi}=\ket{\phi}$, so \Cref{eq:decgoal} holds with probability $\geq 1-e^{-\delta_{\text{out}}n/2}$, as desired.

\end{proof}

The following corollary instantiates \Cref{thm:uniqueAQECC} with two examples of suitable outer codes, namely distance-amplified concatenated Reed-Solomon codes and asymptotically good quantum LDPC codes.

\begin{corollary}
  \label{cor:outercodes}
For every $\gamma>0$ and $0<R<1$, we have:
\begin{enumerate}
    \item There exists an infinite explicit family of private $((1-R-\gamma)/2,2^{-\Omega(\gamma n)})$-AQECCs of rate $R$, alphabet size $q=2^{O(\frac{1}{\gamma^4}\log\frac{1}{\gamma})}$, block length $n\geq q$, private key length $O(\gamma^2\cdot n\log q)$ bits, and decoding time $O(n\log q)^3$.
    \item For sufficiently small $\delta_{\text{out}}=\delta_{\text{out}}(\gamma)>0$, there exists an infinite explicit family of private LDPC $((1-R-\gamma)/2,2^{-\Omega(\delta_{\text{out}}n)})$-AQECCs $Q$ of rate $R$, alphabet size $q=2^{O(1/\gamma^3\delta_{\text{out}})}$, private key length $O(\gamma^2\cdot n\log q)$ bits, and decoding time $O_{\gamma}(n)$.
\end{enumerate}
\end{corollary}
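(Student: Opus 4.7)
The plan is to apply \Cref{thm:uniqueAQECC} with two different choices of the outer code $Q_0$, one for each part of the corollary.

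For part 2, I would take $Q_0$ to be the asymptotically good quantum LDPC CSS code guaranteed by \Cref{cor:qLDPC} at outer rate $R_{\text{out}}=1-\gamma/4$. By that corollary $Q_0$ has some constant alphabet $q_0=q_0(\gamma)$, some constant decoding radius $\delta_{\text{out}}=\delta_{\text{out}}(\gamma)>0$, and a linear-time unique decoder (so $f(x)=O_\gamma(x)$). Plugging these into \Cref{thm:uniqueAQECC} gives a private $((1-R-\gamma)/2,2^{-\Omega(\delta_{\text{out}}n)})$-AQECC of alphabet $q=\max\{q_0,2^{1/\gamma^2\delta_{\text{out}}}\}^{O(1/\gamma)}=2^{O(1/\gamma^3\delta_{\text{out}})}$ and decoding time $f(n\log q)+O(nq^2)=O_\gamma(n)$, which matches the claimed parameters after shrinking $\delta_{\text{out}}(\gamma)$ if necessary. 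It remains to verify the LDPC property of the resulting $Q^{\vec\kappa}=\pi_G(Q_\diamond^{\vec\kappa})$. Its stabilizers come from two sources: stabilizers of each inner block $Q_{\text{in}}\circ P_{\text{in}}^{\kappa_i}$, which have weight at most $n_{\text{in}}=O_\gamma(1)$; and $(\Enc_{Q_{\text{in}}}\circ\Enc_{P_{\text{in}}})$-encodings of stabilizers of $Q_{\text{out}}$, whose weight is bounded by $n_{\text{in}}$ times the weight of the corresponding outer stabilizer. Since $Q_{\text{out}}$ is LDPC with constant-weight checks and $n_{\text{in}}=O_\gamma(1)$, every stabilizer of $Q^{\vec\kappa}$ has weight $O_\gamma(1)$. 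The permutation $\pi_G$ merely relocates components and does not change weights, so $Q^{\vec\kappa}$ is LDPC.

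For part 1, I would instead take $Q_0$ to be a distance-amplified quantum Reed-Solomon code. Concretely, start from a quantum GRS code (\Cref{def:QRS}) of rate $1-\gamma/4$ and relative distance $\gamma/8$, which is unique-decodable up to $\gamma/16$ fraction errors in time cubic in the bit-length by running Berlekamp--Welch on the two constituent classical GRS codes of a CSS pair. Then apply \Cref{prop:aelred} with a random CSS inner code of appropriate constant rate, alphabet $q_{\text{in}}'=O_\gamma(1)$, and block length $\Theta(1/\gamma^2)$ (obtained via \Cref{claim:randomcss}), together with a pseudorandom expander $G$ of degree $\Theta(1/\gamma^2)$. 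This yields an explicit $Q_0$ of rate $1-\gamma/4$, alphabet $q_0=2^{O(1/\gamma^3\log(1/\gamma))}$, constant relative distance and decoding radius $\delta_{\text{out}}=\Theta(\gamma)$, and cubic-time unique decoding by composing the inner brute-force decoder with the outer Berlekamp--Welch. Plugging into \Cref{thm:uniqueAQECC} the quantity $n_{\text{in}}=\Theta((\log q_0)/\gamma)=\Theta((1/\gamma^4)\log(1/\gamma))$ dominates, so the final alphabet size is $q=q_{\text{in}}^{n_{\text{in}}}=2^{O(1/\gamma^4\log(1/\gamma))}$, the error is $2^{-\Omega(\gamma n)}$, and the decoding time $f(n\log q)+O(nq^2)$ is $O((n\log q)^3)$ since the outer decoder is cubic in its input bit-length and $nq^2$ is dominated by $(n\log q)^3$.

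The whole argument is essentially a parameter-tracking exercise layered on top of \Cref{thm:uniqueAQECC}. The only nontrivial technical obstacle is choosing the parameters of the AEL amplification of the outer quantum GRS code in part 1 (namely the inner-code rate, block length, and the expander degree) so that the resulting $q_0$ satisfies $(\log q_0)/\gamma=O((1/\gamma^4)\log(1/\gamma))$ while preserving outer rate $1-\gamma/4$ and outer decoding radius $\Theta(\gamma)$; this follows from a direct application of \Cref{prop:aelred}, \Cref{claim:randomcss}, and the existence of constant-degree pseudorandom expanders. The LDPC preservation argument in part 2 and the unique-decoder construction for the AEL-amplified RS outer code in part 1 are both routine once the stabilizer/check-weight propagation through code concatenation and component permutations is written out.
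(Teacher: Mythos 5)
Your overall strategy --- instantiate \Cref{thm:uniqueAQECC} with two different choices of $Q_0$, one for each item --- is exactly the paper's. Part 2 is essentially correct: the paper simply takes $Q_0$ to be a raw asymptotically good binary quantum LDPC code (so $q_0=2$ and $\delta_{\text{out}}$ is an absolute constant), whereas you first pass through \Cref{cor:qLDPC}, which already applies AEL once; this double-AEL detour is harmless for correctness but slightly inflates $q_0$ and forces $\delta_{\text{out}}=\Theta(\gamma)$ rather than an absolute constant. Your LDPC-preservation argument (inner-block stabilizers plus $\Enc_{\text{in}}$-encoded outer stabilizers, with $\pi_G$ weight-preserving) is the right one.

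Part 1 has a genuine gap in the construction of the outer code $Q_0$. You propose to AEL-amplify a quantum GRS code using ``a random CSS inner code of constant alphabet $q_{\text{in}}'=O_\gamma(1)$ and block length $\Theta(1/\gamma^2)$.'' Concatenation requires the GRS alphabet to satisfy $q_{\text{out}}=(q_{\text{in}}')^{k_{\text{in}}'}$ where $k_{\text{in}}'$ is the inner message length. With both $q_{\text{in}}'$ and the block length (hence $k_{\text{in}}'$) held constant in $\gamma$, $q_{\text{out}}$ is a constant, so the Reed--Solomon constraint $n_{\text{out}}<q_{\text{out}}$ caps $n_{\text{out}}$ at a constant and you cannot get an infinite family. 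The only fix is to let the inner block length grow like $\Theta(\log n_{\text{out}})$, but then a random CSS inner code cannot be found explicitly: a brute-force search over random CSS codes of block length $n_{\text{in}}\sim\log n_{\text{out}}$ runs in time $q_{\text{in}}^{\Theta(n_{\text{in}}^2)}=n_{\text{out}}^{\Theta(\log n_{\text{out}})}$, which is super-polynomial. This is precisely why the paper introduces the quantum Wozencraft ensemble (\Cref{lem:qwozencraft}), whose codes are parameterized by a single parity-check vector up to scalars, so the search space has size only $O(q_{\text{in}}^{2k_{\text{in}}})=\poly(n_{\text{out}})$ and the brute-force search over codes (and over distances) is polynomial. \Cref{lem:RSwoz} is the paper's packaging of this idea and cannot be swapped for ``random CSS via \Cref{claim:randomcss}'' without losing explicitness.

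As a secondary remark, even granting an explicit inner code, your claim of ``cubic-time unique decoding by composing the inner brute-force decoder with the outer Berlekamp--Welch'' needs the inner brute-force to take at most $\poly(q_0)=\poly(n_{\text{out}})$ time, which again relies on the inner ensemble being small; this goes through for the Wozencraft ensemble (each inner decode is $O(q_{\text{out}})$ by enumeration) but not for unrestricted random CSS codes of growing block length.
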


For the first item in \Cref{cor:outercodes}, we will concatenate a Reed-Solomon outer code with an inner code from the ensemble of codes given below.
% TODO: could see if the PTC construction we cite gives a (slightly) more efficient form of quantum Wozencraft ensemble; would no longer be CSS, but may be more randomness-efficient, and could be on the quantum GV bound

\begin{lemma}
  \label{lem:qwozencraft}
  For a prime power $q$ and positive integers $r,s$, let $C_1,C_2\subseteq\mathbb{F}_{q^r}^s$ be $(s-1)$-dimensional subspaces chosen uniformly at random subject to the constraint that $C_1^\perp\subseteq C_2$. Let $Q=\text{CSS}_{\mathbb{F}_q}(C_1,C_2)$ be the CSS code of block length $n=rs$, rate $R=1-2/s$, and local dimension $\mathbb{F}_q$ obtained by associating $\mathbb{F}_{q^r}\cong\mathbb{F}_q^r$. Then for every $\gamma>0$, it holds with probability $\geq 1-2q^{-\gamma n}$ that $Q$ has relative distance at least $H_q^{-1}(1/s-\gamma)$, where $H_q$ denotes the $q$-ary entropy function.
\end{lemma}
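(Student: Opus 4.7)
The plan is to separately lower bound the minimum $\mathbb{F}_q$-Hamming weight of nonzero elements of $C_1$ and $C_2$, viewed as $\mathbb{F}_q$-linear subspaces of $\mathbb{F}_q^{rs}$ via the identification $\mathbb{F}_{q^r}\cong\mathbb{F}_q^r$. Since $C_2^\perp\subseteq C_1$ and $C_1^\perp\subseteq C_2$, such a bound immediately lower bounds the minimum $\mathbb{F}_q$-weight of nonzero elements in $C_1\setminus C_2^\perp$ and $C_2\setminus C_1^\perp$, and hence the relative distance of the CSS code $Q$.

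Set $d=H_q^{-1}(1/s-\gamma)\cdot n$. The standard $q$-ary entropy bound estimates the number of nonzero $v\in\mathbb{F}_q^{rs}$ of Hamming weight strictly less than $d$ by at most $q^{H_q(d/n)\cdot n}=q^{(1/s-\gamma)n}$. Since $r=n/s$, it suffices to show that $\Pr[v\in C_i]\leq O(q^{-r})$ for each fixed nonzero $v\in\mathbb{F}_{q^r}^s$ and each $i\in\{1,2\}$; a union bound then gives a failure probability of order $q^{-\gamma n}$, matching the statement up to adjusting constants (the small slack between the natural $3q^{-\gamma n}$ and the claimed $2q^{-\gamma n}$ is absorbed by shrinking $\gamma$ infinitesimally).

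For $C_1$, which by construction is marginally uniform among $(s-1)$-dimensional $\mathbb{F}_{q^r}$-subspaces of $\mathbb{F}_{q^r}^s$, counting the $(s-1)$-dimensional subspaces through a fixed nonzero vector gives
\[
\Pr[v\in C_1]=\frac{q^{r(s-1)}-1}{q^{rs}-1}\leq q^{-r}.
\]
For $C_2$, I would reparametrize the joint distribution by first sampling $C_1$ uniformly, so that $L:=C_1^\perp$ is a uniformly random line, and then sampling $C_2$ uniformly from the $(s-1)$-dimensional subspaces containing $L$. Conditioned on the event $v\notin L$, the probability $v\in C_2$ equals the probability that the nonzero image of $v$ in the quotient $\mathbb{F}_{q^r}^s/L\cong\mathbb{F}_{q^r}^{s-1}$ lies in the $(s-2)$-dimensional image of $C_2$, yielding the same bound $\leq q^{-r}$. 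The event $v\in L$ contributes at most $q^{-r(s-1)}$ to $\Pr[v\in C_2]$, so $\Pr[v\in C_2]\leq q^{-r}+q^{-r(s-1)}\leq 2q^{-r}$ for $s\geq 2$.

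The one mildly delicate step is this conditional analysis for $C_2$: the marginal distribution of $C_2$ is biased by the correlation $C_1^\perp\subseteq C_2$, but the bias only matters when the fixed vector $v$ happens to lie on the random line $L$, which is itself an unlikely event. Everything else reduces to the standard entropy bound and a union bound, so no further obstacle is expected.
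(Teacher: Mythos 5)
Your approach is essentially the same as the paper's: union bound the number of low-weight vectors (bounded by $q^{(1/s-\gamma)n}$ via the $q$-ary entropy estimate) against the probability that a fixed nonzero vector lies in $C_1$ or $C_2$. The one place you overcomplicate, and lose a constant, is the analysis of $C_2$. You condition on whether $v$ lies on the random line $L = C_1^\perp$ and arrive at the lossy bound $\Pr[v\in C_2]\leq q^{-r}+q^{-r(s-1)}\leq 2q^{-r}$, whereas the paper simply uses that $C_2$ has the \emph{same} marginal as $C_1$, i.e.\ $\Pr[v\in C_2]=(q^{r(s-1)}-1)/(q^{rs}-1)\leq q^{-r}$. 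This is an easy symmetry observation that your conditioning actually verifies in disguise: the constraint $C_1^\perp\subseteq C_2$ is equivalent to $C_2^\perp\subseteq C_1$, so the joint distribution on $(C_1,C_2)$ is invariant under the swap $(C_1,C_2)\mapsto(C_2,C_1)$, and the marginal of $C_1$ is uniform over $(s-1)$-dimensional subspaces because the number of valid $C_2$'s for a given $C_1$ is the same for every $C_1$. Indeed, if you carry out your conditional computation exactly rather than bounding each term, you recover $(q^{r(s-1)}-1)/(q^{rs}-1)$ on the nose. As a consequence your final failure probability is $3q^{-\gamma n}$ rather than $2q^{-\gamma n}$, and the ``shrink $\gamma$ infinitesimally'' fix does not actually recover the stated bound: decreasing $\gamma$ strengthens the distance claim but weakens the probability exponent, and increasing $\gamma$ does the reverse, so you cannot get both halves of the lemma simultaneously this way. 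The cleanest fix is just to observe the marginal uniformity of $C_2$ and drop the conditioning; everything else is correct.
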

\begin{proof}
  By definition for $i=1,2$, the code $C_i$ contains any given $x\in\mathbb{F}_{q^r}^s\cong\mathbb{F}_q^{rs}$ with probability $(q^{r(s-1)}-1)/(q^{rs}-1)\leq q^{-r}$. Therefore the chance that $C_i$ contains some $x$ of Hamming weight $\leq H_q^{-1}(1/s-\gamma)$ is at most $q^{-r}\cdot|B_0(H_q^{-1}(1/s-\gamma))|\leq q^{-r}\cdot q^{rs(1/s-\gamma)}=q^{-rs\gamma}$. Union bounding over $i=1,2$, it follows that both $C_1,C_2$, and therefore $Q$, has relative distance at least $H_q^{-1}(1/s-\gamma)$ with probability $\geq 1-2q^{-rs\gamma}$.
\end{proof}

The codes in \Cref{lem:qwozencraft} may be viewed as a quantum analogue of the Wozencraft ensemble, as they provide a derandomization of random CSS codes that still approach the quantum Singleton bound over large alphabets.

\begin{lemma}
  \label{lem:RSwoz}
  For every $\gamma>0$, there exists an explicit infinite family of $[[n,k]]_q$ codes $Q$ of rate $R=k/n\geq 1-\gamma$, local dimension $q=2^{O(\frac{1}{\gamma^3}\log\frac{1}{\gamma})}$, and block length $n\geq q$ that are unique-decodable from errors acting on $\Omega(\gamma)$ fraction of the components in time $O(n\log q)^3$.
\end{lemma}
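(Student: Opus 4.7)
The plan is to apply AEL distance amplification (\Cref{prop:aelbasic}) to the concatenation of a high-rate outer quantum Reed-Solomon code with an inner CSS code drawn from the quantum Wozencraft ensemble of \Cref{lem:qwozencraft}, mirroring the classical recipe for constant-alphabet high-rate codes from outer RS and inner Wozencraft. I would set $s=\lceil 8/\gamma\rceil$ so that $1-2/s\geq 1-\gamma/4$, pick $q$ of the form $2^a$ with $a=\Theta(\log(1/\gamma))$ chosen large enough that $H_q^{-1}(1/s-\gamma/100)\geq\gamma/20$, and pick $r=\lceil 200/\gamma\rceil$ so that $n_{\text{in}}:=rs=O(1/\gamma^2)$ is large enough to admit an explicit $n_{\text{in}}$-regular $\varepsilon_0$-pseudorandom bipartite graph $G$ for $\varepsilon_0=\gamma/400$ (via two-lifts of Ramanujan graphs, as noted after \Cref{def:expander}). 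By \Cref{lem:qwozencraft}, a random pair of $(s-1)$-dimensional subspaces $(C_1,C_2)\subseteq\mathbb{F}_{q^r}^s$ with $C_1^\perp\subseteq C_2$ yields an $\mathbb{F}_q$-CSS inner code $Q_{\text{in}}$ of rate $1-2/s\geq 1-\gamma/4$ and relative distance $\Delta_{\text{in}}\geq\gamma/20$ with positive probability. Since $n_{\text{in}}$ and $q$ are both constants in $\gamma$ (independent of $n$), I would find such an inner code by brute-force search in time $O(1)$ with respect to $n$.

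Next, let $Q_{\text{out}}$ be the quantum Reed-Solomon code (\Cref{def:QRS}) of rate $R_{\text{out}}=1-\gamma/4$ and relative distance $\Delta_{\text{out}}=\gamma/8$ over $\mathbb{F}_{q^{k_{\text{in}}}}$, where $k_{\text{in}}=r(s-2)=\dim_{\mathbb{F}_q} Q_{\text{in}}$. I would form the CSS concatenation $Q_\diamond=Q_{\text{out}}\diamond Q_{\text{in}}$ via \Cref{claim:concatcss} and then set $Q=\pi_G(Q_\diamond)$. The rate of $Q$ is $R_{\text{out}}R_{\text{in}}\geq(1-\gamma/4)^2\geq 1-\gamma$, the alphabet size is $q^{n_{\text{in}}}=2^{O(\log(1/\gamma)/\gamma^2)}\leq 2^{O(\gamma^{-3}\log(1/\gamma))}$, and by \Cref{prop:aelbasic} the relative distance is at least $\Delta_{\text{in}}-2\varepsilon_0\sqrt{\Delta_{\text{in}}/\Delta_{\text{out}}}=\Omega(\gamma)$. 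For decoding, I would apply $\pi_G^{-1}$, brute-force decode each inner CSS block up to radius $\Delta_{\text{in}}/2$ (which takes $q^{O(n_{\text{in}})}=O(1)$ time per block), and then run Berlekamp-Welch on the two classical Reed-Solomon components of $Q_{\text{out}}$. The argument underlying \Cref{prop:aelbasic} with $\alpha_{\text{in}}=\Delta_{\text{in}}/2$ and $\alpha_{\text{out}}=\Delta_{\text{out}}/2$ guarantees that this corrects $\Omega(\gamma)$ fraction of errors. The runtime is dominated by the outer Berlekamp-Welch at $O(n_{\text{out}}^3)$ field operations, which translates to the claimed $O((n\log q)^3)$ bit complexity; explicitness follows from the explicitness of $Q_{\text{out}}$, $G$, and $\pi_G$ together with the $O(1)$-time brute-force construction of $Q_{\text{in}}$, and $n\geq q$ is satisfied by taking $n_{\text{out}}$ sufficiently large.

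The main delicate part will be the joint calibration of $s$, $r$, $q$, and $\varepsilon_0$: $s=\Theta(1/\gamma)$ must simultaneously yield inner rate $1-O(\gamma)$ and inner distance $\Omega(\gamma)$; $r=\Theta(1/\gamma)$ must be large enough that $n_{\text{in}}=rs$ supports an $\varepsilon_0$-pseudorandom expander with $\varepsilon_0\sqrt{\Delta_{\text{in}}/\Delta_{\text{out}}}$ safely below $\Delta_{\text{in}}/2$; and $q$ must be large enough that $H_q^{-1}$ is close enough to the identity for \Cref{lem:qwozencraft} to certify distance $\Omega(\gamma)$, yet small enough that the reblocked alphabet $q^{n_{\text{in}}}$ fits within $2^{O(\gamma^{-3}\log(1/\gamma))}$. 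Nothing else in the argument is likely to be subtle, since the AEL analysis and classical Reed-Solomon decoding are off-the-shelf at this point.
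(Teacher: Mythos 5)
There is a genuine gap: your construction cannot produce an infinite family. You fix $s,r,q$ (and hence $n_{\text{in}}$ and $k_{\text{in}}=r(s-2)$) as constants depending only on $\gamma$, and then set $q_{\text{out}}=q^{k_{\text{in}}}$. But $q_{\text{out}}$ is then also a constant, and a quantum Reed--Solomon code has block length at most its alphabet size, i.e.\ $n_{\text{out}}\leq q_{\text{out}}-1$. So $n=n_{\text{out}}n_{\text{in}}$ is bounded above by a constant, and ``taking $n_{\text{out}}$ sufficiently large'' is not an available move. This is not a calibration issue that can be patched by changing constants; it is structural to RS concatenation with a fixed-dimension inner code.

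The paper's proof handles exactly this tension, and it is the reason both the Wozencraft ensemble and the alphabet-reduction version of AEL appear. There, $q_{\text{in}}$ is constant but $r$ is chosen so that $n_{\text{out}}=q_{\text{out}}=q_{\text{in}}^{k_{\text{in}}}$ grows, which forces $n_{\text{in}}=rs=\Theta(\log n_{\text{out}})$ to grow with $n$. The Wozencraft ensemble is then not merely a convenient source of random CSS codes: its key feature is that it contains only $O(q_{\text{in}}^{k_{\text{in}}})^2=\poly(n_{\text{out}})$ members (each specified by a single parity-check vector up to scalars), so the brute-force search for a good inner code and the brute-force inner decoding both remain polynomial despite the inner block length growing. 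A generic random CSS code of block length $\Theta(\log n)$ would have superpolynomially many candidates, breaking explicitness and the $O(n\log q)^3$ decoding bound. Finally, because $n_{\text{in}}$ grows, one must reblock via \Cref{prop:aelred} with a fixed group size $r_{\text{AEL}}=\Theta(1/\gamma^3)$ rather than via \Cref{prop:aelbasic}, whose output alphabet $q_{\text{in}}^{n_{\text{in}}}$ would now grow with $n$. So while your high-level recipe (outer RS $+$ inner Wozencraft $+$ AEL) matches the paper's, the parameter regime is wrong in a way that collapses the infinite family, and you chose the wrong AEL variant for the regime the proof actually needs.
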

\begin{proof}
  We construct $Q$ by applying the AEL construction in \Cref{prop:aelred} to the concatenation of an outer Reed-Solomon code of rate $1-\gamma/2$ with an inner code of rate $1-\gamma/2$ from the ensemble in \Cref{lem:qwozencraft}. Specifically, let $Q_{\text{in}}=\text{CSS}_{\mathbb{F}_{q_{\text{in}}}}(C_1,C_2)$ be a $[[n_{\text{in}},k_{\text{in}},\Delta_{\text{in}}n_{\text{in}}]]_{q_{\text{in}}}$ code from the ensemble in \Cref{lem:qwozencraft} of rate $R_{\text{in}}\geq 1-\gamma/2$, local dimension $q_{\text{in}}=1/\gamma$, and relative distance $\Delta_{\text{in}}\geq H_{q_{\text{in}}}^{-1}(\gamma/8)\geq\Omega(\gamma)$. As described in \Cref{lem:qwozencraft}, each of $C_1,C_2$ is specified by a single parity check vector in $\mathbb{F}_{q_{\text{in}}^r}^s$ up to scalar multiplication by $\mathbb{F}_{q_{\text{in}}^r}$. Therefore there are at most $2q_{\text{in}}^{k_{\text{in}}}$ possible choices for each of $C_1,C_2$, so there are at most $(2q_{\text{in}}^{k_{\text{in}}})^2$ possible choices for $Q$. Computing the distance of each of $C_1,C_2$ by brute force takes $O(q_{\text{in}}^{k_{\text{in}}})$ time, so a brute force search to find $Q_{\text{in}}$ takes $O(q_{\text{in}}^{3k_{\text{in}}})$ time.

  Now let $Q_{\text{out}}$ be a $[[n_{\text{out}},k_{\text{out}}]]_{q_{\text{out}}}$ quantum Reed-Solomon code of rate $R_{\text{out}}=1-\gamma/2$, and block length and local dimension $n_{\text{out}}=q_{\text{out}}=q_{\text{in}}^{k_{\text{in}}}$. Then \Cref{prop:aelred} implies that we may choose $\gamma_0=\Theta(\gamma^{3/2})$, $r=4/\gamma_0^2$, and $G$ an $r$-regular $\gamma_0$-pseudorandom bipartite graph such that the $[[n,k]]_q$ code $Q=\Pi_G(Q_{\text{out}}\circ Q_{\text{in}})$ has local dimension $q=q_{\text{in}}^r=2^{O(\frac{1}{\gamma^3}\log\frac{1}{\gamma})}$ and is decodable from adversarial errors acting on $\Omega(\gamma)$ fraction of the components. Furthermore, the decoding algorithm simply decodes the inner codes and then decodes the outer code. Brute force decoding in all inner code blocks takes time $O(n_{\text{out}})^3\leq O(n^3)$, and decoding the Reed-Solomon outer code can be done in time $O(n_{\text{out}}\log q_{\text{out}})^3\leq O(n\log q)^3$. % TODO: may be able to optimize inner decodings down to $O(n^2)$. Bottleneck for RS decoding seems to be row reduction to convert syndrome to corrupted codeword on which to apply classical decoding algorithm, which requires cubic time; seems likely could be improved, should probably look into this and be more specific here regardless
  Thus the overall decoding algorithm takes $O(n\log q)^3$ time as desired.

  Furthermore, $Q$ is explicit as we showed above that the inner code $Q_{\text{in}}$ can be constructed by a brute force search in time $O(q_{\text{out}}^3)\leq O(n^3)$, and the quantum Reed-Solomon code $Q_{\text{out}}$ is explicit.
\end{proof}

\begin{proof}[Proof of \Cref{cor:outercodes}]
  \begin{enumerate}
  \item \label{it:RW} The result follows from \Cref{thm:uniqueAQECC} by letting $Q_0$ be a code of rate $1-\gamma/4$ and alphabet size $q_0=2^{O(\frac{1}{\gamma^3}\log\frac{1}{\gamma})}$ that is unique-decodable from errors acting on $\delta_{\text{out}}=\Omega(\gamma)$ fraction of the components in time $O(n_0\log q_0)^3$, as given in \Cref{lem:RSwoz}.
  \item The result follows form \Cref{thm:uniqueAQECC} by letting $Q_0$ be a family of asymptotically good quantum LDPC codes of local dimension $q_0=2$ that permits linear-time decoding. For instance, the codes of \cite{panteleev2022asymptotically,leverrier2022quantum} have such a decoding algorithm \cite{leverrier2022efficient}.
  \end{enumerate}
\end{proof}

As described in \Cref{subsec:RSS}, the private key of a private AQECC can be incorporated into the code using a classical robust secret sharing scheme. Specifically, \Cref{theorem:rss-construction} and \Cref{theorem:removing-privacy} imply that the statement of \Cref{thm:uniqueAQECC} still holds when we remove the condition that the AQECC is private.

\begin{corollary}
  Define $\gamma,R,\delta_{\text{out}},Q_0,f$ as in \Cref{thm:uniqueAQECC}, and assume that $f(n)=\poly(n)$ so that $Q_0$ permits efficient decoding. Then there exists an infinite explicit family of $((1-R-\gamma)/2,2^{-\Omega(\delta_{\text{out}}n)})$-AQECCs $Q$ of rate $R$, alphabet size $q=\max\{q_0,2^{1/\gamma^2\delta_{\text{out}}}\}^{O(1/\gamma)}$ that has efficient encoding and decoding algorithms.
\end{corollary}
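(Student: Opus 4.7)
The plan is to mirror the proof of \Cref{cor:aqeccmain}: apply \Cref{theorem:removing-privacy} to the private AQECCs of \Cref{thm:uniqueAQECC} using a classical robust secret sharing scheme from \Cref{theorem:rss-construction} to carry the private key into the quantum code. Efficiency is immediate because $f(n)=\poly(n)$ implies that the private AQECC from \Cref{thm:uniqueAQECC} admits efficient encoding and decoding, and \Cref{theorem:removing-privacy} is both rate- and efficiency-preserving: it only adds a classical RSS wrapper whose $\Share$ and $\Reconstruct$ run in polynomial time.

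Concretely, I would invoke \Cref{thm:uniqueAQECC} with rate $R'=R(1+\eta)$ and slack $\gamma'=\gamma/4$, for a constant $\eta=\Theta(\gamma)$ to be fixed momentarily, obtaining an efficient private $((1-R'-\gamma')/2,\,2^{-\Omega(\delta_{\text{out}}n)})$-AQECC of alphabet $q'=\max\{q_0,\,2^{1/\gamma^2\delta_{\text{out}}}\}^{O(1/\gamma)}$ with a classical key of length $\kappa=O(\gamma^2\,n\log q')$ bits. I then pick an RSS from \Cref{theorem:rss-construction} of rate $R''=\gamma$ and slack $\gamma''$ chosen small enough that the RSS alphabet $a=2^{\tilde{O}(1/\gamma''^{2})}$ satisfies $R''\,n\log a\geq\kappa$; this only requires $\log a=\Omega(\gamma\log q')$, and nothing in \Cref{theorem:rss-construction} prevents us from choosing $\gamma''$ accordingly. \Cref{theorem:removing-privacy} then glues these together into an efficient non-private AQECC with recovery error $2^{-\Omega(\delta_{\text{out}}n)}+2^{-n}=2^{-\Omega(\delta_{\text{out}}n)}$, rate $R'\log q'/\log(q'a)$, alphabet $q'a$, and decoding radius $\min\{(1-R'-\gamma')/2,\,(1-R''-\gamma'')/2\}$.

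Finally, I would verify parameters. With $\log a=\Theta(\gamma\log q')$, the combined log-alphabet is $(1+\Theta(\gamma))\log q'$, which still lies within $O(1/\gamma)\cdot\max\{\log q_0,\,1/\gamma^2\delta_{\text{out}}\}$ and hence matches the target bound. Choosing $\eta=\Theta(\gamma)$ absorbs the rate loss so that the combined rate is at least $R$, while assuming $R\ge\gamma/2$ as in the proof of \Cref{cor:aqeccmain} (the complementary case $R<\gamma/2$ is handled by simply constructing a code of larger rate $\gamma/2$) ensures both $R'+\gamma'\le R+\gamma$ and $R''+\gamma''\le R+\gamma$, keeping the decoding radius at $\ge(1-R-\gamma)/2$. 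The main bookkeeping obstacle is showing that the $\tilde{O}(\cdot)$ factors in $\log a$ are compatible with $\log a=\Theta(\gamma\log q')$ across both regimes (whether $q_0$ or $2^{1/\gamma^2\delta_{\text{out}}}$ dominates $q'$); this is routine since $\gamma''$ is a free parameter that can be tuned arbitrarily small to inflate $\log a$ as needed without affecting the RSS failure probability or decoding radius beyond the slack already allocated.
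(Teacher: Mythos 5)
Your proposal is essentially the same as the paper's proof: instantiate \Cref{theorem:removing-privacy} with the private AQECCs of \Cref{thm:uniqueAQECC} and the RSS scheme of \Cref{theorem:rss-construction}, then tune the rate/slack parameters so the private key fits in the RSS secret while keeping the rate $\geq R$, the decoding radius $\geq(1-R-\gamma)/2$, and the alphabet within $\max\{q_0,2^{1/\gamma^2\delta_{\text{out}}}\}^{O(1/\gamma)}$. The one place you diverge is in how you arrange for the RSS alphabet to be large enough to hold the key: you shrink $\gamma_{\mathrm{RSS}}$ to inflate the minimum share alphabet $a=2^{\tilde O(1/\gamma_{\mathrm{RSS}}^2)}$, whereas the paper fixes $\gamma_{\mathrm{RSS}}=\gamma/2$ and instead observes that the quantum Singleton bound forces $\delta_{\text{out}}\le\gamma/8$, hence $q_{\text{priv}}\geq 2^{\Omega(1/\gamma^4)}$, so the \emph{chosen} share size $q_{\text{priv}}^{\gamma/5}$ already exceeds the RSS construction's minimum $2^{\tilde O(1/\gamma^2)}$. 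Your route works but is slightly less direct; the paper's Singleton-bound observation avoids having to argue about what values of $\gamma_{\mathrm{RSS}}$ are achievable and makes the constant tracking cleaner.
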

\begin{proof}
  The result follows directly by applying the private AQECCs of \Cref{thm:uniqueAQECC} with the robust secret sharing schemes of \Cref{theorem:rss-construction} in \Cref{theorem:removing-privacy}. Specifically, as in the proof of \Cref{thm:uniqueAQECC}, we may assume that $\gamma/2\leq R\leq 1-\gamma$. Let $Q_{\text{priv}}$ be the private $((1-R_{\text{priv}}-\gamma_{\text{priv}})/2,2^{-\Omega(\delta_{\text{out}}n)})$-AQECC from \Cref{thm:uniqueAQECC} of rate $R_{\text{priv}}=R+\gamma/2$, block length $n$, local dimension $q_{\text{priv}}=\max\{q_0,2^{1/\gamma^2\delta_{\text{out}}}\}^{\Theta(1/\gamma)}$, and with $\gamma_{\text{priv}}=c\gamma$ for a sufficiently small constant $0<c<1/2$ such that $Q_{\text{priv}}$ has private key length at most $\gamma^2/100\cdot n\log q_{\text{priv}}$. Let $\RSS$ be the $((1-R_{\text{RSS}}-\gamma_{\text{RSS}})n/2,2^{-n})$-robust secret sharing scheme from \Cref{theorem:rss-construction} of rate $R_{\text{RSS}}=\gamma/2$, block length (number shares) $n$, share size $a_{\text{RSS}}=q_{\text{priv}}^{\gamma/5}$, and with $\gamma_{\text{RSS}}=\gamma/2$. Note that we may choose $a_{\text{RSS}}=q_{\text{priv}}^{\gamma/5}$ because the quantum Singleton bound implies that $\delta_{\text{out}}\leq\gamma/8$, so $q_{\text{priv}}\geq 2^{\Omega(1/\gamma^4)}$, and thus $q_{\text{priv}}^{\gamma/5}\geq 2^{\Omega(1/\gamma^3)}\geq 2^{\tilde{O}(1/\gamma^2)}$ is at least as large as the share size given by \Cref{theorem:rss-construction}. 
  
  Then applying \Cref{theorem:removing-privacy} with $Q_{\text{priv}}$ and $\RSS$ yields an AQECC $Q$ with the desired parameters. Specifically, because by construction the number of bits in the RSS message is $nR_{\text{RSS}}\log a_{\text{RSS}}=\gamma^2/10\cdot n\log q_{\text{priv}}$, which is at least as large as the private key length of $Q_{\text{priv}}$, we may indeed apply \Cref{theorem:removing-privacy} to obtain a $((1-R-\gamma)/2,2^{-\Omega(\delta_{\text{out}}n)})$-AQECC of rate $R_{\text{priv}}/(1+\gamma/5)\geq R$ and alphabet size $q=q_{\text{priv}}^{1+\gamma/5}\leq\max\{q_0,2^{1/\gamma^2\delta_{\text{out}}}\}^{O(1/\gamma)}$.
\end{proof}

Thus in particular we immediately obtain the following non-private analogue of \Cref{it:RW} of \Cref{cor:outercodes}, again by letting $Q_0$ be the code from \Cref{lem:RSwoz}.

\begin{corollary}\label{corollary:aqeccswoLR}
  For every $\gamma>0$ and $0<R<1$, there exists an infinite explicit family of $((1-R-\gamma)/2,2^{-\Omega(\gamma n)})$-AQECCs of rate $R$, alphabet size $q=2^{O(\frac{1}{\gamma^4}\log\frac{1}{\gamma})}$ and block length $n\geq q$ that has efficient encoding and decoding algorithms.
\end{corollary}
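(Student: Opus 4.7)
The plan is to simply instantiate the non-private AQECC corollary that appears immediately before this one, by choosing the outer code $Q_0$ to be the family of codes constructed in \Cref{lem:RSwoz}. That previous corollary takes as input any family of explicit $[[n_0,k_0]]_{q_0}$ codes of rate $1-\gamma/4$ that is efficiently unique-decodable from a constant fraction $\delta_{\text{out}}$ of errors, and produces an AQECC of rate $R$, alphabet size $q = \max\{q_0, 2^{1/\gamma^2\delta_{\text{out}}}\}^{O(1/\gamma)}$, and error $2^{-\Omega(\delta_{\text{out}} n)}$.

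First, I would invoke \Cref{lem:RSwoz}, which gives an explicit family $Q_0$ of rate $1-\gamma/4$, alphabet size $q_0 = 2^{O(\frac{1}{\gamma^3}\log\frac{1}{\gamma})}$, and decoding radius $\delta_{\text{out}} = \Omega(\gamma)$, with decoder running in time $O(n_0 \log q_0)^3 = \poly(n_0)$. This verifies the hypothesis $f(n) = \poly(n)$ of the previous corollary, so it applies.

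Next, I would plug these parameters into the conclusion of that corollary. The error bound $2^{-\Omega(\delta_{\text{out}} n)}$ becomes $2^{-\Omega(\gamma n)}$ since $\delta_{\text{out}} = \Omega(\gamma)$. For the alphabet size, observe that $2^{1/\gamma^2\delta_{\text{out}}} = 2^{O(1/\gamma^3)}$, which is dominated by $q_0 = 2^{O(\frac{1}{\gamma^3}\log\frac{1}{\gamma})}$. Therefore $\max\{q_0, 2^{1/\gamma^2\delta_{\text{out}}}\} = q_0$, and raising to the $O(1/\gamma)$ power yields $q = 2^{O(\frac{1}{\gamma^4}\log\frac{1}{\gamma})}$, matching the statement. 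The block-length condition $n \ge q$ is inherited from the Reed--Solomon--based outer code (which already satisfies $n_0 \geq q_0$) together with the fact that the AEL construction in the proof of \Cref{thm:uniqueAQECC} only multiplies the block length by a constant depending on $\gamma$.

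There is essentially no obstacle here: all the work has been done in \Cref{thm:uniqueAQECC}, the RSS machinery of \Cref{theorem:removing-privacy} and \Cref{theorem:rss-construction} (already combined into the preceding non-private corollary), and \Cref{lem:RSwoz}. The only thing to double-check is that the parameter arithmetic lines up, in particular that the $q_0$ coming from the Reed--Solomon/Wozencraft-style outer code is indeed the dominant contribution to the final alphabet size, and that efficient encoding/decoding is preserved by the RSS reduction --- both of which are immediate from the preceding statements.
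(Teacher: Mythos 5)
Your proposal is exactly the paper's proof: the paper obtains this corollary by instantiating the preceding non-private corollary with $Q_0$ from \Cref{lem:RSwoz}, mirroring item~\ref{it:RW} of \Cref{cor:outercodes}. Your parameter arithmetic (the $q_0 = 2^{O(\gamma^{-3}\log(1/\gamma))}$ term dominating $2^{O(\gamma^{-3})}$, and the resulting $2^{O(\gamma^{-4}\log(1/\gamma))}$ alphabet after the $O(1/\gamma)$ exponent) is correct and matches the intended reasoning.
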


\section*{Acknowledgements}

We thank Venkatesan Guruswami, Andrea Coladangelo, Omar Alrabiah, Debbie Leung, James Bartusek and Umesh Vazirani for fruitful discussions. We also thank Fermi Ma, Markus Grassl, and Andreas Winter for helpful comments on earlier drafts of the paper.

TB and LG acknowledge support by the National Science Foundation Graduate Research Fellowship under Grant No. DGE 2146752.

\newpage

\nocite{*}
\bibliographystyle{alphaurl}
\bibliography{references}

\newpage

\appendix

\newpage
\section{Properties of Random CSS Codes}
\label{section:randomCSSproperties}

In this section, we briefly discuss the list decodability and other properties of random CSS codes, which we apply in for the construction of concatenated codes. These properties follow from straightforward applications of their well known classical counterparts, and so for conciseness we simply refer to the necessary statements. That is, we construct random CSS codes from correlated random classical linear codes. Therefore, by a union bound, any property that holds with high probability for these classical codes also holds with high probability for the CSS code.

We generate a random $[[n, k]]_q$ CSS code as follows. First, sample  $k_1 = (n+k)/2$ random linearly independent vectors $g_1,\cdots g_{k_1}$ in $\mathbb{F}_q^n$. Let $G_1 = (g_1\cdots g_{k_1}) \in \mathbb{F}_q^{n\times k_1}$ be the matrix defined by the sampled vectors and $G_2 = (g_1\cdots g_{k_2})$ to be the first $k_2 = n-k_1 $ columns of $G_1$. We define $C_1$ to be the classical linear code defined by using $G_1$ as a generator matrix, and $C_2$ to be the classical linear code defined by using $G_2^T$ as the parity check matrix, such that $C_2^\perp\subset C_1$ by construction. We now consider the random Galois-Qudit CSS code $Q_R = $ CSS$(C_1, C_2)$, where $R$ is the rate $k/n$.

First, by assumption with high probability both $G_2, G_1$ are full rank. Thereby, we have $C_2, C_1$ are $[n, k_1]_q$ classical codes, and CSS$(C_1, C_2)$ is a $[[n, k]]_q$ QECC. A standard exercise in coding theory is to prove random generator matrices (and parity check matrices) of $k$ columns (resp. $n-k$ rows) have distances approaching the Gilbert-Varshamov bound of $n\cdot H_q^{-1}(1-\frac{k}{n})$ with exponentially high probability, where $H_q$ denotes the $q$-ary entropy function. A similar statement also holds for list-decodability, as shown in \cite[Theorem 5.3]{Guruswami2001ListDO} and stated below.

\begin{lemma}[Well known]
\label{lem:randomld}
Let $C\subseteq\mathbb{F}_q^n$ be a random classical linear code of rate $R$. Then for every $0<\delta<H_q^{-1}(1-R)$,
\begin{itemize}
    \item $C$ has relative distance $\geq\delta$ with probability $\geq 1-q^{-n(1-R-H_q(\delta))}$
    \item For every $\ell\geq 1$, $C$ is $(\delta,\ell)$ list-decodable with probability $\geq 1-q^{n(1-\log_q(\ell+1)(1-R-H_q(\delta)))}$.
\end{itemize}
\end{lemma}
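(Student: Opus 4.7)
The plan is to derive both bounds via standard first-moment (union bound) arguments for random linear codes over $\mathbb{F}_q$; these are classical results (see, e.g., \cite{Guruswami2001ListDO}) and the proof sketch below is included for completeness.

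For the distance bound, the plan is to note that for a uniformly random $[n,Rn]_q$ linear code $C$, any fixed nonzero $v \in \mathbb{F}_q^n$ lies in $C$ with probability $(q^{Rn}-1)/(q^n-1) \leq q^{-n(1-R)}$. Union bounding over the at most $|B_0(\delta n)| \leq q^{nH_q(\delta)}$ nonzero vectors of Hamming weight at most $\delta n$ (using the standard entropy bound on the ball volume) directly yields the stated failure probability of $q^{-n(1-R-H_q(\delta))}$.

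For the list-decodability bound, the plan is to union bound over ``bad centers'' $x \in \mathbb{F}_q^n$ (contributing a factor $q^n$) and bound, for each $x$, the probability that $|C \cap B_x(\delta n)| > \ell$. The key observation is that any set $B \subseteq \mathbb{F}_q^n$ with $|B| > \ell$ contains $r := \lceil \log_q(\ell+1) \rceil$ linearly independent vectors: a greedy procedure succeeds because after choosing $j-1 < r$ linearly independent elements, their span has size $q^{j-1} \leq \ell < |B|$, so some vector of $B$ must lie outside it. Next, for a random linear code of rate $R$, any fixed set of $r$ linearly independent vectors is contained in $C$ with probability at most $q^{-nr(1-R)}$, since it amounts to $r$ independent parity-check-style constraints. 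Union bounding over the at most $|B_x(\delta n)|^r \leq q^{nrH_q(\delta)}$ ordered $r$-tuples of distinct vectors in $B_x(\delta n)$ yields a per-center failure probability of at most $q^{-nr(1-R-H_q(\delta))}$, and combining with the $q^n$ union bound over centers gives overall failure probability at most $q^{n(1-r(1-R-H_q(\delta)))}$, which is at most the stated $q^{n(1-\log_q(\ell+1)(1-R-H_q(\delta)))}$ since $r \geq \log_q(\ell+1)$ and $1-R-H_q(\delta) > 0$ by assumption.

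The main subtle point is that codewords of a random linear code are correlated, so one cannot directly treat an $(\ell+1)$-subset of codewords as jointly independent uniform vectors in $\mathbb{F}_q^n$; the greedy extraction of a maximal linearly independent subset of size $\lceil \log_q(\ell+1) \rceil$ is precisely what lets us reduce to the independent-uniform case and apply the union bound cleanly.
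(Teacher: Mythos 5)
Your proof is correct. The paper does not actually give a proof of this lemma — it states it as ``well known'' and cites \cite[Theorem~5.3]{Guruswami2001ListDO} — and your argument (union bound over low-weight nonzero vectors for the distance claim; for list-decodability, greedily extracting $\lceil\log_q(\ell+1)\rceil$ linearly independent vectors from any $\ell+1$ codewords in a ball, then union bounding over centers and over tuples of linearly independent vectors) is exactly the standard Zyablov--Pinsker-style argument that appears in the cited reference.
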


In particular, \Cref{lem:randomld} implies that given $\epsilon>0$, we can choose $q=2^{O(1/\epsilon)}$, $\delta=1-R-O(\epsilon)$, and $\ell=q^{O(1/\epsilon)}=2^{O(1/\epsilon^2)}$, and the random linear code $C$ will be $(\delta,\ell)$ list-decodable and have relative distance $\geq\delta$ with probability $\geq 1-q^{-\Omega(\epsilon n)}$. Union bounding over the two classical codes making up a random CSS code, \Cref{lem:randomld} and \Cref{thm:CSS} immediately imply the following.

\begin{claim} \label{claim:randomcss}
    Let $Q$ be a random CSS code of rate $R$ and block length $n$ over $\mathbb{F}_q$. Then for every $0<\delta<H_q^{-1}((1-R)/2)$,
    \begin{itemize}
        \item $Q$ has relative distance $\geq\delta$ with probability $\geq 1-2\cdot q^{-n((1-R)/2-H_q(\delta))}$
        \item For every $\ell\geq 1$, $Q$ is $(\delta,\ell)$ list-decodable with probability $\geq 1-2\cdot q^{n(1-\log_q(\sqrt{\ell})((1-R)/2-H_q(\delta)))}$.
    \end{itemize}
\end{claim}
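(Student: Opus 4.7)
The plan is to reduce the claim about random CSS codes to Lemma~\ref{lem:randomld} applied to the two component classical codes. The key structural observation is that in the sampling procedure described immediately above the claim, each of $C_1$ and $C_2$ is \emph{marginally} a uniformly random linear code of dimension $k_1=(n+k)/2$, i.e.\ of rate $R'=(1+R)/2$, even though they are jointly correlated through $C_2^\perp\subset C_1$. For $C_1$ this is immediate, as it is the column span of the uniformly sampled full-rank matrix $G_1$. For $C_2$, its parity check matrix $G_2^T$ consists of the first $k_2=(n-k)/2$ columns of $G_1$, which are themselves a uniformly random $k_2$-tuple of linearly independent vectors in $\mathbb{F}_q^n$; hence $C_2$ is the kernel of a uniformly random full-rank $k_2\times n$ parity check matrix, which also gives the uniform distribution on $k_1$-dimensional subspaces. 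With rate $R'=(1+R)/2$, the quantity $1-R'$ appearing in Lemma~\ref{lem:randomld} equals $(1-R)/2$, which is exactly the exponent we want.

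For the distance bound, I would observe that the distance of $\mathrm{CSS}(C_1,C_2)$ is at least $\min(d(C_1),d(C_2))$, since any element of $N(Q)\setminus S(Q)$ corresponds to a nonzero codeword of either $C_1$ or $C_2$. Applying the first bullet of Lemma~\ref{lem:randomld} to each $C_i$ individually at rate $R'$ gives that each has relative distance $\geq\delta$ except with probability $q^{-n((1-R)/2-H_q(\delta))}$, and a union bound over $i\in\{1,2\}$ produces the claimed failure probability.

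For the list-decodability bound, I would invoke Theorem~\ref{thm:CSS}: if $C_1/C_2^\perp$ and $C_2/C_1^\perp$ are each $(\delta,L)$ list-decodable in the sense of Definition~\ref{def:cosetld}, then $Q=\mathrm{CSS}(C_1,C_2)$ is $(\delta,L^2)$ quantum list-decodable. Set $L=\lfloor\sqrt{\ell}\rfloor$, so $L^2\leq\ell$. Since $(\delta,L)$ list-decodability of $C_i$ trivially implies the same for the quotient $C_i/C_j^\perp$ (noted directly after Definition~\ref{def:cosetld}), applying the second bullet of Lemma~\ref{lem:randomld} to each $C_i$ shows that each is $(\delta,L)$ list-decodable except with probability
\[
q^{n(1-\log_q(L+1)((1-R)/2-H_q(\delta)))}\leq q^{n(1-\log_q(\sqrt{\ell})((1-R)/2-H_q(\delta)))},
\]
where the inequality uses $L+1\geq\sqrt{\ell}$ by choice of $L$. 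A union bound over the two classical codes yields the claimed bound.

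There is no deep obstacle here. The only subtle point to double-check is the marginal uniformity of $C_2$, which depends on the fact that the prefix of a uniform random tuple of linearly independent vectors is itself a uniform random tuple of linearly independent vectors of the appropriate length. Everything else is bookkeeping: unpacking the relation $R'=(1+R)/2$, invoking the two existing results, and taking a union bound.
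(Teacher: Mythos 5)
Your proposal is correct and matches the paper's approach exactly: the paper's (very terse) proof is precisely a union bound over $C_1,C_2$ combined with \Cref{lem:randomld} at rate $R'=(1+R)/2$ and \Cref{thm:CSS}. You simply fill in the details that the paper leaves implicit, namely the marginal uniformity of each $C_i$ as a rate-$(1+R)/2$ random linear code, the observation $d(Q)\geq\min(d(C_1),d(C_2))$, and the choice $L=\lfloor\sqrt{\ell}\rfloor$ so that $L+1\geq\sqrt{\ell}$ and $L^2\leq\ell$.
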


Similarly as in the classical case, given $\epsilon>0$, we can choose $q=2^{O(1/\epsilon)}$, $\delta=(1-R-O(\epsilon))/2$, and $\ell=q^{O(1/\epsilon)}=2^{O(1/\epsilon^2)}$, and the random CSS code $Q$ will be $(\delta,\ell)$ list-decodable and have relative distance $\geq\delta$ with probability $\geq 1-q^{-\Omega(\epsilon n)}$.

%\begin{claim}
    %Fix $R,\epsilon \in (0,1)$. $C_1$, $C_2$ as defined above are $(\frac{1}{2}(1-R)-\epsilon, 2^{O(1/\epsilon)})$-LDE (list decodable from erasures) with probability $1-q^{-\Omega(n)}$. 
%\end{claim}

%\begin{proof}
%     \cite{Guruswami2003ListDF} proved that random binary linear codes, generated by random generator matrices, were highly list decodable from erasures at low rates. Implicit in their techniques (namely, the proof lemma 3 of \cite{Guruswami2003ListDF}) is that a random $q$-ary code spanned by $k = r z$ random vectors of length $n$ is $(1-\frac{k}{n}-\epsilon, 2^{O(1/\epsilon)})$-LDE with probability $1-q^{-\Omega(n)}$. From [Theorem 3.3, \cite{Ding2014ErasureLC}], we know the corresponding result for a random parity check code with $n-k$ random $q$-ary checks is $(1-\frac{k}{n}-\epsilon, 2^{O(1/\epsilon)})$-LDE with probability $1-q^{-\Omega(n)}$. Recall that $C_1$, $C_2$ have rate $(1+R)/2$ whp, and we conclude the proof.
%\end{proof}

%\input{a2-worst-to-random.tex}

\section{List-Recoverable Quantum Codes}
\label{sec:listrecoverablequantumcodes}

In this section, we discuss a quantum analog of a certain extension to list-decoding, the problem of list-recovery. While, at a high level, list-decoding requests the codewords of a certain code $C$ around a given point, the problem of list-recovery requests the codewords $c\in C$ whose individual symbols $c_1, c_2...c_n$ come from a small set of `candidate' symbols:

\begin{definition}
\label{def:clr}
    A code $C\subset \Sigma^n$ is $(\eta, l, L)$-LR (list-recoverable) if $\forall S_1\cdots S_n\subset \Sigma$, with $|S_i|\leq l$, 
    \begin{equation}
        \bigg|\bigg\{ c\in  C: c_i\in S_i \text{ for at least }\eta \cdot n \text{ symbols }i\in [n]\bigg\}\bigg| \leq L 
    \end{equation}
\end{definition}

This would suggest that once one fixes sets of candidate symbols $S_1\cdots S_n$, there aren't many codewords $c\in C$ which agree with the subsets $S_i\subset \Sigma$ at most $i$. We emphasize that the special case $(\eta, 1, L)$-LR is equivalent to $(1-\eta, L)$-LD. This is since if $C$ is $(\eta, 1, L)$-LR, for any vector $y\in \Sigma^n$, one can pick $S_i = \{y_i\}$ (such that $|S_i|=1$) and the number of codewords $c\in C$ which agree with $y$ on at least $\eta$ locations is the number of codewords in a ball of radius $1-\eta$ around $y$. Finally, $C$ is said to be \textit{efficiently} list-recoverable if there exists an efficient algorithm to find these codewords.

Below, we discuss an analogous notion of list recoverability for the normalizer group of a stabilizer QECC. Informally, it corresponds to enumerating the logical operators of the quantum code whose single-qudit components are selected from candidate lists $\mathcal{E}_i\subset \mathcal{P}_q$ of single-qudit operators. 

\begin{definition}
    A $[[n, k]]_q$ stabilizer code $Q$ with stabilizer group $\mathcal{S}(Q)$ is $(\eta, l, L)$-QLR (Quantum List Recoverable) if, for all subsets $\mathcal{E}_1, \cdots, \mathcal{E}_n \subset \mathcal{P}_q$ of single-qudit operators with $|\mathcal{E}_i|\leq l$, any list $L(\{\mathcal{E}\})$ of logically-distinct operators in the normalizer group $N(Q)$ where
    \begin{equation}
        L(\{\mathcal{E}\}) = \bigg\{ A = \otimes_i A_i \in N(Q): A_i \in \mathcal{E}_i \text{ for at least } \eta\cdot n \text{ symbols }i\bigg\}, 
    \end{equation}

    \noindent has size bounded by $|L(\{\mathcal{E}\})|\leq L$.
\end{definition}

This means that there don't exist many logical operators which match the $\mathcal{E}_i$ at most locations $i \in [n]$. We expect this to naturally reduces to the QLD definition when $l= 1$. Indeed,

\begin{claim}
    If a stabilizer QECC $Q$ is $(\eta, 1, L)$-QLR, then it is $(1-\eta, L)$-QLD.
\end{claim}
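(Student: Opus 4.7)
The plan is to reduce QLD for $Q$ to QLR by translating a set of syndrome-consistent Pauli errors into a set of normalizer elements, via multiplication by a single fixed reference error. Concretely, fix a syndrome $s$, and suppose $E_1, \dots, E_M \in \mathcal{P}_q^n$ are stabilizer-distinct operators of weight at most $(1-\eta) n$ each having syndrome $s$; the goal is to show $M \le L$. Pick any operator $F \in \mathcal{P}_q^n$ with syndrome $s$ (for instance $F = E_1$). By additivity of the syndrome, each $E_j F^\dagger$ has zero syndrome, hence lies in $N(Q)$. Moreover, since multiplication by the fixed $F^\dagger$ is a bijection on cosets of $S(Q)$, the operators $\{E_j F^\dagger\}_{j=1}^M$ remain pairwise stabilizer-distinct, and as elements of $N(Q)$ they therefore represent $M$ distinct cosets in $N(Q)/S(Q)$, i.e.\ $M$ logically-distinct normalizer elements.

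Next, I would invoke the QLR property with the singleton candidate sets $\mathcal{E}_i = \{(F^\dagger)_i\} \subset \mathcal{P}_q$ at each coordinate $i \in [n]$, which satisfy $|\mathcal{E}_i| \le 1$ as required. Since each $E_j$ has weight at most $(1-\eta) n$, it acts as the identity on at least $\eta n$ coordinates, and at every such coordinate $i$ the tensor factor $(E_j F^\dagger)_i$ equals $(F^\dagger)_i$, which is the unique element of $\mathcal{E}_i$. Hence each $E_j F^\dagger$ agrees with the prescribed single-qudit sets on at least $\eta n$ positions, so all $M$ of these logically-distinct normalizer elements belong to the list $L(\{\mathcal{E}\})$ appearing in the QLR definition. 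The hypothesis that $Q$ is $(\eta, 1, L)$-QLR then immediately yields $M \le L$, which is exactly the $(1-\eta, L)$-QLD property.

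The argument is a direct unpacking of the two definitions, with no substantive obstacle anticipated. The only point requiring care is the translation between "stabilizer-distinct operators with a given syndrome" (QLD) and "logically-distinct elements of $N(Q)$" (QLR), which is handled cleanly by the bijection $E \mapsto E F^\dagger$ together with the fact that logical equivalence in $N(Q)$ coincides with equivalence modulo $S(Q)$.
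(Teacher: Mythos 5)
Your proof is correct and follows essentially the same route as the paper's: both reduce to QLR by right-multiplying the syndrome-$s$ errors by a fixed reference Pauli (you use $F^\dagger$ with $F=E_1$; the paper writes $A=F^\dagger E$ for a fixed $E$), thereby obtaining logically-distinct normalizer elements that agree with the singleton candidate sets on the $\geq \eta n$ coordinates where the low-weight error acts trivially. The only cosmetic difference is that you spell out the bijection on $S(Q)$-cosets and track the stabilizer-distinctness explicitly, which the paper leaves implicit.
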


\begin{proof}
    Fix $E\in \mathcal{P}_q^n$. Recall the set of operators $F\in \mathcal{P}_{q}^n$ such that $F^\dagger E\in N(Q)$ is a logical operator, is equivalent to the set $E A^\dagger$ for $A\in N(Q)$. Thus $F$ has weight $\leq \delta\cdot n$ if $A_i = E_i$ for at least $1-\delta$ locations. Let $\mathcal{E}_i = \{E_i\}$. Since $Q$ is $(\eta, 1, L)$-QLR, there are at most $L$ such choices of $A$.
\end{proof}

We note that the characterization above of quantum list-recovery can be cast in terms of an analogous classical question about syndromes of the stabilizer code.

\begin{claim}
    Let $Q$ be $(\eta, l, L)$-QLR, let $E\in \mathcal{P}_q^n$ be any error on $Q$ of syndrome $s$, and consider any set of subsets $\mathcal{E}_1, \cdots, \mathcal{E}_n \subset \mathcal{P}_q$ of single-qudit operators with $|\mathcal{E}_i|\leq l$. Then any set $\mathcal{L}_s(\{\mathcal{E}\})\subset \mathcal{P}_q^n$ of logically-distinct operators of syndrome $s$ and which agree with the $\mathcal{E}_i$ on at least $\eta\cdot n$ qudits $i$, has size bounded by $|\mathcal{L}_s(\{\mathcal{E}\})|\leq L$.
\end{claim}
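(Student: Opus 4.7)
The plan is to reduce the syndrome-based statement to the normalizer-based definition of QLR via a simple multiplication-by-$E$ bijection. Concretely, fix the error $E$ of syndrome $s$, and for any $F\in\mathcal{P}_q^n$ of syndrome $s$, observe that $EF^\dagger$ has syndrome $s_E-s_F=0$, so $N_F := EF^\dagger \in N(Q)$. Conversely, for any $N\in N(Q)$, $N^\dagger E$ has syndrome $s$. Thus $F\mapsto N_F$ is a bijection between operators of syndrome $s$ and the normalizer group $N(Q)$.

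Next I would check that this bijection respects the two relevant structural features. First, stabilizer equivalence: for any $F,F'$ of syndrome $s$, the product $N_F N_{F'}^\dagger = E F^\dagger F' E^\dagger$ equals $F^\dagger F'$ up to a global phase (since two Paulis commute up to a phase), and since $S(Q)$ contains only phase-free elements, $N_F$ and $N_{F'}$ are stabilizer-equivalent iff $F$ and $F'$ are. Second, the locality of the lists: define $\mathcal{E}_i' = \{E_i P^\dagger : P\in \mathcal{E}_i\}\subset \mathcal{P}_q$, which satisfies $|\mathcal{E}_i'|\leq l$. Then the $i$th tensor factor of $N_F$ is $E_i F_i^\dagger$, which lies in $\mathcal{E}_i'$ iff $F_i\in \mathcal{E}_i$. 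Hence $F$ agrees with $\mathcal{E}_i$ on at least $\eta\cdot n$ coordinates iff $N_F$ agrees with $\mathcal{E}_i'$ on at least $\eta\cdot n$ coordinates.

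Putting these together, if $\mathcal{L}_s(\{\mathcal{E}\})$ is a collection of pairwise logically-distinct operators of syndrome $s$ agreeing with $\mathcal{E}_i$ on at least $\eta\cdot n$ locations, then $\{N_F : F\in \mathcal{L}_s(\{\mathcal{E}\})\}$ is a collection of pairwise logically-distinct elements of $N(Q)$ agreeing with the local sets $\mathcal{E}_1',\ldots,\mathcal{E}_n'$ (each of size $\leq l$) on at least $\eta\cdot n$ coordinates. Applying the $(\eta,l,L)$-QLR property of $Q$ to the sets $\{\mathcal{E}_i'\}$ bounds the size of this collection by $L$, and the bijection transfers this bound back to $|\mathcal{L}_s(\{\mathcal{E}\})|\leq L$, as desired.

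There is no real obstacle here; the only subtlety is the phase bookkeeping when passing between $F$-equivalence and $N_F$-equivalence, which is handled by the observation that Paulis commute up to a scalar in the center of $\mathcal{P}_q^n$ and that $S(Q)$ intersects this center trivially.
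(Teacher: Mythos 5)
Your approach is the same as the paper's: reduce the syndrome-based statement to the normalizer-based QLR definition by a shift-by-$E$ bijection on the Pauli group, together with a corresponding shift of the local candidate sets $\mathcal{E}_i$. The only difference is the direction of the shift, and that difference matters more than you suggest.

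The paper maps $F\mapsto E^\dagger F$ (and sets $\mathcal{E}_i' = (E^\dagger)_i\,\mathcal{E}_i$); you map $F\mapsto EF^\dagger$. The paper's choice is cleaner for a concrete reason: one checks directly that $(E^\dagger F')^\dagger(E^\dagger F) = F'^\dagger F$ \emph{exactly}, so $E^\dagger F$ and $E^\dagger F'$ are stabilizer-equivalent if and only if $F$ and $F'$ are, with no phase bookkeeping whatsoever. With your map one instead gets $N_{F'}^\dagger N_F = F'F^\dagger$, and relating $F'F^\dagger$ to $F'^\dagger F$ (the quantity that controls stabilizer-equivalence of $F$ and $F'$) requires commuting Paulis past each other, which produces a global phase $\omega^\alpha$. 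Your stated resolution --- that ``Paulis commute up to a scalar in the center of $\mathcal{P}_q^n$ and $S(Q)$ intersects this center trivially'' --- does not absorb this phase; it is precisely the observation that shows the phase \emph{cannot} be absorbed into $S(Q)$. Concretely, if $F'=SF$ with $S\in S(Q)$ anticommuting with $F$, then $N_F$ and $N_{F'}$ are stabilizer-equivalent under your map, but $F'^\dagger F = -S^\dagger\notin S(Q)$, so $F$ and $F'$ are stabilizer-distinct under the paper's strict definition ($O=O'S$ with $S\in S(Q)$). This means your ``iff'' claim on stabilizer-equivalence fails in both directions when read strictly.

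That said, if one quotients out global phases (which is the physically meaningful convention for logical operators and is arguably what ``logically-distinct'' should mean), your argument goes through, and the two maps become interchangeable. So this is a bookkeeping issue rather than a conceptual gap, but the fix is free: use $E^\dagger F$ as the paper does and the phase issue disappears entirely. The local-set transformation and the appeal to $(\eta,l,L)$-QLR are otherwise identical to the paper's proof.
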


\begin{proof}
    Given any such set $\mathcal{L}_s(\{\mathcal{E}\})$, consider the set $\mathcal{L} = \big\{ E^\dagger F: \text{ for } F\in \mathcal{L}_s(\{\mathcal{E}\})\big\}$ defined by shifting the elements of the former by $E$. Also, for each $i\in [n]$, `shift' each list of single qudit operators $\mathcal{E}_i' = (E^\dagger)_i\mathcal{E}_i$ by $E_i^\dagger$. Note that $|\mathcal{E}_i'|\leq l$, and $|\mathcal{L}_s(\{\mathcal{E}\})| = |\mathcal{L}|$. Since each $F\in \mathcal{L}_s(\{\mathcal{E}\})$ has the same syndrome as $E$, $E^\dagger F\in N(Q)$ and thus the elements of $\mathcal{L}$ are logically-distinct operators in $N(Q)$. Moreover, $(E^\dagger F)_i \in \mathcal{E}_i'$ for most $i$. By the definition of QLR, we must have $|\mathcal{L}|\leq L$, which concludes the proof. 
\end{proof}

Much like the case for quantum list decoding, we show that one can devise quantum codes with good list recovery properties from the CSS construction of two classically list recoverable codes. 

\begin{claim}
    If $C_1, C_2\subset (\mathbb{F}_q^m)^n$ are two $\mathbb{F}_q$-linear classical codes with $C_2^\perp \subset C_1$ and $C_1, C_2$ are $(\eta, l, L)$-LR, then the Galois-qudit CSS$(C_1, C_2)$ is $(\eta, l,L^2)$-QLR.
\end{claim}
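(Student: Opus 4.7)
The plan is to mirror the strategy from \Cref{thm:CSS}, translating classical list-recovery of the two constituent CSS codes into quantum list-recovery of the CSS code, with the twist that we now project the quantum candidate sets $\mathcal{E}_i$ down to $X$-type and $Z$-type candidate sets on the classical side. The starting observation is structural: since $\text{CSS}(C_1,C_2)$ has normalizer generated by $\{E_{\mathbf{a},\mathbf{b}} : a\in C_1,\, b\in C_2\}$, and two such operators are stabilizer-equivalent exactly when $a-a'\in C_2^\perp$ and $b-b'\in C_1^\perp$, the logically-distinct normalizer elements are in bijection with pairs of cosets in $(C_1/C_2^\perp)\times(C_2/C_1^\perp)$.

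Next I would define, for each site $i\in[n]$, the classical projections
\begin{equation*}
S_i^X = \{a\in\mathbb{F}_q^m : \exists\, b\text{ such that }E_{a,b}\in\mathcal{E}_i\},\qquad S_i^Z = \{b\in\mathbb{F}_q^m : \exists\, a\text{ such that }E_{a,b}\in\mathcal{E}_i\}.
\end{equation*}
Each has size at most $|\mathcal{E}_i|\le l$, because each single-qudit operator $E_{a,b}$ contributes exactly one $a$ and one $b$. Now suppose $E_{\mathbf{a},\mathbf{b}}\in N(Q)$ satisfies $E_{a_i,b_i}\in\mathcal{E}_i$ for at least $\eta n$ sites. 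Then at those sites we automatically have $a_i\in S_i^X$ and $b_i\in S_i^Z$. Applying the $(\eta,l,L)$ list-recovery of $C_1$ to the candidate family $\{S_i^X\}$ bounds the number of codewords $a\in C_1$ meeting this constraint by $L$, and similarly for $b\in C_2$ via $\{S_i^Z\}$.

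To finish I would pass from codewords to cosets (bounding the number of logically-distinct pairs by $L\cdot L = L^2$), which is immediate since the number of cosets of a subgroup meeting a set is at most the number of elements of that set. Enumerating the pairs $(a+C_2^\perp,\, b+C_1^\perp)$ and invoking the bijection above yields the desired bound on the size of any list of logically-distinct normalizer elements.

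The only subtlety, and it is very minor, is verifying that the projections really do have $|S_i^X|,|S_i^Z|\le l$ and that the $\mathbb{F}_q$-linearity of $C_1, C_2$ (as opposed to $\mathbb{F}_{q^m}$-linearity) is compatible with treating each $a_i,b_i$ as an element of $\mathbb{F}_q^m$; this is exactly the setup of \Cref{theorem:vectorspacecss} and parallels the handling in the proof of \Cref{thm:CSS}, so no new ideas are needed. I do not expect to need any efficient-decoding claim here, as the statement is purely combinatorial.
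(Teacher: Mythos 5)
Your proof is correct and follows essentially the same route as the paper's: decompose each candidate set $\mathcal{E}_i$ into classical $X$-symbol and $Z$-symbol projections $S_i^X,S_i^Z$, note that any normalizer element $E_{\mathbf{a},\mathbf{b}}$ agreeing with the $\mathcal{E}_i$ at $\ge\eta n$ sites forces $a_i\in S_i^X$, $b_i\in S_i^Z$ there, apply the $(\eta,l,L)$ list-recovery of $C_1$ and $C_2$ separately to get at most $L$ valid choices for each of $a,b$, and conclude $L^2$ pairs. Your extra remark about passing from codewords to cosets in $(C_1/C_2^\perp)\times(C_2/C_1^\perp)$ is a harmless explicit restatement of what the paper leaves implicit ("the set of valid elements $A\in N(S)$ is at most the number of pairs of valid elements"), but the argument is the same.
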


\begin{proof}
    The elements in the normalizer of CSS$(C_1, C_2)$ are the operators $E_{\textbf{a}, \textbf{b}}$, $a\in C_1, b\in C_2$. Now, fix arbitrary sets $\mathcal{E}_1, \cdots, \mathcal{E}_n \subset \mathcal{P}_q^m$ of size $\leq l$. We can write each single qudit operator $E_{ij}$ in the $i$th set $\mathcal{E}_i = \{E_{ij}\}_{j\in [l]}$ into a Pauli basis: $E_{ij} = E_{a_{ij}, b_{ij}}, a_{ij}, b_{ij}\in (\mathbb{F}_q^m)$. We can thereby define sets of `X' symbols $S^X_i=  \{a_{ij}\}_{j\in [l]}$, and `Z' symbols $S^Z_i=  \{b_{ij}\}_{j\in [l]}$, for each location $i$.
    
    If $A = E_{\textbf{a}, \textbf{b}}\in N(S)$, and $A_i\in \mathcal{E}_i$ for most $i$, then we have that $a_i \in S^X_i$ and $b_i \in S^Z_i$ for most $i$. However, $a\in C_1$ is $(\eta, l, L)$-LR, so there exist at most $L$ such valid codewords of $C_1$. Analogously, there exist at most $L$ valid choices of $b\in C_2$. The set of of valid elements $A\in N(S)$ is at most the number of pairs of valid elements.
\end{proof}

By casting this notion of list recovery as a classical problem, we inherit the efficiency of classical list decoding algorithms.

\begin{claim}
    If $C_1, C_2\subset (\mathbb{F}_q^m)^n$ with $C_2^\perp\subset C_1$ are efficiently $(\eta, l, L)$-LR, then given any syndrome $s$ of the quantum code $Q=$CSS$(C_1, C_2)$ and subsets $\mathcal{E}_1, \cdots, \mathcal{E}_n \subset \mathcal{P}_q$ of single-qudit operators with $|\mathcal{E}_i|\leq l$, there exists a polynomial time classical algorithm to find a list of logically-distinct operators $F$ of syndrome $s$ and $F_i\in \mathcal{E}_i$ for at least $\eta\cdot n$ symbols $i\in [n]$.
\end{claim}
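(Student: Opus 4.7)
The plan is to mirror the strategy used in the efficient version of \Cref{thm:CSS}, reducing the quantum list recovery task to two independent instances of classical list recovery, one for $C_1$ and one for $C_2$.

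First, I would find any single Pauli representative $E' = E_{\mathbf{e}_x, \mathbf{e}_z} \in \mathcal{P}_{q^m}^n$ whose syndrome on $Q$ equals the target syndrome $s$. Such an $E'$ can be computed in polynomial time by solving the defining linear system via Gaussian elimination, without any weight constraint. Any other Pauli $F = E_{\mathbf{a}', \mathbf{b}'}$ of syndrome $s$ must then satisfy $\mathbf{a}' - \mathbf{e}_x \in C_1$ and $\mathbf{b}' - \mathbf{e}_z \in C_2$, and two such Paulis are stabilizer-equivalent if and only if their $X$- and $Z$-parts differ by elements of $C_2^\perp$ and $C_1^\perp$ respectively.

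Next, for each coordinate $i \in [n]$, write $\mathcal{E}_i = \{E_{\alpha_{ij}, \beta_{ij}}\}_{j \in [l]}$ and define the marginal candidate sets
\begin{equation*}
S_i^X = \{\alpha_{ij} - e_{x,i}\}_{j \in [l]} \subseteq \mathbb{F}_q^m, \qquad S_i^Z = \{\beta_{ij} - e_{z,i}\}_{j \in [l]} \subseteq \mathbb{F}_q^m,
\end{equation*}
each of size $\leq l$. If $F_i \in \mathcal{E}_i$ at a coordinate $i$, then necessarily $a'_i - e_{x,i} \in S_i^X$ and $b'_i - e_{z,i} \in S_i^Z$. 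Hence any $F$ that agrees with $\mathcal{E}_i$ on $\geq \eta n$ positions gives rise to $\tilde{\mathbf{a}} = \mathbf{a}' - \mathbf{e}_x \in C_1$ with $\tilde{a}_i \in S_i^X$ on $\geq \eta n$ positions, and analogously for $\tilde{\mathbf{b}} = \mathbf{b}' - \mathbf{e}_z \in C_2$. I would then invoke the efficient classical $(\eta, l, L)$-list recovery algorithms for $C_1$ and $C_2$ on the inputs $(S_i^X)_i$ and $(S_i^Z)_i$ respectively, obtaining lists $\mathcal{L}_X, \mathcal{L}_Z$, each of size $\leq L$, of all such $\tilde{\mathbf{a}}$ and $\tilde{\mathbf{b}}$.

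Finally, I would form the $\leq L^2$ candidate operators $F = E_{\tilde{\mathbf{a}} + \mathbf{e}_x,\; \tilde{\mathbf{b}} + \mathbf{e}_z}$ for $(\tilde{\mathbf{a}}, \tilde{\mathbf{b}}) \in \mathcal{L}_X \times \mathcal{L}_Z$, discard those that fail the joint recovery condition $|\{i : F_i \in \mathcal{E}_i\}| \geq \eta n$ (checkable in $O(nl)$ time per candidate), and prune logically-equivalent duplicates by row-reducing pairwise differences against fixed generator matrices of $C_2^\perp$ and $C_1^\perp$. The resulting list is of size $\leq L^2$ and by construction contains a representative of every logically-distinct Pauli of syndrome $s$ compatible with $\{\mathcal{E}_i\}$ at $\geq \eta n$ coordinates.

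The one subtle point, which is also the reason the list size degrades from $L$ to $L^2$ (consistent with the non-algorithmic claim preceding this one), is that the per-coordinate joint condition $(a'_i, b'_i) \in \{(\alpha_{ij}, \beta_{ij})\}_j$ is strictly stronger than the pair of marginal conditions $a'_i \in S_i^X$ and $b'_i \in S_i^Z$; this is why the explicit filtering step is necessary. All remaining operations are standard linear-algebra routines over $\mathbb{F}_q$ that run in time polynomial in $n, m, \log q, l$, and $L$, so the total runtime is polynomial as required.
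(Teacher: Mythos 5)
Your proof mirrors the paper's argument closely and correctly: find one Pauli representative $E'=E_{\mathbf{e}_x,\mathbf{e}_z}$ with syndrome $s$ by Gaussian elimination, translate each $\mathcal{E}_i$ into marginal candidate sets $S_i^X,S_i^Z$ for the classical $X$- and $Z$-component codes, run the two efficient classical list-recovery algorithms on $C_1$ and $C_2$, and take the $\le L^2$ pairwise combinations recentered by $E'$. The only difference is that you spell out two cleanup steps the paper leaves implicit — post-filtering candidates against the \emph{joint} per-coordinate condition $F_i\in\mathcal{E}_i$ (correctly observing that the two marginal conditions don't automatically imply it, which is what forces the $L\to L^2$ blowup) and pruning stabilizer-equivalent duplicates by row reduction. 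These are valid and make the algorithm cleaner; they do not change the substance, so this is the same route as the paper, carried out with a bit more care.
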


\begin{proof}
     If $s=s_x, s_z$ are the syndrome measurements of the X and Z checks, respectively, then let $E' = E_{e_x, e_z}$ be any Pauli operator with the same syndrome. Given the sets of single qudit operators $\mathcal{E}_i= \{E_{a_{ij}, b_{ij}}\}_{j\in [l]}$, then consider the sets of $q$-ary symbols
     \begin{equation}
         S^X_i = \{(e_x)_i+a_{ij}\}_{j\in  [l]} \text{ and } S^Z_i = \{ (e_z)_i+b_{ij}\}_{j\in  [l]} \text{ for }i\in [n]
     \end{equation}

     We call the classical list decoding algorithm for $C_1$ on the $S^X_1, S^X_2\cdots $ and $C_2$ on the $S^Z_i$, obtaining lists of codewords $(c_{1, 1}, c_{1,  2}\cdots c_{1,  L})\subset C_1$ and $(c_{2, 1}, c_{2,  2}\cdots c_{2,  L})\subset C_2$ respectively. We note that each `correction' $c_{1, k}-e_x$, for $k\in [L]$, defines a Pauli operator $X^{e_x-c_{1, k}}$ with the same syndrome as $e_x$. Moreover, $(e_x-c_{1, k})_i$ agrees with $\{a_{ij}\}_{l\in [n]}$ in at least $\eta\cdot n$ locations $i$. Thus a subset of the $L^2$ operators $E_{\textbf{e}_x-\textbf{c}_{1, k}, \textbf{e}_z-\textbf{c}_{2, k'}}, k, k'\in [L]$ satisfy the claim.
\end{proof}

By combining the above with known results on the list-recoverability of folded Reed-Solomon codes (the below is a corollary of \Cref{thm:frslr} of \cite{Guruswami2008ExplicitCA} stated previously), 

\begin{theorem}[\cite{Guruswami2008ExplicitCA}]
    For every integer $l$, rates $R\in (0,1)$, $\gamma \in (0, R)$, and prime $p$, there exists $m$-folded RS codes (\cref{def:FRS}) over fields of characteristic $p$ of rate $R$ and block length $n$ which are $(R+\gamma,l, L)$-LR, where $m = O(\frac{\log l}{(1-R)\gamma^2})$, $L = O(n/\gamma^2)^{O(\gamma^{-1}\log l/R)}$, and alphabet size $(n/\gamma^2)^{O(m)}$.
\end{theorem}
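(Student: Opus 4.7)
The plan is to derive this statement as a corollary of Theorem~\ref{thm:frslr}, the general list-recovery bound for folded Reed--Solomon codes from Guruswami--Rudra, and then to sketch the interpolation-based proof of Theorem~\ref{thm:frslr} itself.

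For the corollary, the radius in Theorem~\ref{thm:frslr} is
\[
\eta \;=\; R\cdot\Big(1+\tfrac{s}{t}\Big)\cdot\frac{m}{m-s+1}\cdot\Big(\tfrac{l}{R}\Big)^{1/(s+1)},
\]
so to drive $\eta$ down to $R+\gamma$ I would choose the auxiliary parameters so that each of the three factors above is at most $1+O(\gamma/R)$. Concretely, I would set $s = \Theta(\gamma^{-1}\log(l/R))$ so that $(l/R)^{1/(s+1)}\le 1+O(\gamma/R)$; then $t=\Theta(Rs/\gamma)$ so that $1+s/t\le 1+O(\gamma/R)$; and finally $m = \Theta(s/((1-R)\gamma)) = O(\log l/((1-R)\gamma^2))$ so that the folding slack $m/(m-s+1)\le 1+O(\gamma/R)$. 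The side condition $(s+t)(l/R)^{1/(s+1)}<q$ from Theorem~\ref{thm:frslr} is then satisfied as long as the underlying field has size $q=(n/\gamma^2)^{\Omega(1)}$, so the resulting folded code lives on an alphabet of size $q^m=(n/\gamma^2)^{O(m)}$, and the list bound $L=(Nm)^{O(s)}$ with $Nm=n$ becomes $(n/\gamma^2)^{O(\gamma^{-1}\log l/R)}$ as stated.

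For Theorem~\ref{thm:frslr} itself I would follow the three-step interpolation approach of Guruswami--Rudra. \textbf{(i) Interpolation.} Given candidate lists $S_1,\ldots,S_N\subseteq\mathbb{F}_q^m$ of size at most $l$, find a nonzero polynomial $Q(X,Y_1,\ldots,Y_s)$ of $(1,k-1,\ldots,k-1)$-weighted degree at most $D$ vanishing on every sliding window $(\gamma^{(i-1)m+j-1},\,y_j,\ldots,y_{j+s-1})$ for $i\in[N]$, $j\in[m-s+1]$, and $(y_1,\ldots,y_m)\in S_i$. Comparing the $\Theta(D^{s+1}/k^s)$ available monomials with the $N(m-s+1)l^s$ linear vanishing constraints yields existence for $D=\Theta((Nk^s(m-s+1)l^s)^{1/(s+1)})$. \textbf{(ii) Root condition.} If a polynomial $f$ of degree $<k$ agrees with the lists on $\eta N$ folded positions, then $P(X):=Q(X,f(X),f(\gamma X),\ldots,f(\gamma^{s-1}X))$ has degree at most $D$ yet vanishes at $\eta N(m-s+1)$ evaluation points, so whenever $\eta$ exceeds the threshold set by Step~(i) we must have $P\equiv 0$; equating the two quantities produces the stated formula for $\eta$. \textbf{(iii) Root counting.} Exploit that $X\mapsto\gamma X$ acts as an $\mathbb{F}_q$-algebra automorphism: after passing to an appropriate linearized-polynomial presentation, the identity $P\equiv 0$ reduces to an $\mathbb{F}_q$-linear system in the coefficient vector of $f$ with at most $s-1$ free directions, bounding the number of solutions by $q^{s-1}\le(Nm)^{O(s)}$.

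The main obstacle is Step~(iii): a generic relation $Q(X,f(X),\ldots,f(\gamma^{s-1}X))\equiv 0$ is a nonlinear functional equation with no a priori polynomial-in-$N$ bound on its solution set. The entire advantage of folded Reed--Solomon codes over plain Reed--Solomon comes precisely from the fact that the multiplicative-group automorphism $X\mapsto\gamma X$ makes $P\equiv 0$ amenable to a linearization argument over an extension of $\mathbb{F}_q$, and getting this linearization to interact cleanly with the interpolation polynomial $Q$ is the technical heart of the Guruswami--Rudra proof.
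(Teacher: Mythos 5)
The paper offers no proof of this theorem --- it is stated as a black-box corollary of \Cref{thm:frslr}, itself cited from \cite{Guruswami2008ExplicitCA} --- so the substantive part of your proposal to compare is the parameter-setting derivation from \Cref{thm:frslr}, and that part is correct: taking $s=\Theta(\gamma^{-1}\log(l/R))$, $t=\Theta(Rs/\gamma)$, and $m=\Theta(s/((1-R)\gamma))=O(\log l/((1-R)\gamma^2))$ makes each of $(l/R)^{1/(s+1)}$, $(1+s/t)$, and $m/(m-s+1)$ lie within $1+O(\gamma/R)$ of $1$, so the agreement threshold $\eta$ drops to $R+O(\gamma)$, with the alphabet size $q^m=(n/\gamma^2)^{O(m)}$ and list size $(Nm)^{O(s)}=(n/\gamma^2)^{O(\gamma^{-1}\log l/R)}$ following directly.

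Your supplementary sketch of the interior proof of \Cref{thm:frslr} does contain one genuine error worth flagging. In the interpolation step, each sliding window $(i,j)$ contributes at most $|S_i|\le l$ vanishing constraints --- one for each candidate folded symbol $(y_1,\ldots,y_m)\in S_i$ --- so the total is $Nl(m-s+1)$, not $N(m-s+1)l^s$ as you wrote (your own description of the constraint set already reflects the correct count; the miscount appears only in the tally and the resulting formula for $D$). The $l^s$ count would arise if one had independent per-coordinate lists and took products over the $s$-window, but the list-recovery definition used here (\Cref{def:clr}) places the candidate list at the level of the folded alphabet $\Sigma=\mathbb{F}_q^m$. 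This is not a pedantic point: carrying $l^s$ through the balancing of monomials against constraints gives $\eta\gtrsim(kl/(N(m-s+1)))^{s/(s+1)}\approx(Rl)^{s/(s+1)}$, which \emph{grows} with $l$ instead of approaching $R$, and the derivation of the present corollary would collapse for any $l>1$. With the correct count, $l$ enters the threshold only as $l^{1/(s+1)}$ --- exactly the phenomenon that makes folded Reed--Solomon useful for list recovery --- and the rest of your outline, including the linearization of $Q(X,f(X),\ldots,f(\gamma^{s-1}X))\equiv 0$ via the $X\mapsto\gamma X$ automorphism in step (iii), is the right picture.
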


We arrive at an exact analog:

\begin{corollary}
For every integer $l$, rates $R\in (0,1)$, $\gamma \in (0, R)$, and prime $p$, there exists $m$-folded QRS codes (\cref{def:FQRS}) over fields of characteristic $p$ of rate $R$ and block length $n$ which are $(\frac{1+R}{2}+\gamma,l, L)$-QLR, where $m = O(\frac{\log l}{(1-R)\gamma^2})$, $L = O(n/\gamma^2)^{O(\gamma^{-1}\log l)}$, and local dimension $(n/\gamma^2)^{O(m)}$.
\end{corollary}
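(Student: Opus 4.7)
The plan is to combine three ingredients developed earlier in the section: (i) \Cref{claim:foldingcsscodes}, which identifies an $m$-folded CSS code with the CSS construction applied to $m$-folded classical components; (ii) the CSS-to-QLR transfer claim proved just above, which says that if the two constituent classical $\mathbb{F}_q$-linear codes are each $(\eta,l,L)$-LR, then the Galois-qudit CSS code they define is $(\eta,l,L^2)$-QLR; and (iii) the Guruswami--Rudra list-recoverability theorem for folded Reed--Solomon codes, in the form stated immediately preceding the corollary.

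First I would unfold the definition of the FQRS code at rate $R$ (\Cref{def:FQRS}): it is the $m$-folding of the Galois-qudit QGRS code of \Cref{def:QRS}, which in turn is the CSS code $\mathrm{CSS}(C_1,C_2)$ of two (Generalized) Reed--Solomon codes over $\mathbb{F}_q$ of the \emph{classical} rate $R_{\mathrm{cl}}=(1+R)/2$, chosen so that $C_2^\perp\subseteq C_1$. By \Cref{claim:foldingcsscodes} the $m$-folded FQRS code is then the vector-space CSS code (\Cref{theorem:vectorspacecss}) of the two $\mathbb{F}_q$-linear folded GRS codes $C_1^{(m)},C_2^{(m)}\subseteq(\mathbb{F}_q^m)^{n/m}$, each still of rate $(1+R)/2$.

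Next I would plug into the Guruswami--Rudra theorem stated just before the corollary with classical rate $R_{\mathrm{cl}}=(1+R)/2$ and agreement fraction $\eta=R_{\mathrm{cl}}+\gamma=(1+R)/2+\gamma$. That theorem, applied over $\mathbb{F}_q$ of characteristic $p$, yields that both folded GRS codes are $\bigl((1+R)/2+\gamma,\,l,\,L_{\mathrm{cl}}\bigr)$-LR provided
\begin{equation*}
m=O\!\Bigl(\tfrac{\log l}{(1-R)\gamma^{2}}\Bigr),\qquad L_{\mathrm{cl}}=(n/\gamma^{2})^{O(\gamma^{-1}\log l)},\qquad q=(n/\gamma^{2})^{O(m)},
\end{equation*}
where the $(1-R)$ in the denominator of $m$ arises because the relevant quantity in the Guruswami--Rudra theorem is $1-R_{\mathrm{cl}}=(1-R)/2$, and the $R$ in the denominator of the exponent of $L_{\mathrm{cl}}$ is absorbed into the $O(\cdot)$ since $R\in(0,1)$. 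Efficiency of classical list recovery, as well as the necessary $\mathbb{F}_q$-linearity for the CSS transfer, both come for free.

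Finally I would invoke the CSS-to-QLR claim proved earlier in this section, which upgrades the two classical $(\eta,l,L_{\mathrm{cl}})$-LR folded GRS components into an $(\eta,l,L_{\mathrm{cl}}^{2})$-QLR statement for the CSS code they define, i.e.\ for the FQRS code. Squaring $L_{\mathrm{cl}}$ only multiplies the exponent by $2$, which is absorbed into the $O(\cdot)$, giving the stated list size $L=(n/\gamma^{2})^{O(\gamma^{-1}\log l)}$ and local dimension $q^m=(n/\gamma^{2})^{O(m)}$. There is no serious obstacle here: all the technical work was done in the classical Guruswami--Rudra result and in the earlier CSS-to-QLR reduction; the only care needed is the bookkeeping of the rate translation $R\mapsto(1+R)/2$ and of the fact that folding a $\mathbb{F}_q$-linear code preserves $\mathbb{F}_q$-linearity, both of which are straightforward.
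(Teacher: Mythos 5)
Your proposal is correct and matches the approach the paper implicitly takes (the paper does not write out the proof, merely saying the corollary "is an exact analog"). You correctly identify the three ingredients: the FQRS code is the CSS code of two folded GRS codes of classical rate $R_{\mathrm{cl}}=(1+R)/2$ via \Cref{def:QRS} and \Cref{claim:foldingcsscodes}; the Guruswami--Rudra theorem then gives $((1+R)/2+\gamma,l,L_{\mathrm{cl}})$-list-recoverability for each classical component; and the CSS-to-QLR claim upgrades this to $((1+R)/2+\gamma,l,L_{\mathrm{cl}}^2)$-QLR. Your bookkeeping — that $1-R_{\mathrm{cl}}=(1-R)/2$ explains the $m$ bound, and that squaring $L_{\mathrm{cl}}$ is absorbed in the $O(\cdot)$ — is right. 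One small clarification worth making explicit: the disappearance of $R$ from the exponent of $L$ (compared to the classical theorem's $O(\gamma^{-1}\log l/R)$) is because the classical rate $R_{\mathrm{cl}}=(1+R)/2\geq 1/2$ is bounded away from zero, so $1/R_{\mathrm{cl}}\leq 2$ is a constant — this is a bit sharper than simply saying $R\in(0,1)$, which on its own would not bound $1/R$.
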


\subsection{List Decodable Codes from List Recovery and Distance Amplification}

In this section, we discuss an application of a classical technique to devise codes over smaller alphabets with better list-decoding properties, using the distance amplification techniques in \cite{AEL95}. To do so, we discuss a quantum adaptation of the approach in \cite{Guruswami2008ExplicitCA}, who concatenated Folded RS codes with list decodable inner codes, to reduce the large alphabet size of the RS construction, and approach list decoding capacity on constant size alphabets (with efficient decoding). The theorem below is the main result of this section, on the construction of list decodable quantum codes up to the Singleton bound from the concatenation and distance amplification of list recoverable and list decodable quantum codes.

\begin{theorem}\label{theorem:listrecoveryconstruction}
    Fix $R, \gamma >  0$. There exists a $[[N, RN, \geq N\cdot (1-R-\gamma)/2]]_d$ stabilizer code $Q$ which is $((1-R-\gamma)/2, N^{1/\gamma^{O(1)}})$-QLD on local dimension $d = 2^{O(1/\gamma^5)}$. Moreover, $Q$ can be list-decoded efficiently.
\end{theorem}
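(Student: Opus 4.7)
The plan is to combine the folded quantum Reed-Solomon outer code with a small random CSS inner code, and apply AEL amplification as in \Cref{sec:aelred}, but now using the quantum list-recovery property of the outer code (developed in this appendix) rather than the direct list-decoding approach taken in the proof of \Cref{thm:frsael}. This is a ``fully quantum'' analogue of the classical list-recovery-based construction of \cite{Guruswami2008ExplicitCA}.

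Concretely, first I would fix an $m$-folded quantum Reed-Solomon outer code $Q_{\text{out}}=\text{CSS}(C_1^{\text{out}},C_2^{\text{out}})$ of rate $R_{\text{out}}=1-\gamma/10$ that is $(R_{\text{out}}+O(\gamma),\ell,L_{\text{out}})$-QLR with $\ell=2^{O(1/\gamma^2)}$, $m=O(1/\gamma^4)$, and $L_{\text{out}}=n^{1/\gamma^{O(1)}}$, as given by the corollary at the end of \Cref{sec:listrecoverablequantumcodes}. Then I would take an inner CSS code $Q_{\text{in}}$ of rate $R_{\text{in}}=R+O(\gamma)$, relative distance $\delta_{\text{in}}\geq(1-R_{\text{in}}-O(\gamma))/2$, local dimension $q_{\text{in}}=2^{O(1/\gamma)}$, and block length $n_{\text{in}}=O(1/\gamma^4)$ chosen so that $q_{\text{in}}^{k_{\text{in}}}=q_{\text{out}}$; by \Cref{claim:randomcss}, a random CSS code is $(\delta_{\text{in}},\ell)$-QLD with positive probability, and the right code can be found by brute-force search in constant time for fixed $\gamma$. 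Setting $\varepsilon_0=\Theta(\gamma^2)$, let $G$ be an $r$-regular $\varepsilon_0$-pseudorandom bipartite graph with $r=4/\varepsilon_0^2=O(1/\gamma^4)$, and define $Q=\pi_G(Q_{\text{out}}\diamond Q_{\text{in}})$. The distance bound $(1-R-\gamma)/2$ follows from \Cref{prop:aelred}, and the alphabet size is $q_{\text{in}}^r=2^{O(1/\gamma^5)}$.

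For the list-decoding argument, given a corrupted codeword $x$ of $Q$ with error on at most $(1-R-\gamma)/2$ fraction of components, I would first apply $\pi_G^{-1}$ and then run the brute-force QLD algorithm of $Q_{\text{in}}$ on each of the $n_{\text{out}}$ inner blocks independently. By \Cref{prop:aelred} (applied with $\alpha_{\text{in}}=\delta_{\text{in}}$ and $\alpha_{\text{out}}=O(\gamma)$), for all but an $O(\gamma)$ fraction of the inner blocks the error falls within the inner list-decoding radius, so the inner decoder returns a list of at most $\ell$ stabilizer-equivalence classes of candidate correction operators --- equivalently, by \Cref{claim:concatcss}, a list of at most $\ell$ candidates in $C_1^{\text{in}}/(C_2^{\text{in}})^\perp$ (and independently in $C_2^{\text{in}}/(C_1^{\text{in}})^\perp$) for the $X$ and $Z$ parts respectively. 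Feeding these lists as the sets $\mathcal{E}_i$ into the outer quantum list-recovery algorithm (split along the $X$ and $Z$ bases exactly as in the proof of \Cref{thm:CSS}) returns at most $L_{\text{out}}^2=N^{1/\gamma^{O(1)}}$ logically-distinct candidate Pauli operators, each of which can then be pushed through $\pi_G$ to produce a correction for $Q$. Efficiency follows because each inner brute-force takes constant time and the outer list-recovery is polynomial time for fixed $\gamma$.

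The main obstacle will be verifying that the bookkeeping across the AEL permutation, the inner list-decoder, and the outer list-recovery all line up correctly at the level of stabilizer-equivalence classes rather than raw Pauli operators. In particular, the inner QLD algorithm outputs cosets of $S(Q_{\text{in}})$, and via the duality-preserving maps $\Enc_1^{\text{in}},\Enc_2^{\text{in}}$ from \Cref{claim:concatcss} these cosets must be re-identified with elements of the outer alphabet $\mathbb{F}_{q_{\text{out}}}$ before being handed to the outer list-recovery routine; similarly, the inner blocks for which the inner decoder fails must not be allowed to corrupt the outer list-recovery guarantee, which is why it is crucial that the QLR parameter $\eta$ of $Q_{\text{out}}$ exceeds $1-\alpha_{\text{out}}$. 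Once these identifications are made consistent across the $X$ and $Z$ bases, the argument reduces to the proof of \Cref{prop:aelred} combined with the classical QLR-based list-decoding of concatenated codes, and the stated parameters follow.
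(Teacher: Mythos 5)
Your proposal is correct and reaches the same construction as the paper: AEL amplification applied to a folded quantum Reed--Solomon outer code concatenated with a small random list-decodable CSS inner code, with the inner block lists fed into the outer code's list-recovery routine. The decoding radius, alphabet size, and list size all work out with your parameter choices (your $\varepsilon_0 = \Theta(\gamma^2)$ cancels the $\sqrt{\alpha_{\text{in}}/\alpha_{\text{out}}} = \Theta(\gamma^{-1/2})$ factor in \Cref{prop:aelred} to leave an $O(\gamma)$ loss in the radius, as needed).

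The one place where your route diverges from the paper's is in how the outer list-recovery is invoked. You reduce everything to the classical CSS picture early --- brute-forcing the inner $C_1^{\text{in}}/(C_2^{\text{in}})^\perp$ and $C_2^{\text{in}}/(C_1^{\text{in}})^\perp$ cosets, translating them to outer alphabet symbols via the duality-preserving maps of \Cref{claim:concatcss}, and running classical list recovery twice (once per basis) in the style of the proof of \Cref{thm:frsael}. The paper's proof of this theorem instead stays at the level of Pauli operators and syndromes throughout: it isolates a general distance-amplification lemma (\Cref{lemma:distampf}) for arbitrary QLR outer and QLD inner stabilizer codes, whose proof (\Cref{claim:innersyndrome}, \Cref{claim:outersyndrome}) works with the normalizer group of the concatenated code directly --- in particular, the ``shift by $F_1 = \bigotimes_i F_1^i$'' step in \Cref{claim:outersyndrome} is what makes the inner lists line up with the outer QLR interface without ever splitting into $X$ and $Z$. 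Both routes are sound; the paper's abstract QLR lemma is slightly more modular (it applies to any QLR stabilizer code, not only CSS), while your approach makes the connection to \Cref{thm:CSS} and to the efficient classical list-recovery algorithm of \Cref{thm:frslr} more explicit. You correctly flag that the $X$/$Z$ bookkeeping through $\Enc_1^{\text{in}},\Enc_2^{\text{in}}$ and the requirement $\eta > 1-\alpha_{\text{out}}$ are the points that require care; those are exactly the details that \Cref{claim:concatcss} and the parameter choice $\eta = 1-\gamma$, $\alpha_{\text{out}} = \gamma$ are there to handle.
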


\subsubsection{The Construction}

We build the quantum code in the theorem above using the concatenation $Q_\diamond = Q_{out}\diamond Q_{in}$ of two stabilizer codes $Q_{out}, Q_{in}$. Informally, the `outer' quantum code $Q_{out}$ is chosen to be a high rate, large alphabet, quantum list recoverable code, of size $n$ and local dimension $q^m$. In particular, a $[[n, (1-4\gamma) n]]_{q^m}$ QECC which is $(1-\gamma, l, L)$-QLR. In turn, the `inner' code $Q_{in}$ has rate $R$ near the target rate and is near list decoding capacity, i.e. with decoding radius approaching the quantum singleton bound, $(1-R)/2$. Formally, $Q_{in}$ is a $[[n', m]]_q$ QECC and $((1-R)/2 - 4\cdot \gamma, l)$-QLD, where $m=n'(R + 5\gamma)$. Their concatenation $Q_\diamond= Q_{out}\diamond Q_{in}$ is a code on $nn'$ qudits of local dimension $q$ and rate $(R + 5\gamma)(1-4\gamma) \geq R$

To conclude the construction, the physical qudits of the resulting concatenated code are permuted and re-grouped, using the AEL alphabet-reducing distance amplification techniques discussed in \cref{sec:ael}. Intuitively, as before this permutation is used to `spread out' errors on the inner code, such that any error on the final code doesn't corrupt most inner blocks beyond repair. Formally, we pick an $\gamma_0$-pseudorandom $D$-biregular bipartite graph $G$, with $\gamma_0 = \gamma^2$ and degree $D = 10/\gamma_0^2 = \text{poly}(1/\gamma)$, and use it to define a permutation $\pi_G$. The final code is defined by $\pi_G(Q_\diamond)$, and is defined on qudits of local dimension $q^D$. 

\subsubsection{Analysis}

We present a lemma on the distance amplification of generic list recoverable and list decodable quantum codes, and then instantiate the lemma with a particular choice of quantum codes $Q_{out},Q_{in}$ to prove \Cref{theorem:listrecoveryconstruction}. 

\begin{lemma} \label{lemma:distampf}
    Fix $R$ and sufficiently small $\gamma> 0$. Let $Q_{out}$ be a $[[n, (1-4\gamma) n]]_{q^m}$ QECC and $(1-\gamma, l, L)$-QLR, $Q_{in}$ be a $[[n', m = n'(R+5\gamma)]]_q$ QECC and $((1-R)/2 -4\gamma, l)$-QLD, and $G$ be a $\gamma^2$-pseudorandom bipartite expander graph of size $N=nn'/D$ and degree $D = O(1/\gamma^4)$. Then $\pi_G(Q_{out}\diamond Q_{in})$ is $((1-R)/2 - 8\cdot \gamma, L)$-QLD. 
\end{lemma}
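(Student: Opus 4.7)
The plan is to mirror the classical Guruswami--Rudra reduction of list-decoding a concatenated code to list-recovery of the outer code plus list-decoding of the inner code, with AEL redistribution ensuring that the fraction of ``bad'' inner blocks stays below the outer recovery threshold. All counting happens modulo the relevant stabilizer groups. Since $\pi_G$ is only a physical permutation of qudits, the syndrome of any $E \in \mathcal{P}_{q^D}^N$ on $\pi_G(Q_\diamond)$ agrees with the syndrome of $\pi_G^{-1}(E)$ on $Q_\diamond$, so I can work on $Q_\diamond$ throughout, aiming to bound the number of $S(Q_\diamond)$-distinct Paulis of weight at most $\tau N := ((1-R)/2 - 8\gamma) N$ compatible with a fixed syndrome by $L$.

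First I would apply Proposition~\ref{prop:aelred} with $\alpha_{\text{in}} = (1-R)/2 - 4\gamma$, $\alpha_{\text{out}} = \gamma$, and $\varepsilon_0 = \gamma^2$. A direct computation gives $3(\varepsilon_0/2 \cdot \sqrt{\alpha_{\text{in}}/\alpha_{\text{out}}})^{2/3} = 3(1/2)^{2/3}\gamma \cdot \alpha_{\text{in}}^{1/3} < 3\gamma$, which is smaller than the slack $\alpha_{\text{in}} - \tau = 4\gamma$. The proposition therefore produces a set $B \subseteq [n]$ of strictly fewer than $\gamma n$ ``bad'' outer blocks outside of which $\pi_G^{-1}(E)$ induces an error of weight $< \alpha_{\text{in}} n'$. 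On each good block $i \notin B$, invoking the $((1-R)/2 - 4\gamma, l)$-QLD of $Q_{\text{in}}$ against the measured inner syndrome yields a list $F_{i,1}, \dots, F_{i,l}$ of $S(Q_{\text{in}})$-distinct Paulis of weight $< \alpha_{\text{in}} n'$ consistent with that syndrome.

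Next, I would fix an arbitrary reference Pauli $E^{\text{ref}}$ sharing the total syndrome of $E$ on $Q_\diamond$; then any admissible $E'$ satisfies $E'(E^{\text{ref}})^\dagger \in N(Q_\diamond)$. On each good block $i$, this difference lies in $N(Q_{\text{in}})$ and so descends to a single outer-qudit Pauli in $N(Q_{\text{in}})/S(Q_{\text{in}}) \cong \mathcal{P}_{q^m}$; the $l$ candidates $F_{i,j}(E^{\text{ref}})^{(i)\dagger}$ descend to a set $\tilde{\mathcal{E}}_i \subseteq \mathcal{P}_{q^m}$ of size at most $l$, and I take $\tilde{\mathcal{E}}_i = \emptyset$ on bad blocks. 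By inner QLD, the outer operator $A \in N(Q_{\text{out}})$ associated to $E'(E^{\text{ref}})^\dagger$ satisfies $A_i \in \tilde{\mathcal{E}}_i$ on at least $(1-\gamma)n$ positions. The $(1-\gamma, l, L)$-QLR of $Q_{\text{out}}$ then bounds the number of $S(Q_{\text{out}})$-distinct such $A$ by $L$.

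The main obstacle, and the final step, is verifying that this outer-descent map from $S(Q_\diamond)$-equivalence classes of admissible $E'$ to $S(Q_{\text{out}})$-equivalence classes of induced $A$ is injective. This reduces to Fact~\ref{fact:concatstabilizers}: $S(Q_\diamond)$ is generated by $S(Q_{\text{in}})^{\otimes n}$ together with the $Q_{\text{in}}$-encodings of $S(Q_{\text{out}})$, so if two admissible $E'_1, E'_2$ yield the same outer $A$ modulo $S(Q_{\text{out}})$, then $E'_1 (E'_2)^\dagger$ acts on each block as an inner stabilizer composed with an element whose outer descent is an outer stabilizer, which precisely means it lies in $S(Q_\diamond)$. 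The parameter arithmetic in the AEL step and the two QLD/QLR invocations are routine; essentially all of the care goes into this coset correspondence.
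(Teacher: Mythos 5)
Your approach matches the paper's (which proves this lemma via Claims~C.2 and~C.3): apply AEL redistribution to bound the number of overloaded inner blocks, use inner QLD to get per-block candidate lists, descend through $N(Q_{\text{in}})/S(Q_{\text{in}})$ to outer-code symbols, invoke outer QLR, and finish with the coset-lifting argument via Fact~\ref{fact:concatstabilizers}. The AEL arithmetic you compute is correct, and your choice of a reference $E^{\text{ref}}$ sharing the \emph{full} syndrome of $E$ on $Q_\diamond$ is actually a mild simplification of the paper's: the paper uses the per-block list heads $F_1=\otimes_i F_1^i$ as a reference, which generally has a \emph{different} outer syndrome, forcing them to apply the syndrome-shifted version of QLR with syndrome $s^{\text{out}}-s_1^{\text{out}}$; your reference makes $E'(E^{\text{ref}})^\dagger\in N(Q_\diamond)$, so the descent lands directly in $N(Q_{\text{out}})$ and the bare QLR definition applies. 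Your injectivity/coset argument at the end is also sound.

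However, there is one genuine gap. You define the bad set $B$ via Proposition~\ref{prop:aelred} applied to the fixed operator $E$, and then set $\tilde{\mathcal{E}}_i=\emptyset$ for $i\in B$. But the lemma quantifies over all admissible $E'$ of bounded weight with the same syndrome, and each such $E'$ has its \emph{own} bad set $B'$ (of size $<\gamma n$), which in general differs from $B$. For a block $i\in B$ with $i\notin B'$, the descent $A_i$ of $E'_i(E^{\text{ref}})^{(i)\dagger}$ is a perfectly good inner-list descent, yet you have forced $\tilde{\mathcal{E}}_i=\emptyset$, so $A_i\notin\tilde{\mathcal{E}}_i$. The resulting agreement count is only guaranteed to be at least $n-|B|-|B'|\geq(1-2\gamma)n$, which does not meet the $(1-\gamma,l,L)$-QLR threshold. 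The fix is simple and is what the paper does: the inner QLD list $\mathcal{L}_{s_i}$ is well-defined for \emph{every} block $i$ (it depends only on $s_i$, not on whether $E$'s error is low-weight there), so you should build $\tilde{\mathcal{E}}_i$ from the descents of $F_{i,1},\dots,F_{i,l}$ for all $i\in[n]$. Then for any admissible $E'$ with bad set $B'$, the descent $A$ satisfies $A_i\in\tilde{\mathcal{E}}_i$ for every $i\notin B'$, giving an agreement count of at least $(1-\gamma)n$ as needed.
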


If $E_G=\pi_G(E) \in (\mathcal{P}_q^{D})^{nn'/D}$ is an operator on the code $\pi_G(Q_{\diamond})$, then we refer to $\pi_G^{-1}(E_G) = E\in  (\mathcal{P}_q^{n'})^{n}$ as the `unpermuted' operator $E$ acting on the concatenated code $Q_\diamond$. We emphasize here that we restrict our attention to a basis of operators which factorizes as a tensor product of operators on the individual qudits of local dimension $q$. Since $Q_\diamond$ is a concatenated code, recall that its syndrome measurement corresponds to the syndromes of the inner codes, as well as the syndrome measurement of the encoding of the stabilizers of the outer code, see \cref{fact:concatstabilizers}.

To prove the lemma above, it suffices to show that there aren't many low weight, logically-distinct, operators $F_G$ on $\pi_G(Q_{\diamond})$, which agree with both of the syndrome measurements. To do so, we divide the proof into two key steps. First, we measure the syndrome of the inner blocks, and argue that their list-decodability and the expansion of $G$ produces small lists of candidate `correction' operators acting on the inner blocks. Then, we measure the syndrome of the outer code, and use its list-recoverability to infer that there can't be many global operators matching the elements of these inner-block lists.

\begin{claim} \label{claim:innersyndrome}
    Let $s_1\cdots s_n$ be any syndrome measurements of the $n$ inner code blocks, and let each $\mathcal{L}_{s_i}$, for $i\in [n]$, correspond to a list of operators on $Q_{in}$ of relative weight $\leq (1-R)/2 - 4\cdot \gamma$ which match $s_i$ and are all distinct up to a stabilizer of $Q_{in}$. Let $O_G = \pi_G(O)$ be any operator of relative weight $\leq (1-R)/2 - 8\gamma$ on $\pi_G(Q_\diamond)$, where $O = \otimes_i O_i \in (\mathcal{P}_q^{n'})^n$, and $O_i \in \mathcal{P}_q^{n'}$ is supported on the $i$th inner block. Then, if $O$ has syndrome $s_i$ on the $i$th block for $i\in [n]$, 
    \begin{equation}
        O_i \text{ is stabilizer equivalent to some element  of } \mathcal{L}_{s_i} \text{ for } \geq 1-\gamma \text{ inner blocks }i\in [n]
    \end{equation}
\end{claim}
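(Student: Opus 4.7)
The plan is to reduce the statement to a direct application of the alphabet-reducing expander-permutation amplification lemma (Proposition 6.4), followed by an invocation of the list-decodability of the inner code $Q_{\text{in}}$. Intuitively, $\pi_G^{-1}$ spreads the relative-weight-$((1-R)/2-8\gamma)$ error $O_G$ back out into $O = \otimes_i O_i$, and the expansion property of $G$ guarantees that very few of the $n$ inner blocks $O_i$ can carry a fraction of errors above the inner list-decoding radius $(1-R)/2 - 4\gamma$.

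First, I would instantiate Proposition 6.4 with the parameters $\alpha_{\text{in}} = (1-R)/2 - 4\gamma$, $\alpha_{\text{out}} = \gamma$, $\varepsilon_0 = \gamma^2$, and error fraction $\alpha \leq (1-R)/2 - 8\gamma$. The hypothesis of that proposition requires $\alpha \leq \alpha_{\text{in}} - 3(\varepsilon_0/2 \cdot \sqrt{\alpha_{\text{in}}/\alpha_{\text{out}}})^{2/3}$. Using $\alpha_{\text{in}} \leq 1$ and $\alpha_{\text{out}} = \gamma$, the penalty term is at most $3(\gamma^2/2 \cdot \sqrt{1/\gamma})^{2/3} \leq 3\gamma$, so $\alpha_{\text{in}} - 3\gamma \geq (1-R)/2 - 7\gamma \geq (1-R)/2 - 8\gamma$, and the hypothesis holds for sufficiently small $\gamma$. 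Note that because each $q^D$-ary symbol of $\pi_G(Q_\diamond)$ packages $D$ underlying $q$-ary symbols, a relative-weight-$\alpha$ error on $O_G$ unpermutes to a $q$-ary Pauli of relative weight at most $\alpha$, which is the object to which the proposition speaks. Applying it then gives that strictly fewer than $\gamma n$ of the $n$ inner blocks $O_i$ have $\operatorname{wt}(O_i)/n' \geq (1-R)/2 - 4\gamma$; call the remaining at least $(1-\gamma)n$ blocks \emph{good}.

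Second, I would unpack the inner list-decodability on each good block. For any good $i$, the restriction $O_i$ is a Pauli on $Q_{\text{in}}$ of weight at most $((1-R)/2 - 4\gamma)n'$ which, by hypothesis of the claim, has syndrome exactly $s_i$. By the definition of $((1-R)/2 - 4\gamma, l)$-QLD applied to $Q_{\text{in}}$, the list $\mathcal{L}_{s_i}$ must contain a representative of every stabilizer-equivalence class of such operators; in particular some element of $\mathcal{L}_{s_i}$ is stabilizer-equivalent to $O_i$. Combining the two steps gives the conclusion that at least $(1-\gamma)n$ inner blocks $i$ satisfy the claim's stabilizer-equivalence condition.

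The main (and essentially only) nontrivial point is the parameter arithmetic in step one: one must check that the slack $8\gamma - 4\gamma = 4\gamma$ between the global error fraction $(1-R)/2 - 8\gamma$ and the per-block list-decoding radius $(1-R)/2 - 4\gamma$ exceeds the $(\varepsilon_0/2 \cdot \sqrt{\alpha_{\text{in}}/\alpha_{\text{out}}})^{2/3}$-type penalty that the alphabet-reducing AEL construction introduces, so that the quoted expander-permutation lemma can be invoked with $\alpha_{\text{out}} = \gamma$. Once that is verified, the claim follows immediately from the definitions of QLD and of the permuted concatenation $\pi_G(Q_{\text{out}} \diamond Q_{\text{in}})$.
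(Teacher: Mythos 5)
Your proof is correct and takes essentially the same route as the paper: invoke \Cref{prop:aelred} with $\alpha_{\text{in}}=(1-R)/2-4\gamma$, $\alpha_{\text{out}}=\gamma$, and $\varepsilon_0=\gamma^2$ to conclude that fewer than $\gamma n$ inner blocks are overloaded, then apply the $((1-R)/2-4\gamma,l)$-QLD property of $Q_{\text{in}}$ on each remaining block to place $O_i$ (up to a stabilizer of $Q_{\text{in}}$) in $\mathcal{L}_{s_i}$. The paper's proof states this in two lines without the parameter arithmetic; your verification that the $3(\varepsilon_0/2\cdot\sqrt{\alpha_{\text{in}}/\alpha_{\text{out}}})^{2/3}\leq 3\gamma$ penalty fits within the $4\gamma$ slack is exactly what justifies the invocation and is worth spelling out.
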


\begin{proof}
Let $O_G = \pi_G(O)$ be any operator of relative weight $\leq (1-R)/2 - 8\cdot \gamma$ on $\pi_G(Q_\diamond)$. From \cref{prop:aelred}, the expansion of $G$ tells us that at least $\geq 1-\gamma$ inner blocks have wt$(O_i) \leq n' \cdot ((1-R)/2 - 4\cdot \gamma)$. We refer to these inner blocks as `Good' blocks. Via the list decodability of $Q_{in}$, if $i$ is Good, then $O_i$ is equivalent up to a stabilizer of $Q_{in}$, to some element of $\mathcal{L}_{s_i}$. Moreover, $|\mathcal{L}_{s_i}|\leq l$.
\end{proof}

Now, let us measure the syndrome $s^{out}$ of the outer code, by means of the encoded stabilizers of $Q_{out}$. Please refer to \Cref{fact:concatstabilizers} for a description of the stabilizers of concatenated codes. From   \Cref{claim:innersyndrome} it suffices to argue that there aren't many operators $O$ on the concatenated code with outer syndrome $s^{out}$ and where $O_i\in \mathcal{L}_{s_i}$ for most $i$. This resembles a statement about list-recoverability, but about operators on the concatenated code, not the outer code. The claim below formalizes how to reduce that statement to the list recoverability of the outer code. 

\begin{claim}\label{claim:outersyndrome}
    There are at most $L$ operators $E\in (\mathcal{P}_q^{n'})^n$ which are distinct up to a stabilizer of the concatenated code $Q_{out}\diamond Q_{in}$ and have outer syndrome $s^{out}$, inner syndromes $s_1\cdots s_n$, where $E_G = \pi_G(E)$ is an operator on the distance amplified code of relative weight $\leq (1-R-16\cdot \gamma)/2$. 
\end{claim}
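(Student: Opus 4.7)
The plan is to reduce the counting of stabilizer-distinct operators $E$ on $Q_\diamond = Q_{out}\diamond Q_{in}$ with the prescribed inner syndromes $s_1,\dots,s_n$, outer syndrome $s^{out}$, and low post-permutation weight to an instance of the quantum list-recovery guarantee for $Q_{out}$. The conversion will proceed in two passes: first descend from $\mathcal{P}_q^{nn'}$ down to $\mathcal{P}_{q^m}^n$ by quotienting out the inner stabilizers, then apply the $(1-\gamma, l, L)$-QLR property of the outer code to the resulting operator.

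Concretely, the first pass starts by invoking Claim \ref{claim:innersyndrome}, which (given the hypothesis on the relative weight of $E_G = \pi_G(E)$) guarantees that for at least $(1-\gamma)n$ of the inner blocks, the local component $E_i$ is stabilizer-equivalent under $S(Q_{in})$ to some element of the list $\mathcal{L}_{s_i}$. After fixing, for each $i$, some canonical Pauli $\sigma_i \in \mathcal{P}_q^{n'}$ with inner syndrome $s_i$, every $E_i$ matching syndrome $s_i$ has a unique representation modulo $S(Q_{in})$ as $\sigma_i \cdot \tilde{E}_i$ for some $\tilde{E}_i$ in the logical Pauli group $N(Q_{in})/S(Q_{in})$. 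Identifying this quotient with the single-symbol Pauli group $\mathcal{P}_{q^m}$ on the outer alphabet (up to global phase), the list $\mathcal{L}_{s_i}$ descends to a list $\tilde{\mathcal{L}}_{s_i}\subseteq \mathcal{P}_{q^m}$ of size at most $l$, since by assumption its elements were stabilizer-distinct and therefore distinct in the quotient. Globally, $E$ descends to $\tilde{E} = (\tilde{E}_1,\dots,\tilde{E}_n) \in \mathcal{P}_{q^m}^n$ with $\tilde{E}_i \in \tilde{\mathcal{L}}_{s_i}$ at $\geq (1-\gamma)n$ positions.

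The step I expect to be the main obstacle is verifying that this descent respects both stabilizer-equivalence and the outer syndrome condition. For this I plan to appeal to Fact \ref{fact:concatstabilizers}: $S(Q_\diamond)$ is generated by blockwise inner stabilizers together with the encodings of outer stabilizers, and modulo the former the latter act on $\mathcal{P}_{q^m}^n$ exactly as $S(Q_{out})$; an analogous statement on the normalizer side shows that the outer syndrome of $E$ coincides with the $Q_{out}$-syndrome of $\tilde{E}$. Granting these identifications, counting stabilizer-distinct $E$ satisfying the hypotheses reduces to counting stabilizer-distinct $\tilde{E} \in \mathcal{P}_{q^m}^n$ of $Q_{out}$-syndrome $s^{out}$ that agree with $\tilde{\mathcal{L}}_{s_i}$ at $\geq (1-\gamma)n$ coordinates. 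Finally, fixing any single such $\tilde{E}^{(0)}$, I will parameterize the remaining solutions as $\tilde{E} = \tilde{E}^{(0)} \cdot N$ with $N \in N(Q_{out})$; the per-coordinate list condition becomes $N_i \in (\tilde{E}^{(0)}_i)^{-1} \tilde{\mathcal{L}}_{s_i}$, still a set of size at most $l$. The $(1-\gamma, l, L)$-QLR property of $Q_{out}$ applied to these translated lists then bounds the number of stabilizer-distinct $N$ by $L$, and hence the number of stabilizer-distinct $E$ by $L$, as claimed.
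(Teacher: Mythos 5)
Your approach matches the paper's essentially step for step: both proofs shift each inner block by a fixed Pauli matching the prescribed inner syndrome (the paper uses the first list element $F_1^i\in\mathcal{L}_{s_i}$, you use an arbitrary $\sigma_i$), descend the inner lists to logical Pauli lists $\mathcal{E}_i\subseteq\mathcal{P}_{q^m}$ of size $\leq l$ on the outer alphabet via $\Enc_{in}$, invoke Claim \ref{claim:innersyndrome} for the ``$\geq(1-\gamma)n$ agreeing coordinates'' condition, and conclude with the $(1-\gamma,l,L)$-QLR guarantee of $Q_{out}$. The only cosmetic difference is the order in which the outer-syndrome shift is handled (the paper adjusts the target syndrome to $s^{out}-s_1^{out}$ up front; you translate by a fixed solution $\tilde{E}^{(0)}$ at the end, which amounts to the same translation-by-a-normalizer argument already made in the claim immediately after Definition of QLR), so this is the same proof.
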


As an immediate corollary of the above, we recover \Cref{lemma:distampf}.

\begin{corollary}
    $\pi_G(Q_\diamond)$ is $((1-R-O(\gamma))/2, L)$-QLD. 
\end{corollary}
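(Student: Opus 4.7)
The corollary follows from \Cref{claim:outersyndrome} by a purely bookkeeping translation between Paulis on $\pi_G(Q_\diamond)$ and Paulis on $Q_\diamond$. The plan has three short steps.

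First, I would verify that the QLD data on $\pi_G(Q_\diamond)$ is faithfully encoded on the $Q_\diamond$ side. The map $\pi_G$ only permutes and regroups the underlying $q$-ary qudits, so $S(\pi_G(Q_\diamond)) = \pi_G(S(Q_\diamond))$, and a syndrome $\sigma$ of $\pi_G(Q_\diamond)$ (with respect to a fixed generating set pulled back from $Q_\diamond$) is in bijection with a syndrome of $Q_\diamond$, which by \Cref{fact:concatstabilizers} decomposes into an outer syndrome $s^{out}$ together with the $n$ inner-block syndromes $s_1,\ldots,s_n$. Moreover, two $q^D$-ary Paulis $E_G, E_G'$ on $\pi_G(Q_\diamond)$ (written in the product basis) are stabilizer-equivalent if and only if their preimages $E = \pi_G^{-1}(E_G)$ and $E' = \pi_G^{-1}(E_G')$ in $(\mathcal{P}_q^{n'})^n$ are stabilizer-equivalent in $Q_\diamond$. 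I would flag this identification as the only subtle point: it relies on the canonical isomorphism between the $q^D$-ary Pauli group on $\pi_G(Q_\diamond)$ and the tensor-product $q$-ary Pauli group on $Q_\diamond$, and it is what makes the translation a bijection rather than many-to-one.

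Next, I would fix $\tau = (1-R)/2 - 8\gamma$ and an arbitrary syndrome $\sigma$ of $\pi_G(Q_\diamond)$, and count the stabilizer-distinct $E_G \in \mathcal{P}_{q^D}^N$ of relative weight $\leq \tau$ consistent with $\sigma$. Note $\tau = (1-R-16\gamma)/2$, so the relative weight of $E_G$ on $\pi_G(Q_\diamond)$ equals the relative weight of $E_G$ as defined in \Cref{claim:outersyndrome} on the distance-amplified code. Each such $E_G$ pulls back to an $E \in (\mathcal{P}_q^{n'})^n$ whose inner syndromes are $s_1,\ldots,s_n$, whose outer syndrome is $s^{out}$, and whose AEL-permuted image $\pi_G(E)$ has relative weight $\leq (1-R-16\gamma)/2$. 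This is exactly the class of operators counted by \Cref{claim:outersyndrome}.

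Finally, I would apply \Cref{claim:outersyndrome} to conclude that the number of stabilizer-distinct such $E$ is at most $L$, and transport the bound back through the bijection $E \leftrightarrow E_G$ to obtain at most $L$ stabilizer-distinct $E_G$ consistent with $\sigma$. Since $\sigma$ was arbitrary, $\pi_G(Q_\diamond)$ satisfies the definition of $((1-R)/2 - O(\gamma), L)$-QLD (\Cref{def:QLD}) with decoding radius $(1-R-16\gamma)/2$, completing the proof. There is no genuine obstacle: all the real work sits inside Claims \ref{claim:innersyndrome} and \ref{claim:outersyndrome}; the corollary is a clean unpacking of definitions.
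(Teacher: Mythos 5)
Your proposal is correct and takes essentially the same route as the paper, which presents this corollary as an immediate consequence of \Cref{claim:innersyndrome} and \Cref{claim:outersyndrome}. Your three steps simply make explicit the bookkeeping the paper leaves implicit: the permutation/regrouping $\pi_G$ identifies stabilizer groups, syndromes (decomposed via \Cref{fact:concatstabilizers} into inner and outer parts), and block weights, so the bound of $L$ on stabilizer-distinct operators in \Cref{claim:outersyndrome} transfers directly to the QLD radius $(1-R-16\gamma)/2$ for $\pi_G(Q_\diamond)$.
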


\begin{proof}[Proof of \Cref{claim:outersyndrome}] The proof proceeds by using the elements of the lists $\mathcal{L}_{s_i}$ to construct operators acting on each symbol of the outer code, for which we then invoke list-recoverability. Consider, WLOG, the first element $F_1^i \in \mathcal{P}_q^{n'}$ in the list $\mathcal{L}_{s_i}$ of the $i$th inner block. For each $i\in [n]$, we construct a list of inner code normalizers
\begin{equation}
    L_i = \big\{ (F_1^i)^\dagger F_j^i \in N(Q_{in}) : F_j^i \in \mathcal{L}_{s_i}, j\in [l]\big\}
\end{equation}

Note that each element of $L_i$ is $S(Q_{in})$-logically-equivalent to a logical operator $\bar{X}_{ij} \in N(Q_{in})/S(Q_{in})$, defined by quotienting out the stabilizer group of $Q_{in}$. Thus, one can derive from $L_i$ a list of operators $\mathcal{E}_i\subset \mathcal{P}_{q^m}$ acting on the $i$th qudit of the outer code:
\begin{equation}
    \mathcal{E}_i = \bigg\{ X_{ij} = \Enc_{in}^\dagger \bar{X}_{ij} \Enc_{in} \in \mathcal{P}_{q^m}: \bar{X}_{ij} \text{ stabilizer equivalent to some element of }L_i\bigg\}
\end{equation}\

Note that $|\mathcal{L}_{s_i}|\leq l\Rightarrow |\mathcal{E}_i|\leq l$, and that the identity operator is in $\mathcal{E}_i$ (since it was in $L_i$ as well). Let $s_1^{out}$ denote the syndrome of the operator $F_1 = \otimes_i^n F_1^i$ on the encoded stabilizers of the outer code. Now, we invoke the list recoverability of the outer code to find a list $\mathcal{L}_{out}\subset \mathcal{P}_{q^m}^n$ of at most $L$ $S(Q_{out})$-logically-distinct operators $E = \otimes E_i$ acting on the outer code where $E_i \in \mathcal{E}_i$ for $\geq 1-\gamma$ qudits $i\in [n]$, and $E$ has syndrome $s^{out}-s^{out}_1$. Each element of $\mathcal{L}_{out}$ can now be represented as an operator acting on the concatenated code, by re-encoding into $Q_{in}$, and to conclude we shift back the rotation by $F_1= \otimes_i^n F_1^i$:
\begin{equation}
    \mathcal{L}_{s^{out}, s^{in}} = \big\{ F_1  \big(\otimes_i^n \Enc_{in}\big) X \big(\otimes_i^n \Enc_{in}^\dagger\big) : X\in \mathcal{L}_{out} \big\}
\end{equation}

To conclude the proof, we argue correctness of the produced list. It suffices to prove that any operator $O$ with bounded weight on the distance-amplified code with syndromes $s^{out}, s^{in} = s_1\cdots s_n$ is $S(Q_\diamond)$-stabilizer equivalent to an element of $\mathcal{L}_{s^{out}, s^{in}}$. For the purposes of contradiction, assume otherwise and let one such operator be $O = \otimes O_i$ on $Q_\diamond$. By construction, $F_1^\dagger O$ has syndrome 0 on all inner code blocks, since $(F_1^i)^\dagger O_i\in N(Q_{in})$, and thereby $F_1^\dagger O$ is $S(Q_\diamond)$-stabilizer equivalent to some encoded operator $(\otimes_i^n \Enc_{in}) X (\otimes_i^n \Enc_{in}^\dagger)$, for $X\in \mathcal{P}_{q^m}^n$. Moreover, $X_i\in \mathcal{E}_i$ for at least $1-\gamma$ locations $i$, and the outer code syndrome of $X$ is the same as that of $F_1^\dagger O$. Since the outer code syndrome of $F_1^\dagger O$ is precisely $s_{out}-s_1^{out}$, $X$ must be $S(Q_{out})$-logically-equivalent to some element returned by the list recovery, concluding the proof. 
\end{proof}

To conclude this section, we make an explicit choice of codes and parameters. Let $q = 2^{\Theta(1/\gamma)}$, and pick $Q_{in}$ to be a random $q$-ary CSS code of rate $R + 5\gamma$ which is $((1-R)/2- 4\gamma, q^{\Theta(1/\gamma)})$-QLD, see \cref{claim:randomcss}. By picking $l = q^{\Theta(1/\gamma)}$ in \Cref{theorem:fqrs}, we let $Q_{out}$ be the $m$-Folded Quantum RS code of \Cref{def:FQRS} of rate $1-4\gamma$, and blocklength $n$, folding parameter $m = \Theta(\frac{1}{\gamma^4} \log q) = \Theta(\frac{1}{\gamma^5} )$, which is $(1-\gamma, q^{O(1/\gamma)}, n^{O(\log (1/\gamma)/\gamma^3)})$-QLR. Naturally $Q_{out}, Q_{in}$ are defined over a field of the same characteristic, and we let the alphabet size of $Q_{out}$ be any composite $Q = q^{m\cdot x}$ where $x = \Theta(\log n)$, such that the block-length of $Q_{in}$ is $\Theta(m\log n)$. The resulting alphabet size is $q^D = 2^{\Theta(1/\gamma^5)}$, and list size $n^{O(\log (1/\gamma)/\gamma^3)}$. We list decode the resulting code by brute force decoding the inner codes in time $N\cdot q^{O(m\log N)} = N^{O(\frac{1}{\gamma^5})}$, and list recover $Q_{out}$ in time $N^{O_\gamma(1)}$, resulting in an efficient decoding.

\section{Generalization of \Cref{lemma:private-aqec}}
\label{section:proof-claim-LI-composable}

In this section we prove a particular parallel composability property of the private AQECCs of \Cref{lemma:private-aqec} presented in  \Cref{sec:ael}. For simplicity, we simply discuss the main modifications from \Cref{lemma:private-aqec}. We later apply this composability to study the concatenation of AQECCs.

In our proof of \Cref{lemma:private-aqec}, we actually showed the slightly stronger statement that $Q_{LD} \circ Q_k$ uniquely decodes an arbitrary adversarial error with high probability over the choice of private key $k$. In this section, we show that this stronger property continues to hold when the adversary is allowed access to a side-register entangled with the encoded message.

Consider the following setting, where Alice and Eve hold a bipartite quantum state $\ket{\psi}_{AE}$, and Alice desires to send the $A$ register to a third party, Bob. To do so, Alice has access to an adversarial quantum communication channel $\mathcal{A}_{SE}$, where Eve can adversarially tamper with up to a $\delta$ fraction $S$ of the registers of the sent message, in addition to their half $E$ of $\psi_{AE}$. Before doing so, Alice picks a random key $k\in K$, and encodes her half of the state using the private AQECC $(\Enc_k)_{A\rightarrow Q}$, and then sends her half over the channel. After receiving the corrupted code-state (and with knowledge of $k$), Bob attempts to decode using the channel $\Dec_k$ defined in \Cref{alg:algorithm1}. The density matrix $\rho^k_{BE}$ produced by Bob can be written as
\begin{equation}
 \rho^k_{BE} =  \big((\Dec_k)_{Q\rightarrow B}\otimes \mathbb{I}_{E}\big) \circ\big(\mathbb{I}_{Q\setminus S}\otimes \mathcal{A}_{SE}\big)\circ\big((\Enc_k)_{A\rightarrow Q}\otimes \mathbb{I}_{E}\big) (\psi_{AE})
\end{equation}

\Cref{theorem:composabilityaqeccs} below is a generalization of \Cref{claim:lemma41-proof}, which states that with high probability over the choice of $k$ (and the syndrome measurement outcomes), the quantum state $\rho_{BE}^k$ Bob recovers is simply as if Eve had corrupted their register $E$ without touching the $A$ register at all: i.e. $\rho_{BE}^k\approx (\mathbb{I}_A\otimes \mathcal{A}_E)(\psi)_{AE}$, for some CP operator $\mathcal{A}_E$ acting only on $E$. 

\begin{theorem}\label{theorem:composabilityaqeccs}
    Let $(\Enc_k, \Dec_k)$ be the private $(\delta, \epsilon')$-AQECC of \Cref{lemma:private-aqec}, defined via the composition of a $[[n, m, d\geq \delta n]]$ $(\delta, L)$-QLD code $Q_{LD}$ and an $\epsilon$-PTC $\{Q_k\}$ with keys $K$. Let $\ket{\psi}_{AE}$ be a bipartite quantum state, and for $k\in K$ let $|\psi_k\rangle_{QE}= (\mathbb{I}_E\otimes (\Enc_k)_{A\rightarrow Q})\ket{\psi}_{AE}$ be the encoding of the $A$ half of $\psi_{AE}$ into the AQECC. If $\mathbb{I}_{Q\setminus S}\otimes \mathcal{A}_{SB}$ is an arbitrary CP error operator supported on $B$ and a set $S\subset Q$ of at most $\delta \cdot n$ code registers, then for a random choice of $k$ the decoding algorithm $\Dec_k$ of \Cref{alg:algorithm1} produces the density matrix
\begin{equation}
  \sum_s\sum_{k\in \text{Good}_{\mathcal{A}, s}} \frac{p_{\mathcal{A}, s}}{|K|}\ket{k}\bra{k}\otimes \ket{s}\bra{s} \otimes (\mathbb{I}_A\otimes \mathcal{A}_E^s) (\psi_{AE}) + \sum_s\sum_{k\in K\setminus \text{Good}_{\mathcal{A}, s}} \frac{p_{\mathcal{A}, s}}{|K|}\ket{k}\bra{k}\otimes \ket{s}\bra{s} \otimes (\rho_{k, s})_{BE}
\end{equation}

That is, for each syndrome $s$ of the stabilizer code $Q_{LD}$, $s$ is measured with probability $p_{\mathcal{A}, s}$, $\mathcal{A}_E^s$ is a CP operator acting only on the register $E$, $(\rho_{k, s})_{BE}$ are generic bipartite PSD matrices, and $\text{Good}_{s}\subset K$ is a large subset of the PTC keys of size $\geq  |K|(1-L\cdot \epsilon)$.     
\end{theorem}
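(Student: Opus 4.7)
The plan is to follow the structure of the proof of \Cref{claim:lemma41-proof}, but carefully track the adversary's side register $E$ throughout. The key additional observation is that because $Q_{LD}$ has distance $d > \delta n$, measuring its syndrome projects any error of support $|S| < d$ to a fixed Pauli $\sigma_{\mathcal{A},s}$ on $Q$ (up to a stabilizer of $Q_{LD}$), leaving a key-independent residual CP map acting purely on $E$.

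First I would decompose $\mathcal{A}_{SE}$ in Kraus form, with each Kraus operator $K_\nu$ expanded in a Pauli basis on $S$ as $K_\nu = \sum_{P} P \otimes A^E_{\nu,P}$. Since $|S|<d$, any two Paulis on $S$ sharing a $Q_{LD}$-syndrome $s$ differ by a stabilizer of $Q_{LD}$, so on a codestate they act identically up to a phase. Fixing a representative Pauli $\sigma_{\mathcal{A},s}$ for each syndrome, this implies $\Pi_s K_\nu |\psi_k\rangle_{QE} = (\sigma_{\mathcal{A},s} \otimes \widetilde{A}^E_{\nu,s})|\psi_k\rangle_{QE}$, where $\widetilde{A}^E_{\nu,s}$ is a (phase-weighted) linear combination of the $A^E_{\nu,P}$ and acts only on $E$. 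Summing over $\nu$, the post-measurement density matrix factorizes as $(\sigma_{\mathcal{A},s} \otimes I_E)(I_Q \otimes \mathcal{A}^E_s)(|\psi_k\rangle\langle\psi_k|_{QE})(\sigma_{\mathcal{A},s}^\dagger \otimes I_E)$, where $\mathcal{A}^E_s := \sum_\nu \widetilde{A}^E_{\nu,s}(\cdot)(\widetilde{A}^E_{\nu,s})^\dagger$ depends only on $\mathcal{A}$, $S$, and $s$, and is independent of $k$. Local indistinguishability of $Q_{LD}$ (\Cref{fact:syndromelocind}, composably generalized in \Cref{claim:LI-composable-draft}) likewise ensures that the distribution $p_{\mathcal{A},s}$ of syndrome outcomes does not depend on $k$.

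With this factorization in hand, the rest of the argument mirrors \Cref{claim:lemma41-proof}. Quantum list decoding of $Q_{LD}$ produces $L$ stabilizer-distinct candidates $F_1,\dots,F_L$ consistent with the syndrome $s$, and one of them is $S(Q_{LD})$-equivalent to $\sigma_{\mathcal{A},s}$. Applying \Cref{claim:ptcsyndrome-draft} and a union bound over the list, for every syndrome $s$ there is a subset $\text{Good}_{\mathcal{A},s}\subseteq K$ of size $\geq |K|(1-L\varepsilon)$ such that, for any $k\in\text{Good}_{\mathcal{A},s}$, the only $F_i$ whose composed-code PTC syndrome matches the measured one is $S(Q_{LD}\circ Q_k)$-equivalent to $\sigma_{\mathcal{A},s}$. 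For such $k$, applying $(F_i\circ \Enc_k)^\dagger$ exactly cancels $\sigma_{\mathcal{A},s}$ and returns the residual state $(\mathbb{I}_A \otimes \mathcal{A}^E_s)(\psi_{AE})$ in the $BE$ registers, which (crucially) does not depend on $k$. For $k\notin\text{Good}_{\mathcal{A},s}$, the decoder may select a wrong coset representative and output an arbitrary bipartite operator $(\rho_{k,s})_{BE}$. Collecting over $s$ and $k$ then gives the desired density matrix in the theorem statement.

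The main obstacle is proving the factorization step: namely, that projecting onto a $Q_{LD}$-syndrome decouples the adversarial channel's action on $Q$ (a fixed Pauli $\sigma_{\mathcal{A},s}$) from its action on $E$ (a key-independent CP map $\mathcal{A}^E_s$), uniformly over all input entangled states $|\psi\rangle_{AE}$. This is the composable version of local indistinguishability, established in \Cref{claim:LI-composable-draft}; without it, the PTC union bound would only control the key-averaged diamond-norm error on product inputs, not on states that are entangled with the adversary's register, and parallel composability with outer codes (as invoked in \Cref{thm:uniqueAQECC}) would fail.
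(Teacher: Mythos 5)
Your proposal is correct and follows essentially the same route as the paper: decompose the adversarial CP map into Kraus operators expanded in a Pauli basis on $S$, use $|S|<d$ to collapse all syndrome-$s$ Paulis on $S$ to a single representative $\sigma_{\mathcal{A},s}$ with the phases absorbed into a key-independent residual operator on $E$ (this is exactly the content of \Cref{claim:LI-composable-draft}), and then run the PTC union-bound argument of \Cref{claim:lemma41-proof}. The paper's proof is precisely this composition of \Cref{claim:LI-composable-draft}, \Cref{claim:ptcsyndrome-draft}, and \Cref{claim:lemma41-proof}.
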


In other words, even if the adversary can corrupt a sideregister $E$ entangled with the message, with probability at least $(1-L\cdot \epsilon)$ Bob recovers the $A$ half of the original bipartite state $\psi_{AE}$ uncorrupted. 

\begin{remark}
\label{remark:envreg}
We remark that the proof of \Cref{theorem:composabilityaqeccs} below holds even if the state $\ket{\psi}$ is entangled with a third environment register to which none of the encoder, decoder, or error operators have access.
\end{remark}

To prove the theorem above, we present a generalization of \Cref{fact:syndromelocind} and the intuition in \Cref{section:QLD}. Qualititatively, it formalizes that the local indistinguishability of the stabilizer code ensures Eve learns nothing about the private encryption key $k$, even if Eve has prior entanglement with the message. Thus, if Bob performs a syndrome measurement upon receiving the corrupted codestate, he will collapse the state into a mixture of Pauli errors acting on his half of the state, and generic operators acting on Eve's half. Together with \Cref{claim:ptcsyndrome-draft} and \Cref{claim:lemma41-proof}, \Cref{claim:LI-composable-draft} below directly implies \Cref{theorem:composabilityaqeccs}.

\begin{claim}\label{claim:LI-composable-draft}
    Let $Q$ be be a $[[n,m,d]]_q$ stabilizer code, let $\ket{\psi}_{AE}$ be any bipartite quantum state, and let $|\tilde{\psi}\rangle_{QE} = (\mathbb{I}_E\otimes \Enc_{A\rightarrow Q})\ket{\psi}_{AE}$ be the encoding of half of $\ket{\psi}_{AE}$ into the code $Q$. If $\mathbb{I}_{Q\setminus S}\otimes \mathcal{A}_{SE}$ is an arbitrary CP error operator supported on $E$ and a set $S\subset Q$ of fewer than $d$ code registers, then the syndrome measurement $(M_{Q}\otimes \mathbb{I}_E)$ collapses the corrupted state into the mixture
    \begin{equation}
        (M_{Q}\otimes \mathbb{I}_E)\circ (\mathbb{I}_{Q\setminus S}\otimes \mathcal{A}_{SE})(\tilde{\psi}_{QE}) = \sum_s p_{\mathcal{A}, s} \ket{s}\bra{s} \otimes (\mathbb{I}_{Q\setminus S}\otimes (\sigma_{\mathcal{A}, s})_S\otimes \mathcal{A}^s_E)(\tilde{\psi}_{QE}), 
    \end{equation}
    That is, the syndrome $s$ is measured with probability $p_{\mathcal{A}, s}$ and the resulting state is collapsed a single Pauli error $\sigma_{\mathcal{A}, s}$ on the code $Q$, and a CP operator $\mathcal{A}^s_E$ supported only on the non-encoded half $E$.
\end{claim}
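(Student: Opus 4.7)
The plan is to expand the error channel $\mathcal{A}_{SE}$ in terms of Paulis on $S$ and leverage the distance assumption $|S|<d$ to collapse each syndrome sector to a single Pauli action (up to stabilizer equivalence). First I would write $\mathcal{A}_{SE}$ in Kraus form $\sum_k K_k(\cdot)K_k^\dagger$, and then expand each $K_k$ in a Pauli basis for the $S$ register, $K_k = \sum_{P\in\mathcal{P}_q(S)} P \otimes A_{k,P}$, where each $A_{k,P}$ acts only on $E$. This reduces the problem to analyzing how Paulis on $S$, together with the syndrome measurement, act on the encoded state $\tilde{\psi}_{QE} = (\mathbb{I}_E\otimes \Enc_{A\to Q})\ket{\psi}_{AE}$.

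The key algebraic step uses the distance assumption. For each syndrome $s$ attained by some Pauli supported on $S$, I would fix a representative $\sigma_{\mathcal{A},s}\in\mathcal{P}_q(S)$ with syndrome $s$ (and set $p_{\mathcal{A},s}=0$ if no such Pauli exists). For any other $P\in\mathcal{P}_q(S)$ of syndrome $s$, the product $\sigma_{\mathcal{A},s}^\dagger P$ has trivial syndrome and thus lies in $N(Q)$, while being supported on at most $|S|<d$ qudits. By the definition of distance, every weight-$<d$ normalizer element must lie in $S(Q)$, so $\sigma_{\mathcal{A},s}^\dagger P\in S(Q)$. Since $\tilde{\psi}_{QE}$ is supported on the code space, this stabilizer acts as the identity, giving the crucial identity $(P\otimes \mathbb{I}_E)\tilde{\psi}_{QE} = (\sigma_{\mathcal{A},s}\otimes \mathbb{I}_E)\tilde{\psi}_{QE}$ for every such $P$.

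Finally, I would apply the syndrome measurement: the projector $\Pi_s$ onto the syndrome-$s$ sector annihilates $(P\otimes \mathbb{I}_E)\tilde{\psi}_{QE}$ unless $\mathrm{syn}(P)=s$, so within each sector only the fixed $\sigma_{\mathcal{A},s}$ action survives while the residual operators on $E$ combine into $B_{k,s} := \sum_{P\in\mathcal{P}_q(S),\,\mathrm{syn}(P)=s} A_{k,P}$. Setting $\mathcal{A}^s_E(\rho) := \sum_k B_{k,s}\,\rho\,B_{k,s}^\dagger$, which is manifestly CP and acts only on $E$, and taking $p_{\mathcal{A},s}$ to be the resulting normalization, would yield exactly the claimed decomposition of the post-measurement state.

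The main technical subtlety is the factorization step: the distance argument only forces $\sigma_{\mathcal{A},s}^\dagger P \in S(Q)$ (rather than merely $N(Q)$) because both operators are supported within $S$, which has weight $<d$; this is why the representative $\sigma_{\mathcal{A},s}$ must be chosen with support contained in $S$, rather than anywhere on $Q$. Everything else is routine bookkeeping of Kraus and Pauli expansions.
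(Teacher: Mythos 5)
Your proof is correct and takes essentially the same route as the paper: decompose into Kraus operators, Pauli-expand on $S$, use the distance condition $|S|<d$ to show all Paulis on $S$ with a fixed syndrome are stabilizer-equivalent and thus act identically on code states, and then regroup the residual $E$-operators into a CP map per syndrome. The only cosmetic difference is that you expand each Kraus operator only in the Pauli basis on $S$ (writing $K_k=\sum_P P\otimes A_{k,P}$ with arbitrary $A_{k,P}$ on $E$), whereas the paper further expands the $E$ factor in a Pauli basis as well; your partial expansion is slightly cleaner and suffices. One small detail worth making explicit (the paper glosses over it too): the relation $\sigma_{\mathcal{A},s}^\dagger P\in S(Q)$ from the distance bound holds only up to a global phase, so the identity should read $(P\otimes\mathbb{I}_E)\tilde\psi=\lambda_P(\sigma_{\mathcal{A},s}\otimes\mathbb{I}_E)\tilde\psi$ for some phase $\lambda_P$, and these phases should be absorbed into the definition $B_{k,s}=\sum_{P:\mathrm{syn}(P)=s}\lambda_P A_{k,P}$; this does not change the conclusion.
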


If $\mathcal{A}_{SE}$ is CPTP, then $\sum_s p_{\mathcal{A}, s}=1$. We emphasize that the distribution over syndrome measurements, $\{p_{\mathcal{A}, s}\}$, doesn't depend on the encoded half $A$ of $\ket{\psi}_{AE}$, only on the choice of error operator $\mathcal{A}$ and $\rho_E = \text{Tr}_E[\psi_{AE}]$.

\begin{proof}
    Consider a decomposition of the CP map $\mathcal{A}=\mathcal{A}_{SE}$ into Krauss operators $\{A_\nu\}$. Let us further decompose each operator $ A_\nu = \sum_{\sigma = \sigma_S\otimes \sigma_E} c^\nu_\sigma \cdot \sigma_S\otimes \sigma_E$ into a Pauli basis. The POVM $(\mathbb{I}_E\otimes M_{Q})$ is comprised of projectors $\Pi_s$ for each syndrome $s$, which we observe to collapse $A_\nu$ into Pauli's on $S\subset Q$ of syndrome $s$:
    \begin{equation}
        \bigg(\mathbb{I}_{E}\otimes (\Pi_s)_Q\bigg) (\mathbb{I}_{Q\setminus S}\otimes A_\nu ) |\tilde{\psi}\rangle_{QE} = \sum_{\sigma_S \text{ of syndrome }s} \sum_{\sigma_E} c^\nu_{\sigma_S, \sigma_E} \sigma_S\otimes \sigma_E |\tilde{\psi}\rangle_{QE}
    \end{equation}

    However, each $\sigma_S$ in the sum above is supported on $|S|\leq d$ and they all have the same syndrome $s$, and thereby must all be stabilizer-equivalent to an operator $\sigma_{\mathcal{A}, s}$ supported on $S$. One can thereby re-arrange the above
    \begin{equation}
         = (\sigma_{\mathcal{A}, s})_S \otimes \sum_{\sigma_E} a^{s, \nu}_{\sigma_E} \sigma_E |\tilde{\psi}\rangle_{QE} =(\sigma_{\mathcal{A}, s})_S \otimes (A_{s, \nu})|\tilde{\psi}\rangle_{QE} , \text{ where } a^{s, \nu}_{\sigma_E} = \sum_{\sigma_S \text{ of syndrome }s} c^\nu_{\sigma_S, \sigma_E}  
    \end{equation}

    To conclude, we denote as $p_{\mathcal{A}, s} = \sum_\nu \langle \tilde{\psi}|\mathbb{I}\otimes \mathcal{A}_{s, \nu}^\dagger \mathcal{A}_{s, \nu}  |\tilde{\psi}\rangle_{QE} = \sum_{\nu} \sum_{\sigma_E, \sigma'_E }(a^{s, \nu}_{\sigma_E})^*a^{s, \nu}_{\sigma'_E}\langle \tilde{\psi}|\mathbb{I}\otimes \sigma_E^\dagger \sigma'_E  |\tilde{\psi}\rangle_{QE} =\sum_{\nu} \sum_{\sigma_E, \sigma'_E }(a^{s, \nu}_{\sigma_E})^*a^{s, \nu}_{\sigma'_E}\cdot  \text{Tr}[\sigma_E^\dagger \sigma'_E  \rho_E]$, which is only a function of $\rho_E$ and $\mathcal{A}$.

%    Finally, we note that 
%    \begin{gather}
%        \sum_\nu A_{s, \nu}^\dagger A_{s, \nu} = \sum_{\sigma_E, \sigma'_E}\sigma_E^\dagger \sigma'_E \sum_\nu (a^{s, \nu}_{\sigma_E} )^* a^{s, \nu}_{\sigma'_E} = 
%\\ = \sum_{\substack{\sigma_E, \sigma'_E \\ \sigma_S, \sigma_S' \text{ of syndrome }s}} \sigma_E^\dagger \sigma'_E \sum_\nu (c^\nu_{\sigma_S, \sigma_E})^*(c^\nu_{\sigma_S', \sigma_E'}) = \mathbb{I}_{E} \cdot \sum_{\substack{\sigma_E, \\ \sigma_S \text{ of syndrome }s}} |c^\nu_{\sigma_S, \sigma_E}|^2 \equiv \mathbb{I}_{E} \cdot p_{\mathcal{A}, s}
%    \end{gather}

%Where $p_{\mathcal{A}, s}$ is the desired probability. Denoting as $\mathcal{A}_{E}^s$ the quantum channel defined by the Kraus operators $A_{s, \nu}/\sqrt{p_{\mathcal{A}, s}}$, we conclude the proof of the claim.
\end{proof}

\end{document}